\documentclass[11pt]{article}

\usepackage[english]{babel}

\usepackage[letterpaper,top=2cm,bottom=2cm,left=3cm,right=3cm,marginparwidth=1.75cm]{geometry}

\usepackage{amsthm,amssymb,amsmath}  
\usepackage{mathtools}
\usepackage[T1]{fontenc}
\usepackage{graphicx}
\usepackage[colorlinks=true, allcolors=blue]{hyperref}
\usepackage[capitalize]{cleveref}
\usepackage{xspace}
\usepackage{thmtools}
\usepackage{thm-restate}
\usepackage{enumitem}
\usepackage{url}
\usepackage{algorithm}
\usepackage[noend]{algpseudocode}
\usepackage{comment}
\usepackage{caption}
\usepackage{authblk}
\usepackage{array}
\usepackage{colortbl}
\usepackage{xcolor}
\usepackage{soul}
\usepackage{multicol}
\usepackage{float}
\usepackage[utf8]{inputenc}
\usepackage{subcaption}
\usepackage[short,nocomma]{optidef}
\definecolor{mygray}{rgb}{0.9,0.9,0.9}
\sethlcolor{mygray}
\definecolor{myblue}{rgb}{0.7,0.8,1.0}
\definecolor{mygreen}{rgb}{0.8,1.0,0.8} 
\definecolor{myorange}{rgb}{1.0,0.8,0.6} 

\definecolor{darkgray}{rgb}{0.7,0.7,0.7} 
\definecolor{darkblue}{rgb}{0.5,0.6,0.8} 
\definecolor{darkgreen}{rgb}{0.5,0.9,0.5}
\definecolor{darkorange}{rgb}{0.9,0.5,0.4}

\newtheorem{problem}{Problem}

\DeclareMathOperator{\SA}{SA}
\DeclareMathOperator{\LCP}{LCP}
\DeclareMathOperator{\LCA}{LCA}
\DeclareMathOperator{\LCE}{LCE}
\DeclareMathOperator{\ISA}{ISA}

\DeclareMathOperator{\ST}{ST}
\DeclareMathOperator{\PT}{PT}

\DeclareMathOperator{\hp}{hp}
\DeclareMathOperator{\sd}{sd}
\DeclareMathOperator{\parent}{parent}
\DeclareMathOperator{\preorder}{pre}
\DeclareMathOperator{\rLeaf}{rLeaf}
\DeclareMathOperator{\str}{str}
\def\dd{\mathinner{.\,.}}
\usepackage{graphicx}   
\usepackage{subcaption} 
\usepackage{xspace}
\usepackage{wrapfig}

\newcommand{\Oh}{\mathcal{O}}
\newcommand{\Ohtilde}{\widetilde{\Oh}}

\usepackage{cite}
\graphicspath{{images}}

\newtheorem{theorem}{Theorem}

\newtheorem{example}{Example}

\usepackage{tikz}
\usetikzlibrary{calc,shapes,shapes.misc,calc,arrows,positioning,trees,decorations.pathmorphing,decorations.markings, decorations.pathreplacing, patterns, calligraphy, backgrounds}

\def\dd{\mathinner{.\,.}}
\newcommand{\cO}{\mathcal{O}}
\newcommand{\CPM}{\textsc{CPM}\xspace}

\newcommand{\CPRI}{\textsc{CPRI}\xspace}
\newcommand{\CPRIS}{\textsc{SA}\xspace}

\newcommand{\CPC}{\textsc{CPC}\xspace}
\newcommand{\CI}{\textsc{CI}\xspace}

\newcommand{\B}{\mathbf{B}\xspace}
\newcommand{\M}{\mathbf{M}\xspace}

\newcommand{\IM}{\textsc{IM}\xspace}
\newcommand{\EM}{\textsc{EM}\xspace}
\newcommand{\RI}{\textsc{RI}\xspace}

\newcommand{\tA}{\ensuremath{\mathtt{A}}}
\newcommand{\tC}{\ensuremath{\mathtt{C}}}
\newcommand{\tG}{\ensuremath{\mathtt{G}}}
\newcommand{\tT}{\ensuremath{\mathtt{T}}}

\newcommand{\chr}{\textsc{CHR}\xspace}
\newcommand{\sars}{\textsc{SARS}\xspace}

\title{Contextual Pattern Mining and Counting}
\author[1]{Ling Li}
\author[2]{Daniel Gibney}
\author[3]{Sharma V. Thankachan}
\author[4,5]{Solon P.\ Pissis}
\author[1]{\\Grigorios Loukides}
\affil[1]{King's College London, London, UK}
\affil[2]{University of Texas at Dallas, Texas, USA}
\affil[3]{North Carolina State University, North Carolina, USA}
\affil[4]{CWI, Amsterdam, The Netherlands}
\affil[5]{Vrije Universiteit, Amsterdam, The Netherlands}

\date{\vspace{-.5cm}}

\begin{document}

\maketitle

\thispagestyle{empty}

\begin{abstract}
Given a string $P$ of length $m$, a longer string $T$ of length $n>m$, and two integers $l\geq 0$ and $r\geq 0$, the \emph{context} of $P$ in $T$ is \emph{the set} of all string pairs $(L,R)$, with $|L|=l$ and $|R|=r$, such that the string $LPR$ occurs in $T$. We introduce two problems related to the notion of context: (1) the \emph{Contextual Pattern Mining} (\CPM) problem, which given $T$, $(m,l,r)$, and an integer $\tau>0$, asks for outputting the context of each substring $P$ of length $m$ of $T$, provided that the size of the context of $P$ is at least $\tau$; and (2) the \emph{Contextual Pattern Counting} (\CPC) problem, which asks for preprocessing $T$ so that the size of the context of a given query string $P$ of length $m$ can be found efficiently. Both problems have direct applications in text mining and bioinformatics and are challenging to solve for realistically long strings. 

For \CPM, we propose a linear-work algorithm that either uses only internal memory, or a bounded amount of internal memory and external memory, which allows much larger datasets to be handled. For \CPC, we propose an $\Ohtilde(n)$-space index that can be constructed in $\Ohtilde(n)$ time and answers queries in $\Oh(m)+\Ohtilde(1)$ time. We further improve the practical performance of the \CPC index by optimizations that exploit the LZ77 factorization of $T$ and an upper bound on the query length. Using billion-letter datasets from different domains, we show that the external memory version of our \CPM algorithm can deal with very large datasets using a small amount of internal memory while its runtime is comparable to that of the internal memory version. Interestingly, we also show that our optimized index for \CPC outperforms an approach based on the state of the art for the reporting version of \CPC [Navarro, SPIRE 2020] in terms of query time, index size, construction time, and construction space, often by more than an order of magnitude. 
\end{abstract}

\clearpage
\setcounter{page}{1}

\section{Introduction}\label{sec:intro}

Strings (sequences of letters over some alphabet) encode data arising from different sources such as DNA sequences~\cite{NGS}, natural language text~\cite{DBLP:books/lib/JurafskyM09}, and log files (e.g., sequences of visited webpages in a webserver log~\cite{DBLP:journals/pvldb/Boncz0L20} or visited locations in a location-based application~\cite{gis}). Analyzing such strings is key in applications including DNA sequence analysis~\cite{NGS}, document analytics~\cite{simonjea}, network analytics~\cite{ndsi16}, and location-based service provision~\cite{locis}. At the heart of these applications, there are two fundamental string-processing problems: (1) the frequent pattern mining problem~\cite{fisher}; and (2) the text indexing problem~\cite{DBLP:books/daglib/0020103}. 
Given a string $T$ and an integer $\tau>0$, frequent pattern mining asks for each string $P$ that occurs as a substring of $T$ at least $\tau$ times. Text indexing asks to preprocess a string $T$ into a compact data structure that supports efficient pattern matching queries; e.g., \emph{report} the set of starting positions of $P$ in $T$; or \emph{count} the size of the latter set. In what follows, we refer to $T$ as the \emph{text} and to $P$ as the \emph{pattern}. 

In this work, we shift the focus of these problems from $P$ to the string $L$  of length $l$ and the string $R$ of length $r$, which occurs immediately to the left and to the right of $P$ in $T$, respectively. The \emph{set} of such string pairs $(L, R)$ is referred to as the \emph{context} of $P$ in $T$ and denoted by $\mathcal{C}_{T}(P,l,r)$. 

We introduce the following two problems:

\begin{problem}[Contextual Pattern Mining (\CPM)]
    Given a text $T$ of length $n = |T|$ and a tuple of integers $(\tau, m, l, r)$, where $\tau, m > 0$ and $l, r \geq 0$, output 
    $\mathcal{C}_T(P,l,r)$, for each substring $P$ of $T$ with length $m$ such that $|\mathcal{C}_T(P,l,r)|\geq \tau$. 
\end{problem}

\begin{example} \label{EX:CPC}
For text $T$ below and $(\tau, m, l, r)=(3, 2, 2,1)$, the \CPM solution is 
$\mathcal{C}_T(\texttt{AA},2,1)=\{(\texttt{CT},\texttt{G}), (\texttt{AG}, \texttt{G}), (\texttt{AG},\texttt{T}), (\texttt{TG}, \texttt{C})\}$. Each occurrence of $P=\texttt{AA}$ and its context is highlighted by a different color. We have $|\mathcal{C}_T(\texttt{AA},2,1)|\geq \tau=3$, while $|\texttt{AA}|=m=2$, $|\texttt{CT}|=|\texttt{AG}|=|\texttt{TG}|=l=2$, and $|\texttt{G}|=|\texttt{T}|=|\texttt{C}|=r=1$.    
\begin{center}
\setlength{\tabcolsep}{0.05cm} 
\scalebox{1}{
\begin{tikzpicture}

    \node[anchor=north west] (table) at (0, 0) {
        \begin{tabular}{c|*{15}{>{\centering\arraybackslash}p{0.025\textwidth}}}
        $i$ & $1$ & $2$ & $3$ & $4$ & $5$ & $6$ & $7$ & $8$ & $9$ & $10$ & $11$ & $12$ & $13$ & $14$ & $15$ \\ \hline
        $T$ & \tC & \tT & \cellcolor{mygray}\tA & \cellcolor{mygray}\tA & \tG & \cellcolor{myorange}\tA & \cellcolor{myorange}\tA & \tG & \cellcolor{mygreen}\tA & \cellcolor{mygreen}\tA & \tT & \tG & \cellcolor{myblue}\tA & \cellcolor{myblue}\tA & \tC
        \end{tabular}
    };

\draw[line width=1.2pt, darkgray] ([yshift=-3pt]table.south west) ++(0.5cm, 0.2cm) -- ++(1.0cm, 0); 
\draw[line width=1.2pt, darkgray] ([yshift=-3pt]table.south west) ++(2.5cm, 0.2cm) -- ++(0.50cm, 0); 
\draw[line width=1.2pt, myorange] ([yshift=-3pt]table.south west) ++(2.0cm, 0.13cm) -- ++(1.0cm, 0); 
\draw[line width=1.2pt, myorange] ([yshift=-3pt]table.south west) ++(3.95cm, 0.13cm) -- ++(0.5cm, 0); 
\draw[line width=1.2pt, darkgreen] ([yshift=-3pt]table.south west) ++(3.45cm, 0.20cm) -- ++(1.cm, 0); 
\draw[line width=1.2pt, darkgreen] ([yshift=-3pt]table.south west) ++(5.45cm, 0.20cm) -- ++(0.5cm, 0);

\draw[line width=1.2pt, darkblue] ([yshift=-3pt]table.south west) ++(5.45cm, 0.13cm) -- ++(1cm, 0); 
\draw[line width=1.2pt, darkblue] ([yshift=-3pt]table.south west) ++(7.4cm, 0.13cm) -- ++(0.5cm, 0);
\end{tikzpicture}
}
\end{center}

\end{example}

\begin{problem}[Contextual Pattern Counting (\CPC)]
A text $T$ of length $n = |T|$ is given for preprocessing. Given a query consisting of a tuple $(P, l, r)$, where $P$ is a pattern of length $m$ and $l, r \geq 0$ are integers, output
the size of $~\mathcal{C}_T(P,l,r)$. 
\end{problem}

\begin{example}[cont'd from Example~\ref{EX:CPC}] After preprocessing $T$, the query \( (P = \texttt{AA}, l = 2, r = 1) \) in \CPC outputs $4$. 
\end{example}

\subparagraph{Motivation.}~An important application of \CPM comes from bioinformatics. Specifically, the pair of strings $(L,R)$ in the context of a pattern corresponds to
\emph{flanking sequences}. The flanking sequences are highly relevant to understanding the role of genes~\cite{matlock2021flanker} and DNA shape fluctuations~\cite{li2024predicting}, and their lengths $l$, $r$ are relatively short, e.g., $9$, $12$, or $15$~\cite{plosgen}. Thus, \CPM outputs the set of flanking sequences for each pattern of a given length $m$ that has at least $\tau$ distinct contexts; $m$ and $\tau$ are to filter out patterns that are too short or have a very small number of distinct contexts, if needed. Similarly, \CPC outputs the size of the set of flanking sequences when one has a specific pattern $P$ of interest (e.g., a CpG site in a certain type of DNA region called CpG island~\cite{nucleicacidres}, which helps to study the evolutionary history of mammalian genomes~\cite{nucleicacidres}). 

Another application of \CPM is in text mining, where the text represents a document and its letters represent the words in the document. Mining the set of contexts of a word or phrase is important for explaining terms~\cite{kostic2023mapping}. For example, the meaning of the word $P=\texttt{bank}$ is different if  $L=\texttt{south}$ and $R=\texttt{of~Thames}$ compared to when $L=\texttt{State Savings}$ and $R=\texttt{of Ukraine}$; in the first case $P$ refers to the bank of a river, while in the second case to a financial institution. In addition, text mining systems (e.g.,~\cite{voyant}) compute the size of the set of contexts of a user-specified pattern, and this is precisely the objective of the \CPC problem. 

\begin{table}[t]
\caption{Contexts for the user sessions $S_1$ and $S_2$ in Example~\ref{exp:log}.}  \label{tab:BrowsingBehaviors}
\centering
\begin{tabular}{|c|c|}
\hline
         &  The context of $P=\texttt{/login}$                                  \\ \hline
$S_1$ & \begin{tabular}[c]{@{}c@{}}\{(\texttt{/search}, \texttt{/error}), (\texttt{/error}, \texttt{/help}), $\ldots$, (\texttt{/contact}, \texttt{/forum}),\\(\texttt{/forum}, \texttt{/admin}), (\texttt{/admin}, \texttt{/contact})\}\end{tabular} \\ \hline
$S_2$  &    \{(\texttt{/home}, \texttt{/dashboard}), (\texttt{/cart}, \texttt{/order})\} \\ \hline
\end{tabular}
\end{table}

\CPM and \CPC are also useful in log analysis. An application of  \CPM in this domain is 
\emph{structured event template} identification (i.e., identifying key information associated with key system events such as ``logged'' or ``received'') from a system log~\cite{he2021survey}. 
For example, a substring \texttt{logged} may appear together with many distinct 
contexts such as $L=\texttt{User1}$ and $R=\texttt{in}$, or $L=\texttt{User2}$ and $R=\texttt{out}$, etc. \CPM helps to create structured event templates, by mining the context  for any pattern that has context size at least $\tau$. An  application of \CPC is \emph{web log anomaly detection}, where user sessions with large  contexts for specific patterns correspond to a potential security threat called \emph{web application probing}~\cite{asselin2016anomaly}; see Example~\ref{exp:log}, which presents a concrete scenario. 

\begin{example}\label{exp:log}
A web server log dataset~\cite{zaker2019online} is comprised of 1,288 strings, each modeling a user session comprised of actions such as \texttt{/order}, \texttt{/login}, \texttt{/profile}, etc. By applying \CPC on each string with $(P,l,r)=(\texttt{/login}, 1,1)$, we found that the result of \CPC in $14$ user sessions was at least $11$. These sessions had very large contexts, indicating web application probing. For instance, the context for such a session, denoted by $S_1$, and for a normal session, denoted by $S_2$, is depicted in Table~\ref{tab:BrowsingBehaviors}.
\end{example}

\subparagraph{Challenges.}~\CPC and \CPM are challenging to solve efficiently on realistically large strings. In typical bioinformatics datasets, for example, the length $n$ of $T$ is often in the order of millions or even billions and the size of 
$\mathcal{C}_T(P,l,r)$ for a typical pattern $P$ is in the order of several thousands~\cite{navarro2020contextual}. This makes straightforward solutions to \CPC, which: (1)  construct an index on $T$ to efficiently locate all the occurrences of a pattern $P$ in $T$; (2) find the pair $(L,R)$ associated to each occurrence; and (3) construct from thereon the context for $P$, too costly~\cite{navarro2020contextual}. For \CPM, such solutions are also very costly, since they need to consider each length-$m$ substring of $T$ and process the (many) occurrences of the substring independently.  

To address \CPC more efficiently than the straightforward approach above, one can employ an index that  \emph{reports} $\mathcal{C}_T(P,l,r)$ for a given pattern $P$ of length $m$~\cite{navarro2020contextual} and then \emph{count} the size of $\mathcal{C}_T(P,l,r)$. We  refer to this approach as \CPRI (for Context Pattern Reporting Index); see Section~\ref{sec:related} for details. 
However, \CPRI is impractical because its query time bound is linear in $|\mathcal{C}_T(P,l,r)|$, which is often very large as mentioned above.  
\CPRI can also be used for addressing \CPM: every length-$m$ substring $P$ of $T$ is used as a query and $P$ becomes part of the output if $|\mathcal{C}_T(P,l,r)|\geq \tau$. This, however, is also impractical: it takes $\cO(mn)$ time in total (as there are $\cO(n)$ substrings of length $m$ in $T$) and $\cO(n)$ space, and $n$ is typically in the order of millions or billions in practice. 

\subparagraph{Contributions and Paper Organization.}~In addition to introducing the \CPM and \CPC problems, our work makes the following contributions.

{\bf 1.}~Motivated by the fact that it is impractical to solve \CPM for realistically long strings via \CPRI, we propose a different linear-work approach and an algorithm that has two versions to realize it: the first uses exclusively internal memory and runs in $\cO(n)$ time and space; the second uses also external memory\footnote{The objective in the external-memory model is to process very large datasets (that do not fit into the main memory) efficiently by fetching them from and accessing them in the hard disk in blocks.} and takes $\cO(\textsf{sort}(n))$ I/Os (i.e., the I/Os required to sort $n$ elements~\cite{DBLP:journals/fttcs/Vitter06}). The benefit of the external memory  version is that one can specify the maximum amount of internal memory to be used. This allows processing very large datasets on commodity hardware. See Section~\ref{sec:external}. 

\begin{example}
Our external memory version was applied to a genomic dataset \cite{chr} of $n=16\cdot 10^9$ letters in total with parameters $(\tau, m, l, r)=(1000, 6, 9,9)$ and the main memory was limited to $32$ GBs. It required about $5.4$ hours to terminate. On the other hand, neither \CPRI nor the internal memory version terminated, as they needed more than $512$ GBs of memory: the total amount of memory on our system.  
\end{example}

{\bf 2.}~We design an index for \CPC  answering queries in 
$\Oh(m+(\log n/ \log\log n)^3)$ time and 
occupying 
$\Oh(n\cdot(\log n)^3/(\log\log n)^2)$ 
space.  
This index is constructible in $\Ohtilde(n)$ time and space.
The benefit of our index is that the query time bound does \emph{not} depend on $|\mathcal{C}_T(P,l,r)|$. This is useful because in the aforementioned applications of \CPC the context size $|\mathcal{C}_T(P,l,r)|$ is very large. 
We also propose optimizations to the index which exploit 
the LZ77 factorization~\cite{DBLP:journals/tit/ZivL77} of $T$ 
and an upper bound $B$ on query length: $ l + m + r \leq B$. These enable us to replace $n$ in the bounds of our index by $\min(B\cdot z, n)$, where $z$ is the number of phrases in the LZ77 factorization of $T$. See Section~\ref{sec:Counting Indexes}. As $z$ is often much smaller than $n$ ~(see Table~\ref{tab:data}), our index offers orders of magnitude faster query times than 
\CPRI~\cite{navarro2020contextual},  
while it occupies less memory, and it is faster and more space-efficient to construct.

\begin{example} 
For  the first $n=8\cdot 10^{9}$ letters of the  genomic dataset of~\cite{chr}, our (optimized) index was constructed in $3821$ seconds, using $127$ GBs of memory, and occupies $49$ GBs. On the contrary, \CPRI~\cite{navarro2020contextual} was constructed in $9737$ seconds, using $500$ GBs of memory, and occupies $425$ GBs. Remarkably, the average query time of our index (over all length-$6$ substrings of the dataset as $P$, and $l=r$ in $[3,15]$) is at least $2$ and up to $3$ orders of magnitude faster than \CPRI (see Fig.~\ref{fig:intro}). 
\end{example}

\begin{figure}[ht]
    \centering
    \includegraphics[width=0.5\columnwidth]{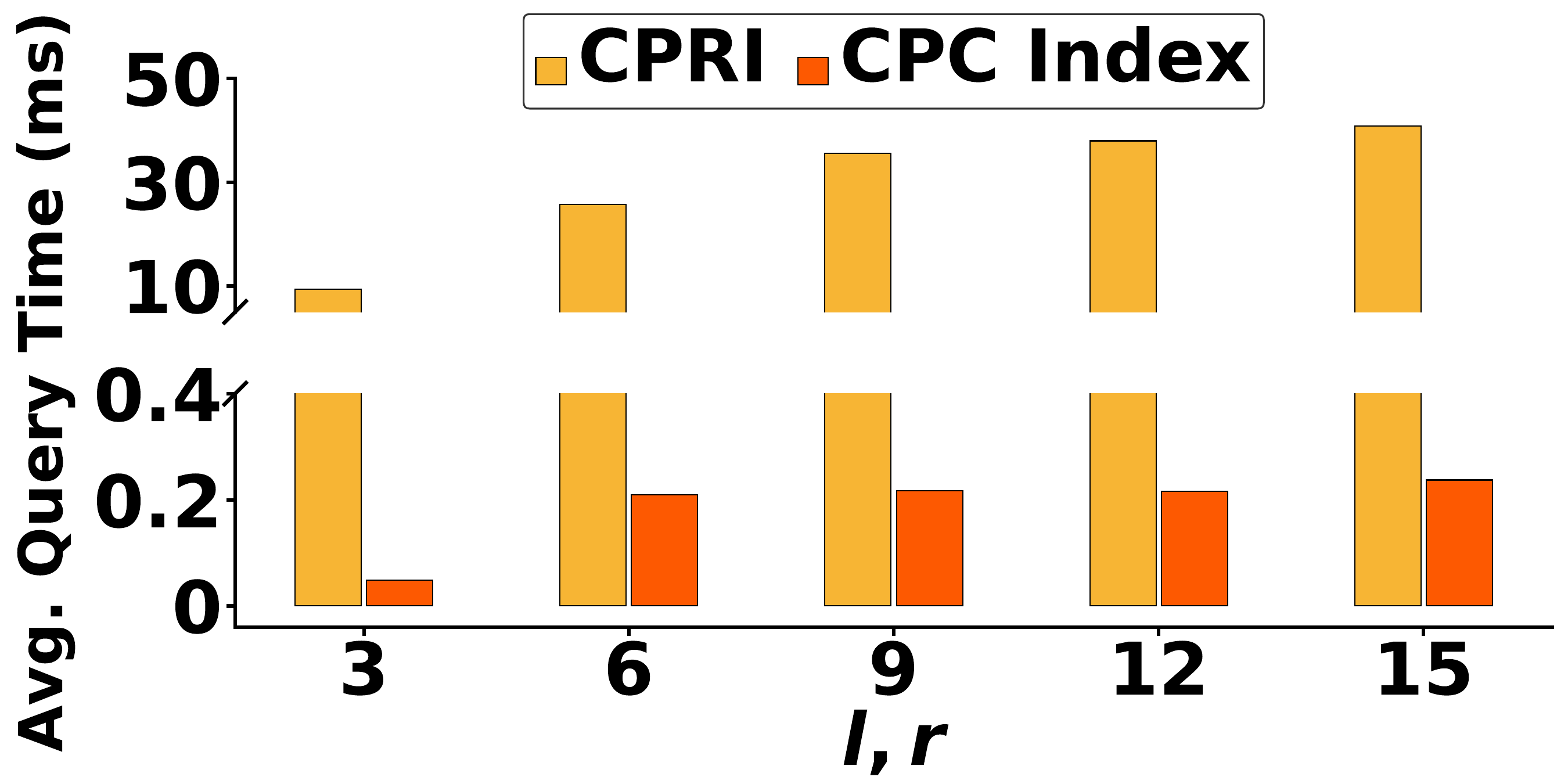}
    \caption{Average query time vs. $l=r$ for  all length-$6$ substrings.}
    \label{fig:intro}
\end{figure}

{\bf 3.}~We conducted experiments on $5$ real-world datasets with sizes up to $16$ GBs. The results show that the external memory version of our \CPM algorithm can deal with much larger datasets than the internal memory one, while requiring comparable time, which is very encouraging. 
Notably, we also show that our (optimized) index for \CPC substantially outperforms \CPRI in terms of all four measures of efficiency~\cite{pvldb23}: average query time; index size; construction time; and construction space. For example, on prefixes of lengths $1$, $2$, $4$, and $8$ billion letters of the genomic dataset of~\cite{chr}, our index was constructed on average 
$52\%$ faster and using less than $75\%$ space compared to \CPRI, while  its size was $74\%$ smaller. Also, for a query workload asking for all length-$9$ substrings of the dataset  and $l=r=9$, our index answered queries $45$ times faster on average. See Section~\ref{sec:experiments}.

We organize the rest of the paper as follows. Section~\ref{sec:preliminaries} provides the necessary notation, definitions, and tools. 
Section~\ref{sec:related} reviews related work, and Section~\ref{sec:conclusion} concludes the paper. 

\section{Preliminaries}\label{sec:preliminaries}

In this section, we discuss some basic concepts.

\subparagraph{Strings.}~An \emph{alphabet} $\Sigma$ is a finite set of elements called \emph{letters}.
We consider throughout that $\Sigma$ is an integer alphabet.  
For a string $T[1\dd n] \in \Sigma^n$, we denote its length $n$ by $|T|$ and its $i$th letter by $T[i]$. A \emph{substring} of $T$ starting at position $i$ and ending at position $j$ of $T$ is denoted by $T[i\dd j]$. A substring $S$ of $T$ may have multiple occurrences in $T$.
We thus characterize an \emph{occurrence} of $S$ in $T$ by its \emph{starting position} $i\in[1,n]$; i.e.,
$S=T[i\dd i+|S|-1]$.
A substring of the form $T[i\dd n]$ is a \emph{suffix} of $T$ and a \emph{proper suffix} if $i > 1$. 
A substring of the form $T[1\dd i]$ is a \emph{prefix} of $T$ and a \emph{proper prefix} if $i < n$. 
The \emph{reverse} of $T$ is denoted by $T^R$. For example, for string   $T=\mathtt{CTAAG}$, the reverse is $T^R=\mathtt{GAATC}$. 
The \emph{concatenation} of strings $X$, $Y$ is denoted by $X\cdot Y$ (or by $XY$).
For convenience, we assume that $T$ always ends with a terminating letter $T[n] = \$$ such that $\$$ occurs only at position $n$ of $T$ and is the lexicographically smallest letter. 

\subparagraph{Compact Tries, Suffix Trees, and Suffix Arrays.}~A \emph{compact trie} is a trie in which each maximal branchless path is replaced with a single edge whose label is a string equal to the concatenation of that path's edge labels. 
For a node $v$ in compact trie $\mathcal{T}$, $\str(v)$ is the concatenation of edge labels on the root-to-$v$ path. We define the \emph{string depth} of a node $v$ as $\sd(v) = |\str(v)|$. The \emph{locus} of a pattern $P$ is the node $v$ in $\mathcal{T}$ with the  smallest string depth such that $P$ is a prefix of $\str(v)$. We also denote the \emph{parent} of a node $v$ by $\parent(v)$ and the \emph{lowest common ancestor}  of two nodes $u$ and $v$ by $\LCA(u,v)$.

\begin{example}
    The tree in Fig.~\ref{fig:ST_PT} (left) is a compact trie. The  edge labeled $\texttt{na}$ replaces two edges labeled $\texttt{n}$ and $\texttt{a}$ in the (non-compacted) trie for the same string $\texttt{banana\$}$. For node $u_2$, $\str(u_2)=\texttt{ana}$, which is the concatenation of the edge labels $\texttt{a}$ and $\texttt{na}$ on the path from root to $u_2$. The string depth of node $u_2$ is $\sd(u_2)=|\texttt{ana}|=3$. The locus of pattern $P=\texttt{ana}$ is $u_2$, since $u_2$ is the node with the smallest string depth such that $P$ is a prefix of $\str(u_2)=\texttt{ana}$. The parent of node $u_2$ is $\parent(u_2)=u_1$. The lowest common ancestor of nodes $\ell_2$ and $u_2$ is $\LCA(\ell_2,u_2)=u_1$.
\end{example}

\begin{figure}[t]
    \centering
    \includegraphics[width=.7\linewidth]{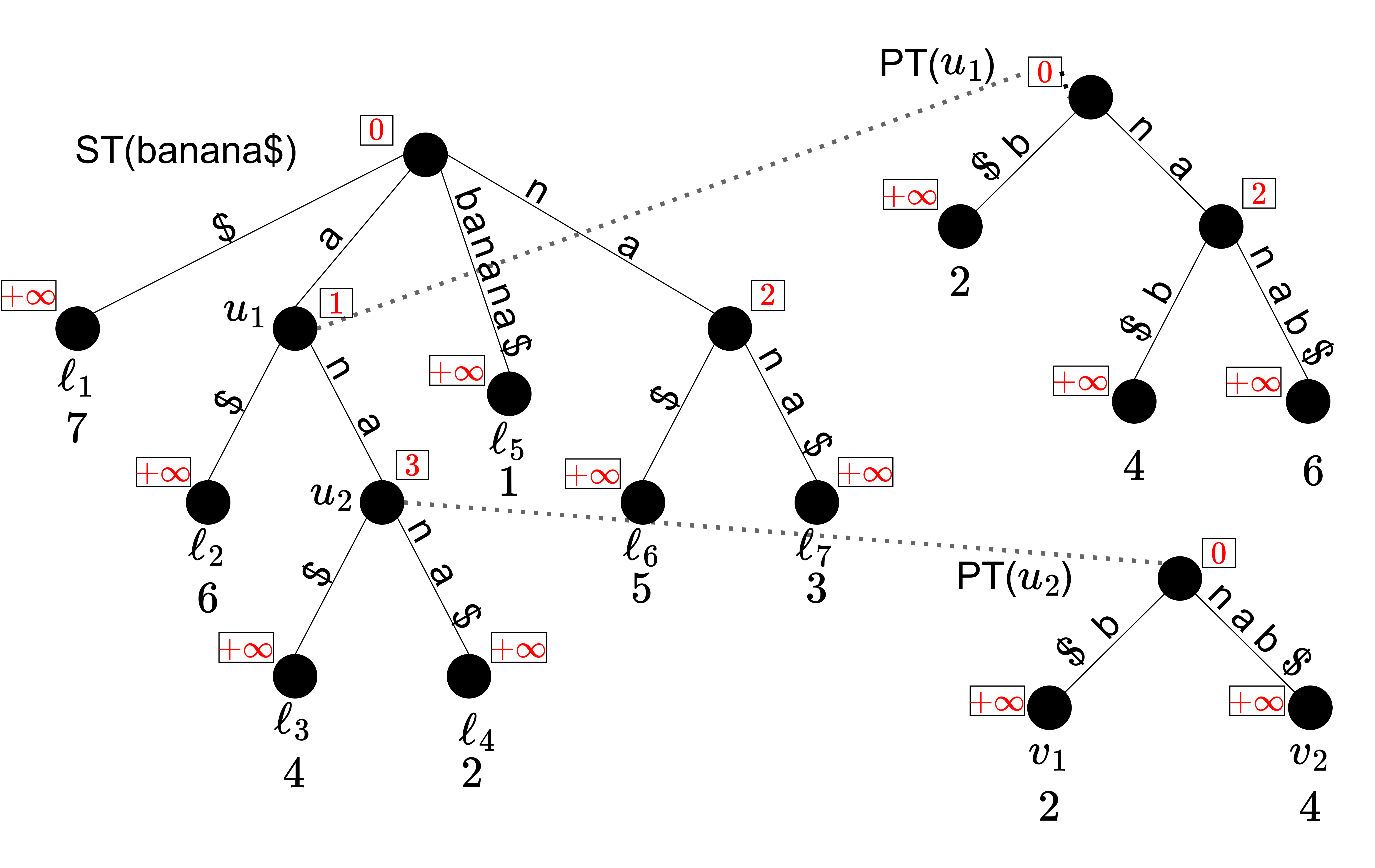}
    \caption{(Left) The suffix tree for the string $\texttt{banana\$}$. (Right) The prefix trees $\PT(u_1)$ and $\PT(u_2)$. The string depths in all trees are indicated in boxes and each leaf is labeled with the corresponding suffix array value.}
    \label{fig:ST_PT}
\end{figure}

The \emph{suffix tree} of a string $T[1\dd n]$, denoted by $\ST(T)$, is the  compact trie of all suffixes of $T$ with its leaves, $\ell_1, \hdots,\ell_n$, ordered in ascending lexicographical rank of the corresponding suffixes~\cite{DBLP:conf/focs/Weiner73}. We associate  each leaf with an id, $1, \ldots, n$, so that the $i$th leaf from left to right has $id$ equal to $i$, and set  its string depth to $+\infty$. See Fig.~\ref{fig:ST_PT} (left) for an example of the suffix tree for string $T=\texttt{banana\$}$. The suffix tree can be constructed in $\cO(n)$ time for integer alphabets~\cite{DBLP:conf/focs/Farach97}. 

The \emph{suffix array} $\SA[1\dd n]$ of a string $T$ is a permutation of $\{1,\ldots,n\}$ such that $T[\SA[i]\dd n]$ is the $i$th smallest suffix when ordered  lexicographically~\cite{DBLP:journals/siamcomp/ManberM93}.  The \emph{inverse suffix array} $\ISA[1\dd n]$ is a permutation such that $\ISA[\SA[i]] = i$. 
In Fig.~\ref{fig:ST_PT} (left), the numbers under the leaves induce the $\SA[1\dd 7]=[7,6,4,2,1,5,3]$. Also, $\ISA[1\dd 7]=[5,4,7,3,6,2,1]$. 

We use $\LCE(i,j)$ to denote the length of the \emph{longest common prefix} (LCP) of the suffixes $T[i\dd n]$ and $T[j\dd n]$. Note that by definition, $\LCE(i,j) = \sd(\LCA(\ell_{\ISA[i]}, \ell_{\ISA[j]}))$, and that this quantity can be computed in $\cO(1)$ time after an $\cO(n)$-time preprocessing of the suffix tree~\cite{DBLP:journals/siamcomp/HarelT84}. 

\begin{example}
For $T=\texttt{banana\$}$ 
in Fig.~\ref{fig:ST_PT},  
$\LCE(3,5)=2$ because the longest common prefix of  $T[3\dd 7]=\texttt{nana\$}$ and 
$T[5\dd 7]=\texttt{na\$}$ is $\texttt{na}$ which has length $2$. Note that  
$\ISA[3]=7$, $\ISA[5]=6$, and $\LCA(\ell_7, \ell_6)$ is the lowest common ancestor of $\ell_7$ and $\ell_6$, which has string depth $2$. Thus, $\LCE(3,5)=\sd(\LCA(\ell_7, \ell_6))=2$.   
\end{example} 

The $\LCP[1\dd n]$ array of a string $T$ stores the length of the LCP of lexicographically adjacent suffixes~\cite{DBLP:journals/siamcomp/ManberM93}. For $j>1$, $\LCP[j]$ stores the length of the LCP between the suffixes starting at $\SA[j-1]$ and $\SA[j]$, and $\LCP[1]=0$. Given $\SA$, the $\LCP$ array of $T$ can be computed in $\cO(n)$ time~\cite{DBLP:conf/cpm/KasaiLAAP01}. 

\begin{example}
The $\LCP$ array of $T=\texttt{banana\$}$ is $[0,0,1,3,0,0,2]$. 
For example, the length of the LCP between the suffix $\texttt{a\$}$  corresponding to $\SA[2]=6$ and the suffix $\$$  corresponding to $\SA[1]=7$ is $0$, while the length of the LCP  between the suffix corresponding to $\SA[2]$ and the suffix $\texttt{ana\$}$ corresponding to 
$\SA[3]$ is $1$. 
\end{example}

\subparagraph{Orthogonal Range Counting.}~We preprocess a multiset $\mathcal{P}$ of $N$ $d$-dimensional points.\footnote{Multiplicities greater than one can be handled via rank space mapping~\cite{DBLP:journals/tcs/DurocherEMT15}.} A query then consists of $d$ intervals defining a $d$-dimensional rectangle $[a_1,b_1] \times\dots \times[a_d,b_d]$ and the output (query answer) is the number of points in the submultiset $\mathcal{P} \cap [a_1,b_1] \times\dots \times[a_d,b_d]$ (i.e., how many points in $\mathcal{P}$ are contained in the rectangle including multiplicities). 

\begin{example}
    Let $\mathcal{P}=\{(1,1), (1,3),(2,2),(4,1),(4,2),(5,3)\}$. Consider a query that consists of $2$ intervals, one $[1,3]$ (for the first dimension) and another $[1,3]$ (for the second dimension); see Fig.~\ref{fig:grid}. This query is represented by the blue rectangle and represents all points $(x,y)$ with $x\in[1,3]$ and $y\in[1,3]$. The output of the query is $3$ because the size of the submultiset $\mathcal{P}\cap [1,3]\times [1,3]$ is $3$: the rectangle encloses the points $(1,1)$, $(1,3)$, and $(2,2)$ (blue points in Fig.~\ref{fig:grid}).

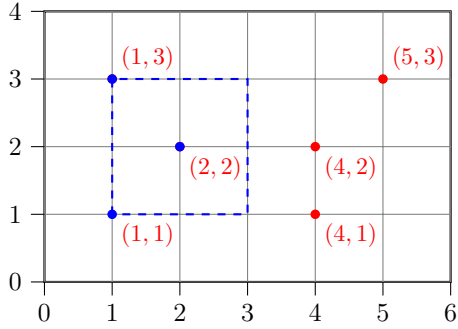
\begin{figure}
    \centering
\begin{tikzpicture}[scale=0.9, transform shape]

\draw[thick] (0,0) rectangle (6,4);

\draw[step=1cm,gray,thin] (0,0) grid (6,4);

\foreach \x in {0,1,2,3,4,5,6} {
    \draw (\x,0) -- (\x,-0.2) node[below] {\x};
}

\foreach \y in {0,1,2,3,4} {
    \draw (0,\y) -- (-0.2,\y) node[left] {\y};
}

\fill[red] (1,1) circle (1pt) node[below right] {\small $(1,1)$};
\fill[red] (2,2) circle (2pt) node[below right] {\small $(2,2)$};
\fill[red] (1,3) circle (2pt) node[above right] {\small $(1,3)$};
\fill[red] (4,1) circle (2pt) node[below right] {\small $(4,1)$};
\fill[red] (4,2) circle (2pt) node[below right] {\small $(4,2)$};
\fill[red] (5,3) circle (2pt) node[above right] {\small $(5,3)$};

\draw[blue, thick, dashed] (1,1) rectangle (3,3);

\fill[blue] (2,2) circle (2pt);
\fill[blue] (1,1) circle (2pt);
\fill[blue] (1,3) circle (2pt);

\end{tikzpicture}
    \caption{Example of orthogonal range counting.}
    \label{fig:grid}
\end{figure}
\end{example}

Our theoretical results use a data structure by J{\'{a}}J{\'{a}} et al.~\cite{DBLP:conf/isaac/JaJaMS04}.
This data structure occupies $\Oh(N(\log N/\log\log N)^{d-2})$ space, it can be constructed in $\Ohtilde(N)$ space and time, and it answers queries in $\Oh((\log N / \log \log N)^{d-1})$ time. 
As this data structure is complex and (as far as we know) has never been implemented, for our  implementations, we evaluated practical orthogonal range counting data structures, such as Range trees~\cite{bentley1978decomposable,DBLP:conf/focs/Lueker78}, KD trees~\cite{bentley1975multidimensional}, and  $R^*$-trees~\cite{beckmann1990r}; see Section~\ref{sec:experiments}.

\begin{figure}[ht]
    \centering
    \scalebox{1}
    {
    \begin{tabular}{c||c|c|ccc|cc|cccccccccc}
    $i$&$1$&$2$&$3$&$4$&$5$&$6$&$7$&$8$&$9$&$10$&$11$&$12$&$13$&$14$ &$15$\\ \hline
    $T$ & \tA & \tT &  \tA & \tT &\tA &  \tA &  \tA & \tT &  \tA &  \tA & \tA &  \tT & \tA &  \tA & \tA
    \end{tabular}
    }
    \begin{tikzpicture}[remember picture, overlay]
        \draw[->, thick, black] 
        (-9.2, -0.4) to[out=220, in=320] (-10.4, -0.4);
        \draw[->, thick, black] 
        (-7.3, -0.4) to[out=220, in=320] (-7.9, -0.4);
        \draw[->, thick, black] 
        (-6, -0.4) to[out=210, in=330] (-8.5, -0.4);
    \end{tikzpicture}

    \caption{An example LZ77 factorization of $T$ with $n=|T|=15$ and $z = 5$. Phrase boundaries are at positions $s_1=1$, $s_2=2$, $s_3=3$, $s_4=6$, and $s_5 = 8$. Starting positions of previous occurrences of phrases are indicated with an arrow.}
    \label{fig:LZ77}
\end{figure}

\subparagraph{LZ77.}~LZ77~\cite{DBLP:journals/tit/ZivL77} is a popular form of repetition-based compression and is the basis of the \texttt{zip} compression format~\cite{DBLP:journals/rfc/rfc1951} and several existing compressed text indexes~\cite{DBLP:conf/latin/GagieGKNP14,DBLP:conf/cpm/KreftN11}. The LZ77 factorization partitions a given string $T[1\dd n]$ into \emph{phrases} such that a phrase starting at position $s_i$ is either the earliest occurrence of the letter $T[s_i]$ in $T$ or a maximum length substring $T[s_i\dd j]$ such that there exist $i' < s_i$ and $j' <j$ where $T[i'\dd j'] = T[s_i\dd j]$; see Fig.~\ref{fig:LZ77}. For example, a phrase boundary at position $3$ is a maximum length substring $T[3\dd 5]$, as we have $T[1\dd 3]=T[3\dd 5]=\tA\tT\tA$. 
 We use $z$ to denote the number of LZ77 phrases in $T$ and $s_1,\dots,s_z$ to denote their starting positions in $T$ from left to right. 
The LZ77 factorization of $T$ can be computed in $\Oh(n)$ time~\cite{DBLP:journals/jacm/StorerS82}.

\section{Algorithm for the \CPM Problem}\label{sec:external}

We present the two versions of our algorithm for \CPM. 

\emph{External Memory (EM) Version.}~This version works in the standard external memory (EM) model~\cite{DBLP:journals/fttcs/Vitter06}. In this model, the RAM has size $\M$ and the disk has block size $\B$. There are two common operations for processing a sequence of elements: scanning and sorting. The complexity of EM algorithms is typically measured in I/O operations. For a sequence of $n$ elements, the I/O complexities are $\textsf{scan}(n)=n/\B$, for scanning it, and $\textsf{sort}(n)=n/\B \log_{\M/\B} n/\B$ for sorting it. The I/O complexity of the best $\SA$ and $\LCP$ array construction algorithms~\cite{DBLP:conf/icalp/KarkkainenS03,DBLP:journals/jea/Bingmann0O16} for a string $T$ of length $n$ is $\cO(\textsf{sort}(n))$.

\subparagraph{High-Level Idea.}~In Phase $1$, our algorithm  builds auxiliary data structures over $T$. In Phases $2$ and $3$, it groups together every occurrence of a string $P\cdot R$ of total length $m+r$ (i.e., the length-$m$ substring and the length-$r$ substring right after it), so that there is a group for each such distinct $P\cdot R$. In Phase $4$, it  groups together every occurrence of a string $L$ of length $l$ (i.e., the length-$l$ substring right before $P$), so that there is a group for each distinct $L$ of length $l$. In Phases $5$ and $6$, it combines  the groups, so that, for each substring $L\cdot P \cdot R$ of $T$ of total length $l+m+r$, there is a group representing all occurrences  of $L\cdot P \cdot R$, and then it outputs the  distinct pair(s) of strings $(L,R)$ for each group with at least $\tau$ occurrences.

\subparagraph{Phase 1.}~We construct the $\SA$ and $\LCP$ array for $T$ 
    and the same arrays for $T^R$, which we denote by $\SA^R$ and $\LCP^R$.
    
\subparagraph{Phase 2.}~We partition $[1,n]$ into a family $\mathcal{I}$ of maximal intervals $I_1,I_2,\ldots, I_{|\mathcal{I}|}$ such that for each interval $I\in \mathcal{I}$ and all $i,j\in I$: $T[\SA[i] \dd \SA[i]+m-1]=T[\SA[j] \dd \SA[j]+m-1].$  
    This means that the suffixes of $T$ starting at positions $\SA[i]$ and $\SA[j]$ share a prefix of length $m$. 
    We achieve this by
    scanning the $\LCP$ array and creating a new interval when $\LCP[i] < m$. 
    We also maintain the following association:
    For each $i \in [1, n]$, $\textsf{int-id}(i)$ maps $i$ to its interval id; e.g., $\textsf{int-id}(13)=5$ means that $13\in I_5$.
    This function can be implemented in EM within the same scan of the $\LCP$ array.

\subparagraph{Phase 3.}~We further partition each interval $I\in \mathcal{I}$ 
    into a family $\mathcal{S}(I)$ of subintervals $S_1,S_2,\ldots,S_{|\mathcal{S}(I)|}$ 
    such that for each subinterval $S\in\mathcal{S}(I)$ and all $i,j\in S$: $[\SA[i] \dd\SA[i]+m+r-1]=T[\SA[j] \dd \SA[j]+m+r-1].$ 
    This means that the suffixes of $T$ starting at positions $\SA[i]$ and $\SA[j]$ of $T$ share a prefix of length $m+r$.
    Again we achieve this by scanning the $\LCP$ array
    and mapping each $i$ to its subinterval id $\textsf{sint-id}(i)$.
    As a result, every occurrence $i$ of string $P\cdot R$, for two strings $P$, with $|P|=m$ and $R$, with $|R|=r$, will be mapped to the same pair $(\textsf{int-id}(i), \textsf{sint-id}(i))$. 
    With this information, for each $i \in [1, n]$, we create a $4$-tuple $(\SA[i], i, \textsf{int-id}(i), \textsf{sint-id}(i))$. 

\subparagraph{Phase 4.}~ We partition $[1, n]$ into a family $\mathcal{I}^R$ of maximal intervals $I^R_1,I^R_2,\ldots, I^R_{|\mathcal{I}^R|}$ via scanning $\LCP^R$. The suffixes of $T^R$ within each interval $I^R$ share a prefix of length $l$. 
    For each $i \in [1, n]$, $\textsf{rint-id}(i)$ maps $i$ to its interval id.
    For each $i \in [1, n]$, we create a $2$-tuple $(n - \SA^R[i], \textsf{rint-id}(i))$. 
    
\subparagraph{Phase 5.}~We observe that when $\SA[i] = n - \SA^R[j]$, for some $i,j\in[1,n]$, both encode the starting position of a string $P$ in $T$. Thus, we sort all $4$-tuples $(\SA[i], i, \textsf{int-id}(i), \textsf{sint-id}(i))$ by their first element in EM 
    and also sort separately all $2$-tuples $(n - \SA^R[j], \textsf{rint-id}(j))$ by their first element in EM. We then scan the sorted tuples and merge each pair of a $4$-tuple $(\SA[i], i, \textsf{int-id}(i), \textsf{sint-id}(i))$ and a $2$-tuple $(n - \SA^R[j], \textsf{rint-id}(j))$ with $\SA[i] = n - \SA^R[j]$ into a $5$-tuple $(\SA[i], i, \textsf{int-id}(i), \textsf{sint-id}(i), \textsf{rint-id}(j))$. 
    
\subparagraph{Phase 6.}~We sort the $5$-tuples $(\SA[i],\allowbreak i, \allowbreak\textsf{int-id}(i),\allowbreak \textsf{sint-id}(i), \allowbreak\textsf{rint-id}(j))$ in EM based on the \emph{second element}. 
    In particular, this sorting step will sort the tuples according to $i$, which is the index of $\SA$,
    thus sorting the tuples according to the $\SA$ order.
    We do this in order to be able to cluster all tuples with the same values of $\textsf{int-id}(i)$ and \textsf{sint-id}(i) together. Within each cluster corresponding to the same $\textsf{int-id}(i)$ and $\textsf{sint-id}(i)$, we conduct an additional sort based on $\textsf{rint-id}(j)$ and then the tuples with the same value of $\textsf{rint-id}(j)$ will also be clustered together. Next, we scan through the sorted $\textsf{rint-id}(j)$ within each cluster. Whenever, a distinct $\textsf{rint-id}(j)$ is encountered, we increase a counter $c_I$, where $I = \textsf{int-id}(i)$. For every cluster corresponding to the same $I$, the counter $c_I$ accumulates the results from all distinct $\textsf{sint-id}(i)$ values in that cluster.
    The total count $c_I$ per cluster represents the number of distinct strings $L\cdot P\cdot R$ of each substring $P$ that occurs  in $T$, where $|L| = l$, $|P| = m$, and $|R| = r$.   
    Whenever a distinct $\textsf{int-id}(i)$ is encountered compared to the previous one (indicating a transition to the next distinct $P$), we check if $c_I \geq \tau$. If it holds, we output the $c_I$ distinct $(L,R)$ string pairs. Otherwise, we ignore this $P$ and proceed to the next one.
    
The $\SA$ and $\LCP$ array can be constructed in $\cO(\textsf{sort}(n))$ I/Os.
All steps of the algorithm above either scan or sort $n$ elements in EM.
We thus arrive at the following result.

\begin{theorem}\label{the:EM}
\CPM can be solved using $\cO(\textsf{sort}(n))$ I/Os.
\end{theorem}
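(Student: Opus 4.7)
The plan is to verify the theorem by two checks: (i) the algorithm from Phase~1 to Phase~6 is correct, i.e., it outputs exactly the required set of contexts; and (ii) every phase fits in the claimed $\cO(\textsf{sort}(n))$ I/O budget, so the total cost is dominated by at most a constant number of sorts over $n$-element sequences. Because the phases are already spelled out, the proof amounts to a bookkeeping argument; I would present it as a short lemma per phase followed by an aggregation step.

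For the I/O bound, I would handle each phase as follows. In Phase~1, I invoke the cited EM suffix/LCP array constructions for $T$ and $T^R$, each taking $\cO(\textsf{sort}(n))$ I/Os. In Phases~2 and~3, I argue that creating $\textsf{int-id}$ and $\textsf{sint-id}$ amounts to a single left-to-right scan of $\LCP$, emitting a new id whenever the scanned value drops below $m$ (resp.\ $m+r$), and then one additional scan to attach the ids to the corresponding entries of $\SA$; in EM this is $\cO(\textsf{scan}(n)) = \cO(\textsf{sort}(n))$, and producing the $4$-tuples requires no more than a constant number of scans of streams of length $n$. Phase~4 is symmetric on $T^R$ and $\LCP^R$. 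In Phase~5, I independently sort the $4$-tuples and the $2$-tuples by their first coordinate, each in $\cO(\textsf{sort}(n))$ I/Os, and then a synchronized linear scan of the two sorted streams merges matching positions into $5$-tuples. In Phase~6, sorting the $5$-tuples by the second coordinate (giving $\SA$ order) and then performing the nested, stable sort on $(\textsf{int-id}, \textsf{sint-id}, \textsf{rint-id})$ takes a constant number of sorts of $n$-element sequences; the final streaming pass that maintains the counter $c_I$ and emits output only when $c_I \geq \tau$ is another $\cO(\textsf{scan}(n))$. Summing all contributions yields $\cO(\textsf{sort}(n))$ I/Os.

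For correctness, I would prove the key invariant that two positions $i, j \in [1,n]$ of $T$ satisfy $T[i \dd i+m+r-1] = T[j \dd j+m+r-1]$ if and only if their $\SA$-ranks end up with the same $(\textsf{int-id}, \textsf{sint-id})$ pair; this follows from the standard fact that equal-length prefixes of two suffixes coincide iff the $\LCP$ values between their $\SA$-ranks never drop below the requested length. An analogous invariant on $T^R$ with $\SA^R$, $\LCP^R$, and $\textsf{rint-id}$ handles the left context of length $l$; translating positions of $T^R$ back to $T$ via $i \mapsto n - \SA^R[i]$ is what makes the merge in Phase~5 align occurrences of the same $L$ with the same occurrence of $P\cdot R$. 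Consequently, within a Phase~6 cluster sharing the same $\textsf{int-id}$, each distinct pair $(\textsf{sint-id}, \textsf{rint-id})$ is in bijection with a distinct $(L,R) \in \mathcal{C}_T(P,l,r)$, so the counter $c_I$ equals $|\mathcal{C}_T(P,l,r)|$ and the emission rule $c_I \geq \tau$ matches the problem specification exactly.

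The only nontrivial point I would linger on is Phase~6, where one must argue that clustering by $\textsf{int-id}$, then sub-sorting inside each cluster by $(\textsf{sint-id}, \textsf{rint-id})$, can be done globally by a single stable sort on the composite key in $\cO(\textsf{sort}(n))$ I/Os rather than many per-cluster sorts that could blow up the I/O complexity; this is standard but worth stating. A secondary subtlety is boundary handling for positions $i$ with $i + m + r - 1 > n$ or $i - l < 1$: I would simply exclude those positions at the tuple-formation step, which does not affect the asymptotic count of scans and sorts. Once these points are made, combining the per-phase I/O counts with the correctness invariants yields Theorem~\ref{the:EM}.
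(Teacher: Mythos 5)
Your proposal is correct and follows essentially the same route as the paper, whose own proof is just the two-line observation that $\SA$/$\LCP$ construction costs $\cO(\textsf{sort}(n))$ I/Os and that every other phase is a scan or sort of $n$ elements. Your elaboration of the Phase~6 subtlety (replacing per-cluster sorts by one global stable sort on the composite key $(\textsf{int-id},\textsf{sint-id},\textsf{rint-id})$) and of the LCP-interval correctness invariant fills in details the paper leaves implicit, but does not change the argument.
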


Since $\textsf{sort}(n)=n/\B \log_{\M/\B} n/\B$,
by Theorem~\ref{the:EM}, the only parameters that affect our algorithm are $n$, $\B$, and $\M$. 

\emph{Internal Memory (IM) Version.}~The above algorithm can be implemented fully in internal memory (IM). After constructing the $\SA$ and $\LCP$ array in IM in $\cO(n)$ time and space,
we simply follow the steps of the above algorithm by
replacing the external-memory sorting routine with standard radix sort
and the external-memory scanning routine with standard loops (iterations).
Each invocation of radix sort on $n$ tuples takes $\cO(n)$ time and space because
the integers in the tuples are from the range $[1,n]$. Note that this IM algorithm avoids the range data structures used by \CPRI~\cite{navarro2020contextual} and thus is very fast in practice (see Section~\ref{sec:experiments}). 
We obtain the following result.

\begin{theorem}\label{the:IM}
\CPM can be solved in $\cO(n)$ time and space.
\end{theorem}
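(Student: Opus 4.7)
The plan is to verify that each of the six phases of the algorithm, together with the preprocessing, takes $\cO(n)$ time and space when executed in internal memory. For Phase 1, I would invoke a linear-time, integer-alphabet suffix array construction (e.g., via Farach's suffix tree) together with Kasai et al.'s $\LCP$ construction from $\SA$, applied to both $T$ and $T^R$. This yields $\SA,\LCP,\SA^R,\LCP^R$ in $\cO(n)$ time and $\cO(n)$ words of space.

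For Phases 2, 3, and 4, each phase consists of a single left-to-right scan of the relevant $\LCP$ array: a new interval (or subinterval) is opened whenever the $\LCP$ value falls below the corresponding threshold $m$, $m+r$, or $l$. Within the same scan, I would assign $\textsf{int-id}(i)$, $\textsf{sint-id}(i)$, and $\textsf{rint-id}(i)$ for every $i \in [1,n]$, and emit the $4$-tuples of Phase 3 and the $2$-tuples of Phase 4. Each of these steps is evidently $\cO(n)$ time and $\cO(n)$ space.

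For Phases 5 and 6, I would rely on radix sort throughout. In Phase 5, all sort keys ($\SA[i]$, $n - \SA^R[j]$, tuple indices) lie in $[1,n]$, so each radix sort on $n$ records runs in $\cO(n)$ time and space; the subsequent merge of the two sorted streams by equal first element is a linear scan, producing the $5$-tuples in $\cO(n)$ time. In Phase 6, sorting the $5$-tuples by $i$ (the $\SA$ index) is again a single radix sort on keys in $[1,n]$; within each resulting block of equal $\textsf{int-id}$, a secondary radix sort by $(\textsf{sint-id},\textsf{rint-id})$ has keys in $[1,n]$ and total size summing to $n$ across all blocks, so its total cost is $\cO(n)$. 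A final linear pass counts distinct $(\textsf{sint-id},\textsf{rint-id})$ pairs per $\textsf{int-id}$, compares the counter $c_I$ against $\tau$, and emits the output. Summing across all phases gives $\cO(n)$.

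The one subtlety I anticipate is the representation of the output: in principle, $\Theta(n)$ triples $(L,P,R)$ may qualify, each carrying literal strings of length $l+r$, which would overshoot $\cO(n)$ if $l+r = \omega(1)$. I would resolve this by reporting each qualifying context pair as a witness position in $T$ (e.g., $(\SA[i]-l,\SA[i]+m)$) so that the actual strings $L$ and $R$ are implicit substrings of $T$; the caller can materialise them from $T$ in $\cO(l+r)$ time per pair on demand. With this convention the output fits in $\cO(n)$ space and the overall cost matches the claimed bound.
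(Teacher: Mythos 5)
Your proposal matches the paper's own argument: linear-time $\SA$/$\LCP$ construction for $T$ and $T^R$, linear scans for Phases 2--4, and radix sort with keys in $[1,n]$ replacing the external-memory sorts in Phases 5--6, giving $\cO(n)$ time and space overall. Your closing remark on representing each reported $(L,R)$ pair by a witness position rather than by literal strings addresses an output-size subtlety that the paper leaves implicit, but it does not change the approach.
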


\section{Index for the \CPC Problem} \label{sec:Counting Indexes}

We first present a warm-up, intuitive approach, based on suffix trees and orthogonal range counting data structures. Unfortunately, this approach will only lead to a $\Ohtilde(n^2)$-space indexing solution. We will then show a more complex, but space-efficient, approach that has the same query time. This will yield our index for \CPC, which we will then optimize. 

\subsection{A Simple Approach}
\label{sec:simple}

\subparagraph{High-Level Idea.}~We use the suffix tree $\ST(T)$ to search $T$.
Upon a query pattern $P$ of length $m$, we find the node representing $P$. 
This node induces a subtree in $\ST(T)$, where each node $u$ at string depth $m+r$
represents a string $P\cdot R$, with $|R|=r$. For each such node $u$, we need to count the number of distinct strings $L$, such that $L\cdot P\cdot R$ occurs in $T$. To do this, for each node $u$, we use a prefix tree $\PT(u)$, which indexes the reversed prefixes of $T$ ending right before $P\cdot R$. We thus need to count the number of nodes $v$ in $\PT(u)$ at string depth $l=|L|$. To count efficiently, we use \emph{preprocessing}: we construct points in a five-dimensional space, for all pairs $(u,v)$, and perform range counting using the appropriate length constraints.

\subparagraph{Construction.}~We construct the suffix tree $\ST(T)$ and the corresponding suffix array $\SA[1\dd n]$. We compute the preorder rank for each node $u$ of $\ST(T)$ (i.e., the rank of $u$ in a preorder traversal of $T$), denoted by $\preorder(u)$. The root of $\ST(T)$ has preorder rank $1$. As mentioned in Section~\ref{sec:preliminaries}, the string depth of each leaf of $\ST(T)$ is set to $+\infty$. 
This is to correctly handle distinct right contexts at the end of string $T$.

For each node $u$ in $\ST(T)$, we construct the \emph{prefix tree} $\PT(u)$ as the compact trie of the reverse of all prefixes in the set $\{ \$T[1\dd \SA[i]-1] \mid i \in [a,b] \}$, where $[a,b]$ is the range of the ids of leaves under $u$ in $\ST(T)$. 
 A letter $\$$ is prepended to the reverse prefix to ensure that each reverse prefix corresponds to a leaf in the prefix tree. 
 Note that, going up from any leaf $\ell$ of $\PT(u)$ to the root (ignoring the $\$$), and then from the root of $\ST(T)$ to the corresponding leaf of $\ell$ in $\ST(T)$ spells the entire $T$.
 To correctly handle left contexts at the start of $T$, we define the string depth of the leaves of $\PT(u)$ as $+\infty$.
See Fig.~\ref{fig:ST_PT} for an example; in $\PT(u_1)$, the branch labeled $\texttt{nanab}$ corresponds to the reverse of the prefix $T[1\dd \SA[2]-1]]=T[1\dd 5]=\texttt{banan}$.

Let $u_P$ be the locus of pattern $P$ in $\ST(T)$, $u$ be a node in $\ST(T)$, and $v$ be a node in $\PT(u)$. Observe that $|\mathcal{C}_T(P,l,r)|$ is equal to the number of pairs $(u,v)$ of nodes for which the following three constraints hold:
\begin{enumerate}
\item $u$ is a node in the subtree of $u_P$;
\item $\sd(\parent(u)) < m + r \leq \sd(u)$; 
\item $\sd(\parent(v)) < l \leq \sd(v)$.
\end{enumerate}

Let $L\cdot P\cdot R$ a string occurring in $T$, with $|L|=l$ and $|R|=r$.
Constraint $1$ ensures that the string $R$ appears after $P$. Constraint $2$ ensures that the string $P\cdot R$ is of length $m+r$ or is at the end of the string $T$ (for the latter case the leaf nodes have $\sd$ equal to $+\infty$). Constraint  $3$ ensures that $L$ has length $l$ or is at the beginning of the string $T$.

However, checking these constraints for each pair $(u,v)$ independently is too costly. Therefore, we preprocess each pair $(u,v)$ where $u$ is a node in $\ST(T)$ and $v$ is a node in $\PT(u)$, by creating the five-dimensional point: 
\[
(\preorder(u), \sd(\parent(u))+1, \sd(u), \sd(\parent(v))+1, \sd(v)).
\]
We construct a range counting structure over these points. 

\subparagraph{Querying.}~To answer a query $(P,l,r)$, we first find the locus $u_P$ of $P$ in $\ST(T)$, and then answer the range counting query: 
\[
[\preorder(u_P), \preorder(\rLeaf(u_P))] \times [0,m+r] \times [m+r,+\infty] \times [0,l] \times [l,+\infty],
\]
where $\rLeaf(u_P)$ denotes the rightmost leaf in the subtree of $u_P$. The first dimension of the query ensures that Constraint 1 holds. This follows from the definition of preorder. Constraints 2 and 3 are clearly enforced by the remaining dimensions. 

\begin{example}
For $T = \texttt{banana\$}$, we construct $\ST(T)$ and a prefix tree for every node in $\ST(T)$; some of these prefix trees are shown in Fig.~\ref{fig:ST_PT}. A range counting structure is constructed over all resulting points. 
 To answer the \CPC query $(P = \texttt{a}, l = 1, r = 2)$, we first find its locus $u_P = u_1$ in $\ST(T)$. Note that $\preorder(u_1) = 3$ and $\preorder(\rLeaf(u_1)) = 7$. 
 We make the query 
 $
 [3,7]\times [0, 3] \times [3,+\infty] \times [0,1]\times [1,+\infty].
 $
 The returned value is $|\mathcal{C}_T(P,l,r)|=3$. To see why, note that Constraint $1$ selects the points corresponding to nodes $u_1$, $u_2$, $\ell_2$, $\ell_3$, and $\ell_4$; i.e., those in the subtree of $u_P = u_1$. 
  Among these, Constraint 2 restricts the selected nodes to $u_2$ and $\ell_2$, as they satisfy $\sd(\parent(u_2)) = 1 < m+r = |P|+r=3 \leq \sd(u_2) = 3$ and $\sd(\parent(\ell_2)) = 1 <m+r 
 \leq \sd(\ell_2) = +\infty$. In $\PT(u_2)$, the nodes $v_1$ and $v_2$ satisfy $\sd(\parent(v_1)) = 0 < l=1 \leq \sd(v_1) = +\infty$ and $\sd(parent(v_2))=0 < l=1 \leq \sd(v_2) = +\infty$ hold. Hence, the points
\resizebox{\linewidth}{!}{%
\begin{minipage}{\linewidth}
\begin{align*}
& (\preorder(u_2), \sd(\parent(u_2))+1, \sd(u_2), \sd(\parent(v_1))+1, \sd(v_1)) \\
& = (5,2,3,1, +\infty) \\
& (\preorder(u_2), \sd(\parent(u_2))+1, \sd(u_2), \sd(\parent(v_2))+1, \sd(v_2)) \\
& = (5,2,3,1, +\infty)
\end{align*}
\end{minipage}
}
\noindent are counted (i.e., one point with multiplicity $2$).   
For leaf $\ell_2$, the prefix tree $\PT(\ell_2)$ (not shown) has a single leaf. For this leaf, Constraint 3 is also satisfied and its corresponding point is counted.  Therefore, a total count of $3$ is returned by the range counting data structure.
\end{example}

\subparagraph{Complexities.}~As the locus of $P$ in $\ST(T)$ can be found in $\Oh(m)$ time and a five-dimensional range query can be answered in $\Oh((\log  n / \log\log n)^4)$ time~\cite{DBLP:conf/isaac/JaJaMS04}, we obtain an $\Oh(m + (\log  n / \log\log n)^4)$ query time complexity. The space complexity of this simple approach is quadratic in $n$ due to the sum of prefix tree sizes being quadratic in the worst case and a point being created for every node in a prefix tree. 

\subsection{The \CPC Index}
\label{sec:CPC_index}

\subparagraph{High-Level Idea.}~Instead of constructing points for every pair $(u,v)$, where $u$ is a node in $\ST(T)$ and $v$ a node in $\PT(u)$, we apply a well-known decomposition technique on the suffix tree, which partitions its nodes into \emph{heavy} and \emph{light}, and treat each node type separately. This allows us to bound the size of the prefix trees constructed by $\widetilde{\cO}(n)$ resulting in an index of $\Ohtilde(n)$ size with the same query time as in the simple approach.  

\subparagraph{Heavy-Path Decomposition.}~Our improved approach makes use of a \emph{heavy-path decomposition}~\cite{DBLP:journals/jcss/SleatorT83} of the suffix tree $\ST(T)$. We define the \emph{subtree size} of node $u$ as the number of leaves in the subtree rooted at $u$. A heavy-path decomposition categorizes the nodes in $\ST(T)$ into two types: \emph{light} and \emph{heavy}. The root $r$ of $\ST(T)$ is light. We then take the child $u$ of $r$ with a largest subtree size (ties broken arbitrarily) and call $u$ heavy. We continue from $u$, constructing a path of heavy nodes rooted at $r$ until the heavy node has no child. We classify all of the nodes adjacent to the heavy path as light and recursively apply the same procedure to each of these light nodes.
We use $\hp(u_l)$ to denote the heavy path rooted at light node $u_l$ and ending with a leaf of $\ST(T)$. See Fig.~\ref{fig:phi_values} (left) for an example.

A key property of heavy-path decomposition is that the number of light nodes on any root-to-leaf path is at most $\log n$~\cite{DBLP:journals/jcss/SleatorT83}. Since each leaf is in at most $\log n$ light-node subtrees, the sum of subtree sizes of all light nodes is $\Oh(n \log n)$. Applying the property, as we describe next, will enable us to reduce the space required from quadratic to $\Ohtilde(n)$.

\subparagraph{Construction.}~We first construct $\ST(T)$ and then perform a heavy-path decomposition to $\ST(T)$. For a light node $u_l$ in $\ST(T)$, we define the prefix tree $\PT(u_l)$ in the same way as in Section \ref{sec:simple}.
Observe that by the stated property of heavy-path decomposition, the sum of sizes of $\PT(u_l)$ for all light nodes $u_l$ is $\Oh(n \log n)$. This follows from the fact that the number of nodes is $\PT(u_l)$ is proportional to the number of leaves in the subtree rooted at $u_l$ in $\ST(T)$.

For every light node $u_l$ and every node $v$ in $\PT(u_l)$, we associate a value, denoted by $\phi(v)$. Intuitively, $\phi(v)$ captures the maximum string depth at which any suffix succeeding $\str(v)^R$ ``departs'' the heavy path $\hp(u_l)$. First, suppose $v$ is a leaf node in $\PT(u_l)$ and its corresponding prefix is $T[1\dd \SA[i] -1]$. Let $\ell_i$ be the leaf in the suffix tree corresponding to $T[\SA[i]\dd n]$ and let $\ell_{u_l}$ be the leaf in $\hp(u_l)$. We make $\phi(v) = \sd(\LCA(\ell_i, \ell_{u_l}))$; i.e., the string depth of the lowest ancestor of $\ell_i$ that is on the heavy path rooted at $u_l$. Next, for any internal node $v$ in $\PT(u_l)$, we make $\phi(v)$ equal to the maximum $\phi(\cdot)$ values of all leaves in the subtree of $v$. See Fig.~\ref{fig:phi_values} for an example; $u_l$ is the root, $v$ in $\PT$ is $v_1$ corresponding to $\ell_2$ in $\ST(T)$ and $\ell_{u_l}=\ell_4$, so $\phi(v_1)=\sd(\LCA(\ell_2,\ell_4))=\sd(u_1)=1$. Indeed the suffix $\texttt{a}$ departs the heavy path $\hp(u_l)$ at maximum string depth $1$. 

\begin{figure}
    \centering
    \includegraphics[width=\linewidth]{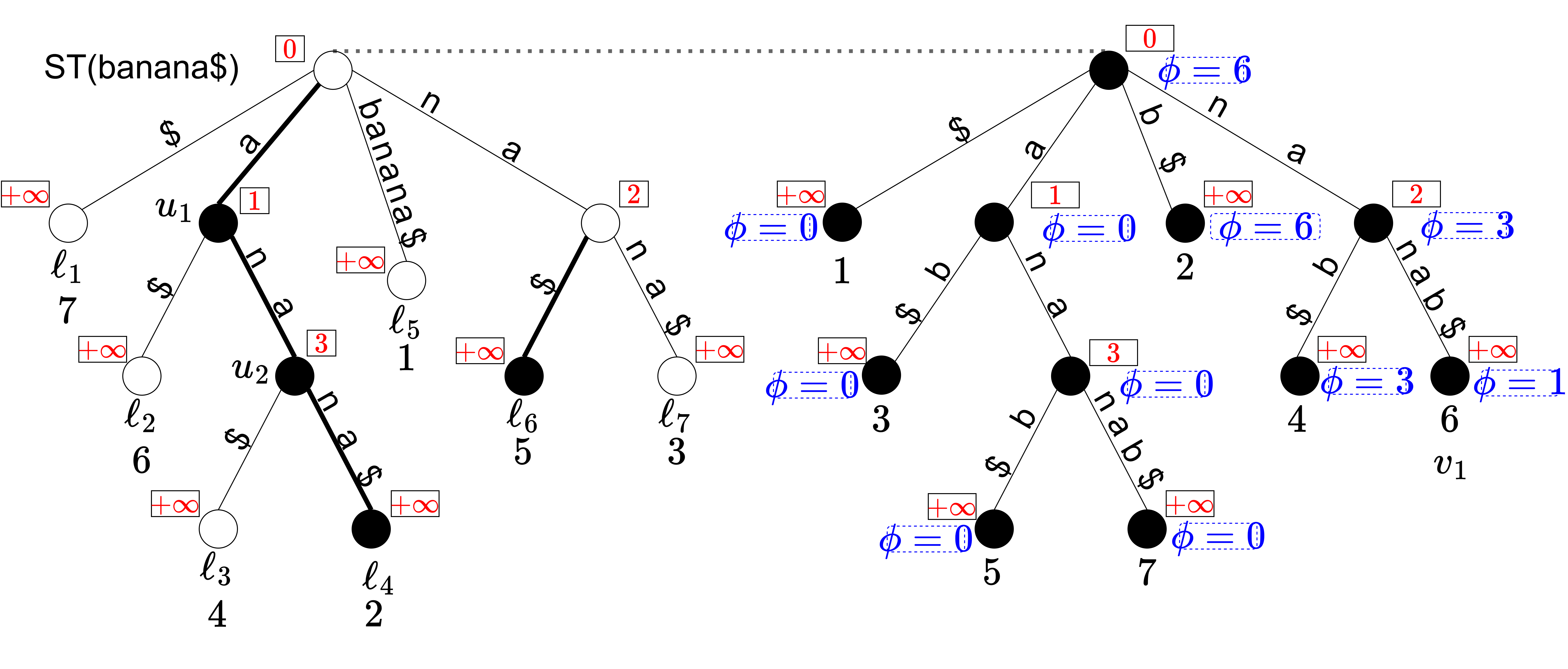}
    \caption{(Left) The suffix tree of $\texttt{banana\$}$ and its heavy-path decomposition. Light nodes are shown in white and heavy nodes in black. Heavy paths are shown with bold edges. (Right)  The $\PT$ tree for the root (a light node). String depths are shown in red and $\phi$ values in blue.}
    \label{fig:phi_values}
\end{figure}

Let $u_P$ be the locus of $P$ in $\ST(T)$. We consider two cases.

\emph{Case 1: $u_P$ is a light node.} 
We will consider nodes $u_l$ that are light and are in the subtree of $u_P$.
By the decomposition, every node has exactly one heavy child.
We will treat $u_l$ separately, based on whether $u_l$ departs or enters a heavy path.

The first subcase is easy. We count the occurrences corresponding to nodes $v$ in $\PT(u_l)$ of light nodes $u_l$ where
\begin{enumerate}
\item $u_l$ is a node in the subtree of $u_P$;
\item $\sd(\parent(u_l)) < m+r \leq \sd(u_l)$;
\item $\sd(\parent(v)) < l \leq \sd(v)$.
\end{enumerate}
These are the same conditions described in Section \ref{sec:simple}, but applied only to light nodes. Intuitively, these conditions account for occurrences that depart a heavy path at string depth $m+r$; see the nodes in red in Fig.~\ref{fig:context_counting_combined} (left), for an example.

We also need to count the occurrences corresponding to nodes $v$ in $\PT(u_l)$ of light nodes $u_l$ where
\begin{enumerate}
\item $u_l$ is a node in the subtree of $u_P$;
\item $\sd(u_l) < m+r \leq \phi(v)$;
\item $\sd(\parent(v)) < l \leq \sd(v)$.
\end{enumerate}
Intuitively, these conditions account for occurrences that ``enter'' a heavy path and remain on it until string depth $m+r$; consider the occurrences that follow a heavy path rooted at green nodes in Fig.~\ref{fig:context_counting_combined} (left) until string depth $m + r$. 

\emph{Case 2: $u_P$ is a heavy node.}~Besides any occurrences that we need to find as in Case 1, we need to account for some other occurrences. Let $u_l$ be the lowest light ancestor of $u_P$. We also need to count points associated with nodes $v$ in $\PT(u_l)$ where
\begin{enumerate}
\item $\sd(\parent(v)) < l \leq \sd(v)$; and (2) $m+r \leq \phi(v)$.  
\end{enumerate}
Constraint 1 is as before. 
For Constraint 2, note that $\hp(u_l)$ is defined because $u_l$ is a light node. 
By the decomposition, every (light) node has exactly one heavy child. 
Thus, $u_P$ lies (by construction) on $\hp(u_l)$. An example of the light node $u_l$ we are concerned with is shown in blue in Fig.~\ref{fig:context_counting_combined} (right). 
Recall that $\phi(v)$ gives the string depth of the longest suffix succeeding $\str(v)^R$ 
that also departs $\hp(u_l)$. Thus $m+r \leq \phi(v)$ considers only
the (remaining) strings $P\cdot R$, for which $|R|=r$.

\begin{figure}[t]
    \centering
    \includegraphics[width=.7\linewidth]{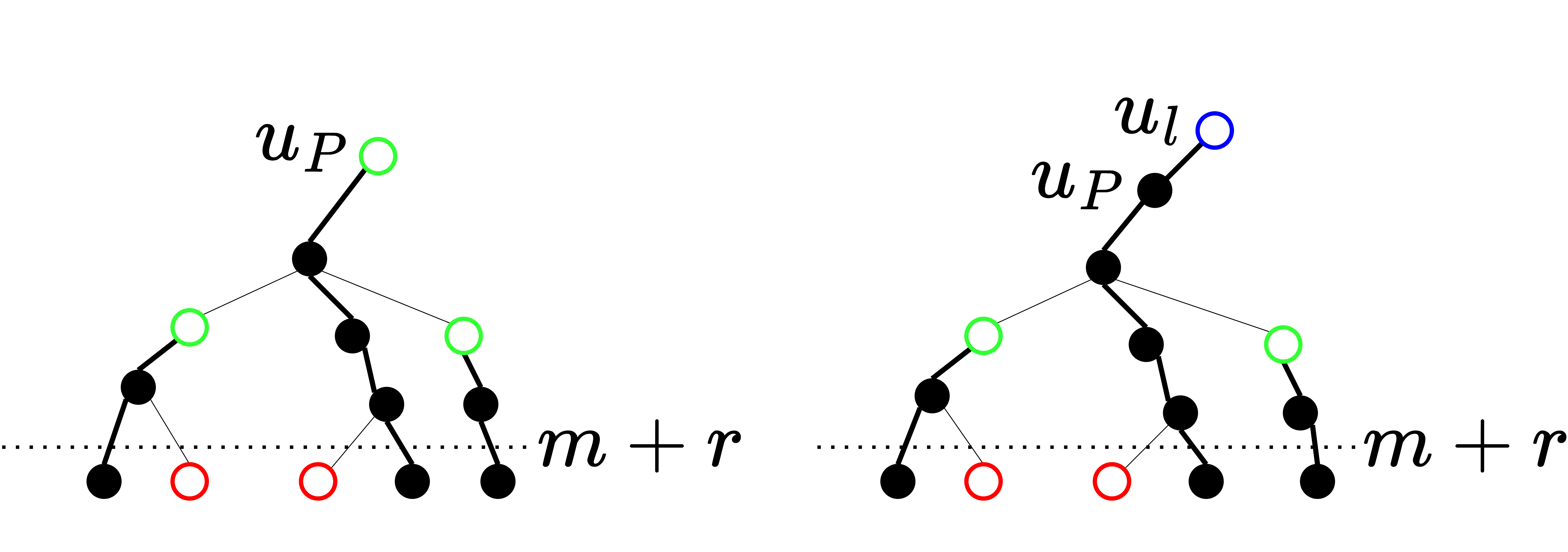}
    \caption{(Left) Occurrences for the  light nodes in red are counted with a query to $\mathcal{D}_1$. 
    Occurrences for the light nodes in green are counted with a query to $\mathcal{D}_2$. (Right) Occurrences for the heavy nodes are counted with a query to $\mathcal{D}(u_l)$.} 
    \label{fig:context_counting_combined}
\end{figure}

We now describe our data structures. For each light node $u_l$ in $\ST(T)$ and node $v$ in $\PT(u_l)$, we create 3 types of points:

\begin{enumerate} [left=20pt]
    \item[type 1:] A five-dimensional point is defined as: {\small
    \begin{equation*}
    \hspace*{-\leftmargini}
    (\preorder(u_l),\sd(\parent(u_l))+1, \sd(u_l), \sd(\parent(v))+1, \sd(v))
    \end{equation*}
    }

    \item[type 2:] Another five-dimensional point is defined as:{\small
    \begin{equation*}
    \hspace*{-\leftmargini}
    (\preorder(u_l), \sd(u_l)+1, \phi(v), \sd(\parent(v))+1, \sd(v))
    \end{equation*}
    }

    \item[type 3:] A three-dimensional point is defined as:{\small
    \begin{equation*}
    \hspace*{-\leftmargini}
    (\sd(\parent(v))+1, \sd(v), \phi(v))
    \end{equation*}
    }
\end{enumerate}

We maintain a range counting structure $\mathcal{D}_1$ over the collection of all type 1 points and a range counting structure 
$\mathcal{D}_2$ 
over the collection of all type 2 points. 
For each light node $u_l$ in $\ST(T)$, we maintain a separate range counting structure $\mathcal{D}(u_l)$ over the type 3 points created for every $v$ in
$\PT(u_l)$.

\subparagraph{Querying.}~We first find the locus of $P$ in $\ST(T)$. 
Let node $u_P$ be this locus. Our querying procedure considers two cases:

\emph{Case 1: $u_P$ is a light node.} We make the following query to $\mathcal{D}_1$:
$
[\preorder(u_P), \preorder(\rLeaf(u_P))] \times [-\infty,m+r] \times [m+r, +\infty] \times [-\infty, l]  \times [l, +\infty]. 
$
We also make the identical query to $\mathcal{D}_2$.
These correspond to the two sets of constraints observed in Case 1.
As the answer to \CPC, we return the sum of the counts returned by $\mathcal{D}_1$ and $\mathcal{D}_2$.

\emph{Case 2: $u_P$ is a heavy node.} Let $u_l$ be the lowest light ancestor of $u_P$. We make the queries described above to $\mathcal{D}_1$ and $\mathcal{D}_2$ and also the following query to $\mathcal{D}(u_l)$:
$
[-\infty,l] \times [l,+\infty] \times [m+r,+\infty].
$
As the answer to \CPC, we  return the sum of the counts returned by $\mathcal{D}_1$, $\mathcal{D}_2$, and $\mathcal{D}(u_l)$. A concrete example is provided in Example~\ref{exp:CPC}.

\begin{example} \label{exp:CPC}
 For $T =\texttt{banana\$}$, 
we construct the suffix tree  $\ST(T)$ in Fig.~\ref{fig:phi_values} (left) and perform a heavy-path decomposition to it. We then build the data structures $\mathcal{D}_1$, $\mathcal{D}_2$, and $\mathcal{D}(u_l)$ for each light node $u_l$ in Fig.~\ref{fig:phi_values}. 
 To answer a \CPC query $(P = \texttt{a}, l = 1, r = 2)$, we first identify $u_P = u_1$ representing $P$, which is a heavy node. Thus, we are in Case $2$.
 Next, we make the following queries:
\begin{enumerate}
    \item Q$1$: {\small $[3,7]\times [-\infty, 3] \times [3, +\infty] \times [-\infty, 1] \times [1, +\infty]$} to $\mathcal{D}_1$.
    
    \item Q$2$: {\small $[3,7]\times [-\infty, 3] \times [3, +\infty] \times [-\infty, 1] \times [1, +\infty]$} to $\mathcal{D}_2$.
    
    \item Q$3$: {\small $[-\infty, 1]\times [1, +\infty] \times [3, +\infty]$} to $\mathcal{D}(u_l)$, where $u_l$ is the lowest light ancestor of $u_P$, i.e., the root of $\ST(T)$.
\end{enumerate}

Q$1$ returns $1$, Q$2$ returns $0$, and Q$3$ returns $2$. Finally, we sum up these results to output $|\mathcal{C}_T(P,l,r)|=3$. 
\end{example}

Our result for \CPC is Theorem \ref{thm:CPC_index}.

\begin{theorem}[\CPC Index]
\label{thm:CPC_index}
For any text $T[1\dd n]$, there exists an index occupying $
\Oh(n\cdot(\log n)^3/(\log\log n)^2)$ space that answers any \CPC query in $\Oh(m+(\log n/ \log\log n)^3)$ time. The index can be constructed
in $\Ohtilde(n)$ time and space.
\end{theorem}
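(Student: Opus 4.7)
The plan is to split the argument into (a) correctness, (b) a space analysis, (c) a query-time analysis that requires shaving one dimension from a naive 5D application, and (d) construction. I expect the dimension reduction to be the main technical hurdle.

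First, for correctness, I would verify that the three data structures $\mathcal{D}_1$, $\mathcal{D}_2$, and the $\mathcal{D}(u_l)$'s collectively count each distinct $(L,R)\in \mathcal{C}_T(P,l,r)$ exactly once. Each occurrence of $L\cdot P\cdot R$ in $T$ maps bijectively to a pair consisting of a node $u$ in $\ST(T)$ whose incoming edge crosses string depth $m+r$ in the subtree of $u_P$, together with a node $v$ in the associated prefix tree whose incoming edge crosses depth $l$. Using the heavy-path decomposition, I would partition these pairs by the position of $u$: (i) $u$ is a light node $u_l$ (handled by $\mathcal{D}_1$); (ii) $u$ lies strictly inside a heavy path $\hp(u_l)$ rooted at a light descendant $u_l$ of $u_P$, which is captured by $\mathcal{D}_2$ because $\phi(v)=\sd(\LCA(\ell_i,\ell_{u_l}))$ records exactly the deepest point on $\hp(u_l)$ shared by the suffix associated with $v$, so $\sd(u_l)<m+r\le\phi(v)$ is the precise stay-on-heavy-path condition; and (iii) $u$ lies on the same heavy path as $u_P$, covered by the extra query to $\mathcal{D}(u_l)$ where $u_l$ is the lowest light ancestor of $u_P$. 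These three cases are mutually exclusive and exhaustive, and the $v$-side constraint $\sd(\parent(v))+1\le l\le\sd(v)$ picks out the unique node at depth $l$ in the relevant prefix tree per distinct $L$.

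Second, for the space bound, the classical heavy-path property yields $\sum_{u_l\text{ light}}|\PT(u_l)|=\Oh(n\log n)$, since $|\PT(u_l)|$ is proportional to the number of $\ST(T)$-leaves under $u_l$ and each leaf has $\Oh(\log n)$ light ancestors. Hence $\mathcal{D}_1$ and $\mathcal{D}_2$ each hold $N=\Oh(n\log n)$ five-dimensional points, and the $\mathcal{D}(u_l)$'s hold $\Oh(n\log n)$ three-dimensional points in aggregate. The main obstacle is that plugging $d=5$ into the JáJá et al.\ structure gives $\Oh(N(\log N/\log\log N)^3)=\Oh(n(\log n)^4/(\log\log n)^3)$ space and $\Oh((\log N/\log\log N)^4)$ query time, both a factor $\log n/\log\log n$ worse than claimed. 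To recover this factor I would exploit the special shape of the first coordinate: $\preorder(u_l)$ is a tree linearization and the first query range is always a subtree interval $[\preorder(u_P),\preorder(\rLeaf(u_P))]$. I would either (a) express the subtree range as the difference of two preorder-prefix queries, each answered by a persistent 4D JáJá et al.\ structure indexed by preorder, or (b) build a range tree on $\preorder$ with a 4D JáJá et al.\ structure at each internal node and use fractional cascading across the levels. Either route makes the structure behave as if on 4 dimensions over $N$ points, matching the claimed $\Oh(n(\log n)^3/(\log\log n)^2)$ space and $\Oh((\log n/\log\log n)^3)$ query time; the 3D structures $\mathcal{D}(u_l)$ fall into lower-order terms on both measures.

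Third, for construction I would build $\ST(T)$ and its heavy-path decomposition in $\Oh(n)$ time, build each $\PT(u_l)$ from the slice of sorted suffixes of $T^R$ corresponding to the leaves under $u_l$ in time proportional to $|\PT(u_l)|$ (aggregating to $\Oh(n\log n)$), compute all $\phi$ values bottom-up using $\Oh(1)$-time $\LCA$ queries on $\ST(T)$ after an $\Oh(n)$-time preprocessing (again $\Oh(n\log n)$ total), and finally build the range-counting structures in $\Ohtilde(N)=\Ohtilde(n)$ time and space. At query time, locating $u_P$ costs $\Oh(m)$ by descending $\ST(T)$, and the three range queries then contribute $\Oh((\log n/\log\log n)^3)$, yielding the stated overall bound. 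The dimension-shaving step of paragraph two is the technical crux; everything else is a careful but routine assembly of the ingredients already set up in Section~\ref{sec:CPC_index}.
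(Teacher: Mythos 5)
Your overall architecture matches the paper's: the same $\Oh(n\log n)$ bound on the total number of points via the heavy-path property, essentially the same correctness case analysis (which the paper leaves implicit in the construction of Section~\ref{sec:CPC_index}), and the same target of reducing the five-dimensional queries to four-dimensional ones before invoking J\'aJ\'a et al. The gap is precisely in the step you flag as the crux, and the paper closes it in a different place than you are looking. It does not collapse the $\preorder$ coordinate at all. Instead it observes that the five-dimensional query rectangle involves only four \emph{distinct} finite boundary values, because dimensions $2$--$3$ are both bounded by the single value $m+r$ and dimensions $4$--$5$ by the single value $l$: each such pair encodes an interval-stabbing condition $\sd(\parent(\cdot))+1 \le q \le \sd(\cdot)$, and since counting is subtractive, the number of intervals stabbed by $q$ equals the number with left endpoint at most $q$ minus the number with right endpoint less than $q$. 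Each pair therefore collapses to a single one-sided constraint, and (via the techniques of Munro, Nekrich and Thankachan) the whole query becomes a signed combination of four-dimensional dominance counts, giving $\Oh(N(\log N/\log\log N)^2)$ space and $\Oh((\log N/\log\log N)^3)$ time over $N=\Oh(n\log n)$ points, as claimed.

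By contrast, neither of your two routes for shaving the first coordinate is established. Route (b) --- a range tree over $\preorder$ with a four-dimensional structure at each canonical node --- costs an extra $\log n$ factor in space and decomposes each query into $\Theta(\log n)$ independent four-dimensional \emph{counting} queries; fractional cascading speeds up repeated predecessor searches along the canonical path, not the per-node counting queries themselves, so the query time remains $\Oh(\log n\cdot(\log n/\log\log n)^3)$, a factor $\log\log n$ worse than claimed (and the space a full $\log n$ worse). Route (a) --- a partially persistent J\'aJ\'a et al.\ structure indexed by preorder --- would give the right bounds if it existed, but no such persistent version of that structure is available, and you give no argument for one; adding a one-sided dimension "for free" via persistence to this kind of succinct counting structure is exactly the sort of claim that needs a proof, since iterating it would improve the known dimension-for-complexity trade-off for dominance counting across the board. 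So the missing idea is the interval-stabbing reduction on the string-depth coordinate pairs; the rest of your write-up (point count, correctness partition into light nodes, heavy descendants, and the heavy path through $u_P$, and the $\Ohtilde(n)$ construction) is consistent with the paper.
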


\begin{proof}
   The total number of points in $\mathcal{D}_1$, $\mathcal{D}_2$, and $\mathcal{D}(u_l)$ over all light nodes $u_l$ is upper-bounded by the sum of $\PT(u_l)$ sizes over all light nodes, which is $\cO(n\log n)$. Furthermore, since the number of distinct non-$\pm \infty$ boundary values used for five-dimensional queries is four, applying the techniques of Munro et al.~\cite{DBLP:conf/cccg/MunroNT15}, the five-dimensional queries can be reduced to four-dimensional queries. Finally, we apply the structure of J{\'{a}}J{\'{a}} et al.~\cite{DBLP:conf/isaac/JaJaMS04} for our counting data structures. For $N$ four-dimensional points this takes $\Oh(N(\log N/ \log\log N)^2)$ space and has query time $\Oh((\log N/\log\log N)^3)$ over our $N=\Oh(n\log n)$ points. We also observe here that all of the construction steps can be performed in $\Ohtilde(n)$ time and space. This is because the suffix tree construction and the heavy path decomposition can be done in $\Oh(n)$ time~\cite{DBLP:conf/focs/Farach97,DBLP:journals/jcss/SleatorT83}. Following this, a given prefix tree can be constructed in time proportional to its size. All points for a given prefix tree can then be extracted in time proportional to size of the prefix tree. The orthogonal range counting structures can then be constructed over the $\Ohtilde(n)$ points in $\Ohtilde(n)$ time.
\end{proof}

\subsection{Optimizations to the \CPC Index}\label{sec:cpc:optimizations}

A parameter $B$ that upper-bounds the total query length $l+m+r$ 
enables useful optimizations. It can be set to a small constant based on domain knowledge~\cite{plosgen}, and its impact on efficiency can be seen in Theorem~\ref{thm:BCPC_index} and Section~\ref{sec:experiments}.  

\subparagraph{LZ77 Factorization and Primary Occurrences.}~Let $s_1 < s_2 <$  $\hdots$ $< s_z$ be the starting positions of the phrases in the LZ77 factorization of $T$.
Let $S$ be a substring of $T$. We say $T[i\dd j]$ is a \emph{primary} occurrence of $S$ if $T[i\dd j] = S$ and $s_h \in [i,j]$ for some $h \in [1,z]$. A key property of the LZ77 factorization is that every substring of $T$ has a primary occurrence~\cite{karkkainen1996lempel}. Our index uses this property in conjunction with the bound $B$ to consider only the relevant substrings around phrase boundaries. In particular, every occurrence of string $L\cdot P\cdot R$ in the \CPC problem  must have a primary occurrence intersecting a phrase boundary, and this primary occurrence is \emph{all that is required}  to count 
the distinct occurrences of $L\cdot P\cdot R$. 

\subparagraph{Modified String.}~As a first step, given $T$ and its LZ77 factorization with phrases starting at $s_1 < s_2 <$  $\hdots$ $< s_z$, we construct a new string $T'$ as follows: 
$T'$ is initially the empty string. For $i \in [1, n]$, if $\min_{1\leq j \leq z}|s_j -i| < B$, append $T[i]$ to $T'$. If $i$ is \emph{the first} index in a run of indices satisfying $\min_{1\leq j \leq z} |s_j -i| \geq B$, we append a letter $\#\notin \Sigma$ to $T'$. Otherwise, we do nothing to $T'$.  See Fig.~\ref{fig:T_prime} for an example. The string $T'$ can be constructed in $\Oh(n)$ time by maintaining a pointer to the currently closest phrase start while iterating over $i \in [1,n]$. We also have $|T'| = \Oh(\min(B\cdot z, n))$, since $T'$ contains at most $\mathcal{O}(B)$ letters (either appended from $T$ or a $\#$) for each of the $z$ phase starts, but no more than $n$ letters.

\begin{figure}
    \centering
    \scalebox{0.8}
    {
    \begin{tabular}{c||c|c|cc|ccccccc|cccccc}
    $i$&$1$&$2$&$3$&$4$&$5$&$6$&$7$&$8$&$9$&$10$&$11$&$12$&$13$&$14$ &$15$\\ \hline
    $T$ & \tT &\tA &  \tA &  \tA & \tT &  \tA &  \tA & \tA &  \tT & \tA &  \tA & \tT & \tA & \tA & \tA\\
    $T'$ & \tT & \tA & \tA & \tA & \tT & \tA & \tA & \multicolumn{2}{c}{$\#$} & \tA & \tA & \tT & \tA & \tA  & \multicolumn{1}{c}{$\#$} \\
    \end{tabular}
    }

    \caption{String $T'$ is constructed from the LZ77 factorization of string $T$ with $B = 3$. Phrases start at $1,2,3,5,12$. For position $8$ in $T$, we append a $\#$ in $T'$ and for position $9$ we do nothing. For position $15$ in $T$, we append a $\#$ in $T'$. Thus, $|T'|=14$.}
    \label{fig:T_prime}
\end{figure}

\subparagraph{Truncated Suffix Tree.}~It does not quite suffice to directly construct our \CPC index over the string $T'$, as this would allow for spurious occurrences; i.e., occurrences containing a $\#$ in either $L$ or $R$. To solve this issue, we first perform an additional preprocessing step on $\ST(T')$, in which we  truncate any paths from the root containing a $\#$  just prior to the first $\#$. (If an edge starts with a $\#$, we remove that edge.) Let us denote this truncated suffix tree by $\widetilde{\ST}(T')$.  A heavy-path decomposition is then performed on $\widetilde{\ST}(T')$.

\subparagraph{Truncated Prefix Tree.}~For each light node $u_l$ in $\widetilde{\ST}(T')$, we construct a prefix tree $\PT(u_l)$. We use the untruncated $\ST(T')$ for getting the correct set of prefixes for the leaves in the subtree of $u_l$. Note, due to truncation, a leaf in $\widetilde{\ST}(T')$ created through truncation may have multiple corresponding leaves  in the untruncated $\ST(T')$. 

We now define the $\phi(\cdot)$ values. Suppose $v$ is a leaf node in $\PT(u_l)$ and its corresponding prefix is $T[1\dd \SA[i] -1]$. Let $\ell_i$ be the leaf in $\widetilde{\ST}(T')$ corresponding to $T[\SA[i]\dd n]$ and $\ell_{u_l}$ the leaf in $\hp(u_l)$. Let $d$ be the string depth at which the first $\#$ occurs on the root-to-$\ell_i$ path in $\ST(T')$ (inclusive), or $+\infty$ if no such $\#$ occurs. We define $\phi(v) = \min(d-1,\sd(\LCA(\ell_i, \ell_{u_l})))$. As in Section \ref{sec:CPC_index}, for any internal node $v$ in $\PT(u_l)$, we make $\phi(v)$ equal to the maximum $\phi(\cdot)$ value of the leaves in the subtree of $v$. 
For every light node $u_l$ in $\widetilde{\ST}(T')$, we truncate the prefix tree $\PT(u_l)$ by truncating any path from its root containing a $\#$  just prior to its first $\#$ (any edge starting with $\#$ is removed).

The data structures from Section \ref{sec:CPC_index} can now be constructed over $\widetilde{\ST}(T')$ and the collection of truncated prefix trees. Since truncating the suffix and prefix trees only decreases the overall index size, we directly obtain Theorem~\ref{thm:BCPC_index}.

\begin{theorem}[Optimized \CPC Index]
\label{thm:BCPC_index}
For any text $T[1\dd n]$ with an LZ77 factorization of $z$ phrases, a parameter $B\in[1,n]$, and $k :=  \min(B\cdot z, n)$, there exists an index occupying $
\Oh(k \cdot(\log k)^3/(\log\log k)^2)$ space that answer any \CPC query, with $l+m+r\leq B$, in $\Oh(m+(\log k/ \log\log k)^3)$ time. The index can be constructed in $\Oh(n)+\Ohtilde(k)$ time and space.
\end{theorem}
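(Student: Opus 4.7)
The plan is to reduce the theorem to Theorem~\ref{thm:CPC_index} applied to the modified string $T'$, after arguing that (i) the truncated structures have total size $\Ohtilde(k)$ where $k=\min(B\cdot z,n)$, and (ii) the truncations of $\widetilde{\ST}(T')$ and of each $\PT(u_l)$, together with the cap $\phi(v)\le d-1$, faithfully encode $\mathcal{C}_T(P,l,r)$ for every query with $l+m+r\le B$.

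First I would establish the size bound. By construction, $|T'|=\Oh(k)$: around each phrase start $s_j$ we emit at most $2B+1$ letters of $T$, a $\#$ is inserted only between consecutive kept blocks, and the total is trivially capped by $n$. Since $\widetilde{\ST}(T')$ is obtained from $\ST(T')$ by pruning subtrees that would spell a $\#$, and each truncated $\PT(u_l)$ is a similarly pruned compact trie over the kept prefixes, the heavy-path argument of Section~\ref{sec:CPC_index} yields a total of $\Oh(k\log k)$ nodes across all light-node prefix trees, and hence $\Oh(k\log k)$ points fed into the range counting structures. Instantiating the dimensionality reduction of Munro et al.~\cite{DBLP:conf/cccg/MunroNT15} and the structure of J\'aJ\'a et al.~\cite{DBLP:conf/isaac/JaJaMS04} over $N=\Oh(k\log k)$ four-dimensional points gives space $\Oh(k(\log k)^3/(\log\log k)^2)$ and query time $\Oh((\log k/\log\log k)^3)$; the additive $\Oh(m)$ comes from locating $u_P$ in $\widetilde{\ST}(T')$.

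For correctness I would show that, under $l+m+r\le B$, every pair $(L,R)\in\mathcal{C}_T(P,l,r)$ is encoded by exactly one pair $(u,v)$ counted by the Section~\ref{sec:CPC_index} queries applied to the truncated structures. The forward direction uses the primary-occurrence property: $L\cdot P\cdot R$ has a primary occurrence $T[i\dd j]$ containing some $s_h$, so every position in $[i\dd j]$ lies within distance $B$ of $s_h$ and is therefore copied verbatim into $T'$ with no intervening $\#$; consequently the path spelling $L\cdot P\cdot R$ survives in $\widetilde{\ST}(T')$ and in the relevant $\PT$, making its $(u,v)$ pair available to the queries and collapsed uniquely as before (distinct primary occurrences of the same $L\cdot P\cdot R$ still merge at the same $v$ at depth $l$). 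The reverse direction uses the fact that the truncations remove exactly the subtrees whose strings would contain $\#$, while the $\phi(v)\le d-1$ cap ensures that Case~2 of Section~\ref{sec:CPC_index} (which fires when $u_P$ lies on a heavy path and $P\cdot R$ is realised by continuing past $u_l$) only counts occurrences whose entire spelling of $P\cdot R$ remains in a $\#$-free region.

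The construction cost splits cleanly: the LZ77 factorization and $T'$ are obtainable in $\Oh(n)$ time and space, after which the truncated suffix tree, its heavy-path decomposition, all truncated prefix trees with their $\phi$ values, and the range counting structures are built on $\Oh(k)$ input and $\Oh(k\log k)$ points in $\Ohtilde(k)$ time and space, exactly as in Theorem~\ref{thm:CPC_index}. The main obstacle I anticipate is the correctness argument for the $\phi$ cap: verifying precisely that an occurrence of $P\cdot R$ that enters $\hp(u_l)$ at $u_l$ and continues on it to depth $m+r$ is counted by the Case~2 query if and only if $m+r$ is at most both the genuine $\LCA$ depth with the heavy-path leaf and $d-1$, where $d$ is the depth at which the first $\#$ would appear on the corresponding root-to-leaf path in the untruncated $\ST(T')$; this is where the novelty over Theorem~\ref{thm:CPC_index} concentrates and where a careful case analysis is needed.
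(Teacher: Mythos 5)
Your proposal is correct and follows essentially the same route as the paper: build $T'$ of length $\Oh(\min(B\cdot z,n))$ around LZ77 phrase boundaries, truncate the suffix and prefix trees at the first \# and cap $\phi$ accordingly, and then invoke the Theorem~\ref{thm:CPC_index} machinery over the resulting $\Oh(k\log k)$ points, with correctness for $l+m+r\le B$ resting on the primary-occurrence property. In fact you supply more of the correctness argument (the two directions, the deduplication of multiple primary occurrences at the depth-$l$ node, and the role of the $\phi\le d-1$ cap) than the paper, which simply asserts that the theorem is obtained directly once the construction is described.
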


\section{Related Work}\label{sec:related}

Our work is the first to propose and address the \CPM problem. However, there is a large  literature on mining other types of patterns from strings. For example,~\cite{takeaki} mines maximal frequent substrings,~\cite{icdm2005,icdm2008} frequent substrings,~\cite{icdm2007,maxcfp} approximate frequent substrings, and~\cite{aaai20} statistically significant maximal substrings. 

Currently, there are no indexes designed specifically for the \CPC problem we introduced here. However, 
there are many indexes~\cite{DBLP:conf/focs/Weiner73,DBLP:journals/siamcomp/ManberM93,DBLP:conf/focs/Farach97,DBLP:journals/jacm/KarkkainenSB06,DBLP:journals/algorithmica/0001KL15,DBLP:conf/esa/LoukidesP21,DBLP:journals/csur/Navarro21a,sharma9,rindex,sh,talgChristiansen} for the text indexing problem (see Section~\ref{sec:intro}). There are also  indexes whose size is related to some good measure of compressibility for highly repetitive string collections~\cite{DBLP:journals/csur/Navarro21a}. Examples are indexes built on measures like the size of the Lempel-Ziv factorization~\cite{sharma9}, of  
the Run-Length-Encoded Burrows–Wheeler Transform (RL-BWT)~\cite{rindex}, or of other compressed forms~\cite{sh,talgChristiansen}. 

Closer to our work are three indexes that report $\mathcal{C}_T(P,l,r)$, but not its size.  
The first index~\cite{navarro2020contextual} uses a suffix array and other 
auxiliary data structures; it has $\cO(|P|+|\mathcal{C}_T(P,l,r)|)$ query time and uses $\cO(n)$ space. It is used in \CPRI (see Section~\ref{sec:intro}), which our approach outperforms in terms of all measures of efficiency~\cite{pvldb23} (see Section~\ref{sec:experiments}). 
The second index~\cite{navarro2020contextual} has 
$\cO(|P|\log\log n + |\mathcal{C}_T(P,l,r)|\log n)$ query time and uses $\cO(\bar{r}\log (n/\bar{r}))$ space, where $\bar{r}$ is the maximum number of runs in the BWT of $T$ and of its reverse. The third index~\cite{abedinDCC23} has $\cO(|P|+|\mathcal{C}_T(P,l,r)|\log\ell\log(n/e))$ query time, where $\ell=\max(l,r)$ and $e$ is the number of equal-letter runs in the BWT of $T$, and uses $\cO(e\log(n/e))$ space. The query time for the third index  is slightly slower than that for the second. The last two indexes are 
of theoretical interest, as they need some functionality of  
the $r$-index~\cite{rindex} that (to our knowledge) is not available in any implementation. 
We therefore compare our approach to a simpler  approach utilizing the $r$-index. 
Our experiments show that the time required to construct just the $r$-index  
exceeds by far that of our index. 

\section{Experimental Evaluation}\label{sec:experiments}

\emph{Datasets.}~We used $5$ publicly available, large-scale datasets from different domains. The datasets have different
length $n$, alphabet size $|\Sigma|$, and number $z$ of LZ77 phases (see Table~\ref{tab:data} for details). The largest dataset, \chr, has length $n=16\cdot 10^9$ letters. 
WIKI is a collection of different versions of Wikipedia articles about Albert Einstein up to November 10, 2006, released as part of the Pizza \& Chili corpus \cite{pizzachili}.
BST and SDSL represent all commits from two GitHub repositories (Boost \cite{boost} and SDSL \cite{sdsl}), retrieved from \cite{Getgit}. \sars is a genomic (SARS-CoV-2) dataset  \cite{NCBI}.
\chr is a dataset from the $1,000$ Genomes Project~\cite{GenomesProject} 
representing chromosome 19 sequences from $1,000$ human haplotypes \cite{chr}. We have extracted the prefixes of length $16,000,000$ of all $1,000$ sequences to form the dataset used in our experiments.

\begin{table}[ht]
\centering
\caption{Datasets characteristics} \label{tab:data}
\begin{tabular}{|c|c|c|c|c|}
\hline
{\bf Datasets} &  {\bf Domain} & $n$ &  $|\Sigma|$ & $z$\\ \hline \hline
WIKI \cite{pizzachili} &biographical article& 474,214,798 & 36& 266,159 \\ \hline
BST \cite{boost} &github repository& 3,186,957,848  & 68& 25,745 \\ \hline
SDSL \cite{sdsl} & github repository& 5,508,023,144  & 67 &321,791 \\ \hline
 
    \sars \cite{NCBI}     & bioinformatics & 10,000,000,000 & 4 & 2,060,594\\ \hline
    \chr  \cite{chr}    & bioinformatics & 16,000,000,000 & 4 & 2,314,193\\ \hline
    
\end{tabular}
\end{table}

\begin{figure*}[t]
\centering
\begin{subfigure}{0.19\textwidth}
    \includegraphics[width=1.03\linewidth]{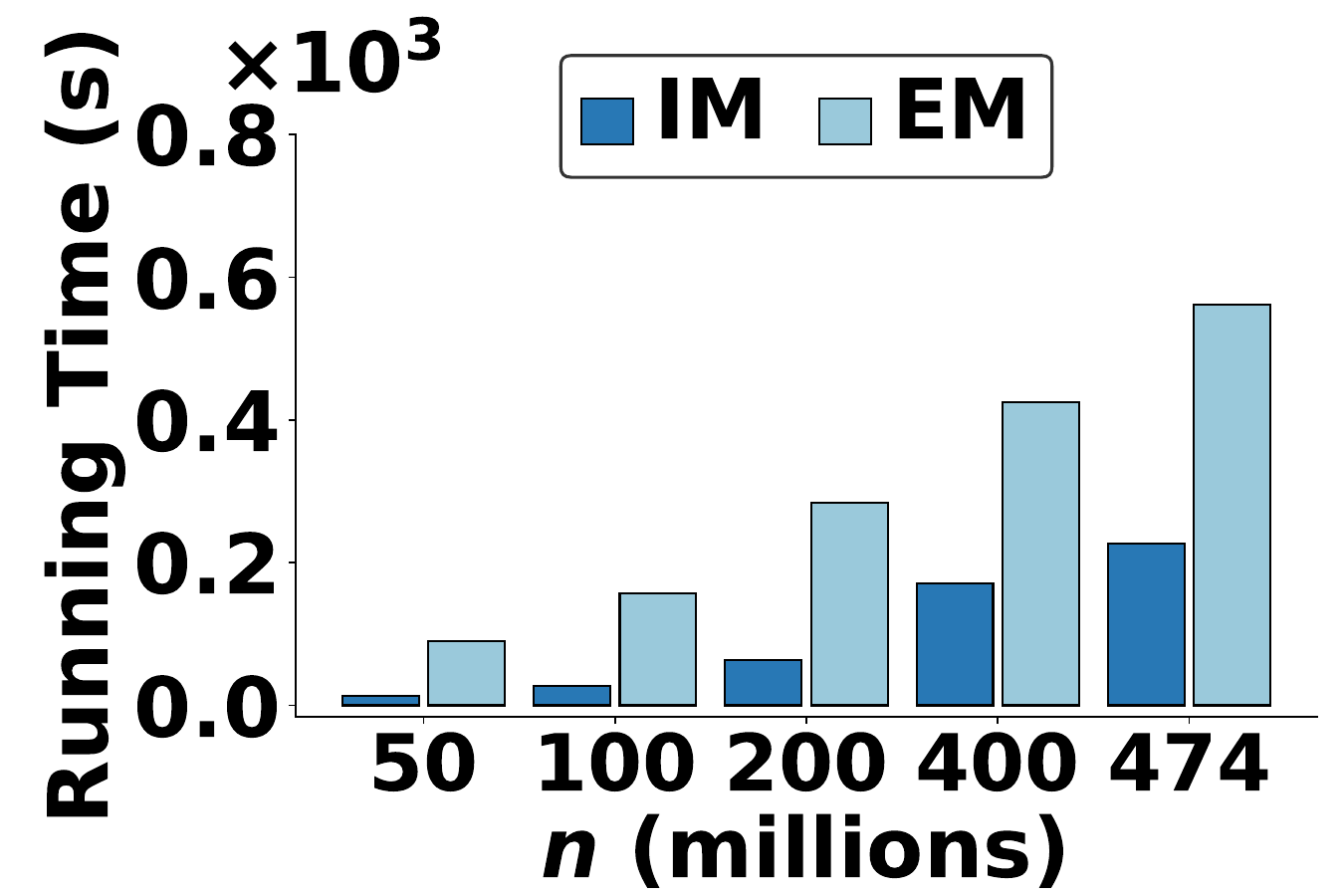}
    \caption{WIKI}
    \label{fig:EM_WIKI_time_varying_n}
\end{subfigure}
\hfill
\begin{subfigure}{0.19\textwidth}
    \includegraphics[width=1.03\linewidth]{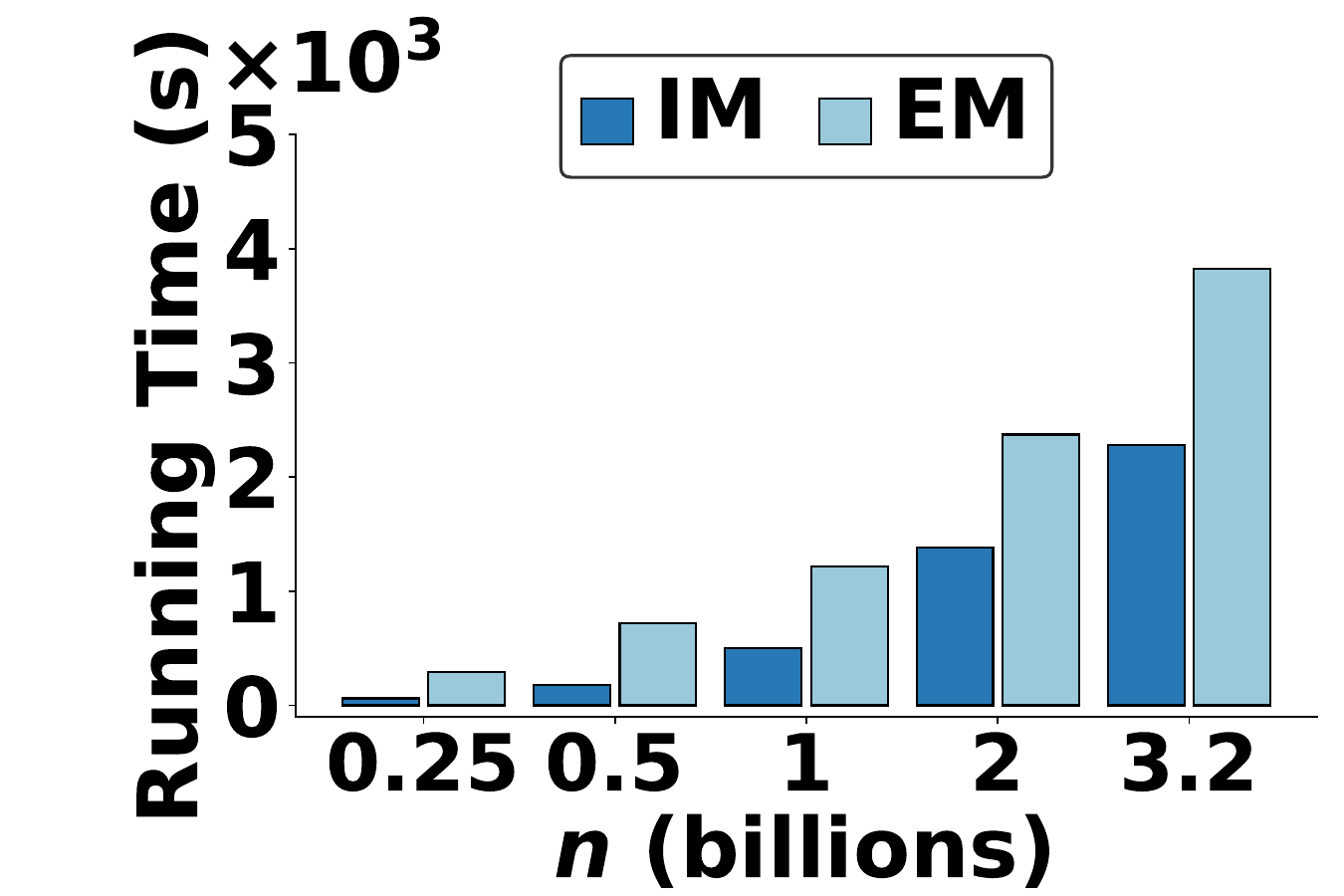}
    \caption{BST}
\end{subfigure}
\hfill
\begin{subfigure}{0.19\textwidth}
    \includegraphics[width=1.03\linewidth]{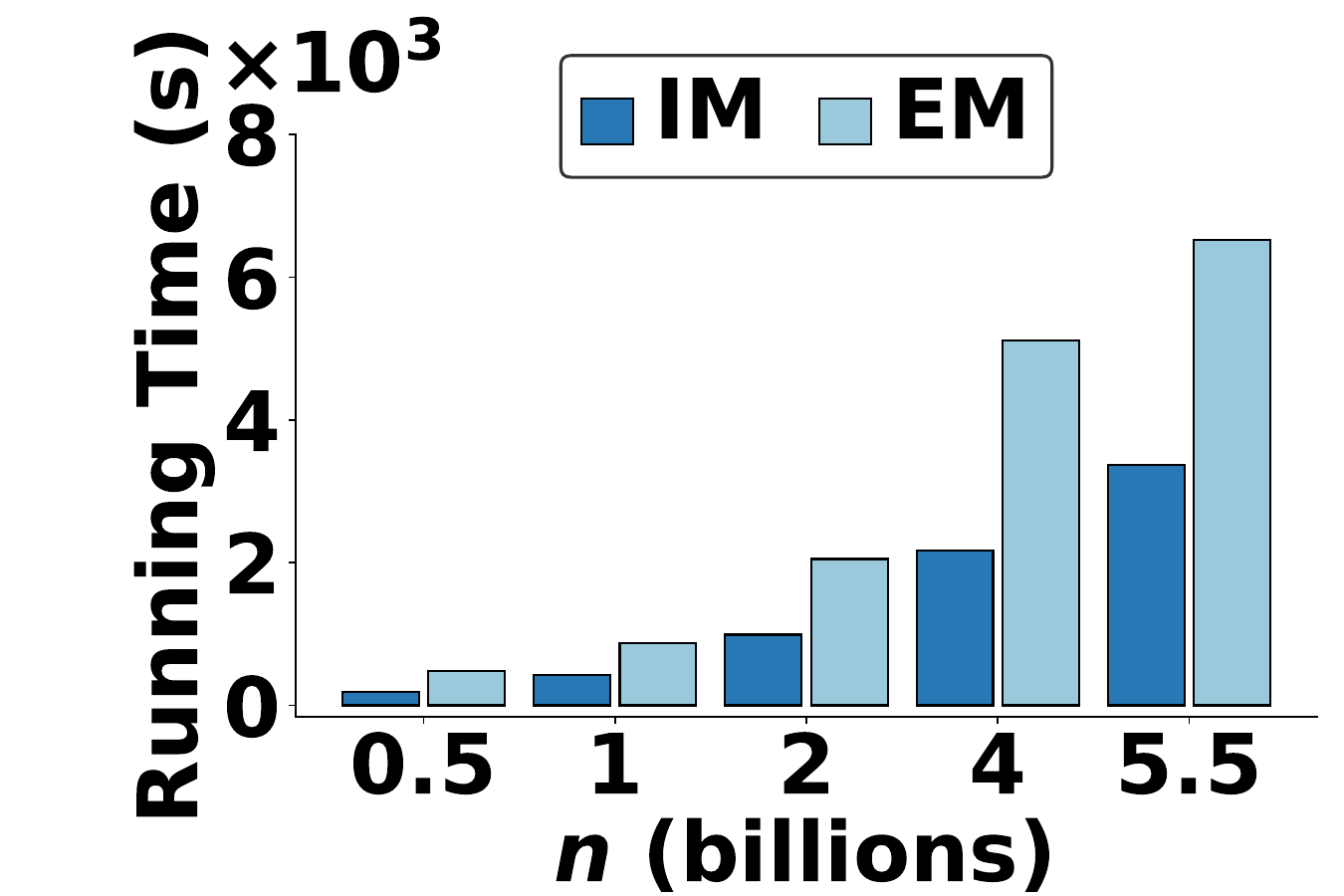}
    \caption{SDSL}
\end{subfigure}
\hfill
\begin{subfigure}{0.19\textwidth}
    \includegraphics[width=1.03\linewidth]{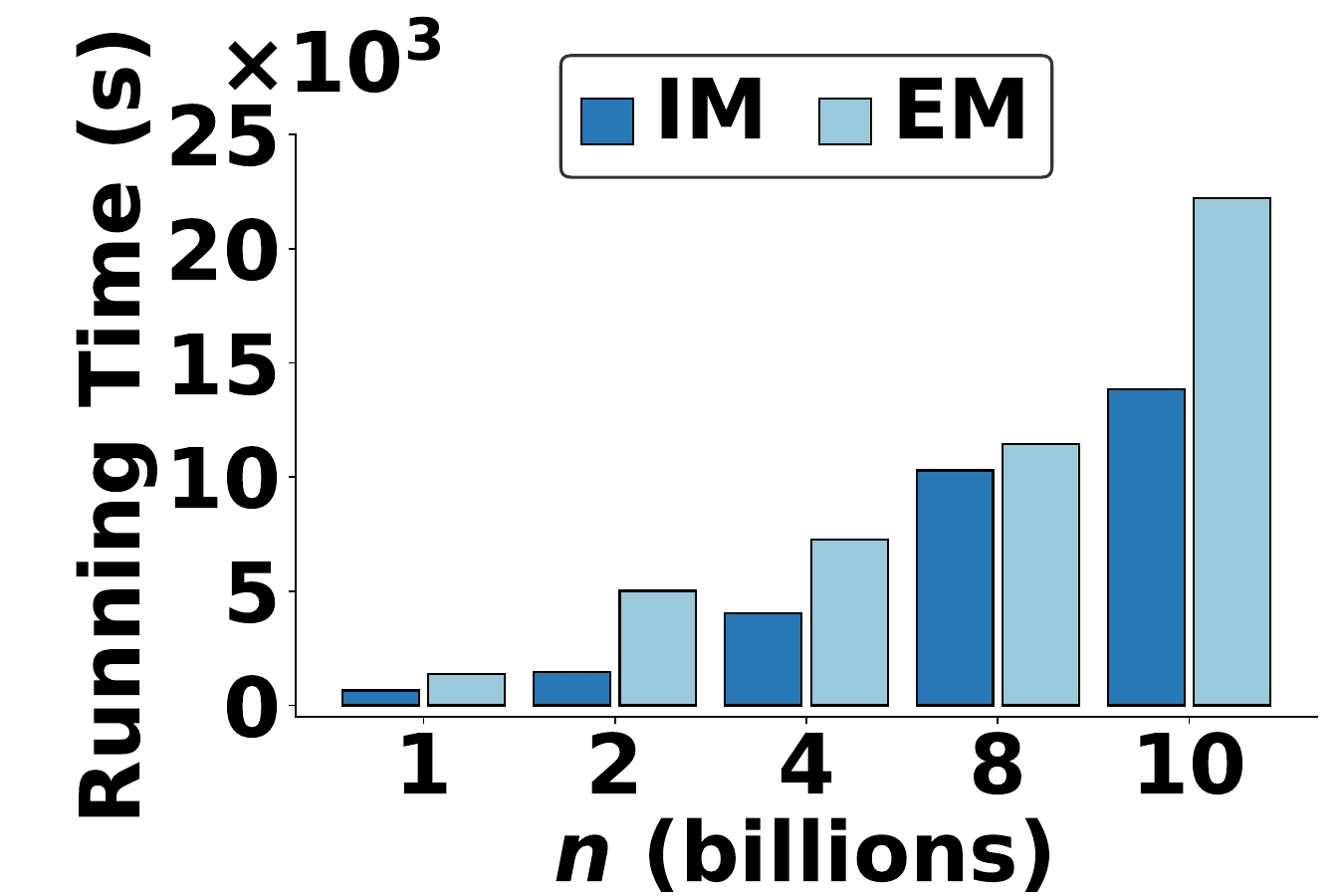}
    \caption{\sars}
    \label{fig:EM_sars_time_varying_n}
\end{subfigure}
\hfill
\begin{subfigure}{0.19\textwidth}
    \includegraphics[width=1.03\linewidth]{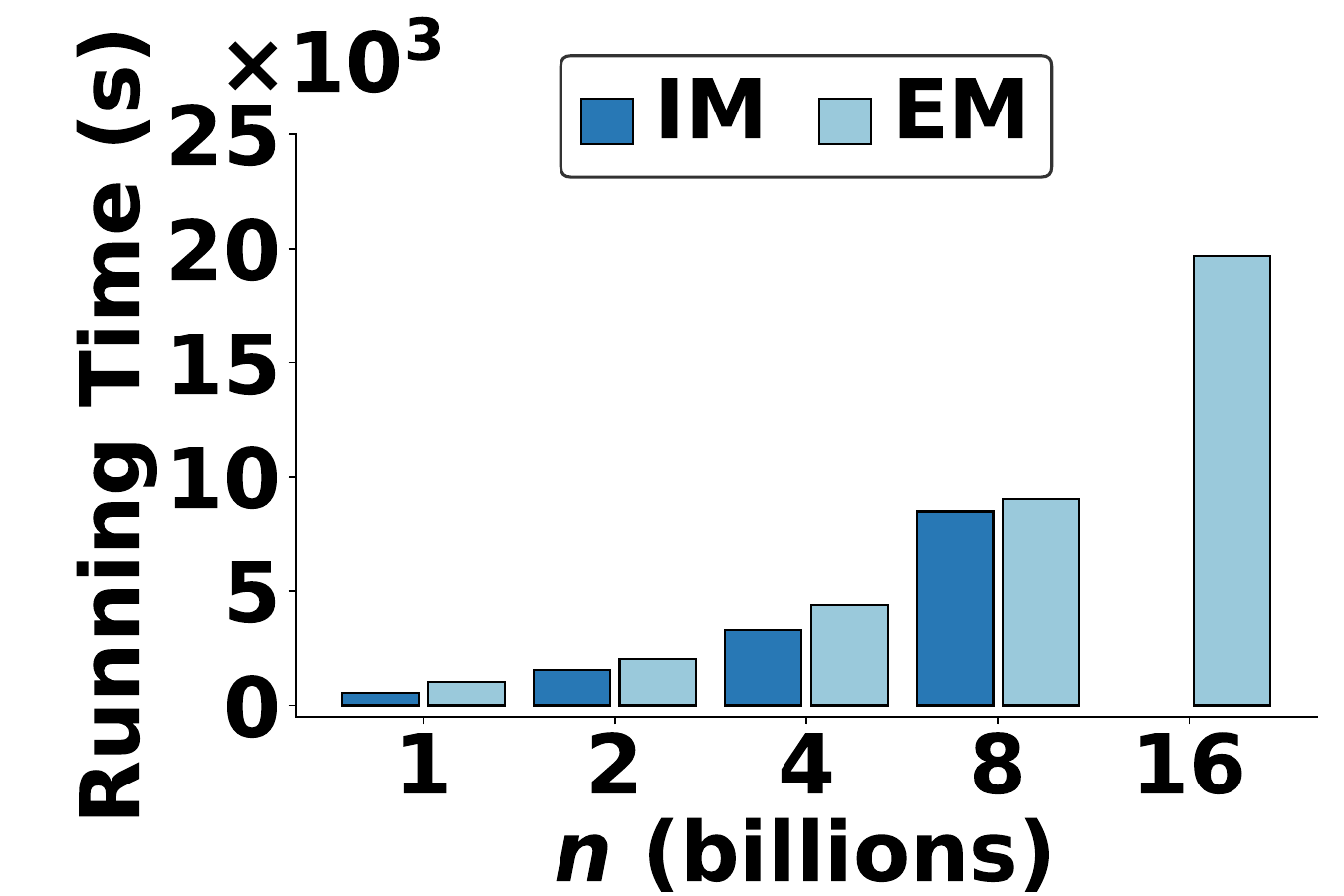}
    \caption{\chr}
    \label{fig:EM_chr_time_varying_n}
\end{subfigure}

\begin{subfigure}{0.19\textwidth}
    \includegraphics[width=1.03\linewidth]{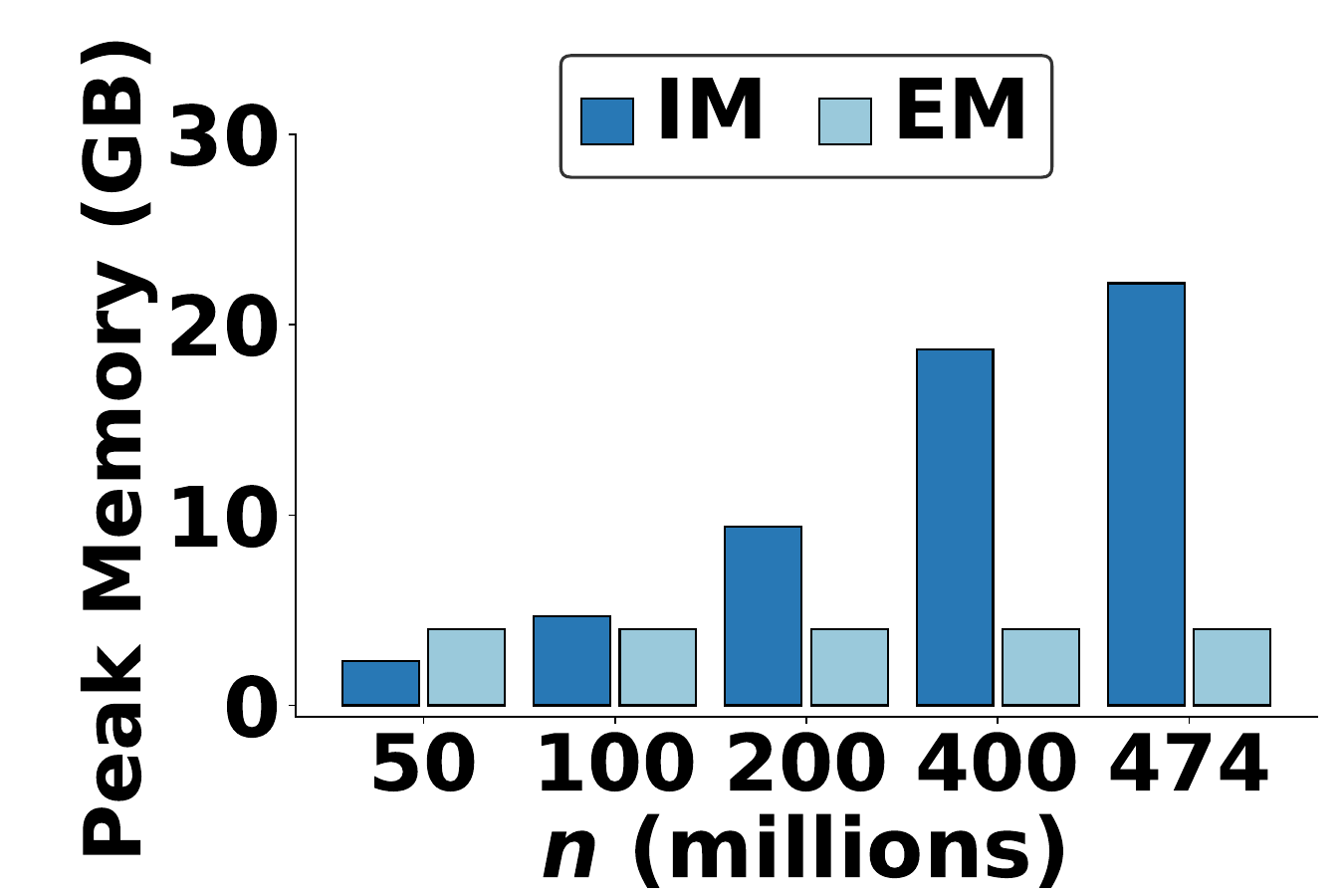}
    \caption{WIKI}
    \label{fig:EM_WIKI_Ram_varying_n}
\end{subfigure}
\hfill
\begin{subfigure}{0.19\textwidth}
    \includegraphics[width=1.03\linewidth]{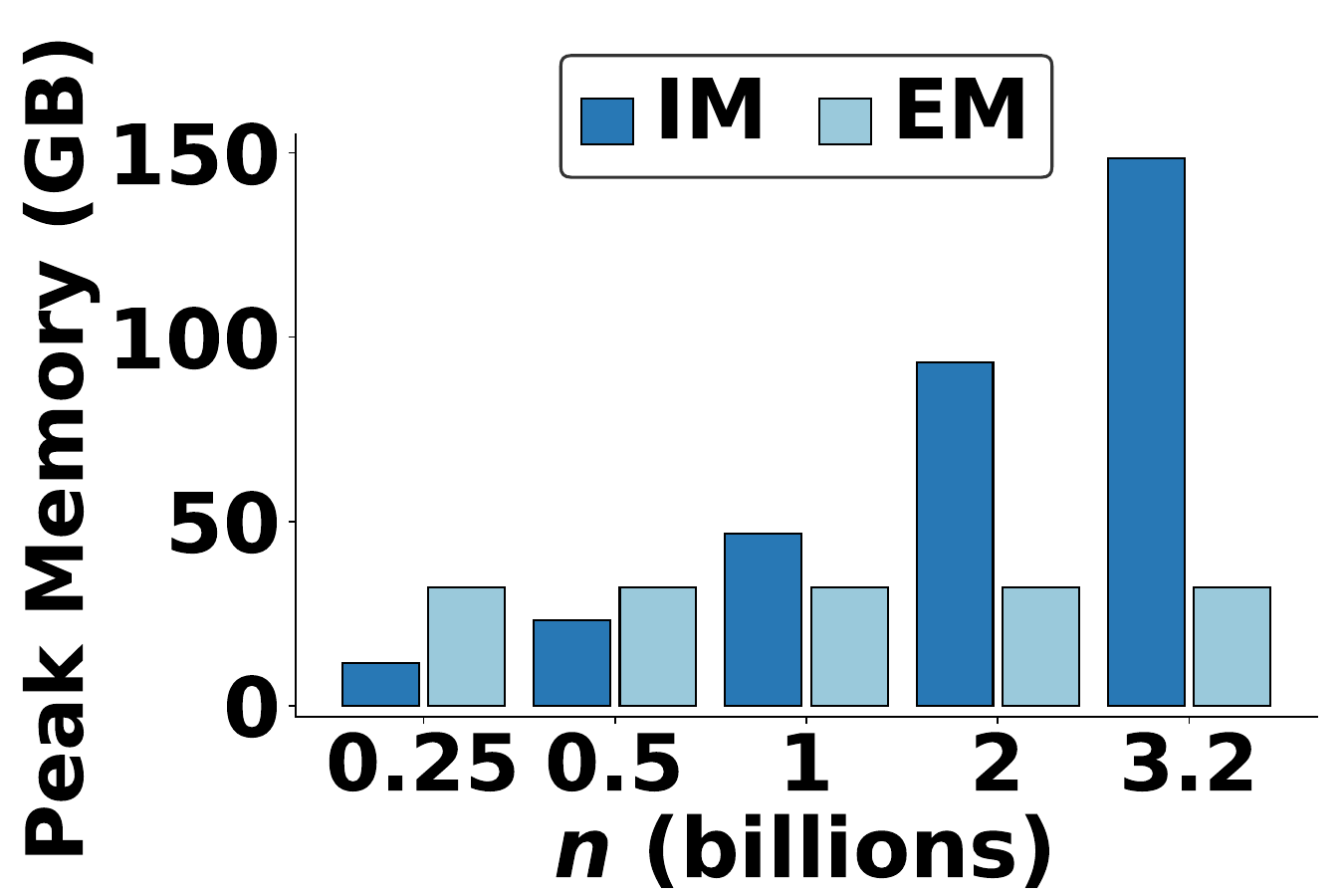}
    \caption{BST}
\end{subfigure}
\hfill
\begin{subfigure}{0.19\textwidth}
    \includegraphics[width=1.03\linewidth]{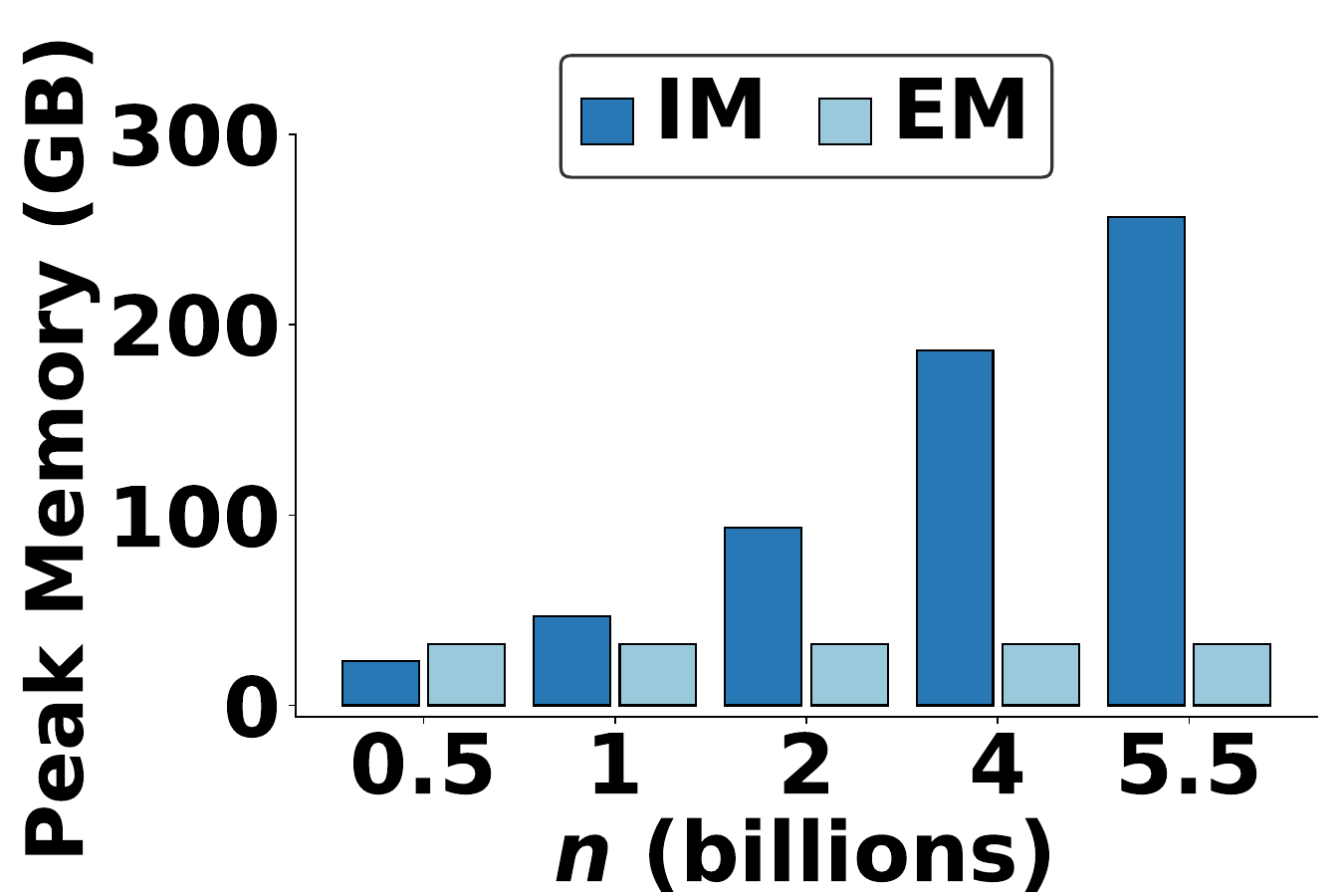}
    \caption{SDSL}
\end{subfigure}
\hfill
\begin{subfigure}{0.19\textwidth}
    \includegraphics[width=1.03\linewidth]{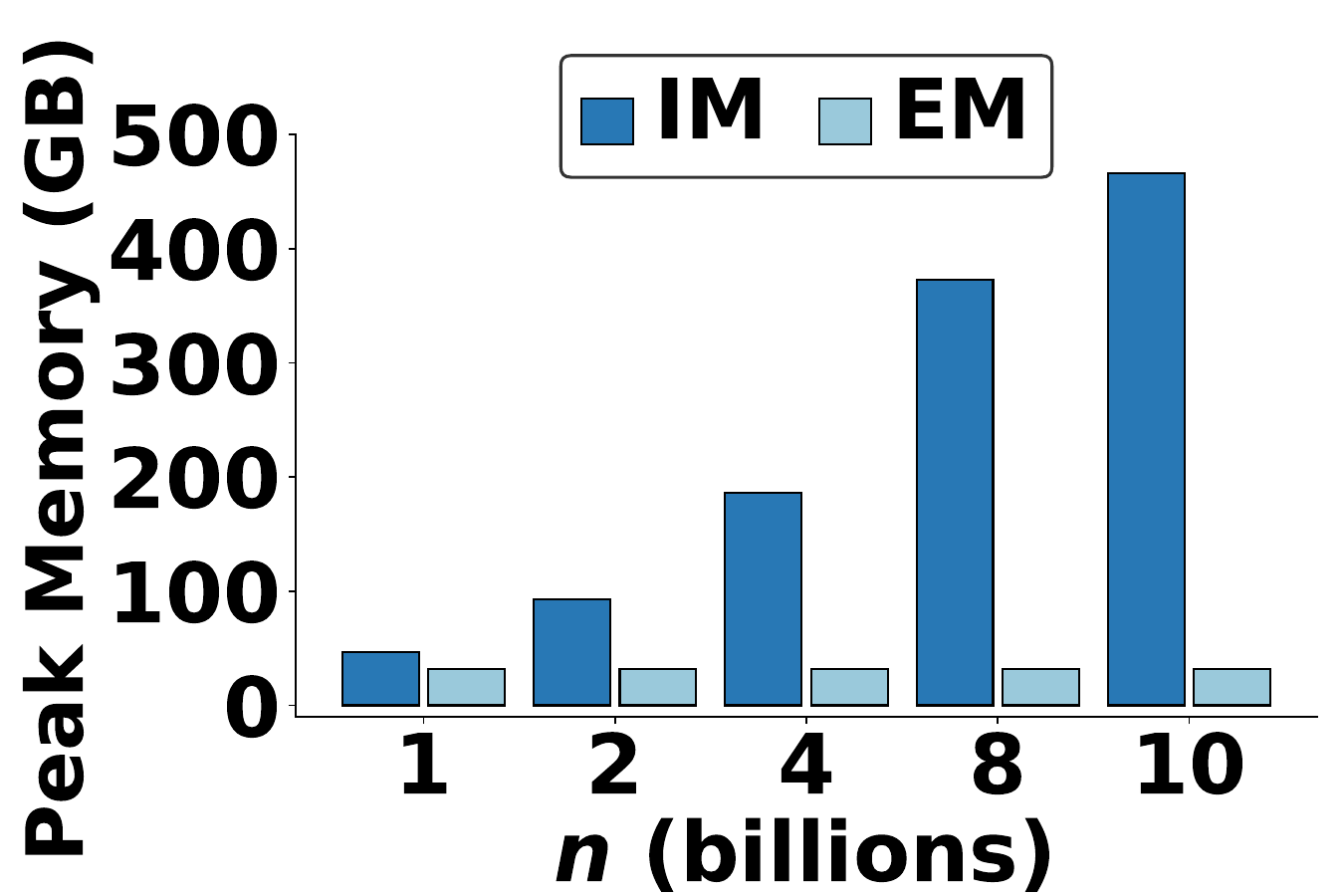}
    \caption{\sars}
    \label{fig:EM_sars_Ram_varying_n}
\end{subfigure}
\hfill
\begin{subfigure}{0.19\textwidth}
    \includegraphics[width=1.03\linewidth]{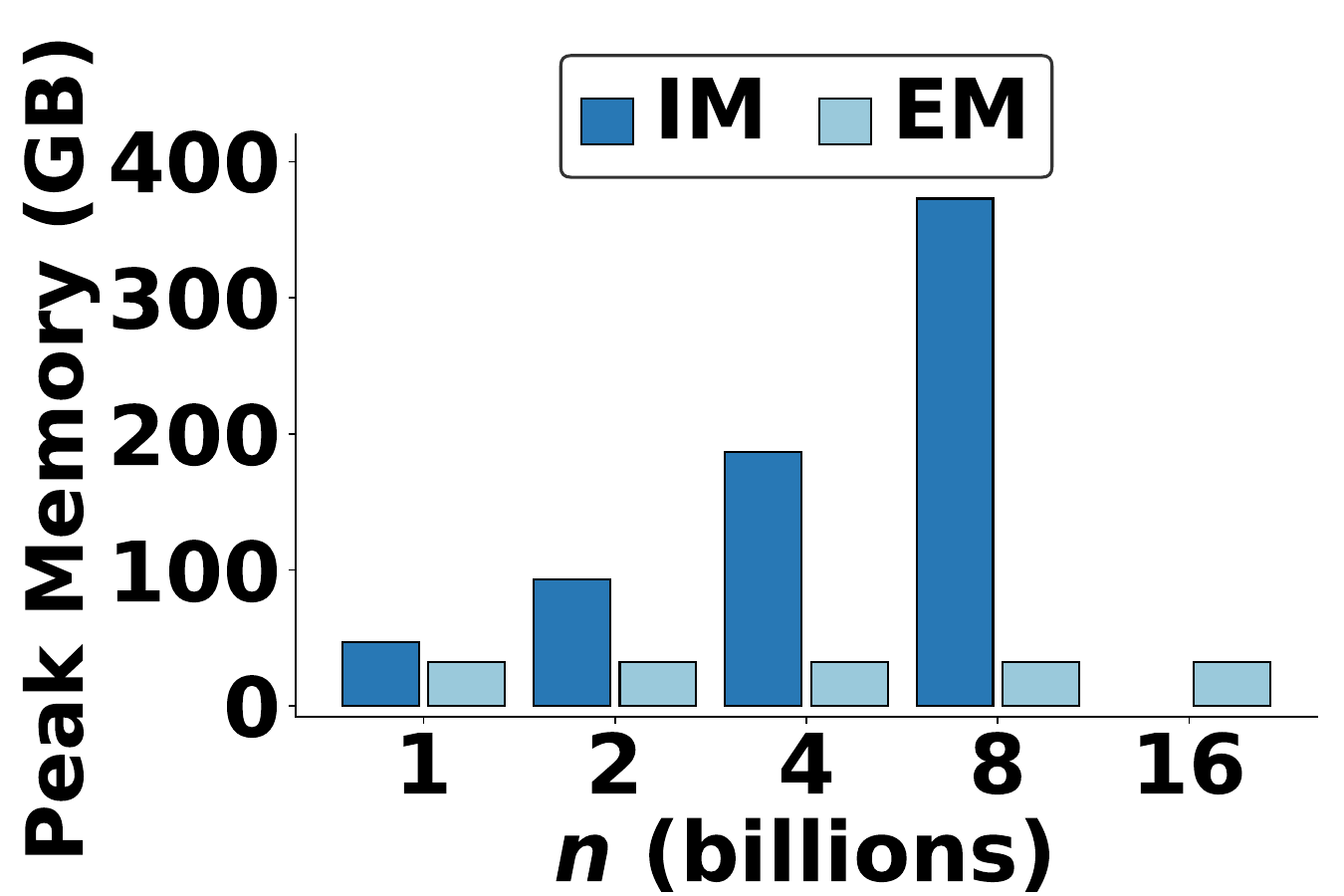}
    \caption{\chr}
    \label{fig:EM_chr_Ram_varying_n}
\end{subfigure}

\begin{subfigure}{0.19\textwidth}
    \includegraphics[width=1.03\linewidth]{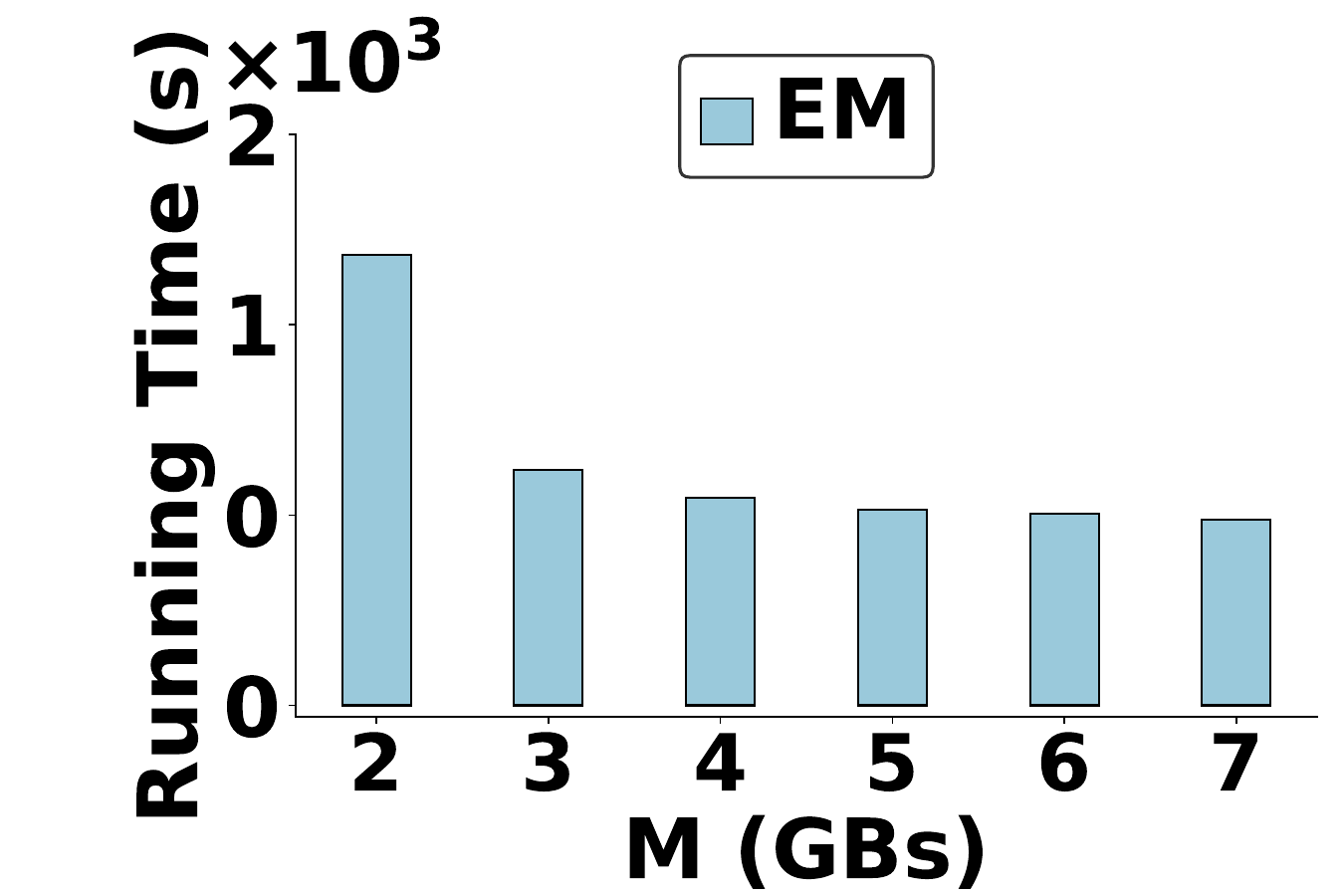}
    \caption{WIKI}
    \label{fig:EM_WIKI_varyingRam}
\end{subfigure}
\hfill
\begin{subfigure}{0.19\textwidth}
    \includegraphics[width=1.03\linewidth]{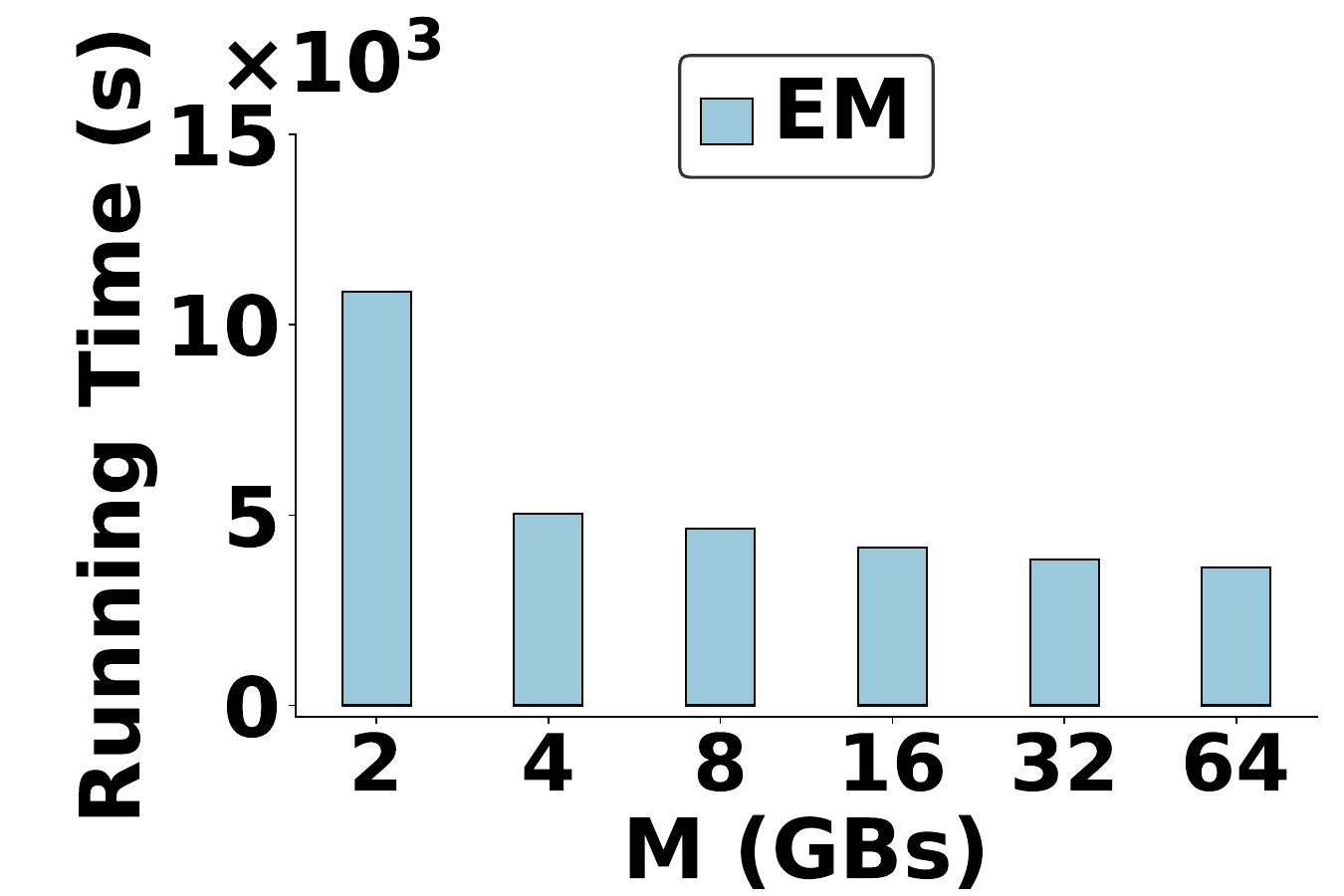}
    \caption{BST}
\end{subfigure}
\hfill
\begin{subfigure}{0.19\textwidth}
    \includegraphics[width=1.03\linewidth]{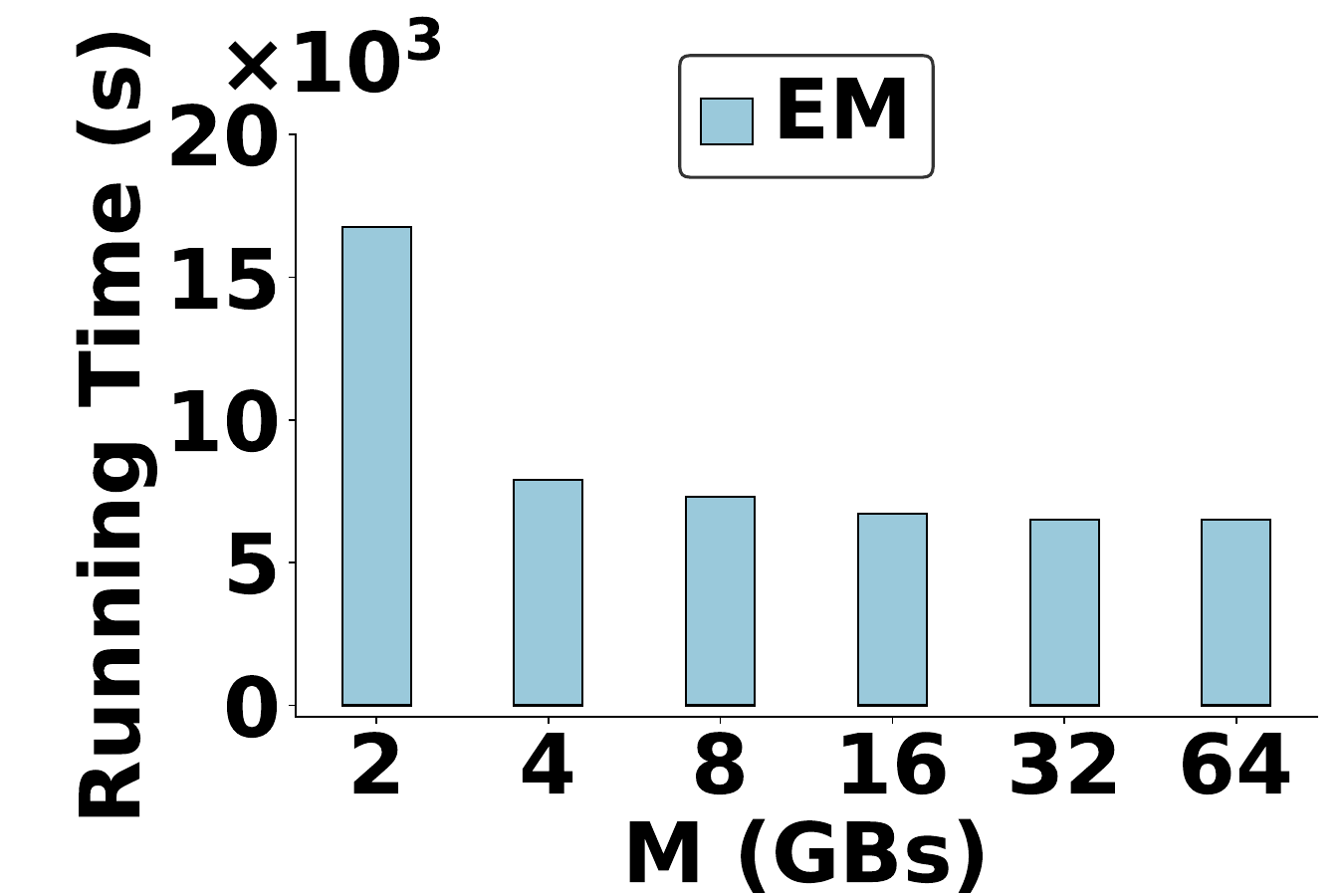}
    \caption{SDSL}
\end{subfigure}
\hfill
\begin{subfigure}{0.19\textwidth}
    \includegraphics[width=1.03\linewidth]{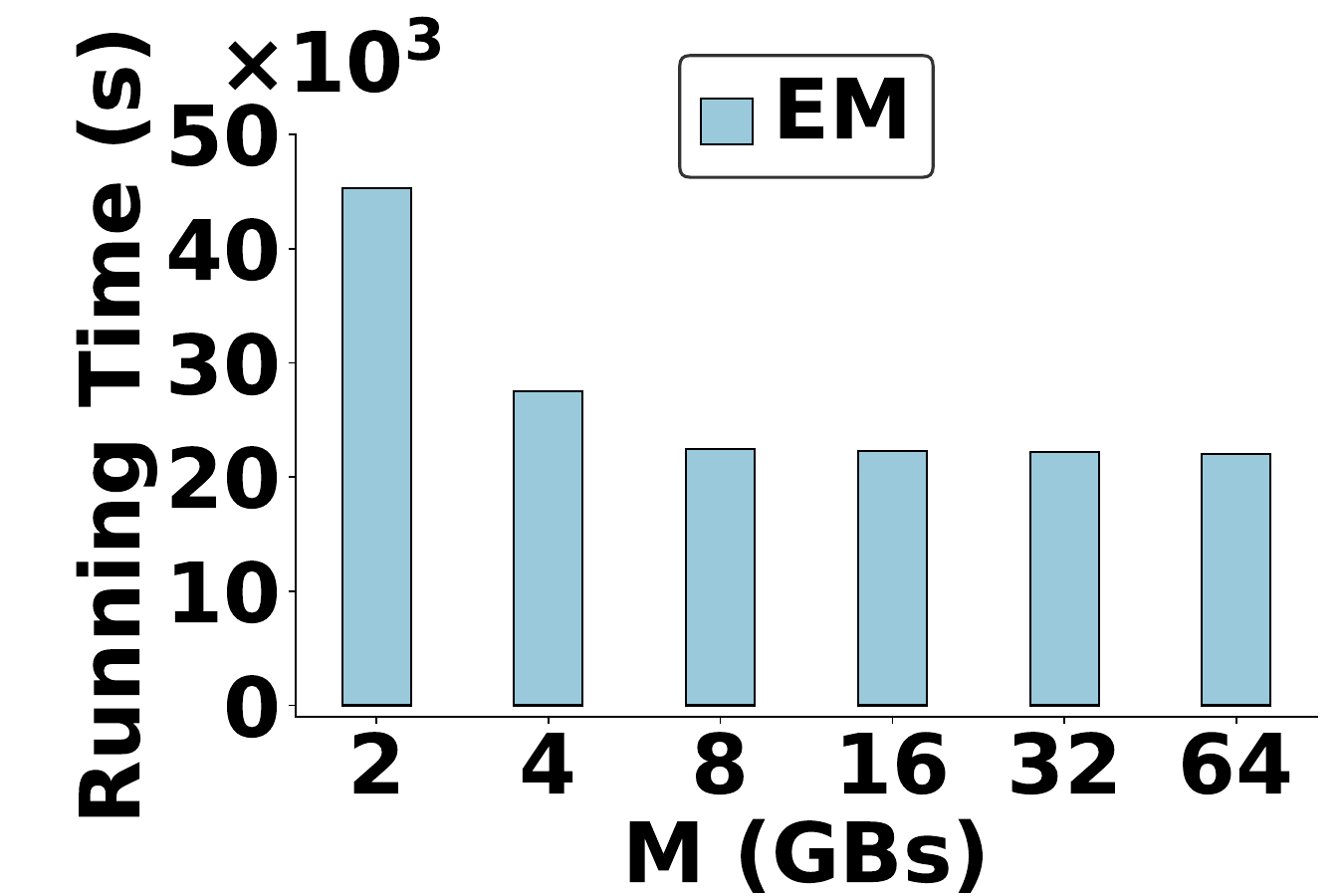}
    \caption{\sars}
    \label{fig:EM_sars_varyingRam}
\end{subfigure}
\hfill
\begin{subfigure}{0.19\textwidth}
    \includegraphics[width=1.03\linewidth]{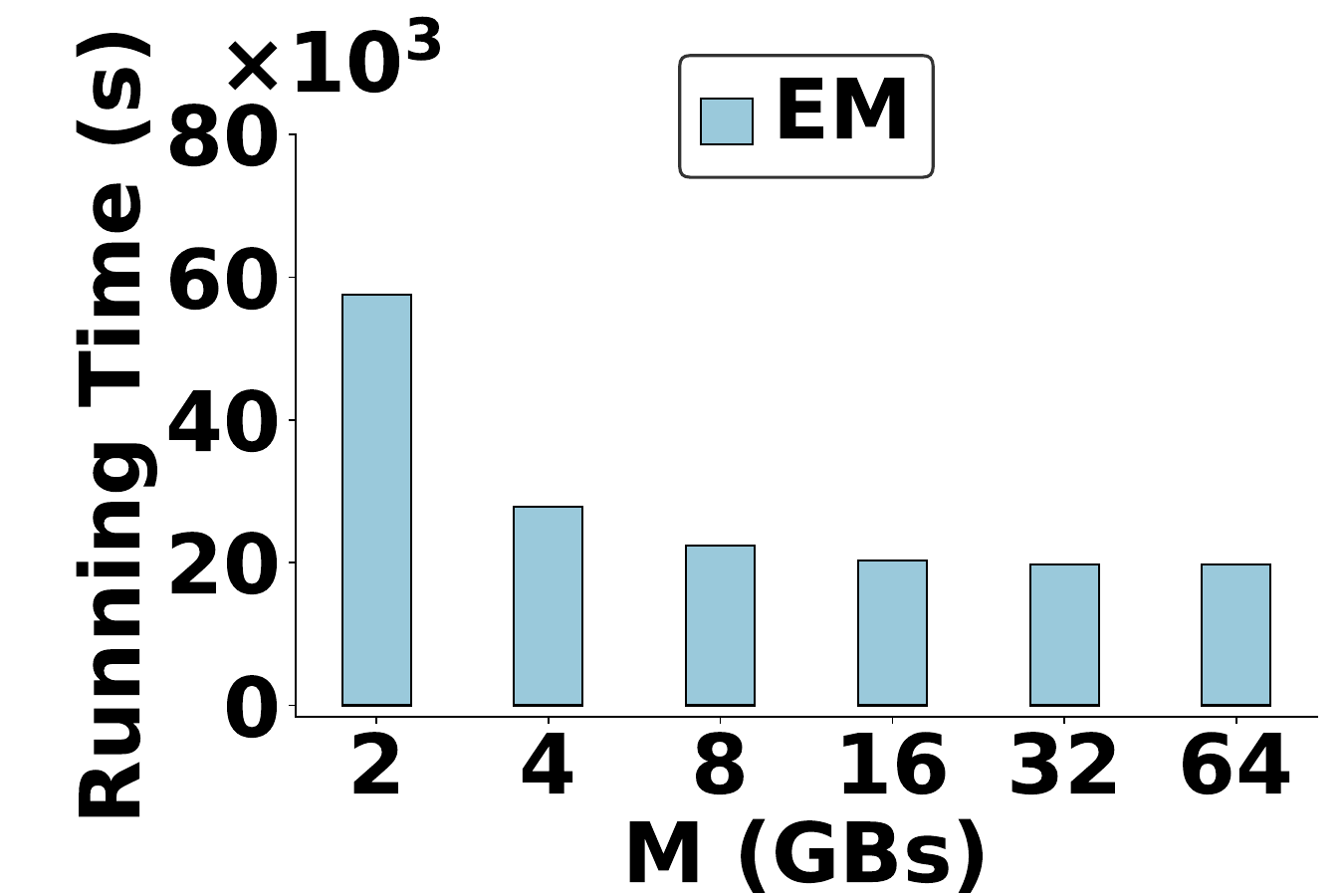}
    \caption{\chr}
    \label{fig:EM_chr_varyingRam}
\end{subfigure}
\caption{(a--e) Running time vs.\ $n$; (f--j) peak memory vs.\ $n$; (k--o) running time vs.\ $\mathbf{M}$ across all datasets. Missing bars indicate that a method needed more than $512$ GBs of memory.}
\end{figure*}

\emph{Environment.}~All experiments were conducted on an AMD EPYC 7282 CPU with $512$ GBs RAM and $1T$ disk space.  
All methods were implemented in \texttt{C++}. 
Our code and datasets are available at \url{https://github.com/LINGLI97/CPM}.

\subsection{Contextual Pattern Mining (\CPM) Algorithm} 

\subparagraph{Methods.}~We compare our external memory version (\EM)  to the internal memory one (\IM).  We evaluated the two approaches in the ``Challenges'' subsection of Section~\ref{sec:intro}, but we do not present their results as they were much slower.  

\noindent{\bf Measures.}~We measured runtime via the \texttt{chrono C++} library and peak memory usage via \texttt{/usr/bin/time -v}.  

\subparagraph{Parameters.}~We fixed $\mathbf{B}=2$MB and $(\tau,m,l,r)=(1000,9,9,9)$; recall from Section~\ref{sec:external} that $\tau$, $m$, $l$, and $r$ do not affect the runtime or space of either version of our algorithm. By default, $\mathbf{M}=32$GB, except for WIKI, where $\mathbf{M}=4$GB due to the relatively small size of the dataset. 

\subparagraph{Results.}~Figs.~\ref{fig:EM_WIKI_time_varying_n} to \ref{fig:EM_chr_time_varying_n} show the runtime of \IM and \EM as a function of the length $n$ of the input string (prefix of a dataset). \IM is on average $2.6$ times faster than \EM, as the latter accesses external memory (i.e., it performs I/O operations, which are much slower than accessing RAM). 

Figs.~\ref{fig:EM_WIKI_Ram_varying_n} to \ref{fig:EM_chr_Ram_varying_n} show the peak memory usage for the experiments of Figs.~\ref{fig:EM_WIKI_time_varying_n} to  \ref{fig:EM_chr_time_varying_n}. The memory needed by \IM increases linearly with $n$, as expected by its space complexity, and for \EM it was at most $\mathbf{M}$ GBs as expected by its design. For the entire \chr dataset ($n=16\cdot 10^9$), \IM needed more than $512$ GBs of memory (the full available amount) so it did not terminate, while \EM only $32$ GBs. 
Thus, \EM used \emph{$4.3$ times less memory} and was \emph{approximately $2.6$ times slower} on average over all $5$ datasets when both \IM and \EM terminated. 

Figs.~\ref{fig:EM_WIKI_varyingRam} to  \ref{fig:EM_chr_varyingRam} show that \EM takes less time as $\textbf{M}$ increases because it performs fewer I/Os (see Section~\ref{sec:external}).  
However, after some point its  runtime stays the same, as the minimum number of I/Os that must be performed determine the runtime.

\subsection{Contextual Pattern Counting (\CPC) Index} 

\subparagraph{Methods.}~We compared our optimized  \CPC index (referred to as \CI) to: (1) The \CPRI approach which uses the suffix array (referred to as \CPRIS)~\cite{navarro2020contextual}. This is the only practical index for reporting $\mathcal{C}_T(P,l,r)$ (see Section~\ref{sec:related}). (2) The ``straightforward  approach'' (see ``Challenges'' in Section~\ref{sec:intro}) based on the $r$-index~\cite{rindex}, a state-of-the-art index exploiting another form of compressibility than LZ77 factorization (see Section~\ref{sec:related}). We refer to this approach as \RI. The \CI without our optimizations (referred to as CI-) in Section~\ref{sec:CPC_index} was not competitive (we included it only in subsection ``Ablation study'' below); and  the index in Section~\ref{sec:simple} needed much more space (omitted).  

\subparagraph{Measures.}~We used all four relevant measures of efficiency~\cite{pvldb23}: (1) query time; (2) construction time; (3) index size; and (4) construction space. For (1) and (2), we used the \texttt{chrono C++} library. For (3), we used the \texttt{mallinfo2 C++} function. For (4), we recorded the maximum resident set size  
using the \texttt{/usr/bin/time -v} command. 

\subparagraph{Query Workloads.}~We considered query workloads of type 
$\mathcal{W}_{m,l,r}$. Each such workload contains as queries $(P,l,r)$ all substrings of length $m=|P|$ of a dataset and $l=r$, where $m,l,r\in \{3,6,9,12,15\}$. These $l$, $m$, and $r$ values were chosen to stress-test the approaches, as they result in very large contexts. We set $B = l + m + r$ by default. 

\subparagraph{Comparison of Range Counting Data Structures.}~We evaluated \CI based on three popular range counting structures: Range trees~\cite{bentley1978decomposable,DBLP:conf/focs/Lueker78}, KD trees~\cite{bentley1975multidimensional}, and $R^*$-trees~\cite{beckmann1990r}; see Fig.~\ref{fig:BIwith_trees}. The R$^*$-tree based implementation has the fastest query time and construction time and only slightly larger index size and construction space compared to the KD tree based implementation. The Range tree based implementation was the worst. Thus, in the remaining of the paper, we will only report results using the R$^*$-tree based implementation. 

\begin{figure}[t]
\centering
\begin{subfigure}[t]{0.24\columnwidth}
    \centering
    \includegraphics[width=1\linewidth]{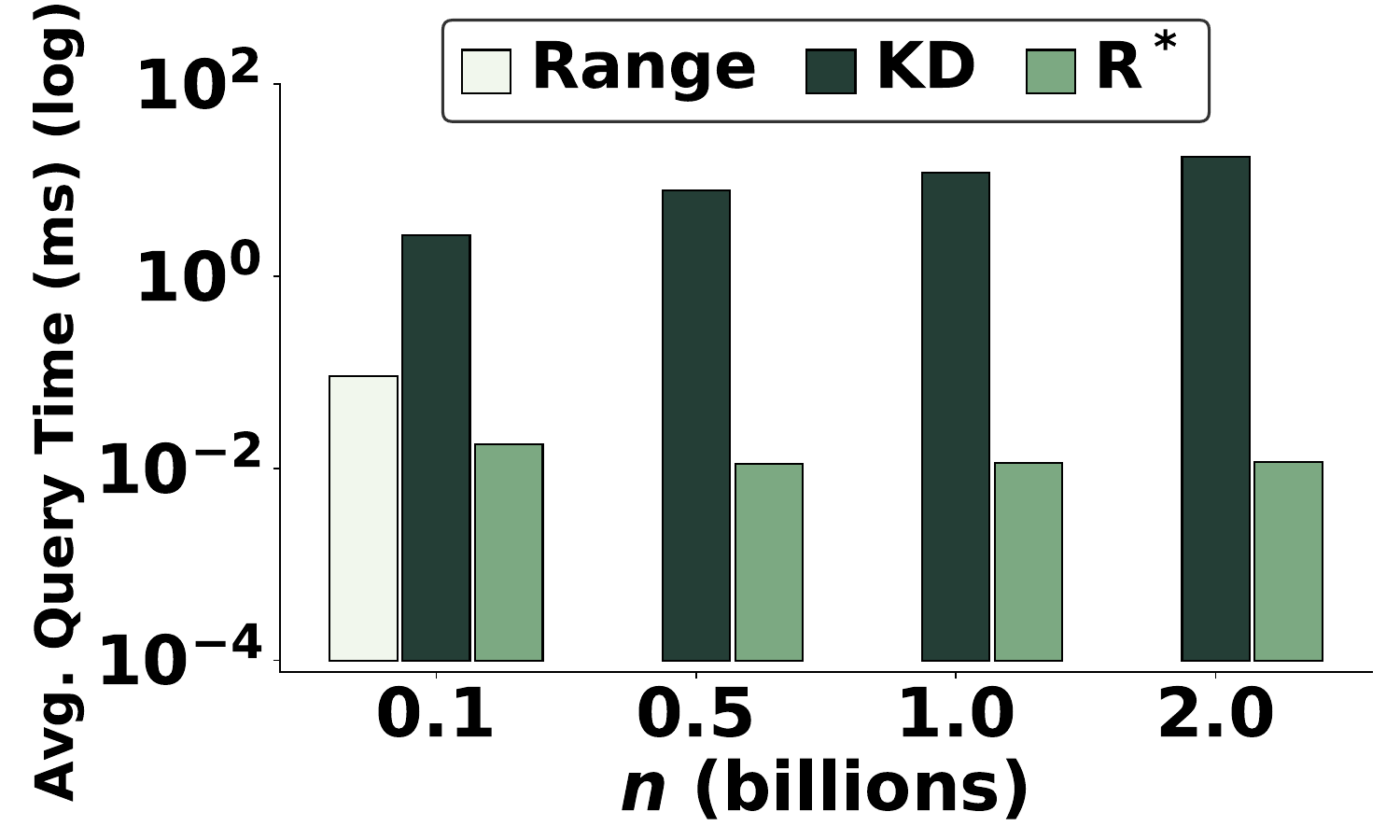}
    \caption{SARS}\label{fig:ds:a}
\end{subfigure}
\begin{subfigure}[t]{0.24\columnwidth}
    \centering
    \includegraphics[width=1\linewidth]{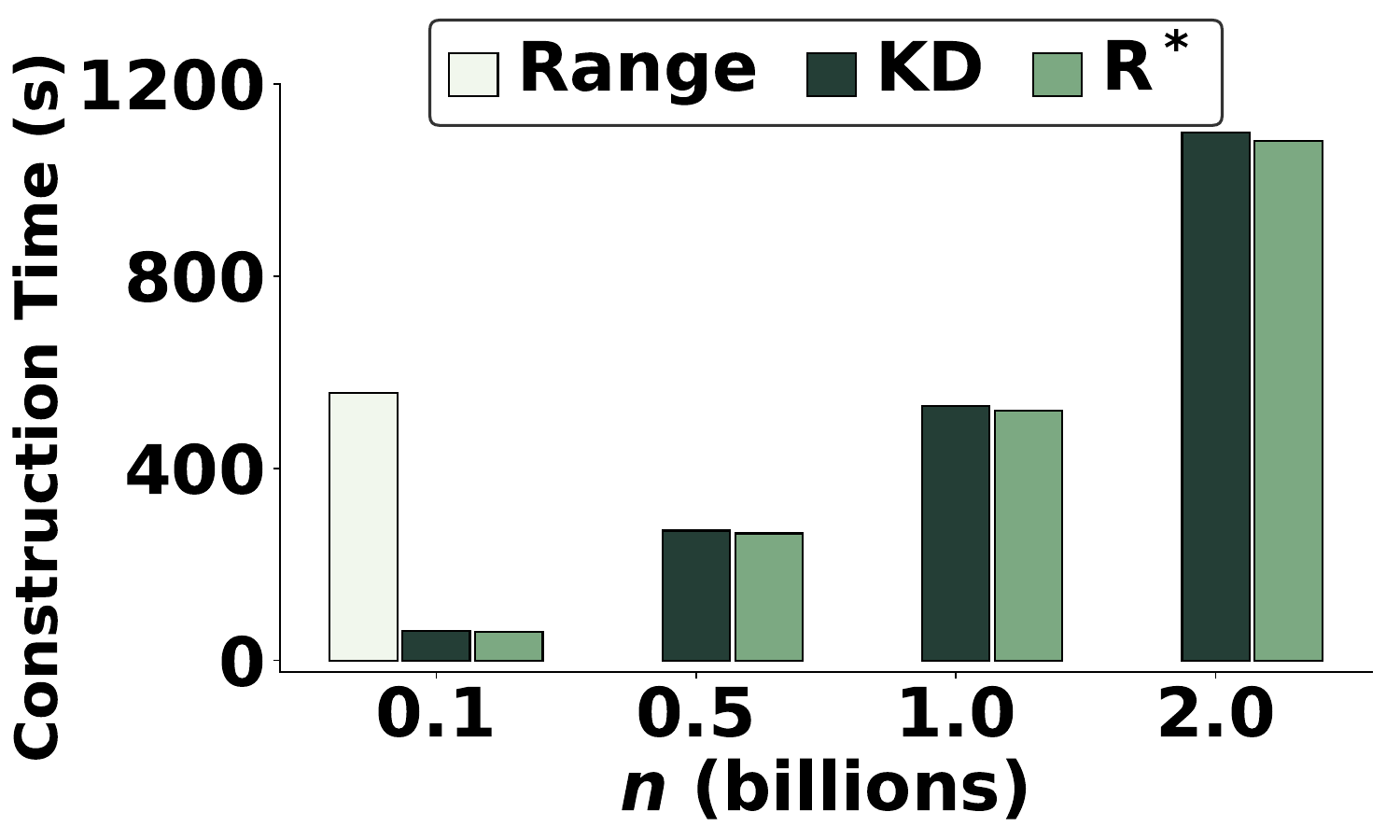}
    \caption{SARS}\label{fig:ds:b}
\end{subfigure}
\begin{subfigure}[t]{0.24\columnwidth}
    \centering
    \includegraphics[width=1\linewidth]{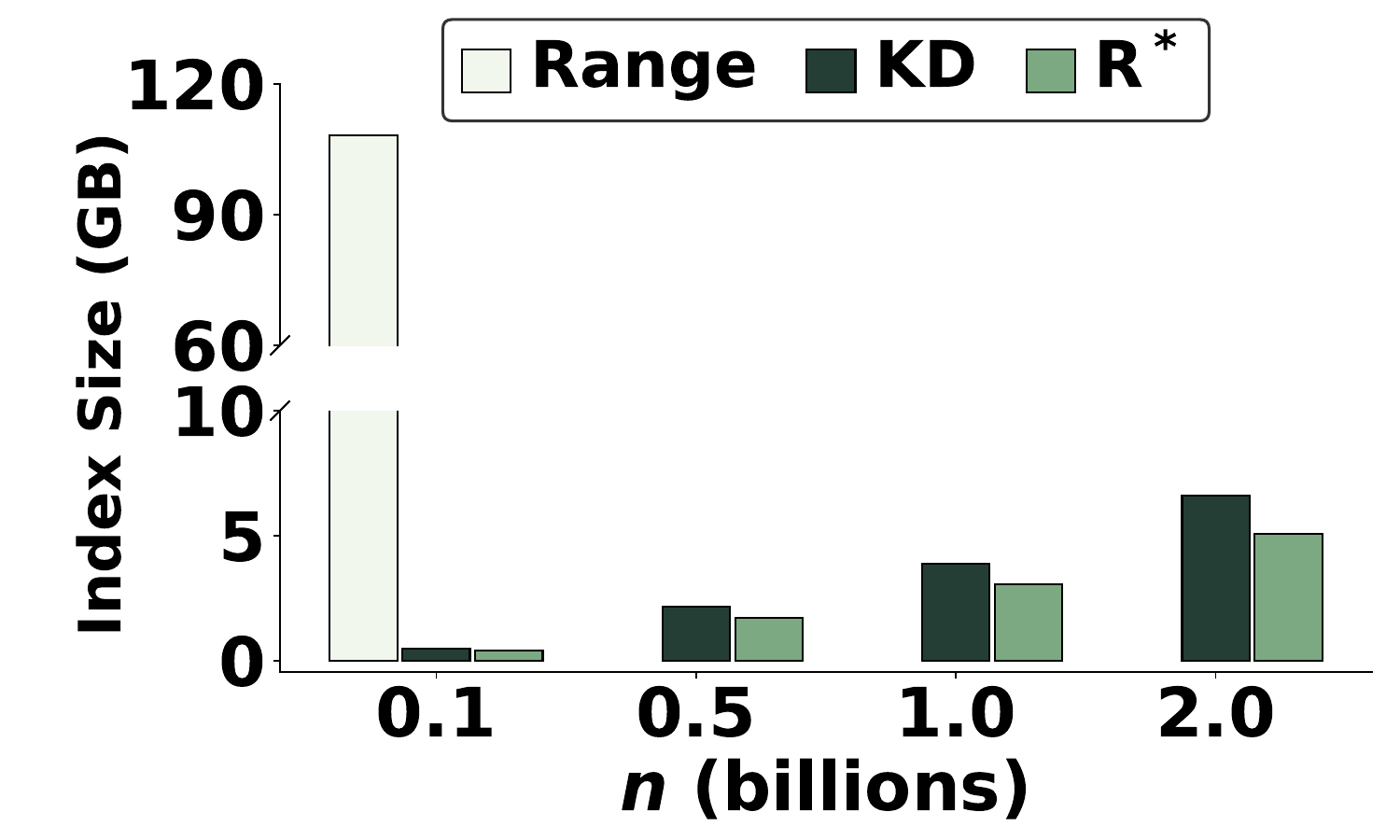}
    \caption{SARS}\label{fig:ds:c}
\end{subfigure}
\begin{subfigure}[t]{0.24\columnwidth}
    \centering
    \includegraphics[width=1\linewidth]{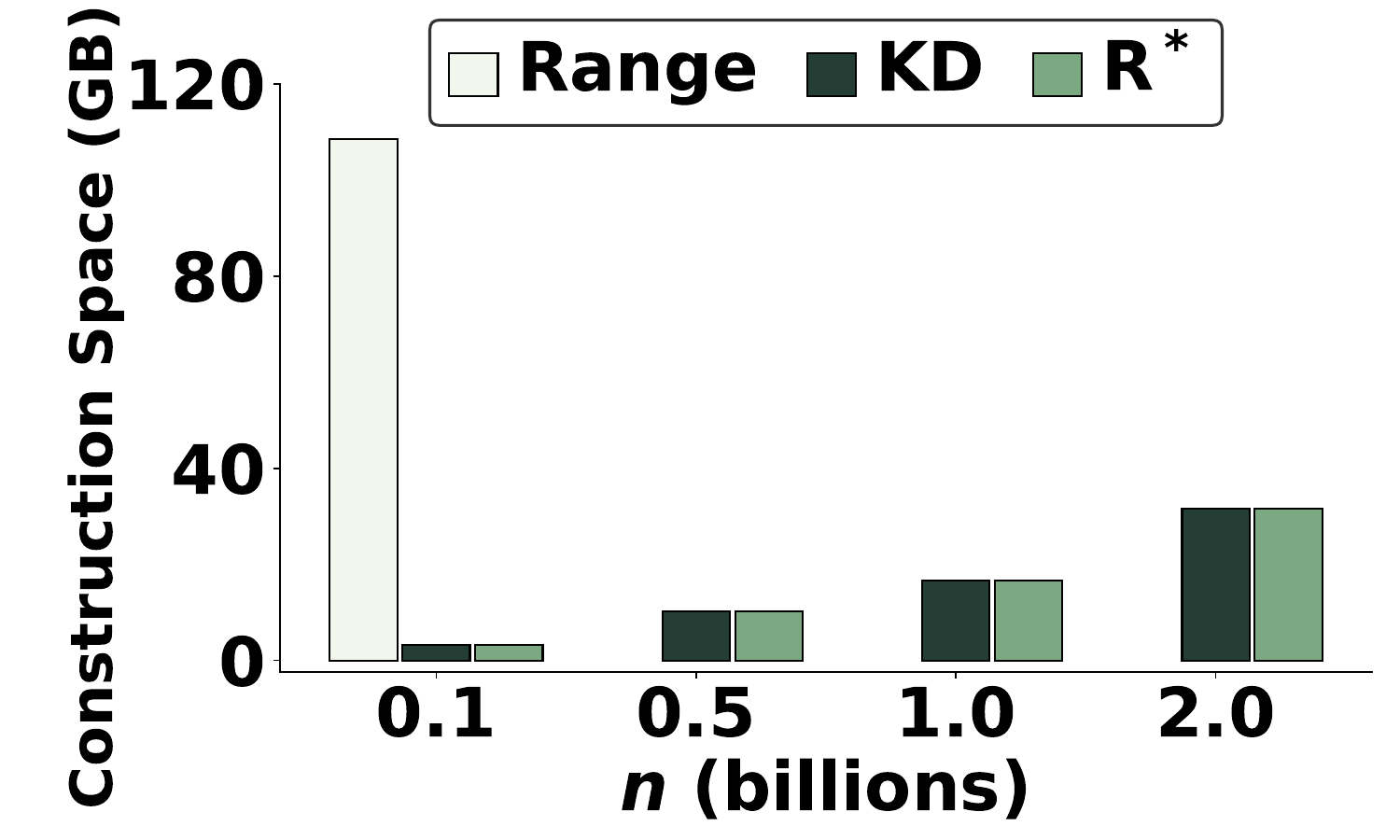}
    \caption{SARS}\label{fig:ds:d}
\end{subfigure}

\caption{\CI with Range trees, KD trees, or R$^*$ trees for workload $\mathcal{W}_{9,9,9}$: (a) average query time, (b) construction time, (c) index size, and (d) construction space. Missing bars indicate that a method needed more than $512$ GBs of memory.}
\label{fig:BIwith_trees}

\end{figure}

In the following, we present the results of the competitors and our \CI index on all $5$ datasets. Note that \CPRIS did not terminate when  applied to the entire \chr dataset, as it needed more than the $512$ GBs of memory that was available.

\subparagraph{Average Query Time.}~Figs.~\ref{fig:querytime_wiki_n} to \ref{fig:querytime_chr_n} show the average query time for the queries in the workload $\mathcal{W}_{9,9,9}$, using prefixes of the $5$ datasets with varying length $n$. \CI is on average $52$ times and up to  
$172$ times faster than \CPRIS and \emph{more than three orders of magnitude faster} than \RI. 
The query time for \CPRIS and \RI increases significantly with $n$, unlike that of \CI, since their query time bounds depend on $|\mathcal{C}_T(P,lr)|$, which increases with $n$.  Figs.~\ref{fig:querytime_wiki_xy} to  \ref{fig:querytime_chr_xy} show the average query time for the workloads $\mathcal{W}_{9,l,r}$ with $l=r$ and $l,r \in \{3,6,9,12,15\}$ and Figs.~\ref{fig:querytime_wiki_m} to  \ref{fig:querytime_chr_m} for the workloads $\mathcal{W}_{m,9,9}$ with $m \in \{3,6,9,12,15\}$.  The query times of all indexes increase as $l$ and/or $r$ gets larger, or as $m$ gets smaller. For \RI and \CPRIS, this is because $|\mathcal{C}_T(P,l,r)|$ increases. For \CI, this is because $B$ increases (see Theorem~\ref{thm:BCPC_index}):  as $l+m+r \leq B$, when $l$, $r$ and/or $m$ increases, $B$ increases as well.  

\begin{figure*}[t]
    \centering

    \begin{subfigure}{0.188\linewidth}
        \includegraphics[width=1.05\linewidth]{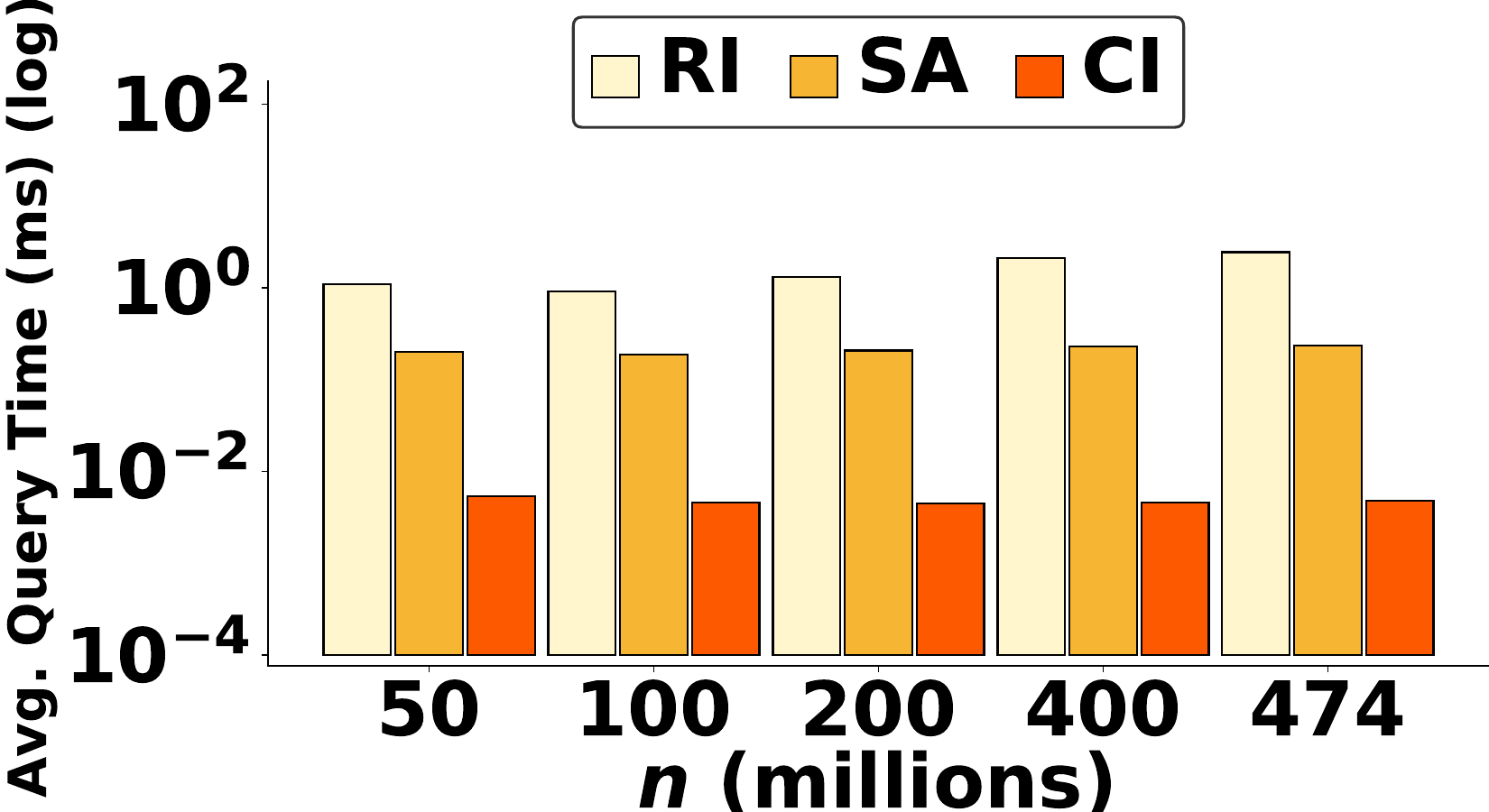}
        \caption{WIKI}
            \label{fig:querytime_wiki_n}
    \end{subfigure}%
\hfill
    \begin{subfigure}{0.188\linewidth}
        \includegraphics[width=1.05\linewidth]{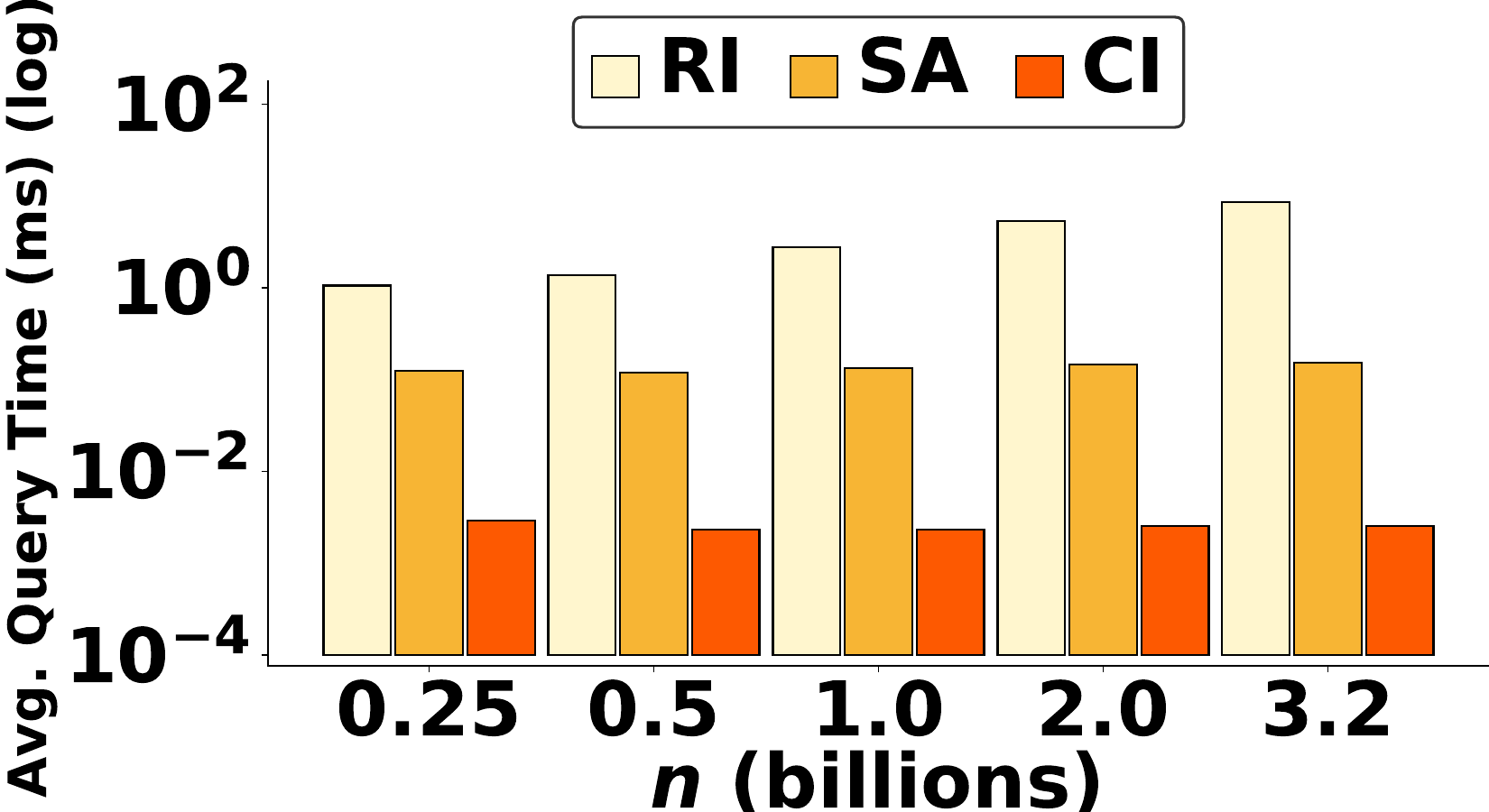}
        \caption{BST}
    \end{subfigure}%
\hfill
    \begin{subfigure}{0.188\linewidth}
        \includegraphics[width=1.05\linewidth]{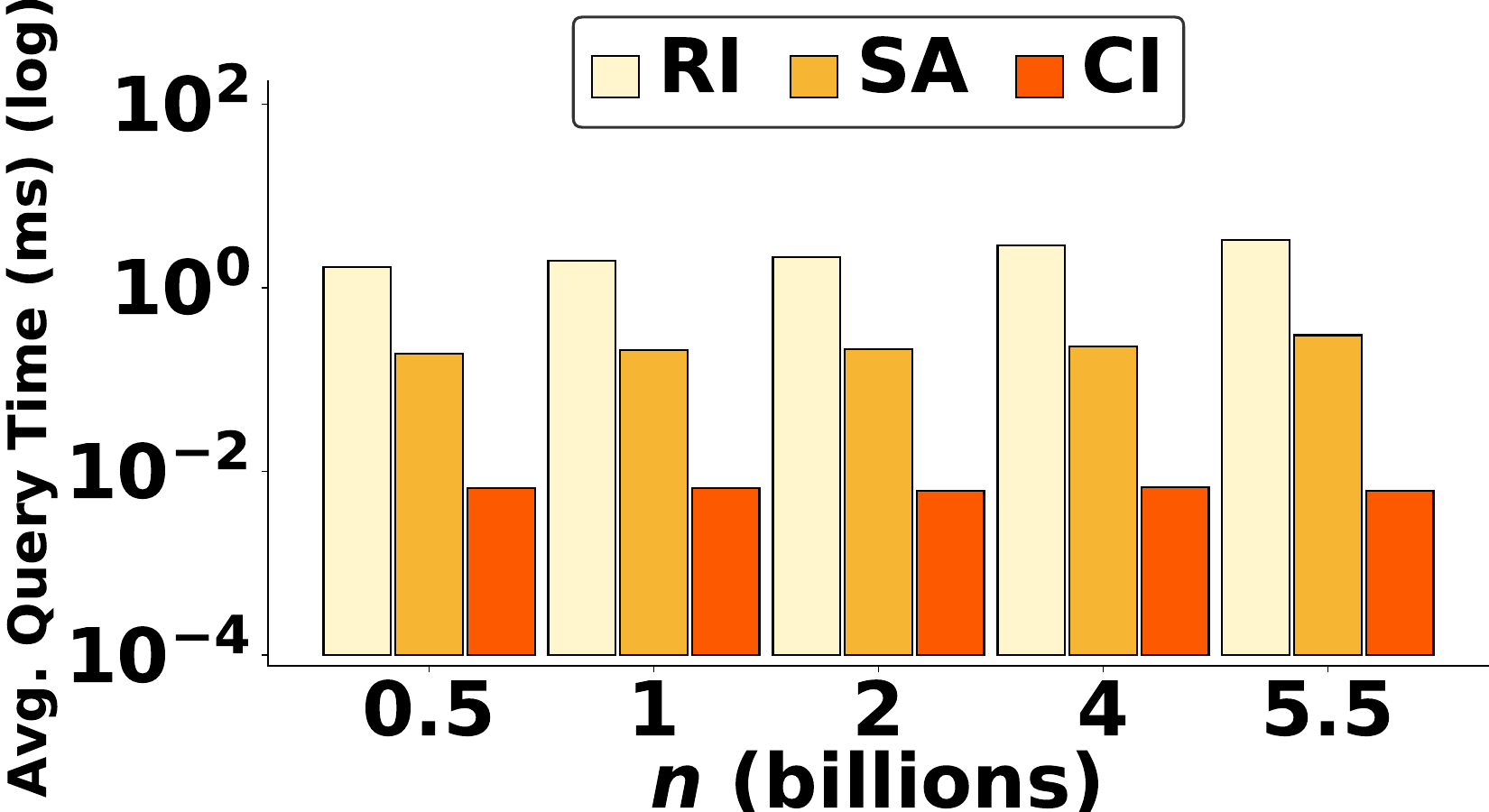}
        \caption{SDSL}
    \end{subfigure}%
\hfill
    \begin{subfigure}{0.188\linewidth}
        \includegraphics[width=1.05\linewidth]{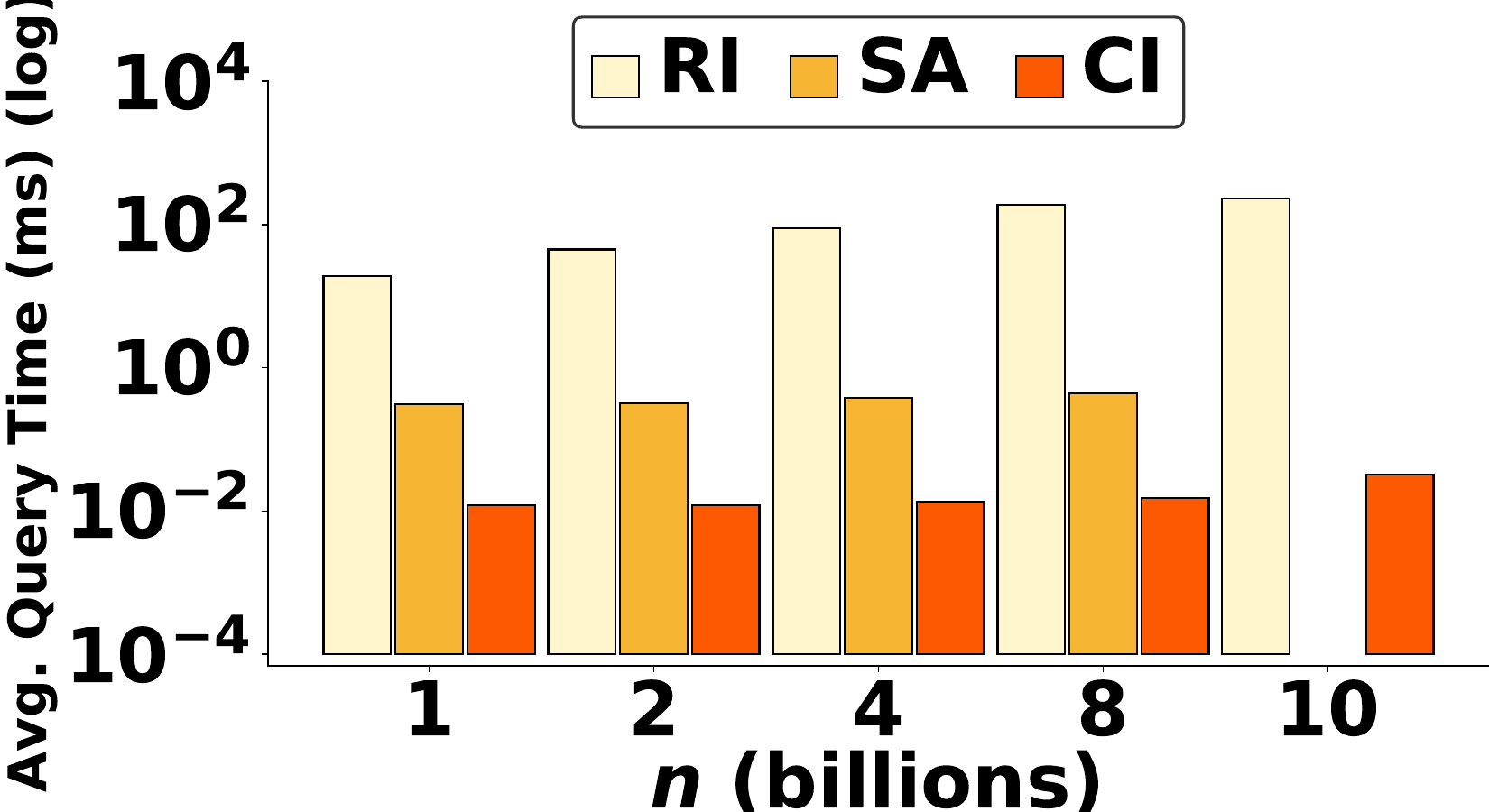}
        \caption{\sars}
    \end{subfigure}%
\hfill
    \begin{subfigure}{0.188\linewidth}
        \includegraphics[width=1.05\linewidth]{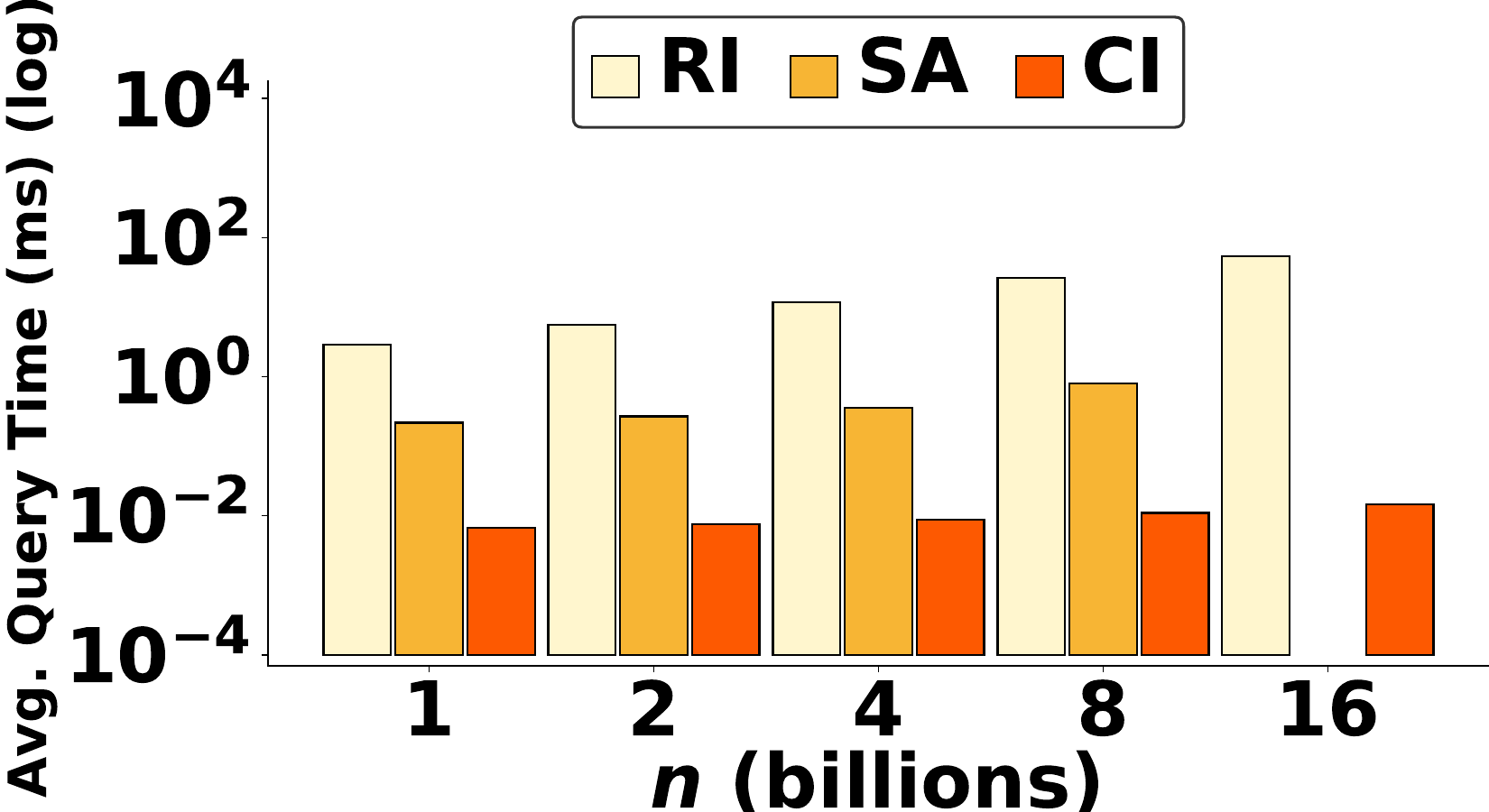}
        \caption{\chr}
        \label{fig:querytime_chr_n}
    \end{subfigure}%

    \begin{subfigure}{0.188\linewidth}
        \includegraphics[width=1.05\linewidth]{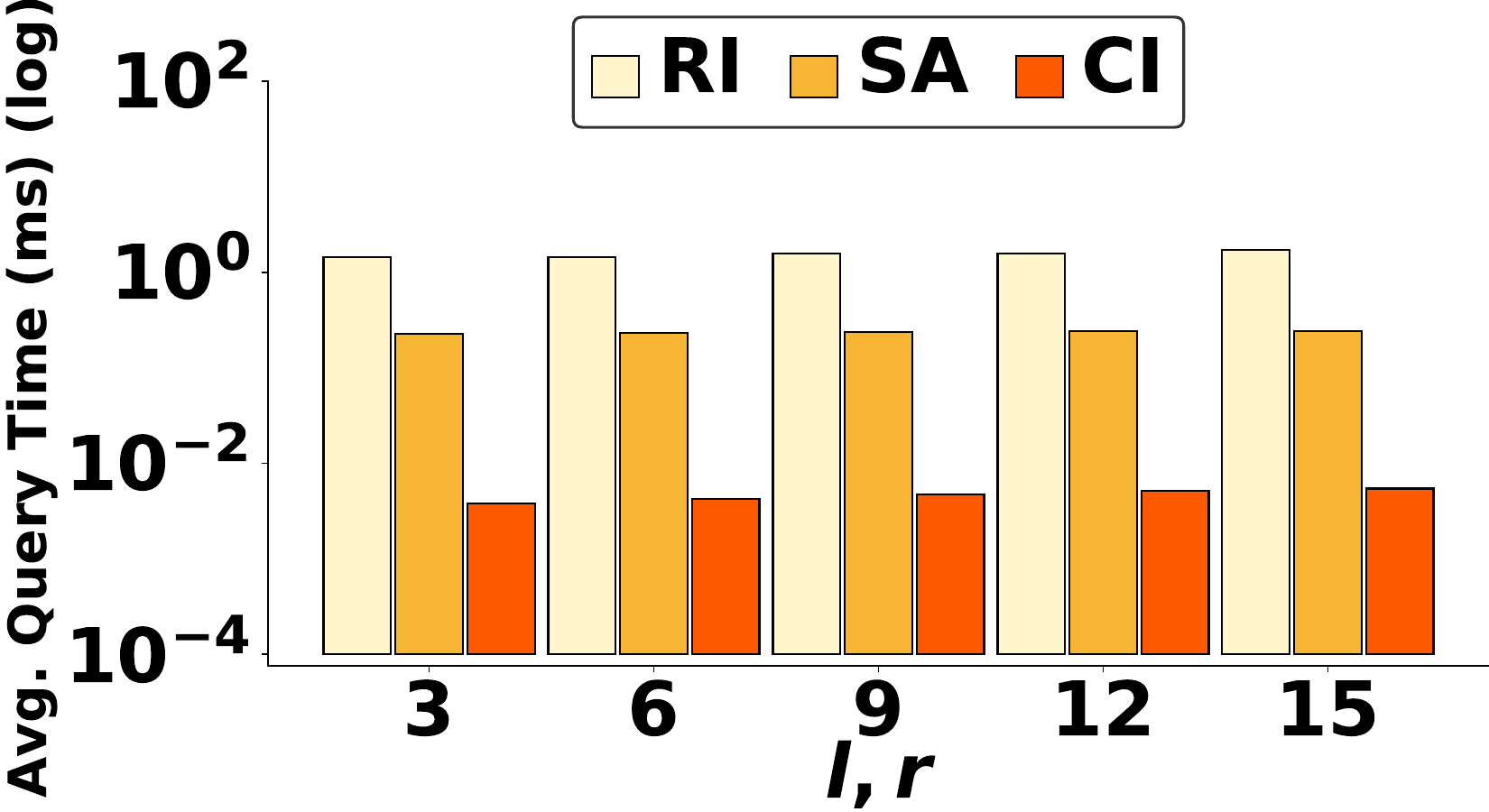}
        \caption{WIKI}
        \label{fig:querytime_wiki_xy}
    \end{subfigure}%
\hfill
    \begin{subfigure}{0.188\linewidth}
        \includegraphics[width=1.05\linewidth]{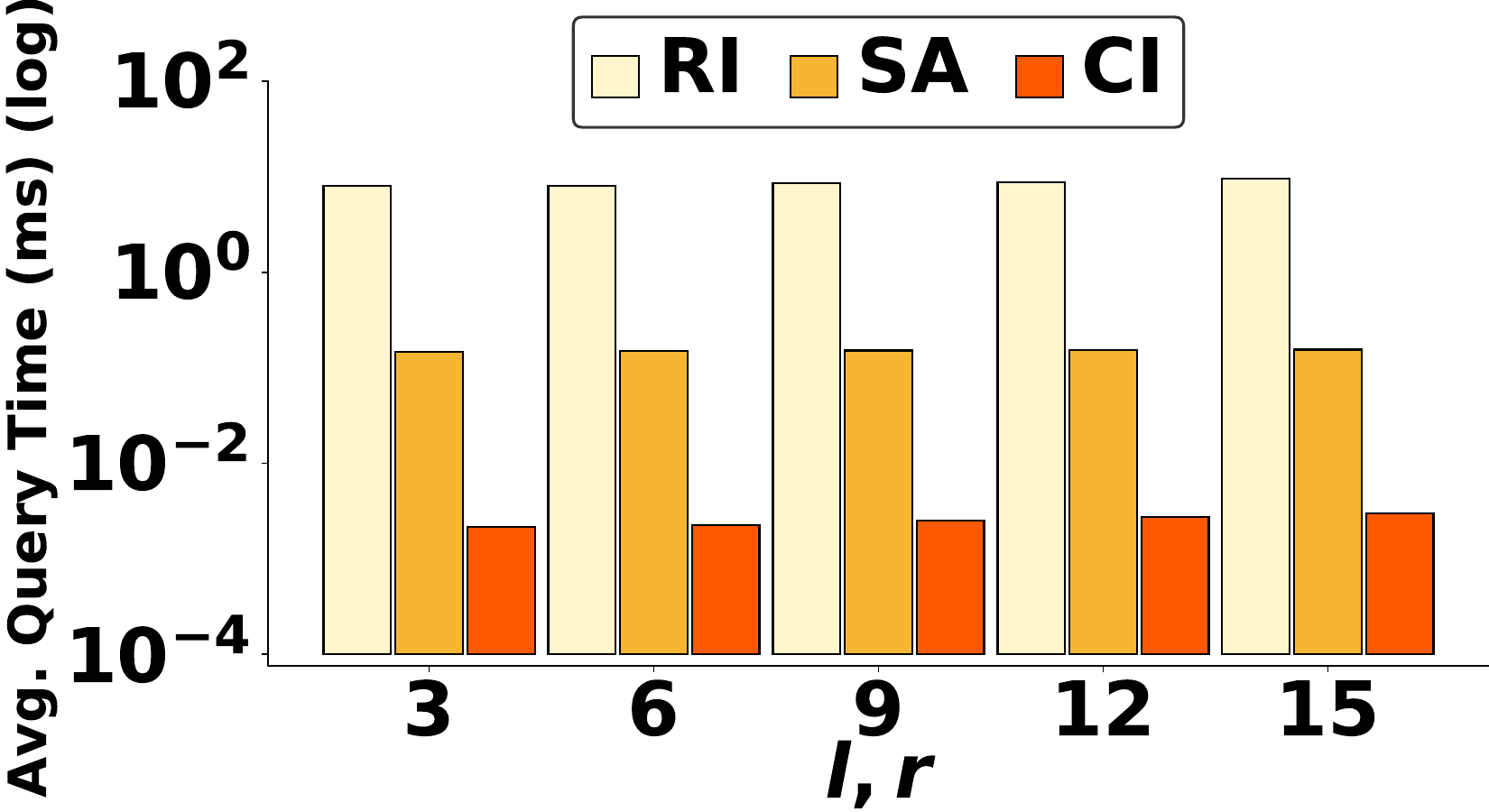}
        \caption{BST}
    \end{subfigure}%
\hfill
    \begin{subfigure}{0.188\linewidth}
        \includegraphics[width=1.05\linewidth]{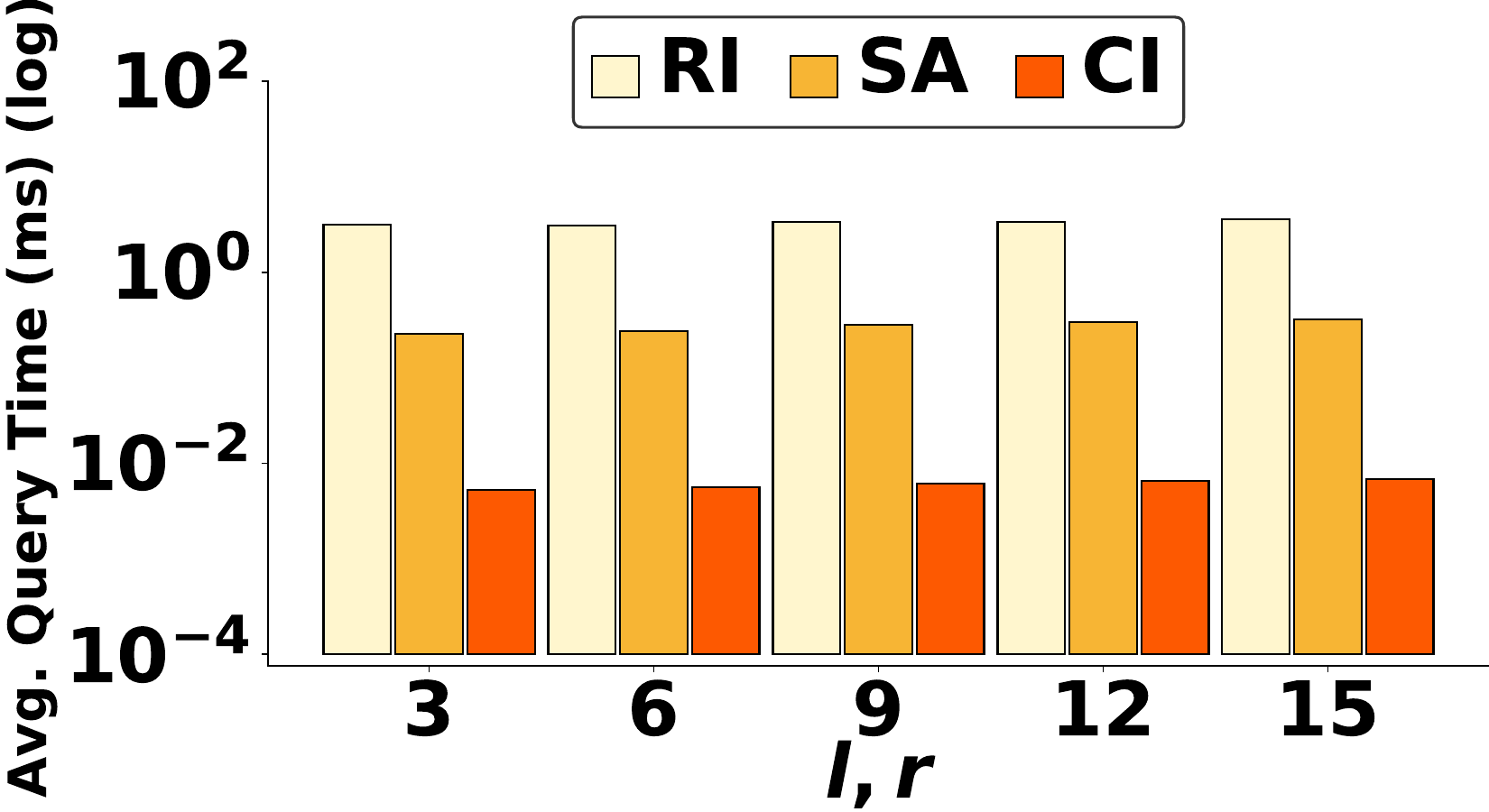}
        \caption{SDSL}
    \end{subfigure}%
\hfill
     \begin{subfigure}{0.188\linewidth}
        \includegraphics[width=1.05\linewidth]{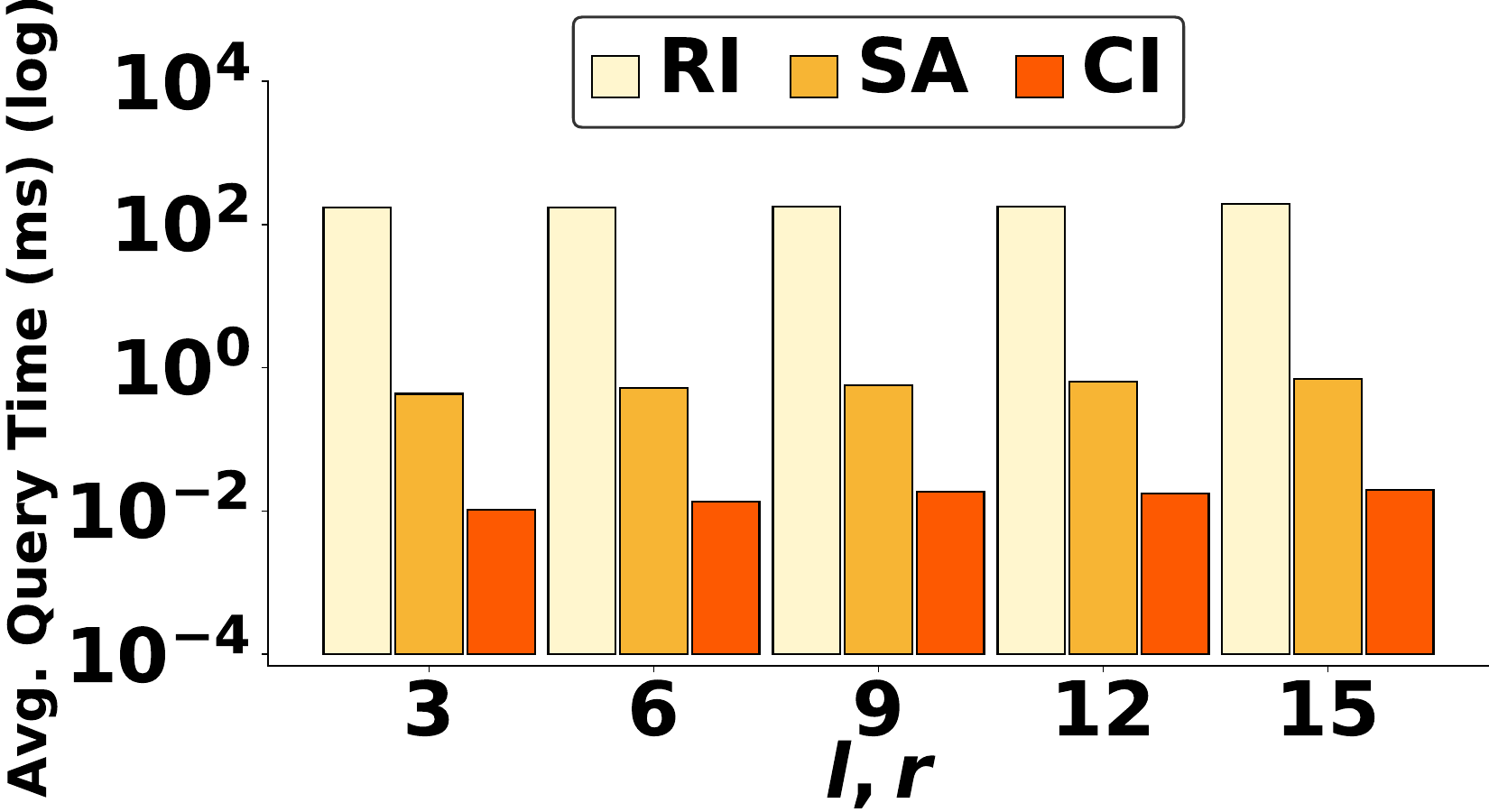}
        \caption{\sars}
    \end{subfigure}%
\hfill
    \begin{subfigure}{0.188\linewidth}
        \includegraphics[width=1.05\linewidth]{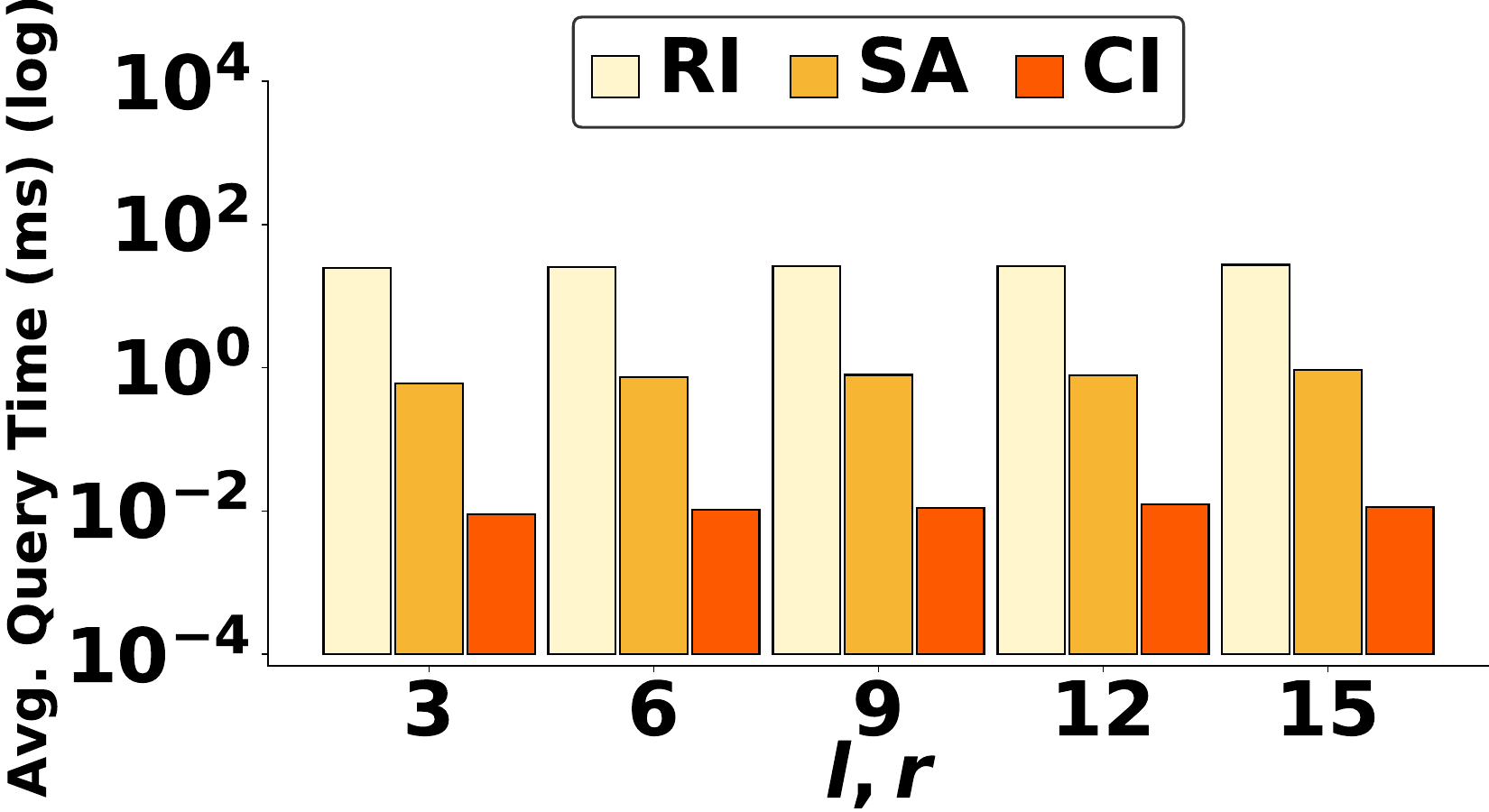}
        \caption{\chr}
        \label{fig:querytime_chr_xy}
    \end{subfigure}%

    \begin{subfigure}{0.188\linewidth}
        \includegraphics[width=1.05\linewidth]{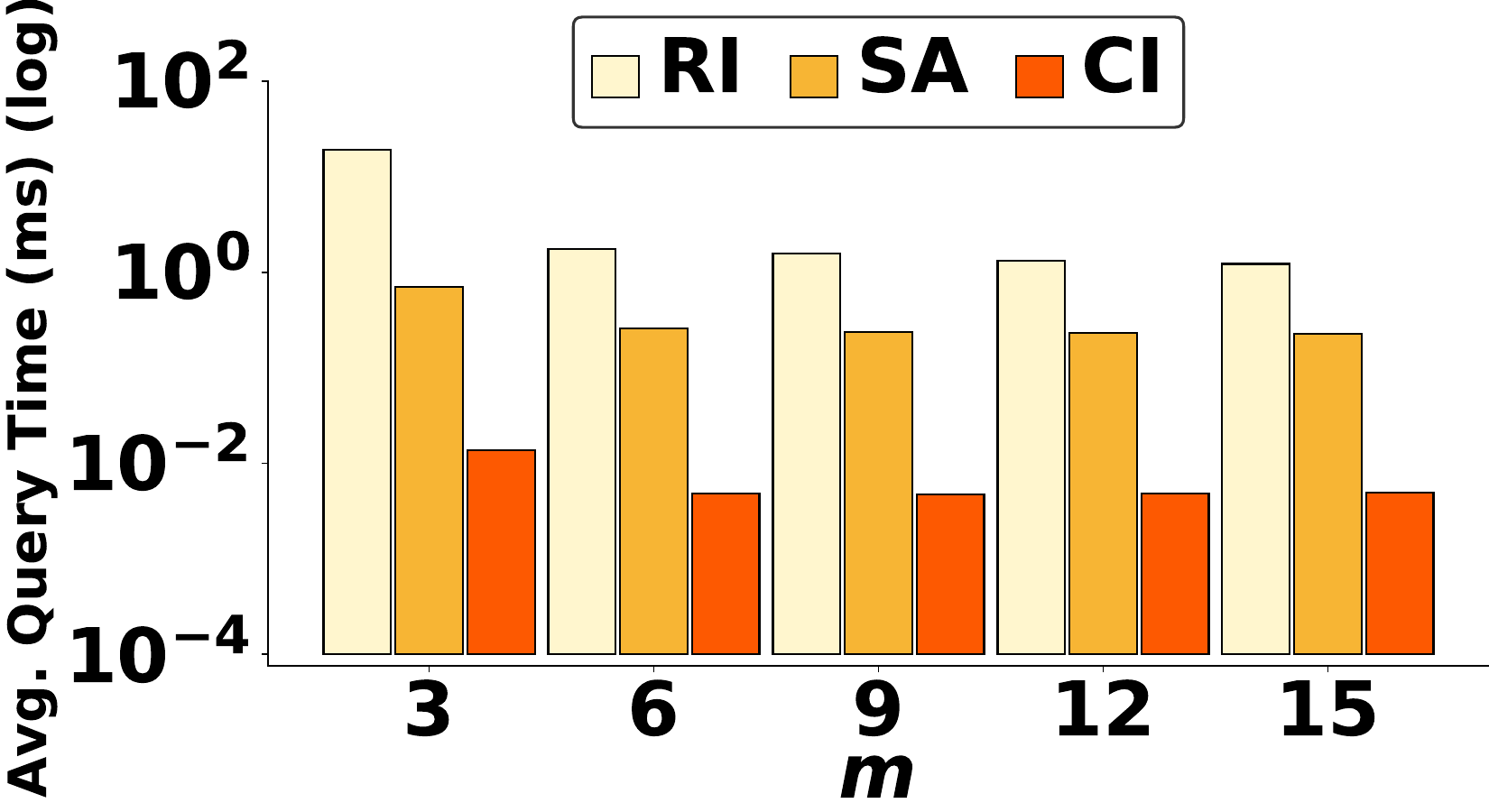}
        \caption{WIKI}
    \label{fig:querytime_wiki_m}
    \end{subfigure}%
\hfill
    \begin{subfigure}{0.188\linewidth}
        \includegraphics[width=1.05\linewidth]{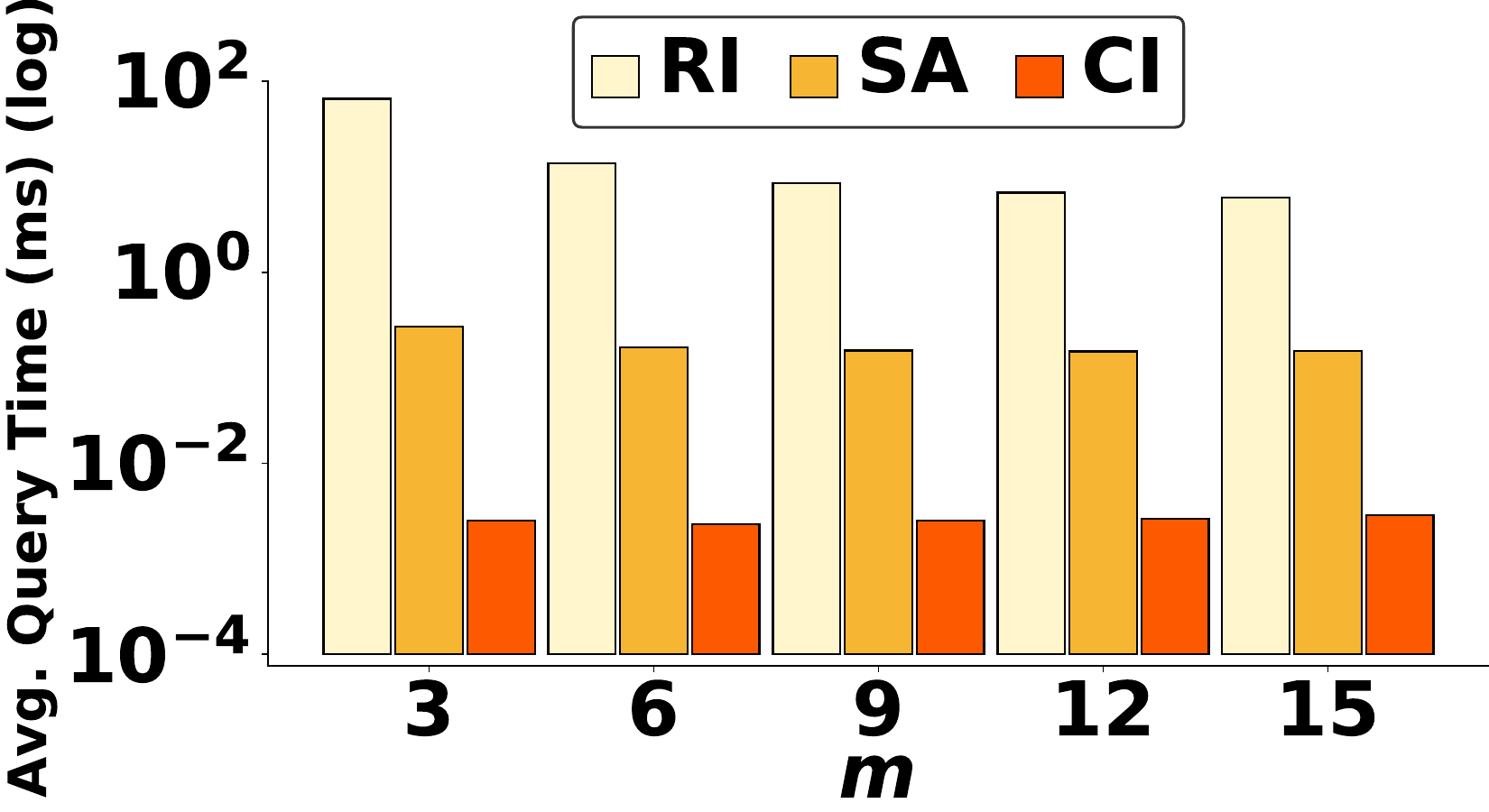}
        \caption{BST}
    \end{subfigure}%
\hfill
    \begin{subfigure}{0.188\linewidth}
        \includegraphics[width=1.05\linewidth]{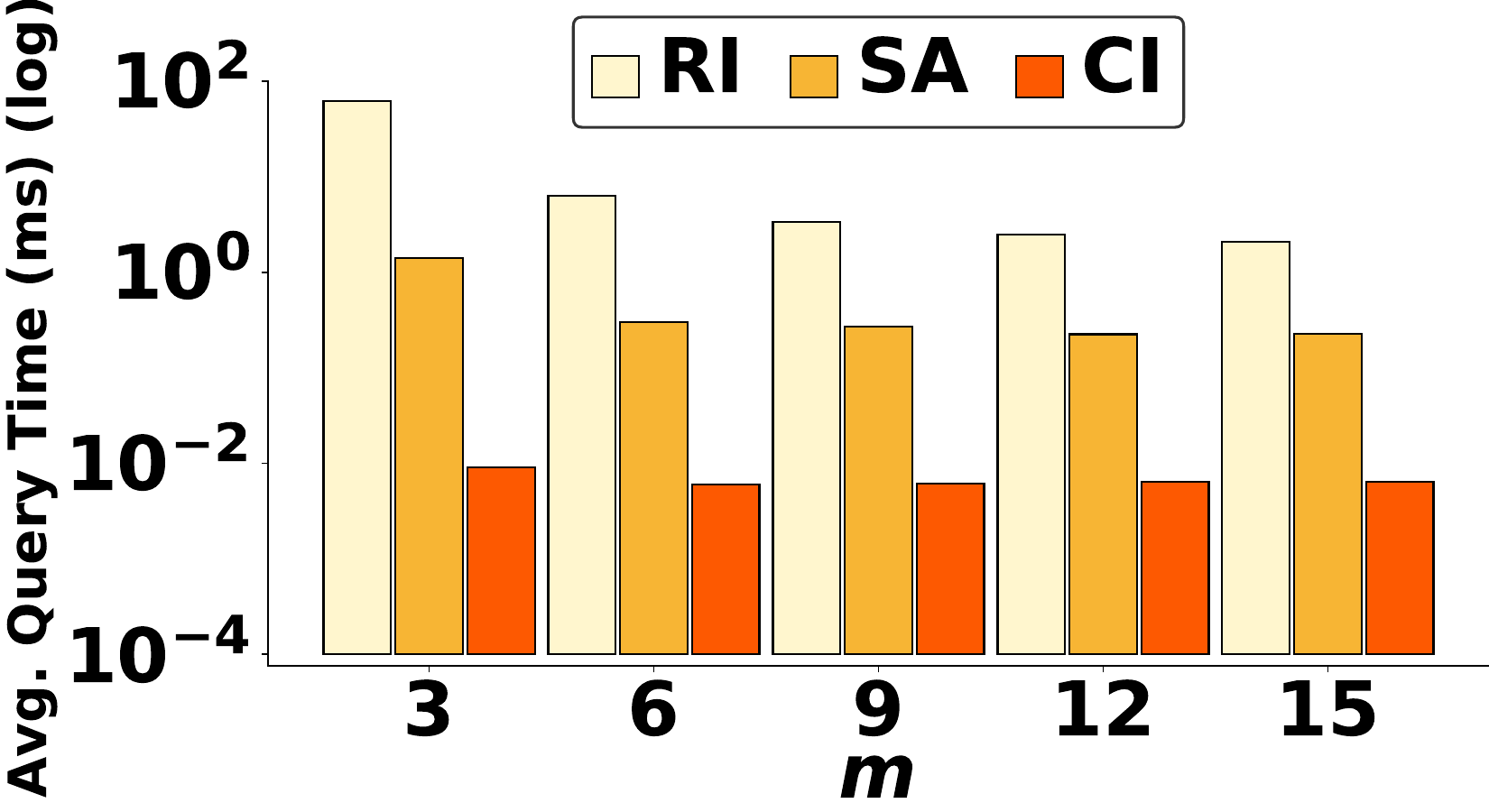}
        \caption{SDSL}
    \end{subfigure}%
\hfill
    \begin{subfigure}{0.188\linewidth}
        \includegraphics[width=1.05\linewidth]{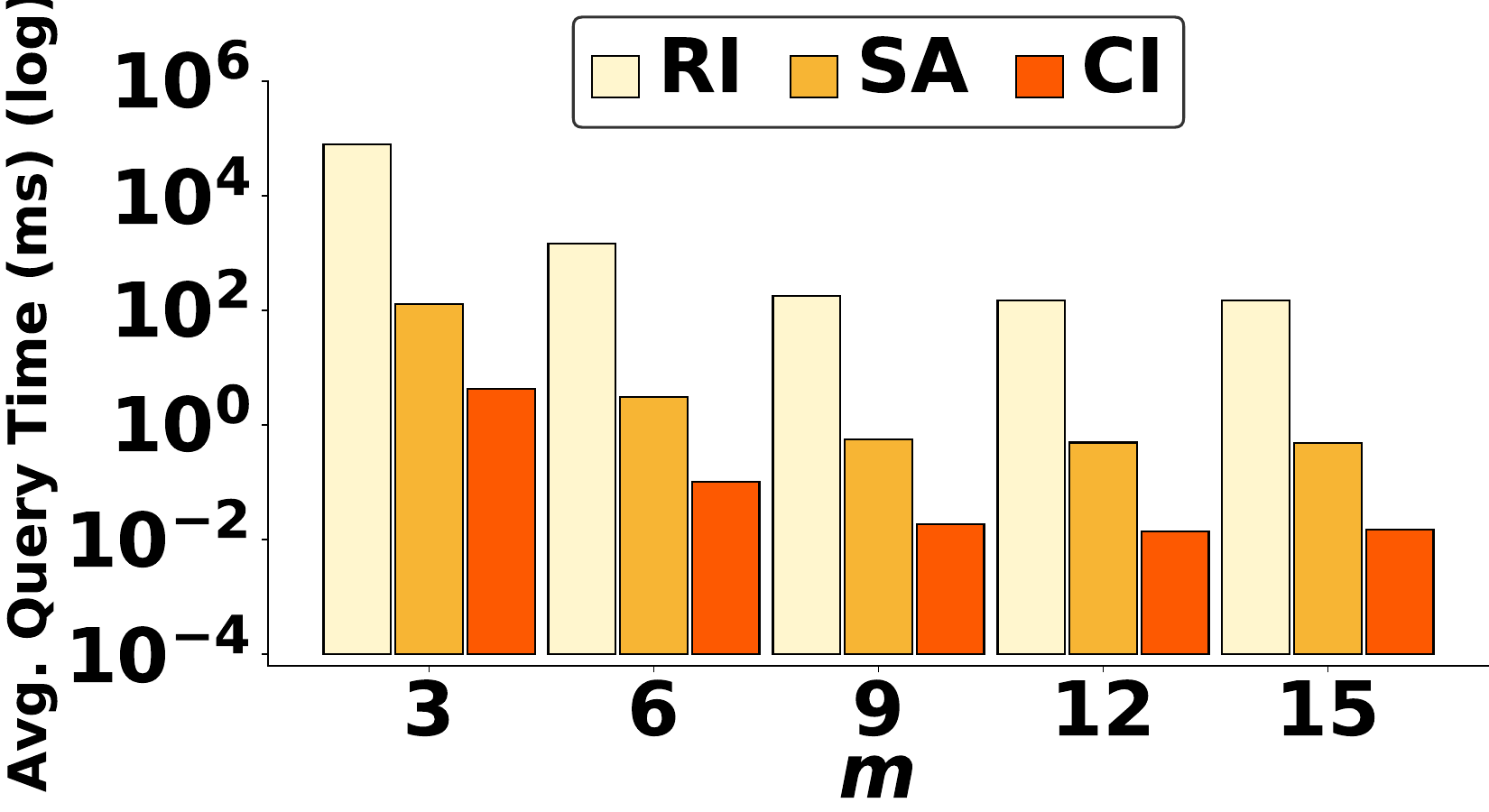}
        \caption{\sars}
    \end{subfigure}%
\hfill
        \begin{subfigure}{0.188\linewidth}
        \includegraphics[width=1.05\linewidth]{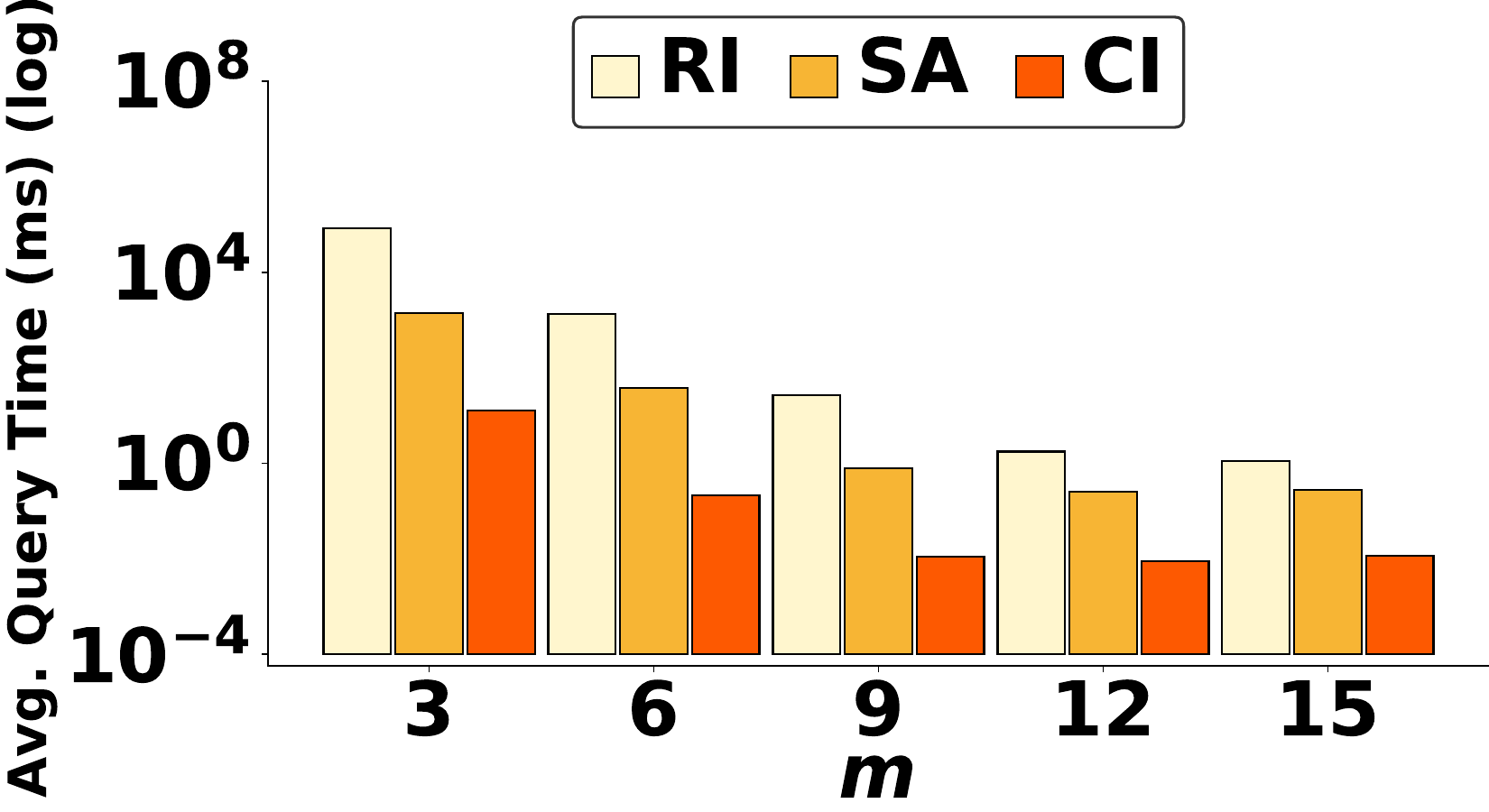}
        \caption{\chr}
          \label{fig:querytime_chr_m}
    \end{subfigure}%

    \caption{Average query time across all datasets: (a--e) vs.\ $n$, (f--j) vs.\ $l,r$, and (k--o) vs.\ $m$.}
    \label{fig:querytime}
\end{figure*}

\subparagraph{Construction Time.}~Figs.~\ref{fig:construction_time_WIKI_n} to \ref{fig:construction_time_chr_n} show the construction time of all methods for the experiments of Figs.~\ref{fig:querytime_wiki_n} to  \ref{fig:querytime_chr_n}. All times increase linearly with $n$, as expected. Figs.~\ref{fig:construction_time_WIKI_xy} to  \ref{fig:construction_time_CHR_xy} (respectively, Figs.~\ref{fig:construction_time_WIKI_m} to  \ref{fig:construction_time_CHR_m}) show the construction time for the experiment of Figs. \ref{fig:querytime_wiki_xy} to  \ref{fig:querytime_chr_xy} (respectively, Figs. \ref{fig:querytime_wiki_m} to  \ref{fig:querytime_chr_m}). 
The construction time of \CI increases slightly with $l$, $r$, or $m$, as \CI is constructed in $\Oh(n)+\Ohtilde(k)$ time, where $k =  \min(B\cdot z, n)$ and $l+m+r \leq B$. The construction times of \RI and \CPRIS are not affected, as expected by their construction time complexities. 
The construction time of \CI is $38\%$ ($46\%$) of that of \RI (\CPRIS) on average, over all experiments of Fig.~\ref{fig:construction_time}.  

\begin{figure*}[t]
    \centering
    \begin{subfigure}{0.188\linewidth}
        \includegraphics[width=1.05\linewidth]{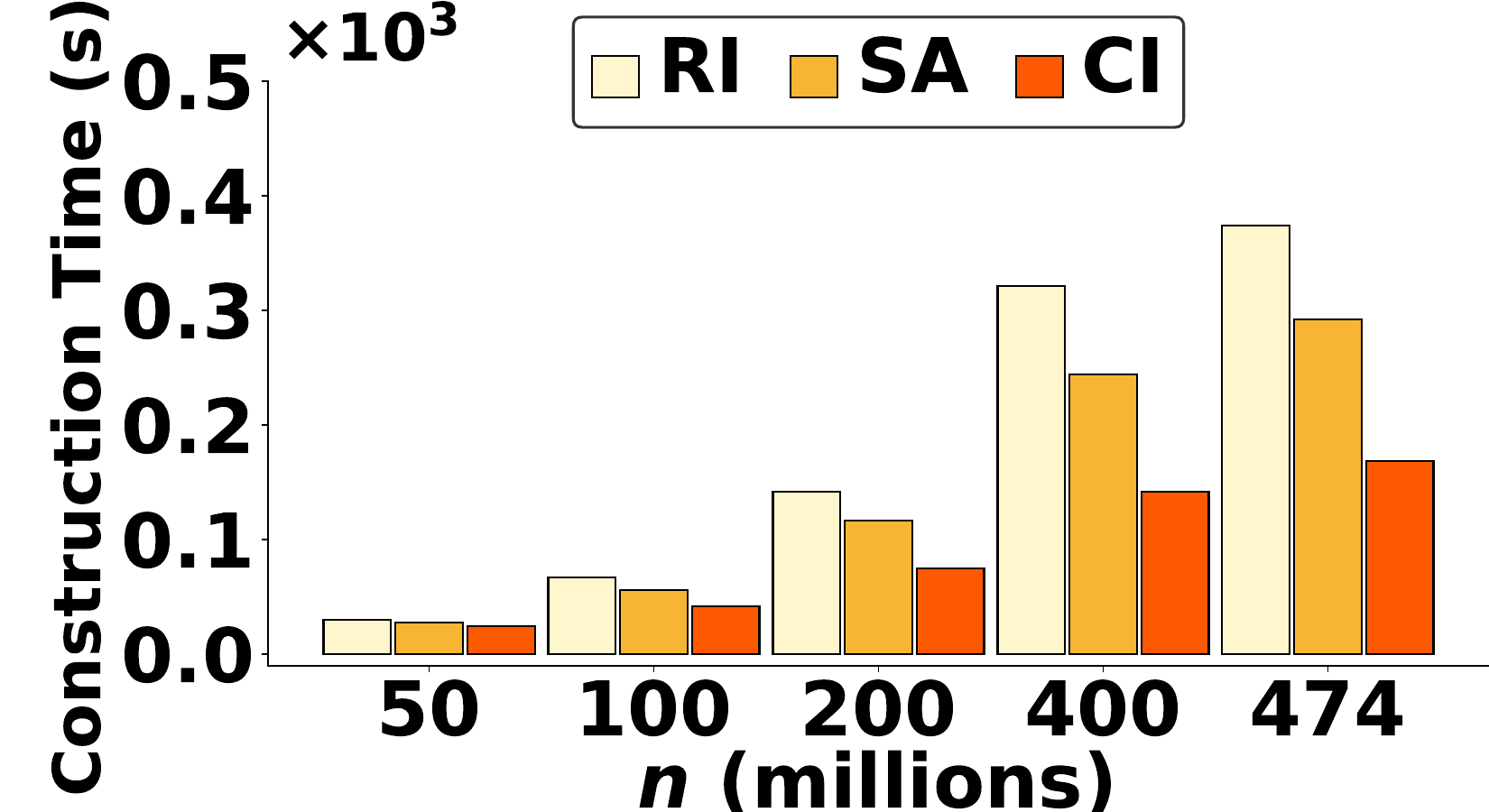}
        \caption{WIKI}
    \label{fig:construction_time_WIKI_n}
    \end{subfigure}%
 \hfill
    \begin{subfigure}{0.188\linewidth}
        \includegraphics[width=1.05\linewidth]{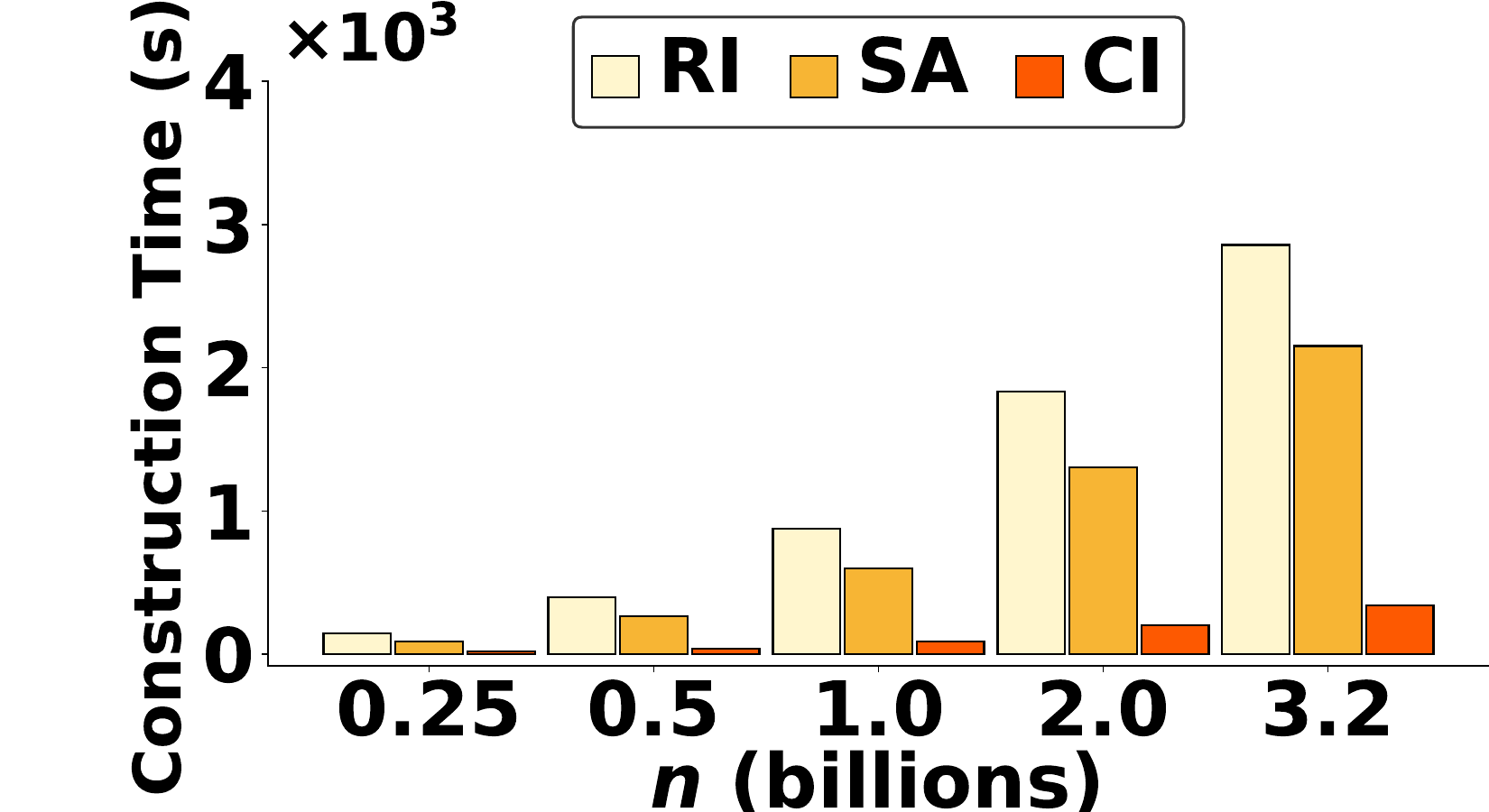}
        \caption{BST}
    \end{subfigure}%
 \hfill
    \begin{subfigure}{0.188\linewidth}
        \includegraphics[width=1.05\linewidth]{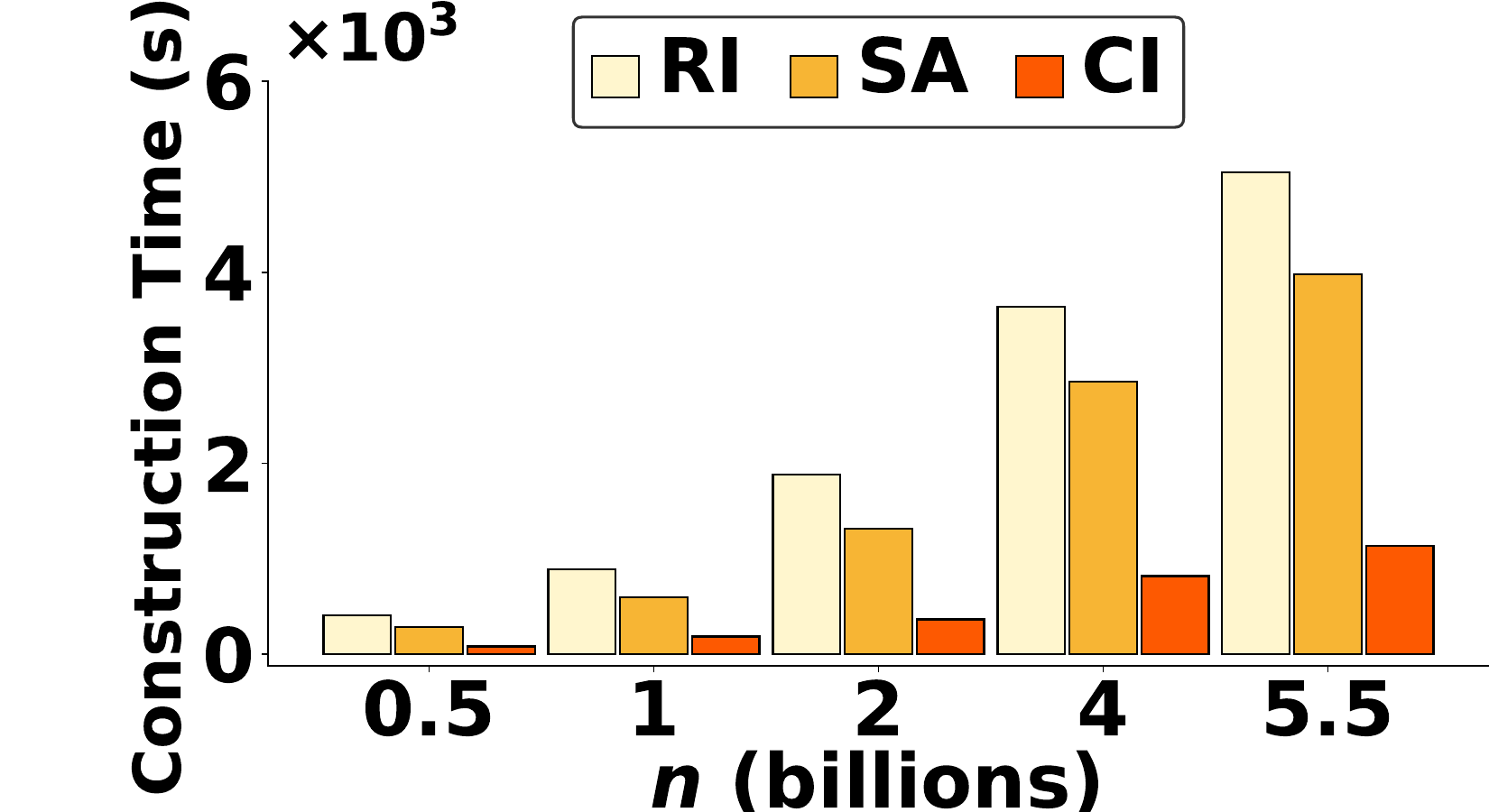}
        \caption{SDSL}
    \end{subfigure}%
\hfill
        \begin{subfigure}{0.188\linewidth}
        \includegraphics[width=1.05\linewidth]{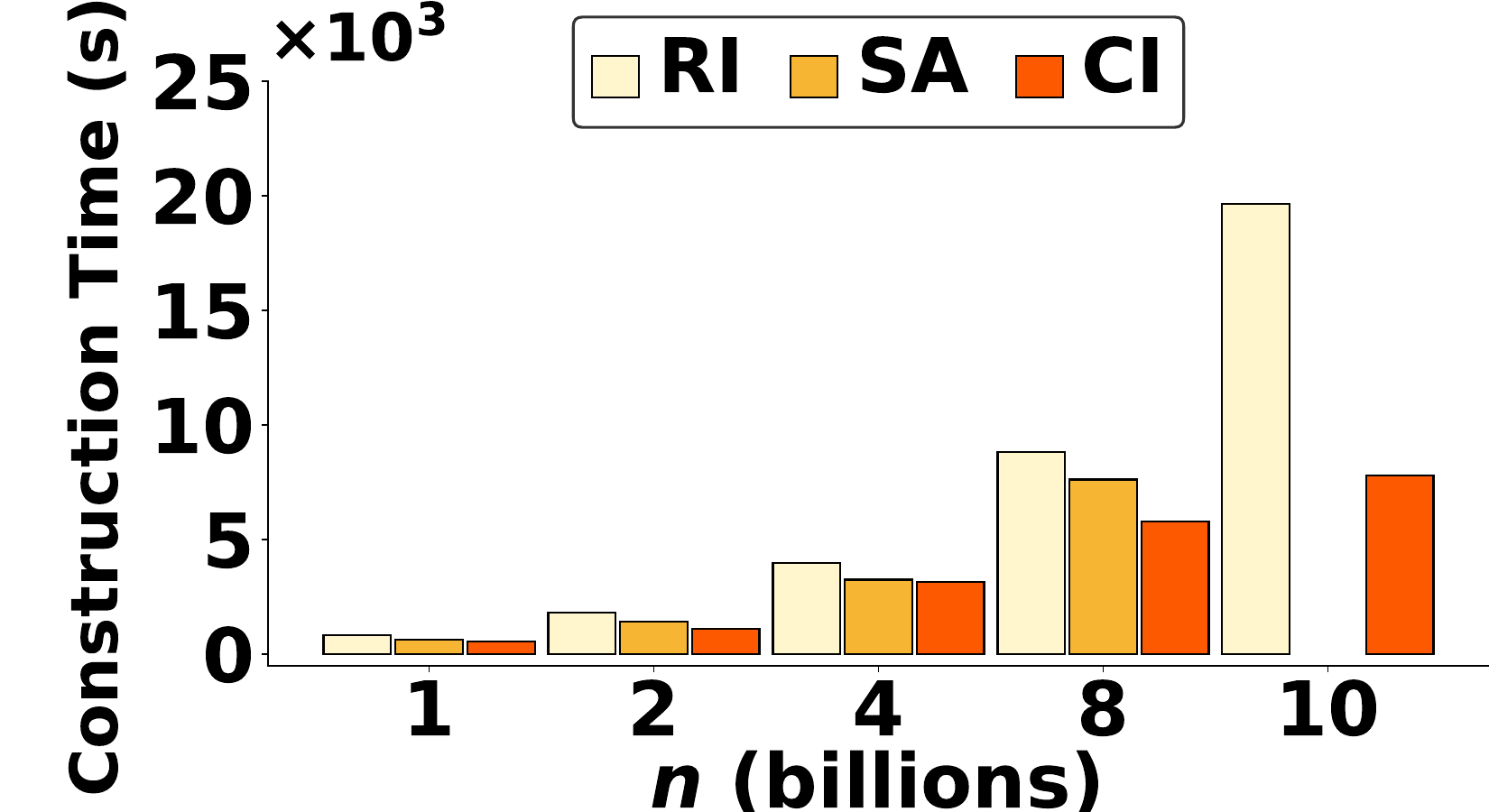}
        \caption{\sars}
    \end{subfigure}%
\hfill
    \begin{subfigure}{0.188\linewidth}
        \includegraphics[width=1.05\linewidth]{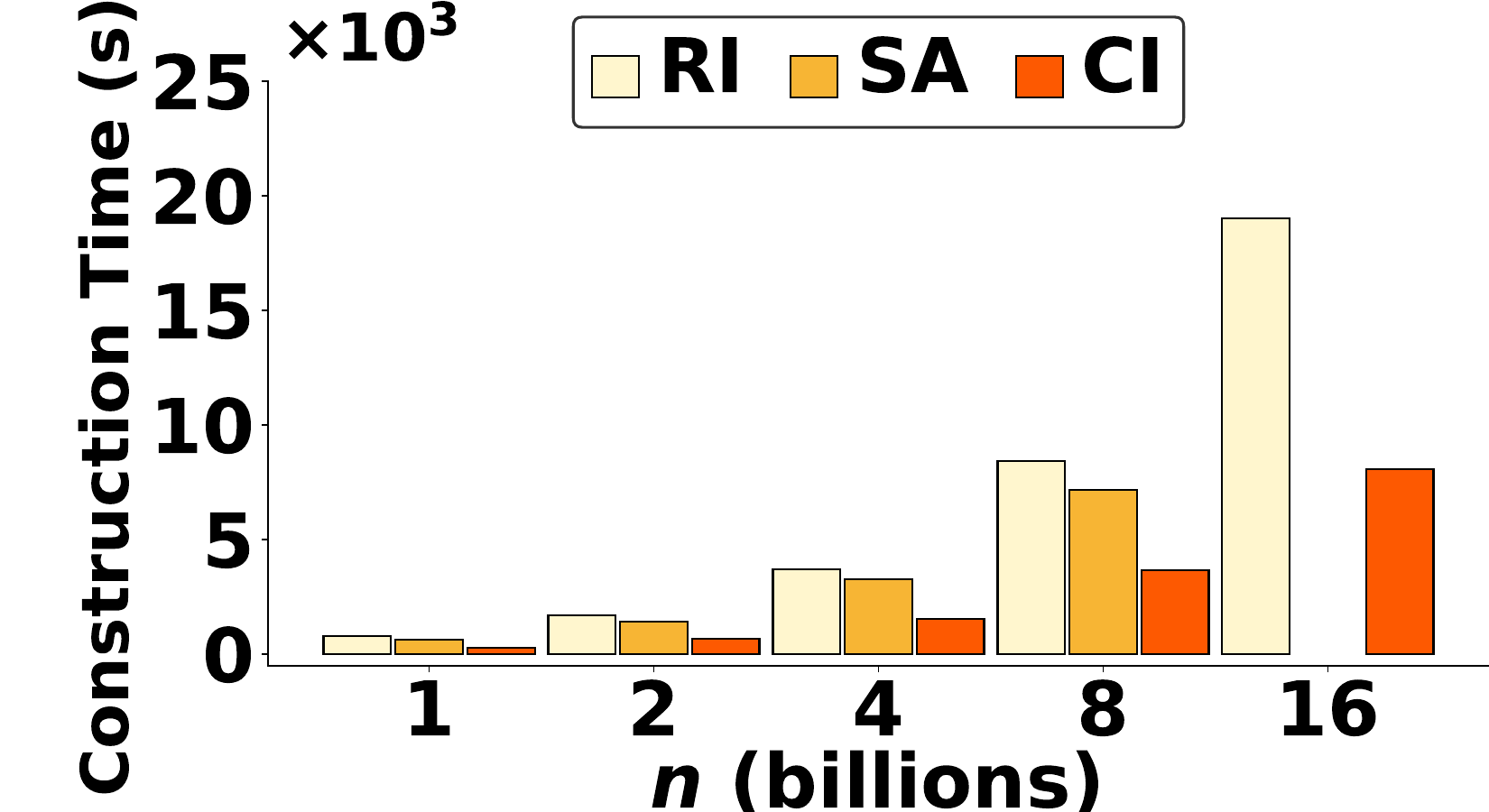}
        \caption{\chr}
    \label{fig:construction_time_chr_n}
    \end{subfigure}%

    \begin{subfigure}{0.188\linewidth}
        \includegraphics[width=1.05\linewidth]{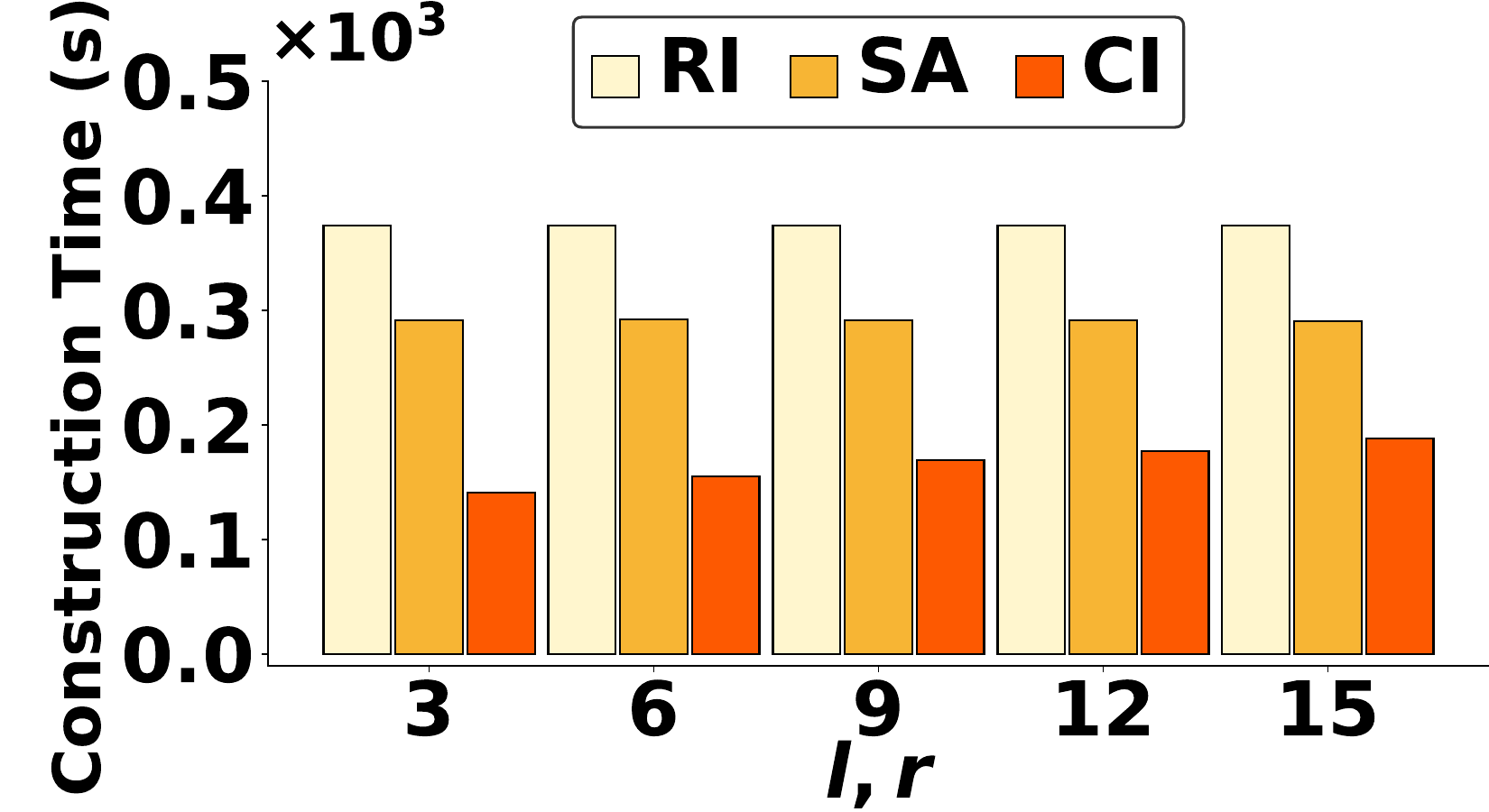}
        \caption{WIKI}
        \label{fig:construction_time_WIKI_xy}
    \end{subfigure}%
\hfill
    \begin{subfigure}{0.188\linewidth}
        \includegraphics[width=1.05\linewidth]{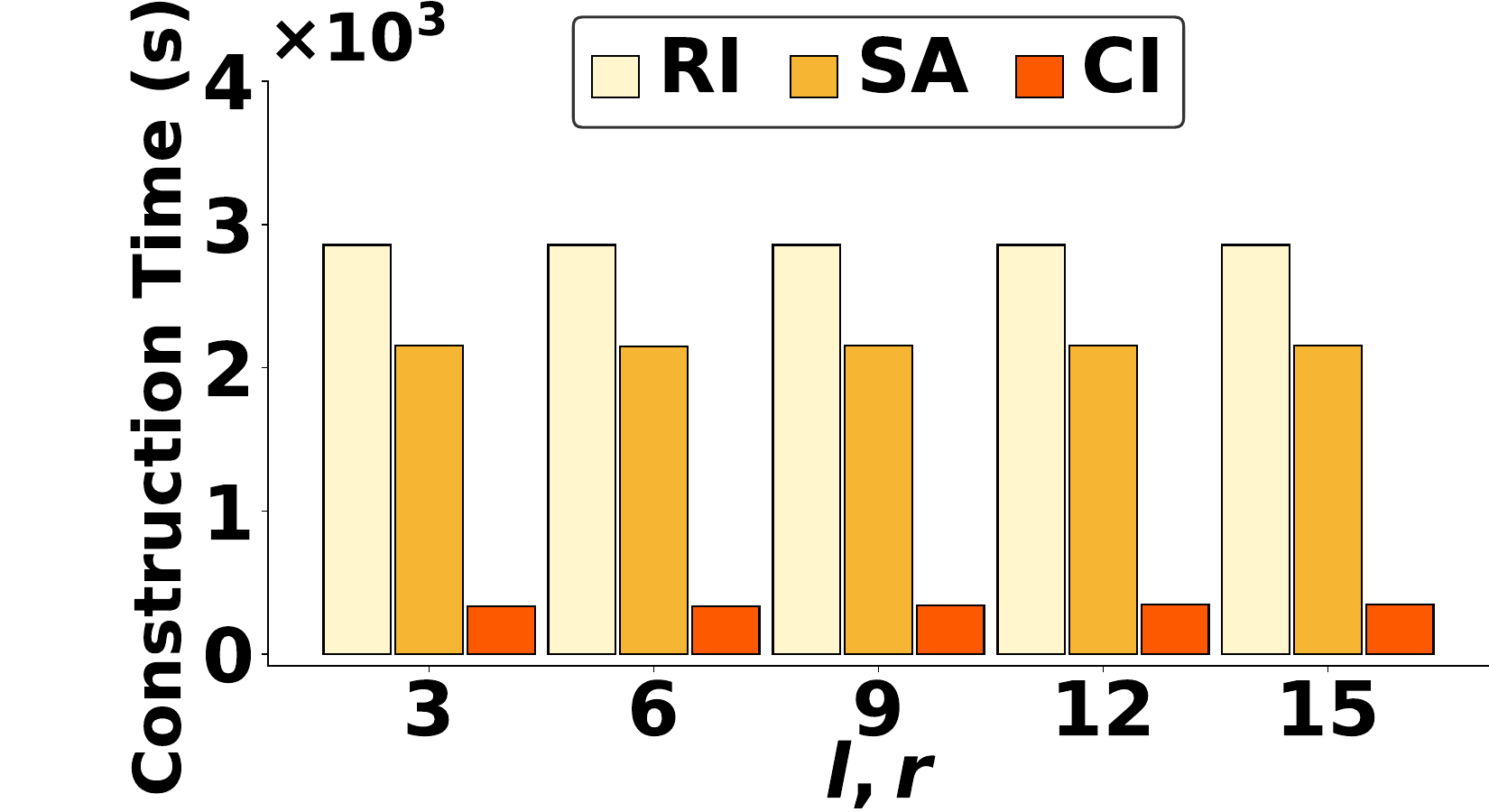}
        \caption{BST}
    \end{subfigure}%
\hfill
    \begin{subfigure}{0.188\linewidth}
        \includegraphics[width=1.05\linewidth]{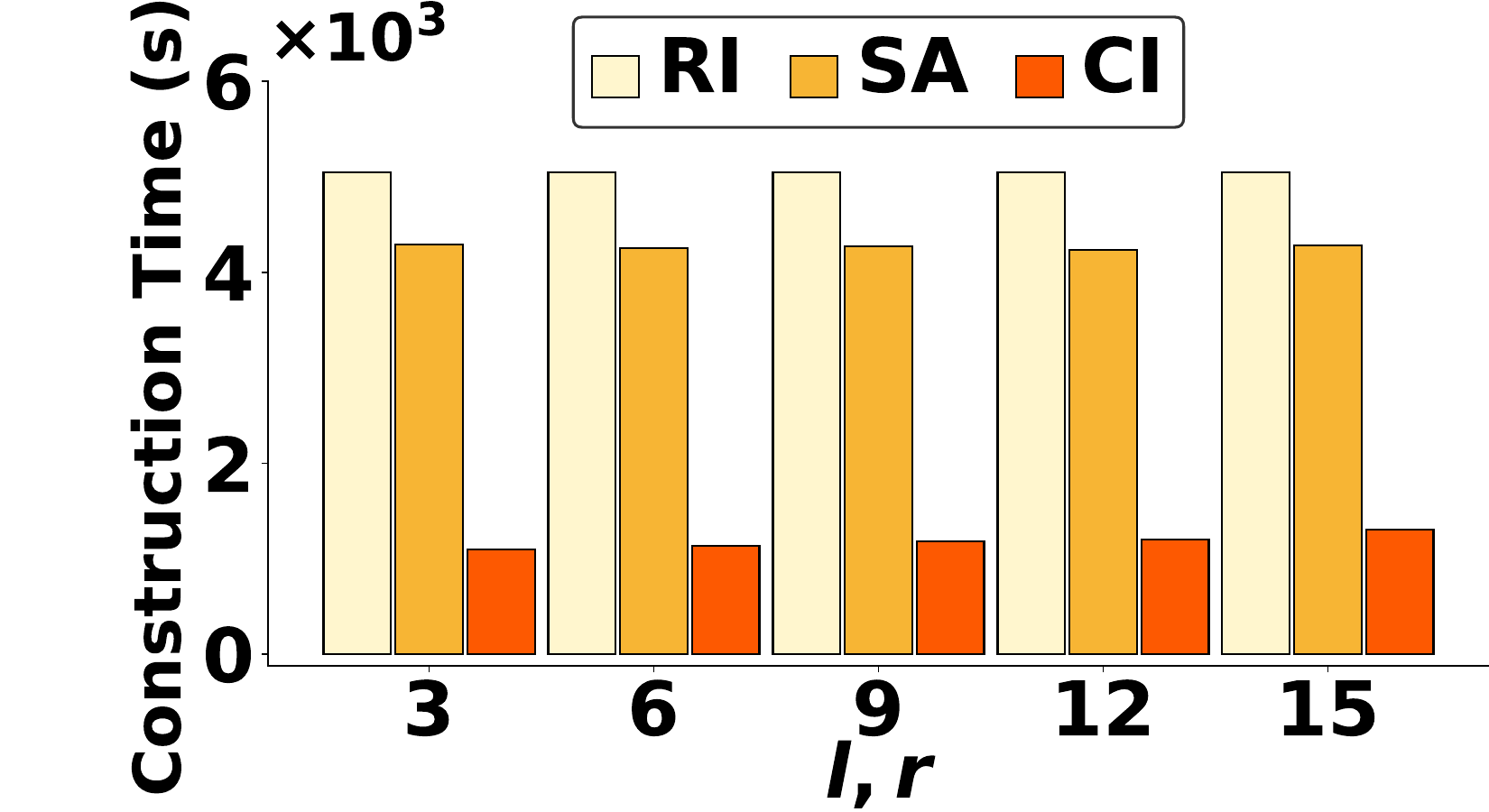}
        \caption{SDSL}
    \end{subfigure}%
\hfill
        \begin{subfigure}{0.188\linewidth}
        \includegraphics[width=1.05\linewidth]{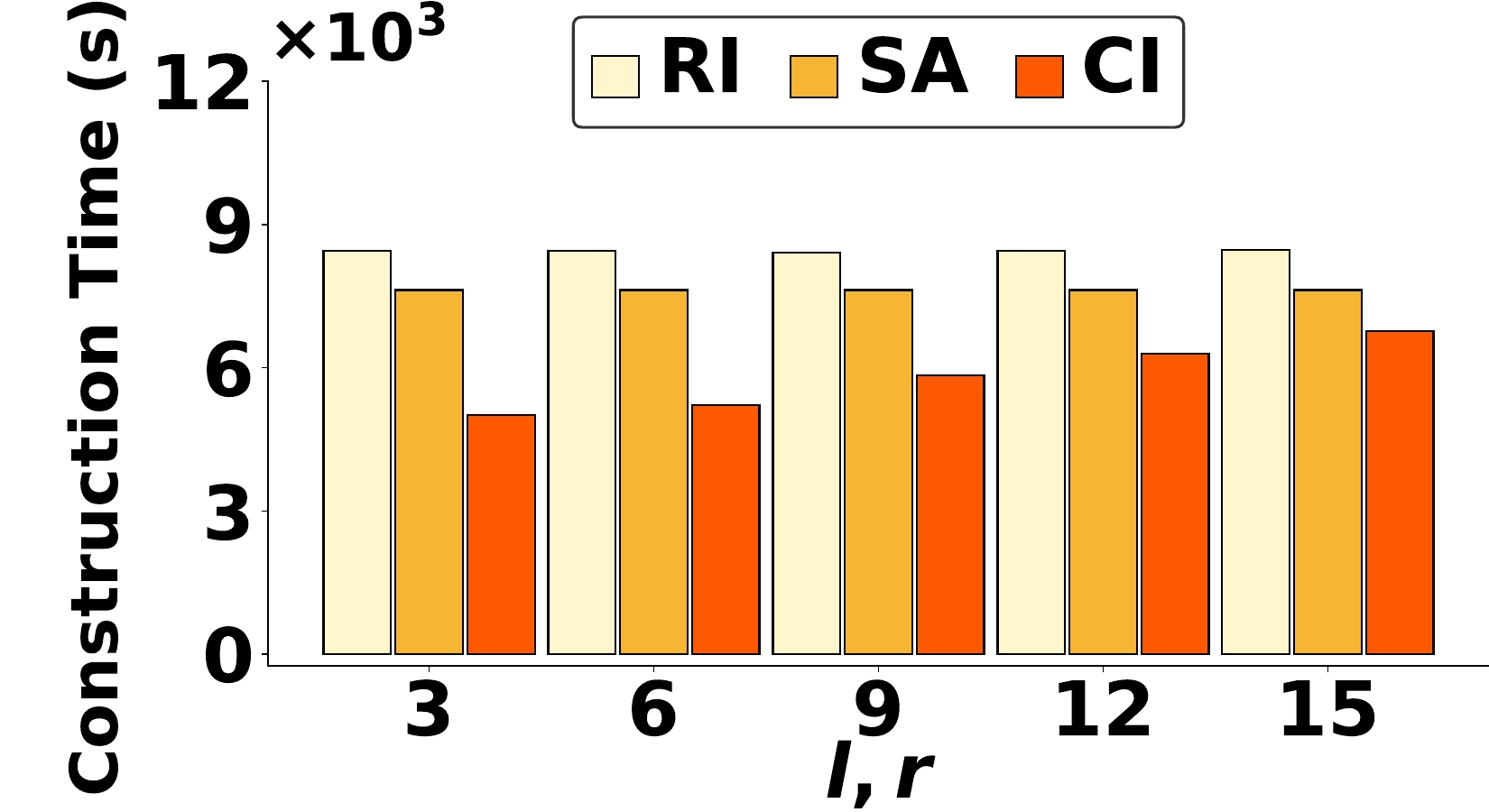}
        \caption{\sars}
    \end{subfigure}%
\hfill
    \begin{subfigure}{0.188\linewidth}
        \includegraphics[width=1.05\linewidth]{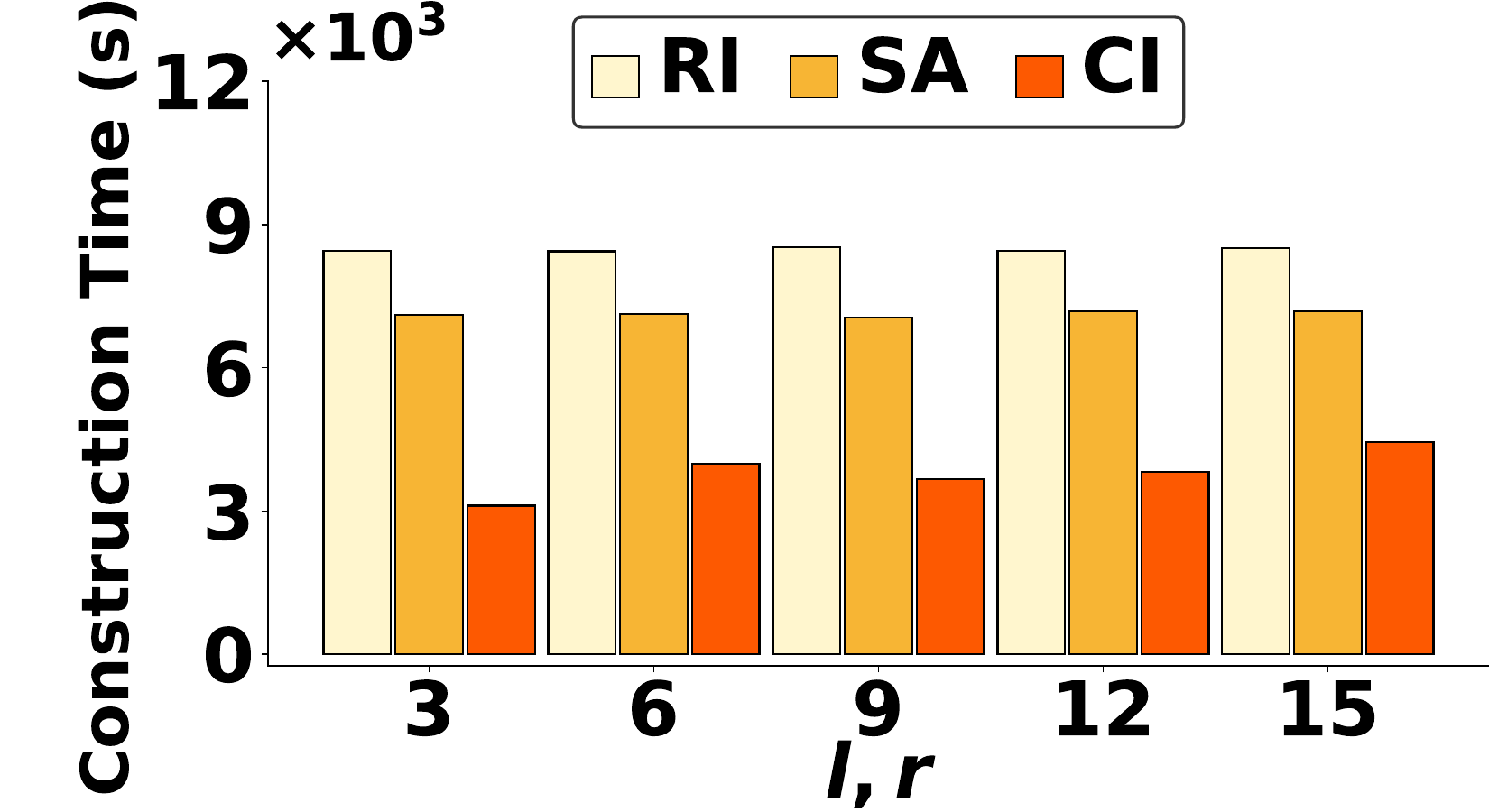}
        \caption{\chr}
    \label{fig:construction_time_CHR_xy}
    \end{subfigure}%

    \begin{subfigure}{0.188\linewidth}
        \includegraphics[width=1.05\linewidth]{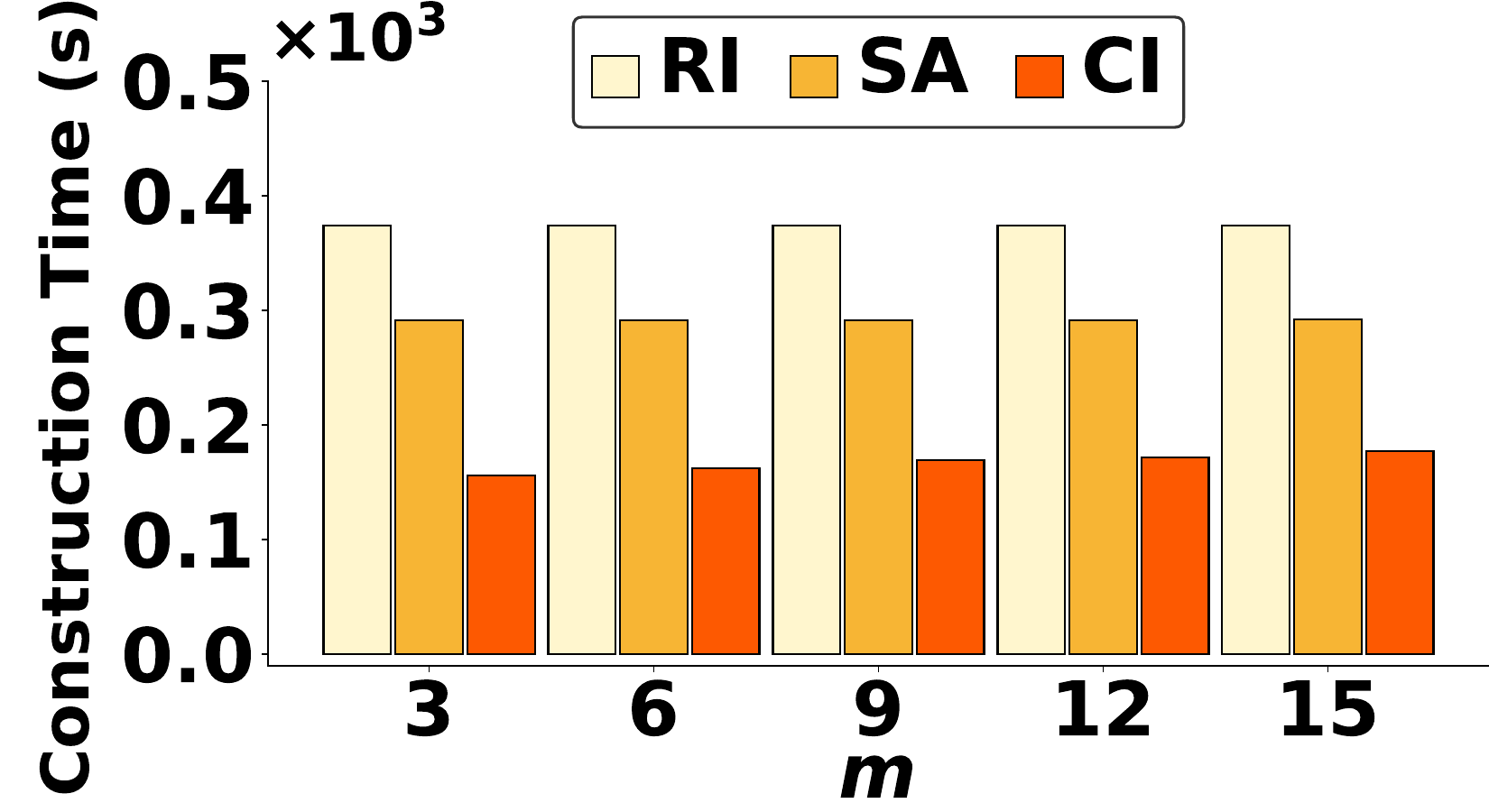}
        \caption{WIKI}
        \label{fig:construction_time_WIKI_m}
    \end{subfigure}%
\hfill
    \begin{subfigure}{0.188\linewidth}
        \includegraphics[width=1.05\linewidth]{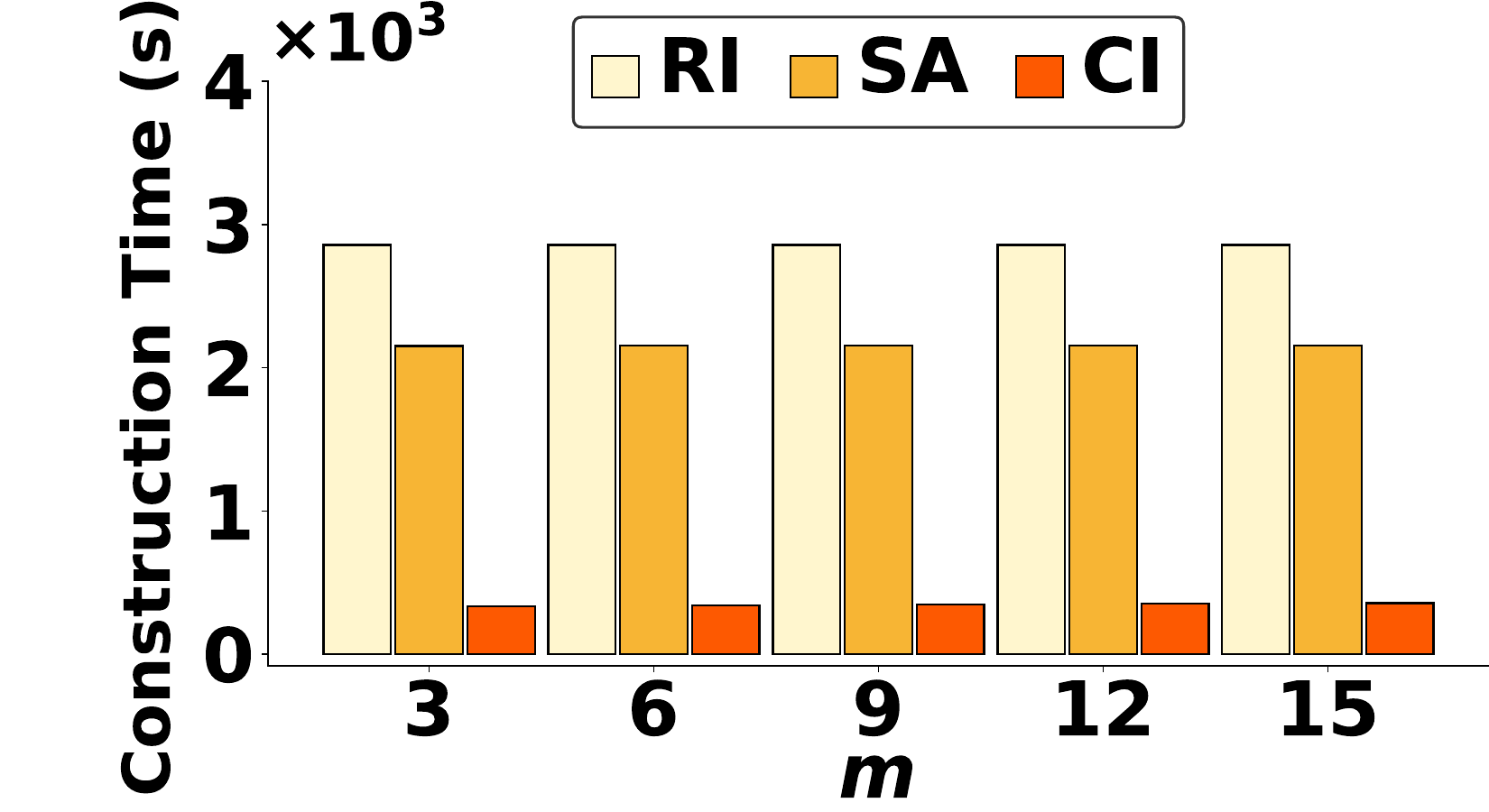}
        \caption{BST}
    \end{subfigure}%
\hfill
    \begin{subfigure}{0.188\linewidth}
        \includegraphics[width=1.05\linewidth]{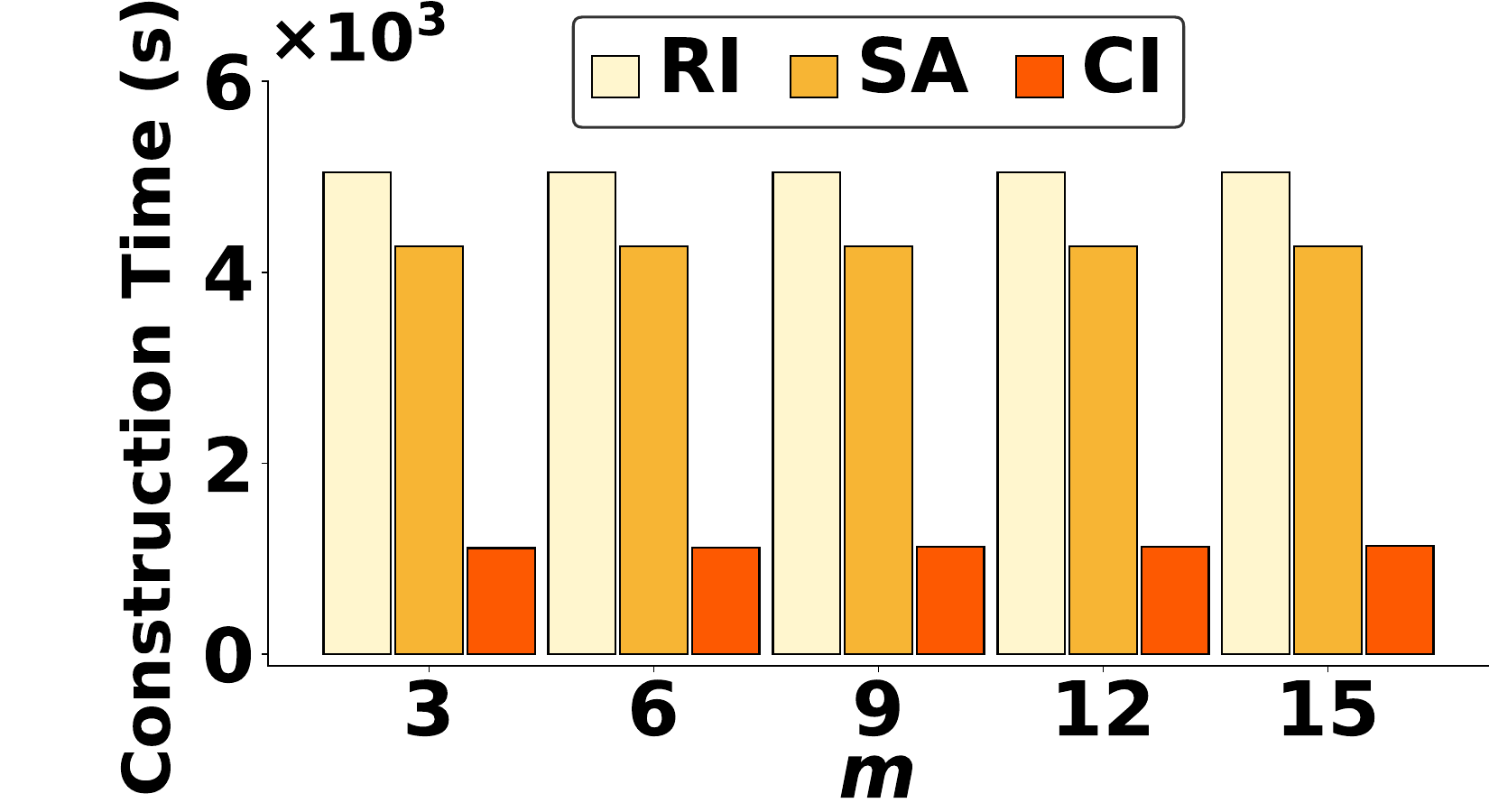}
        \caption{SDSL}
    \end{subfigure}%
        \hfill
    \begin{subfigure}{0.188\linewidth}
        \includegraphics[width=1.05\linewidth]{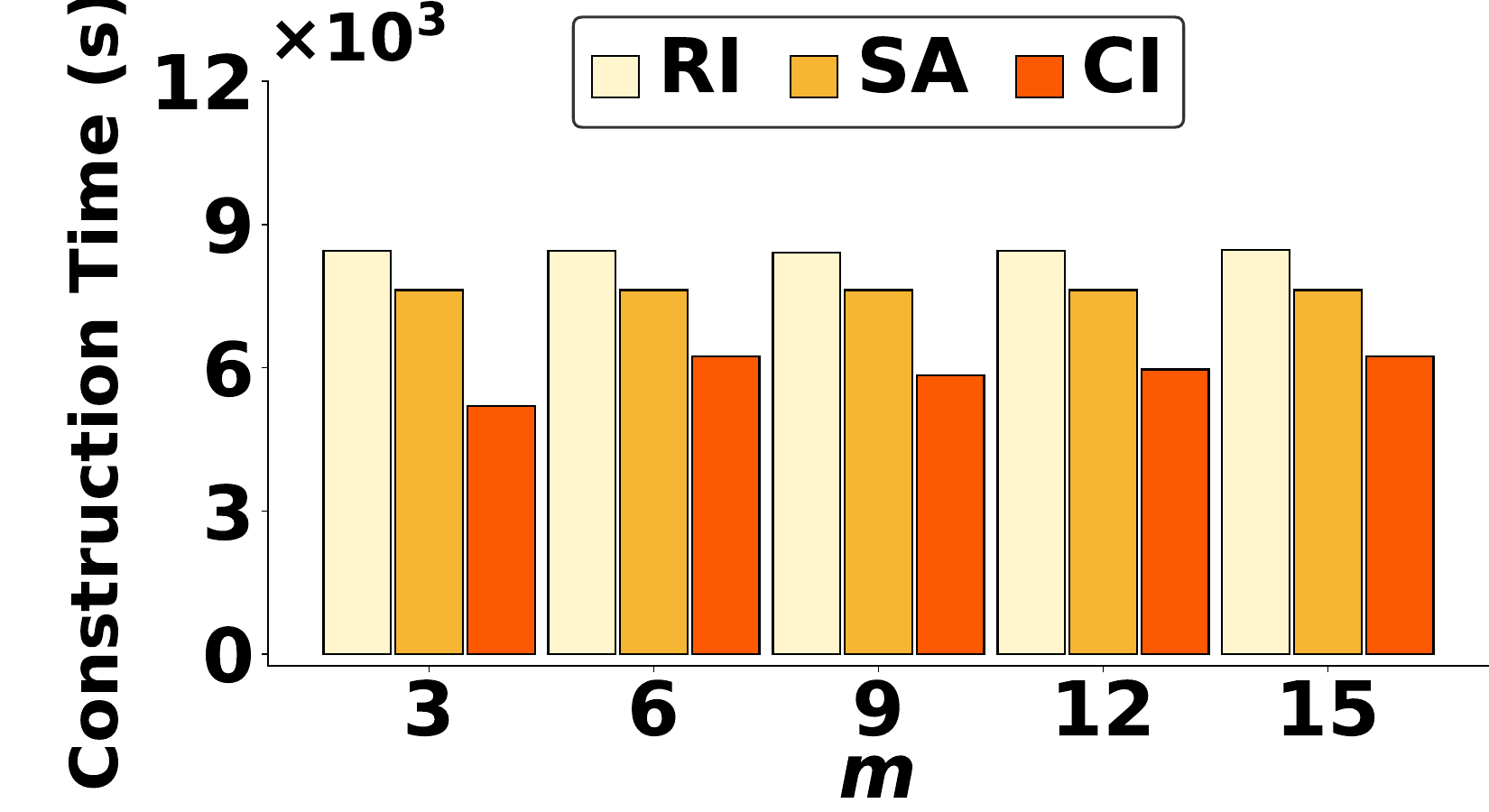}
        \caption{\sars}
    \end{subfigure}%
\hfill
        \begin{subfigure}{0.188\linewidth}
        \includegraphics[width=1.05\linewidth]{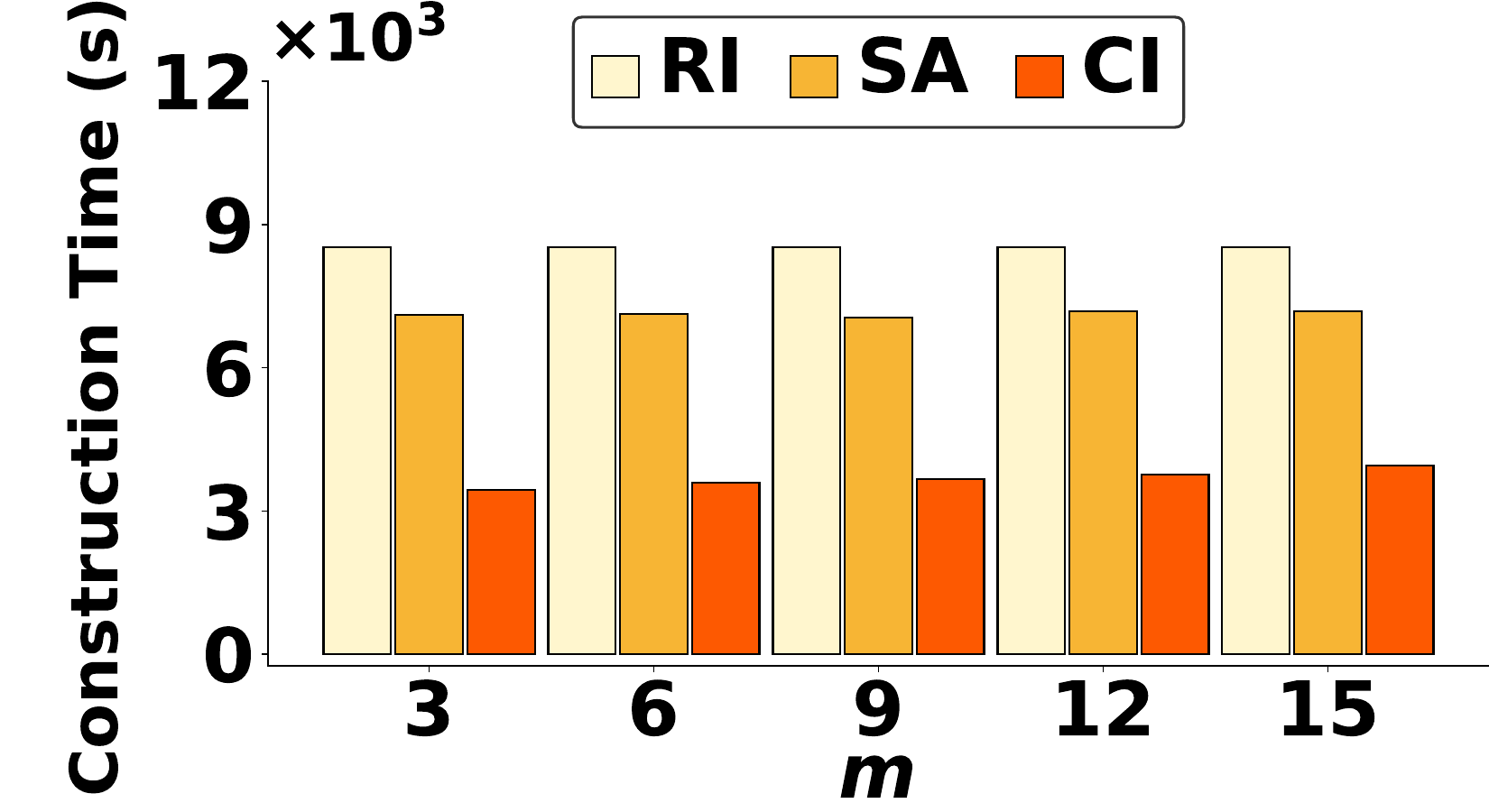}
        \caption{\chr}
        \label{fig:construction_time_CHR_m}
    \end{subfigure}%
    \caption{Construction time across all datasets: (a--e) vs.\ $n$, (f--j) vs.\ $l,r$, and (k--o) vs.\ $m$.}
    \label{fig:construction_time}
\end{figure*}

\subparagraph{Index Size.}~Figs. \ref{fig:index_space_WIKI_n} to  \ref{fig:index_space_CHR_m} show the index size for the experiments of Figs.~\ref{fig:querytime_wiki_n}  to \ref{fig:querytime_chr_m}.  
The index size of \CI increases as $n$, $l$, $r$, or $m$ increases, as the space complexity of \CI is $\Oh(k \cdot(\log k)^3/(\log\log k)^2)$, where $k=\min(B \cdot z,n)$ and $l+m+r\leq B$. The index sizes of \RI and \CPRIS increase linearly with $n$, as expected by their space complexities. The index size of \CI is only $9\%$ of that of \CPRIS on average. It is on average $5$ times larger than that of \RI, but \RI has prohibitively expensive query time, as shown above. 

\begin{figure*}[t]
    \centering
    \begin{subfigure}{0.188\linewidth}
        \includegraphics[width=1.05\linewidth]{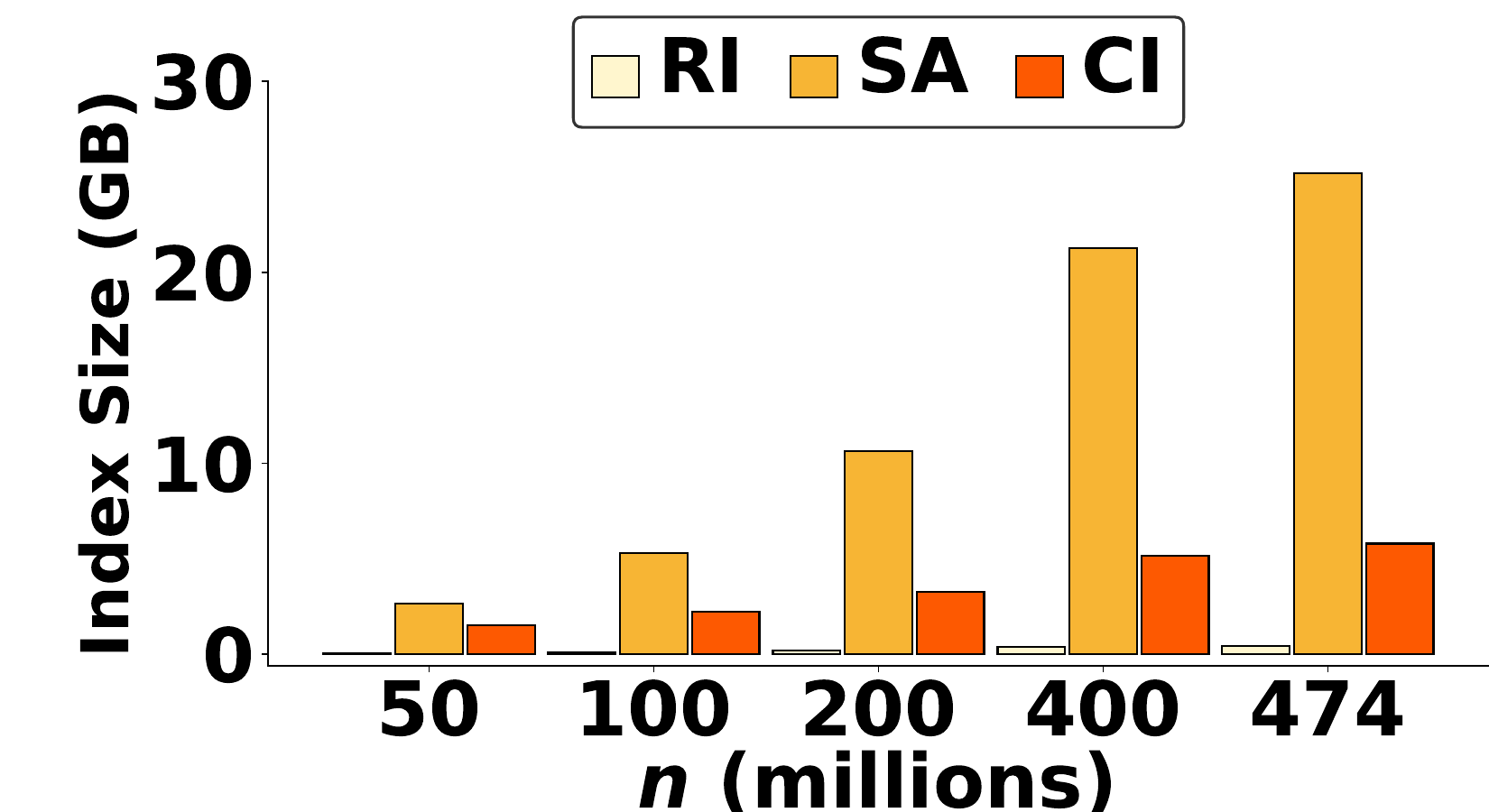}
        \caption{WIKI}
        \label{fig:index_space_WIKI_n}
    \end{subfigure}%
\hfill
    \begin{subfigure}{0.188\linewidth}
        \includegraphics[width=1.05\linewidth]{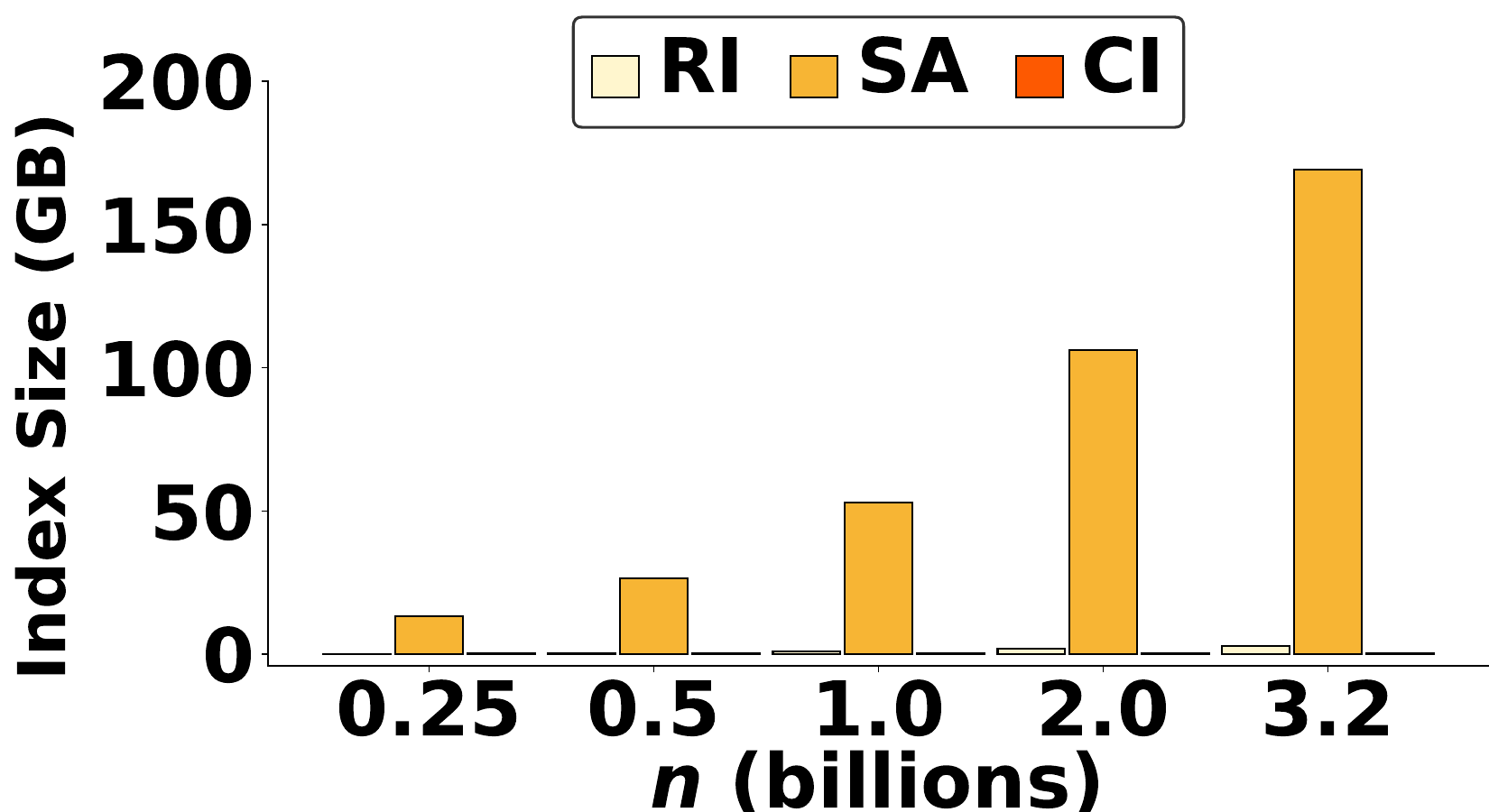}
        \caption{BST}
    \end{subfigure}%
\hfill
    \begin{subfigure}{0.188\linewidth}
        \includegraphics[width=1.05\linewidth]{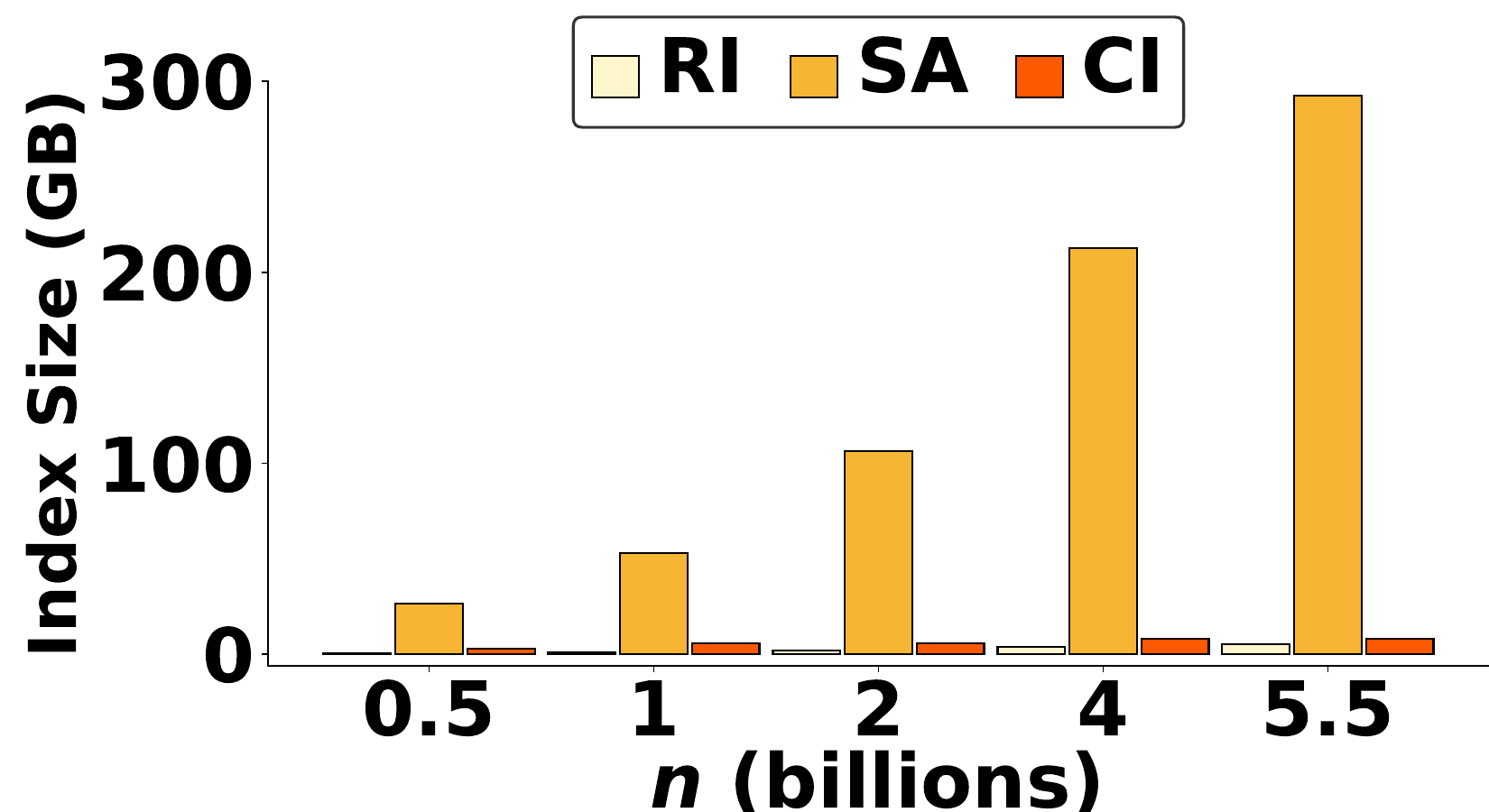}
        \caption{SDSL}
    \end{subfigure}%
  \hfill
    \begin{subfigure}{0.188\linewidth}
        \includegraphics[width=1.05\linewidth]{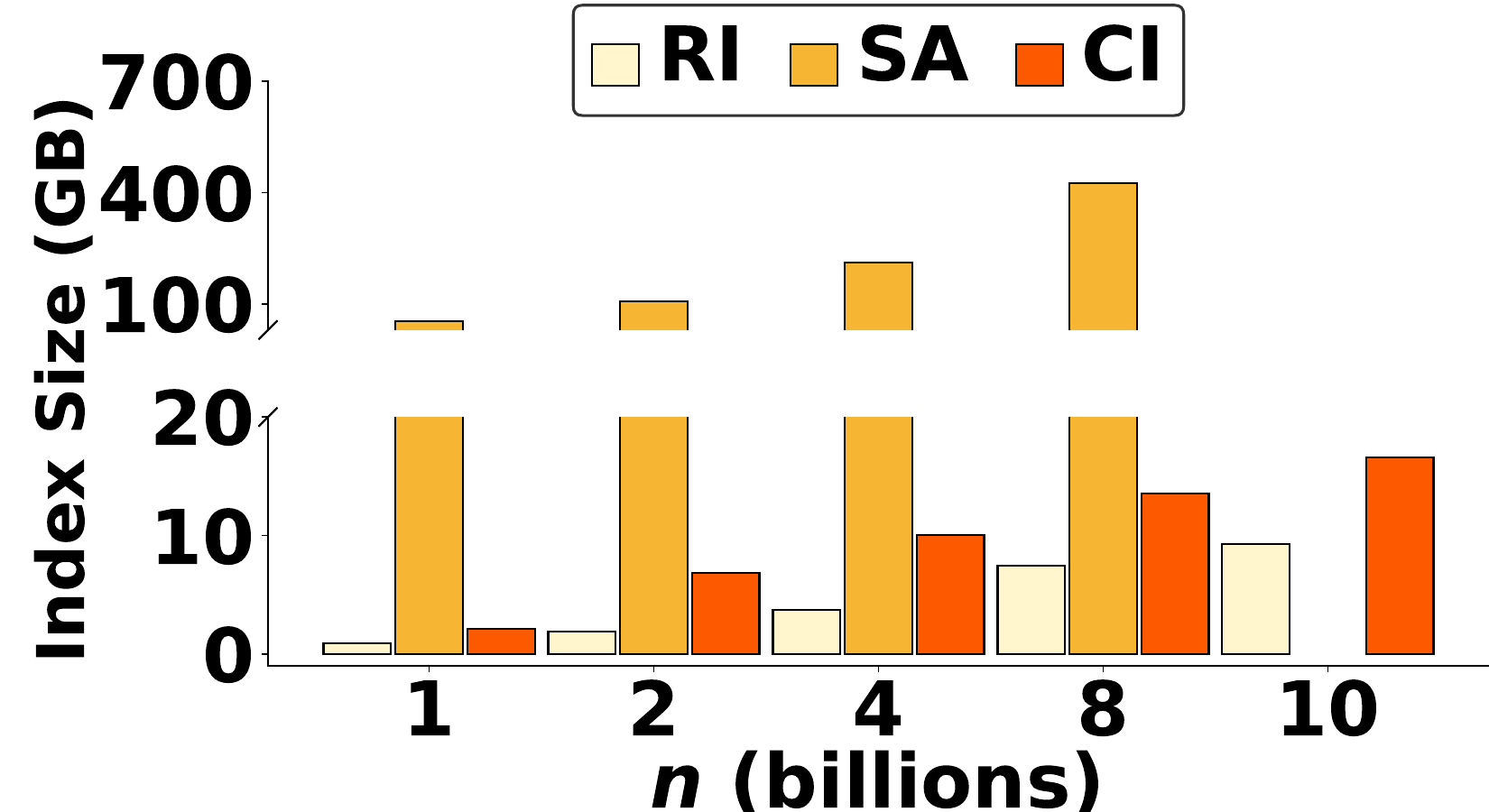}
        \caption{\sars}
    \end{subfigure}%
\hfill
    \begin{subfigure}{0.188\linewidth}
        \includegraphics[width=1.05\linewidth]{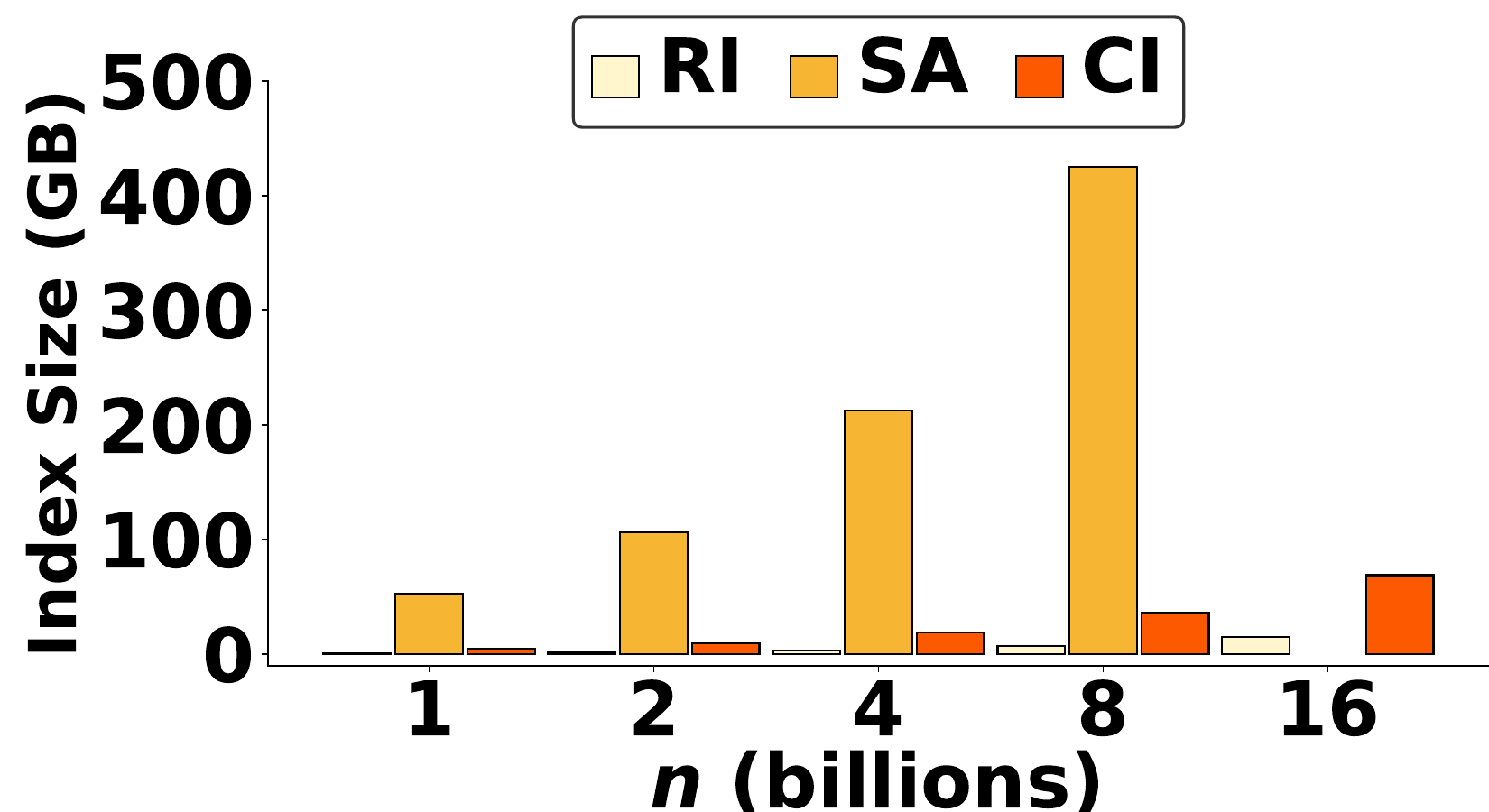}
        \caption{\chr}
        \label{fig:index_space_CHR_n}
    \end{subfigure}%

    \begin{subfigure}{0.188\linewidth}
        \includegraphics[width=1.05\linewidth]{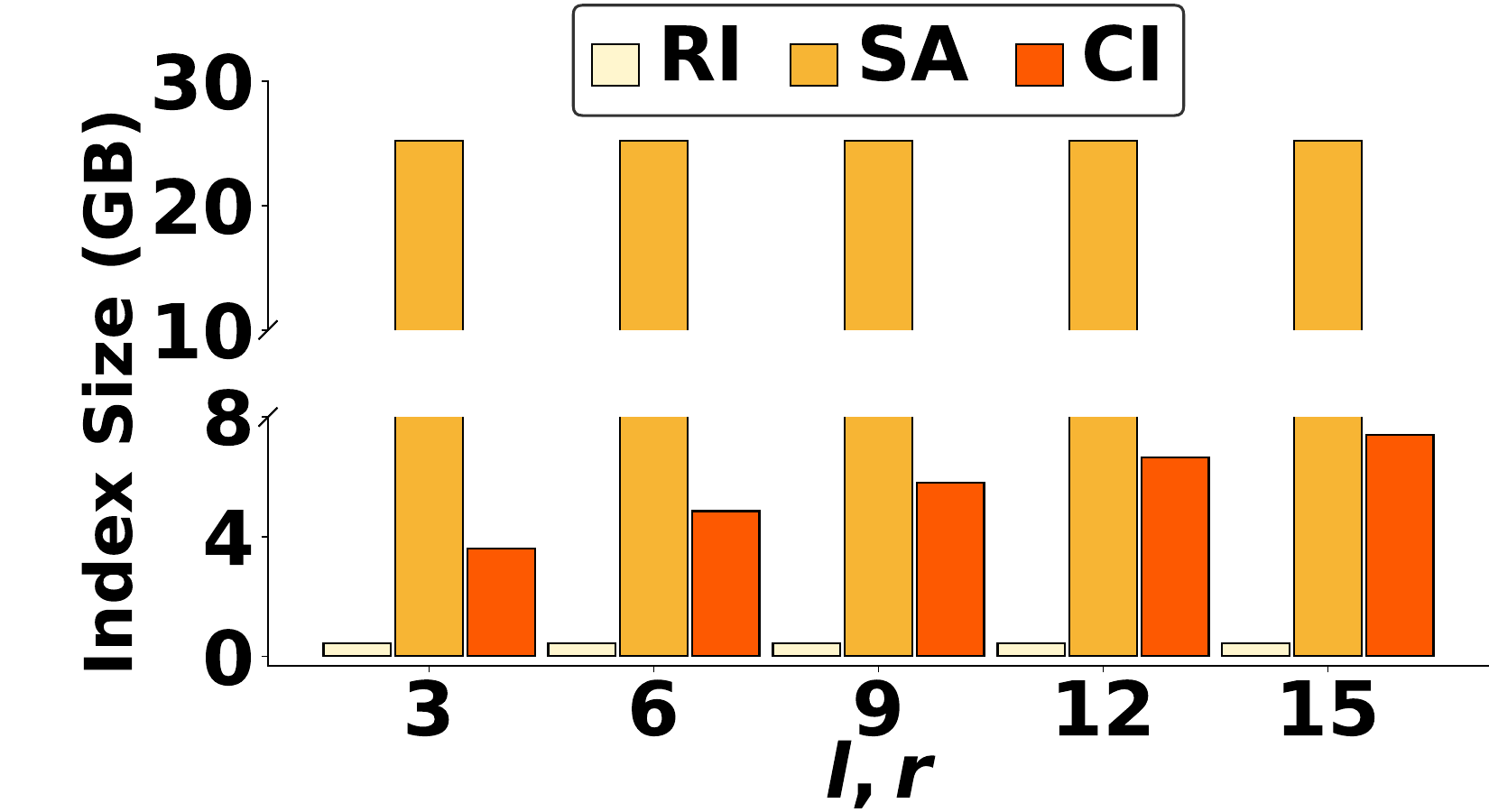}
        \caption{WIKI}
        \label{fig:index_space_WIKI_xy}
    \end{subfigure}%
 \hfill
    \begin{subfigure}{0.188\linewidth}
        \includegraphics[width=1.05\linewidth]{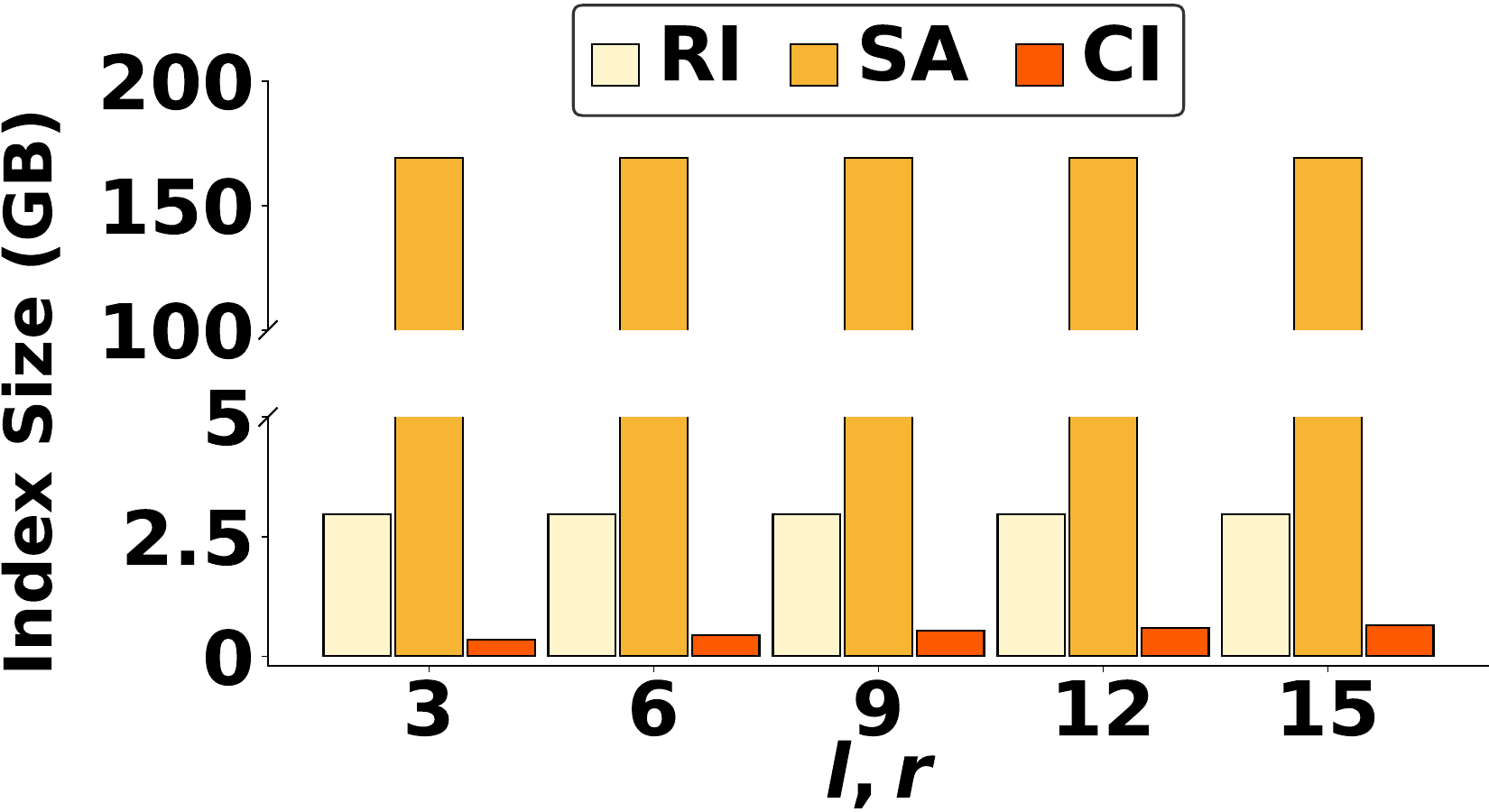}
        \caption{BST}
    \end{subfigure}%
\hfill
    \begin{subfigure}{0.188\linewidth}
        \includegraphics[width=1.05\linewidth]{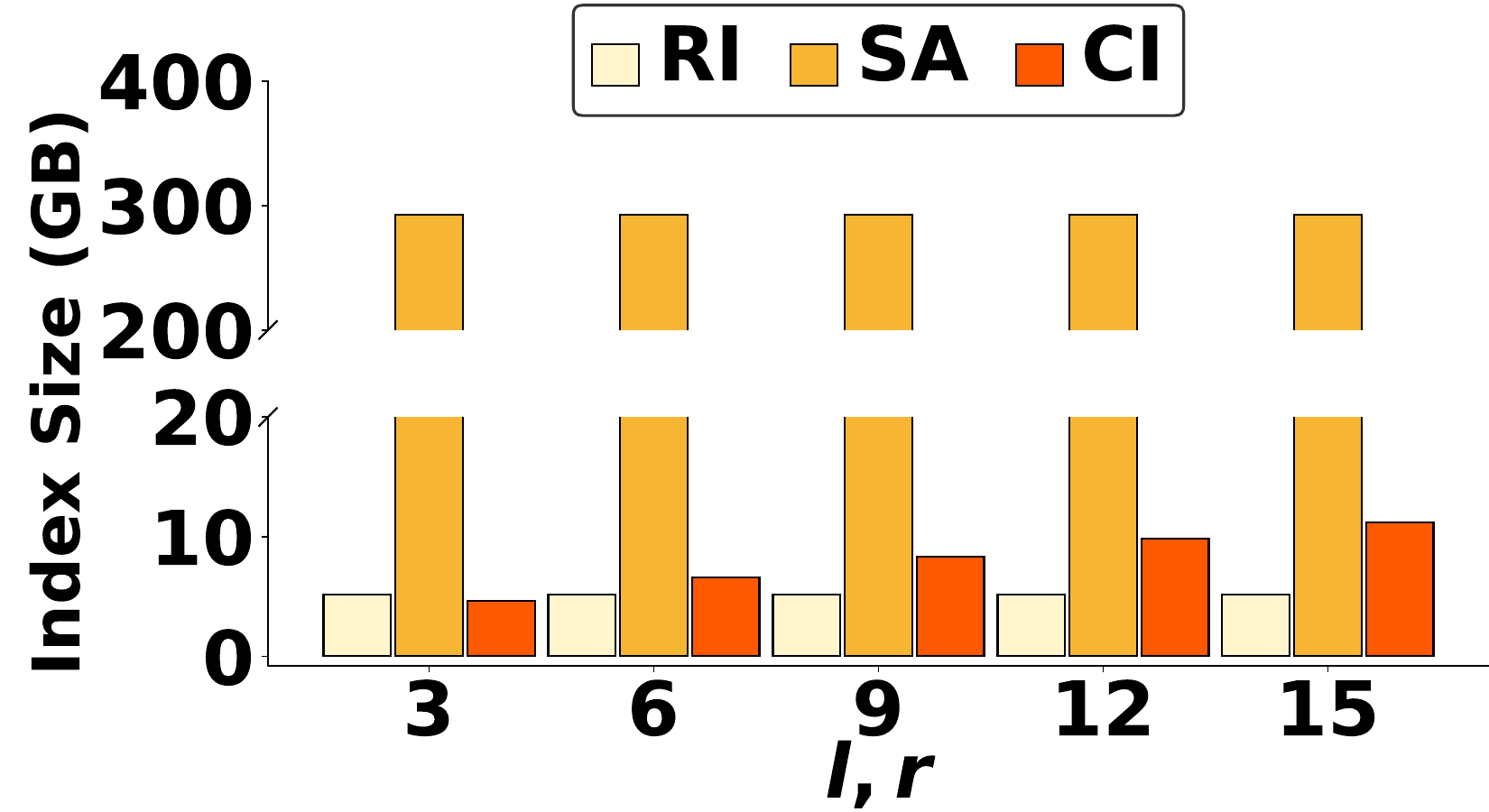}
        \caption{SDSL}
    \end{subfigure}%
\hfill
    \begin{subfigure}{0.188\linewidth}
        \includegraphics[width=1.05\linewidth]{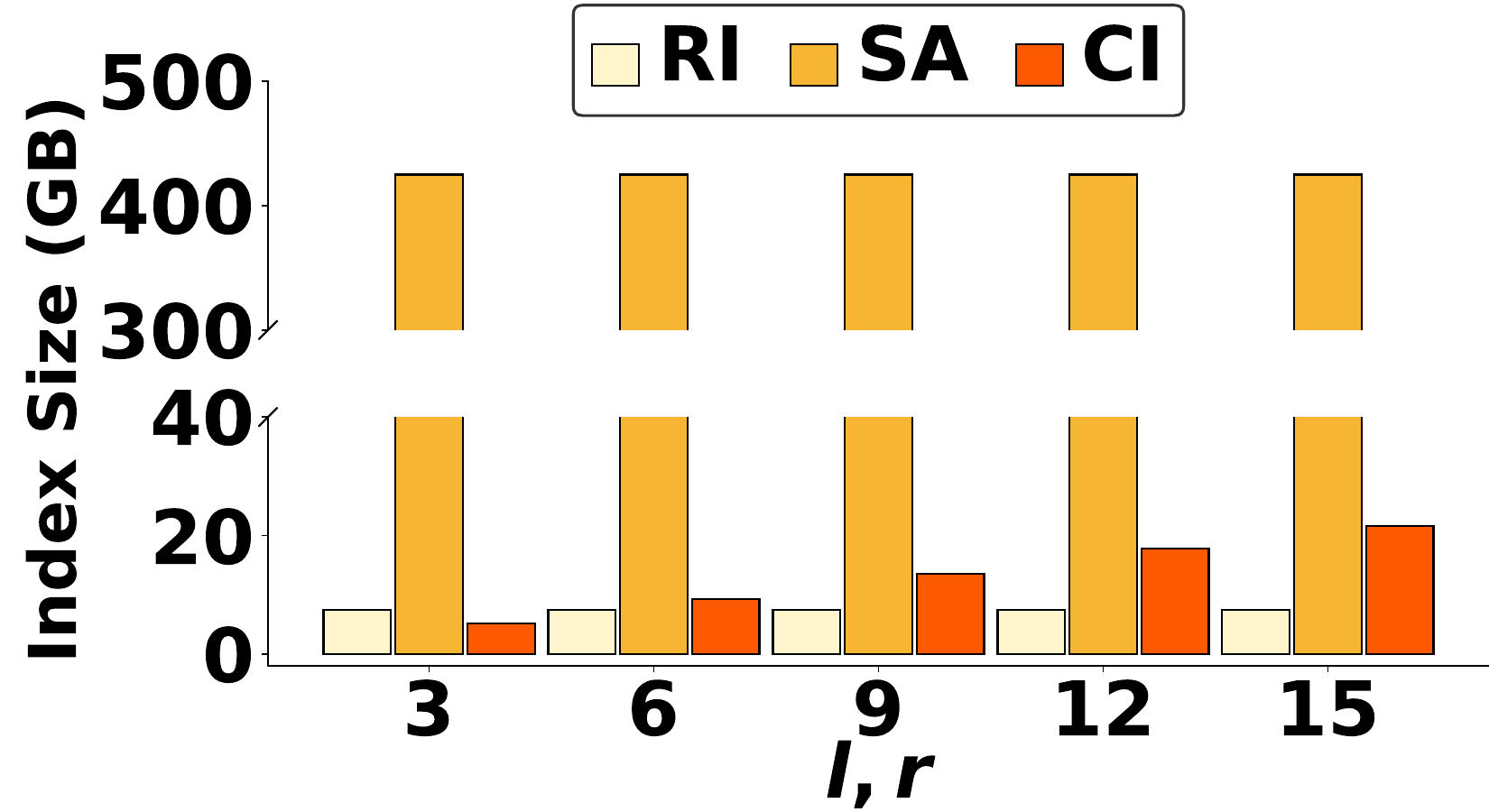}
        \caption{\sars}
    \end{subfigure}%
\hfill
    \begin{subfigure}{0.188\linewidth}
        \includegraphics[width=1.05\linewidth]{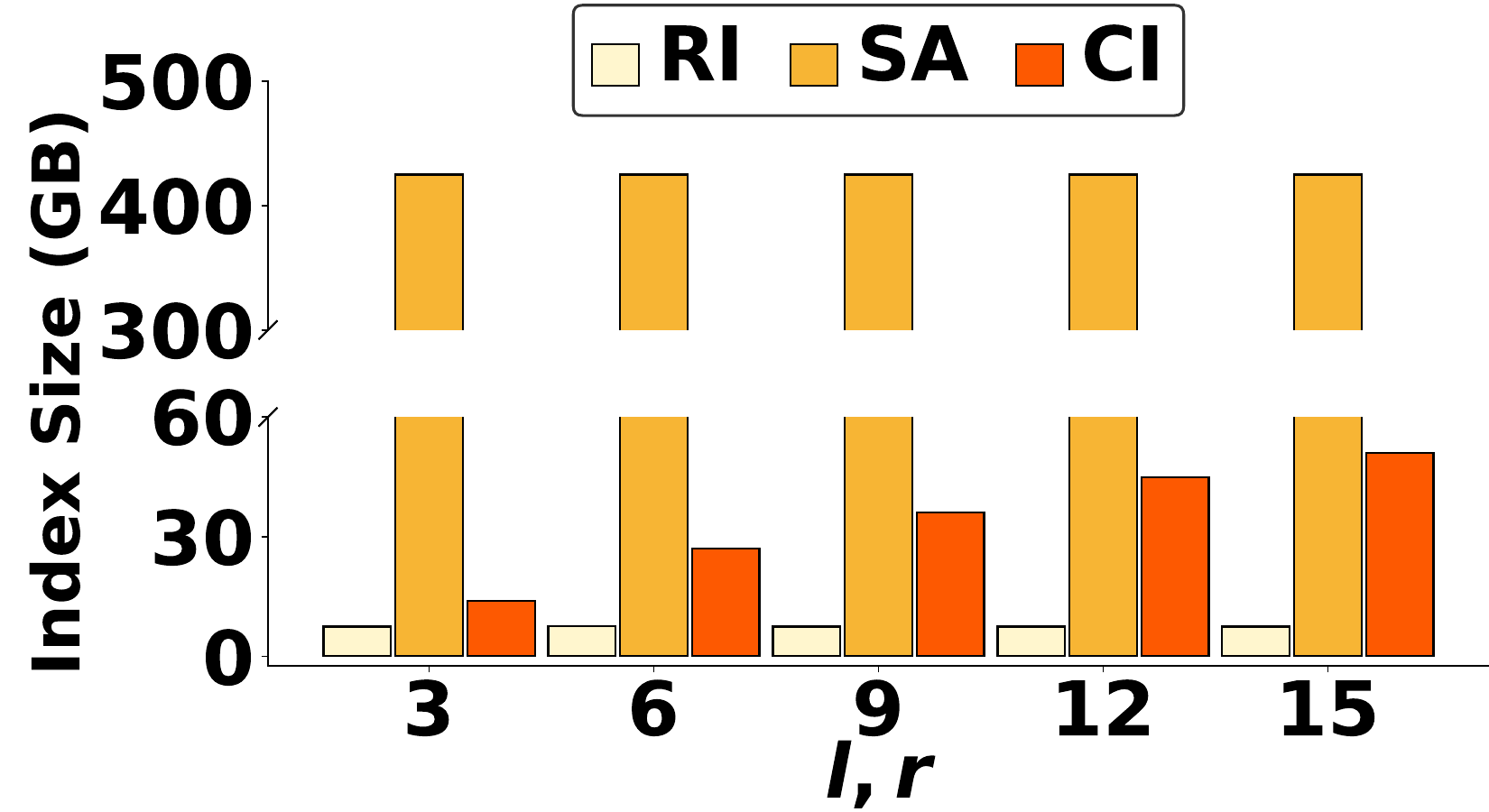}
        \caption{\chr}
        \label{fig:index_space_CHR_xy}
    \end{subfigure}%

    \begin{subfigure}{0.188\linewidth}
        \includegraphics[width=1.05\linewidth]{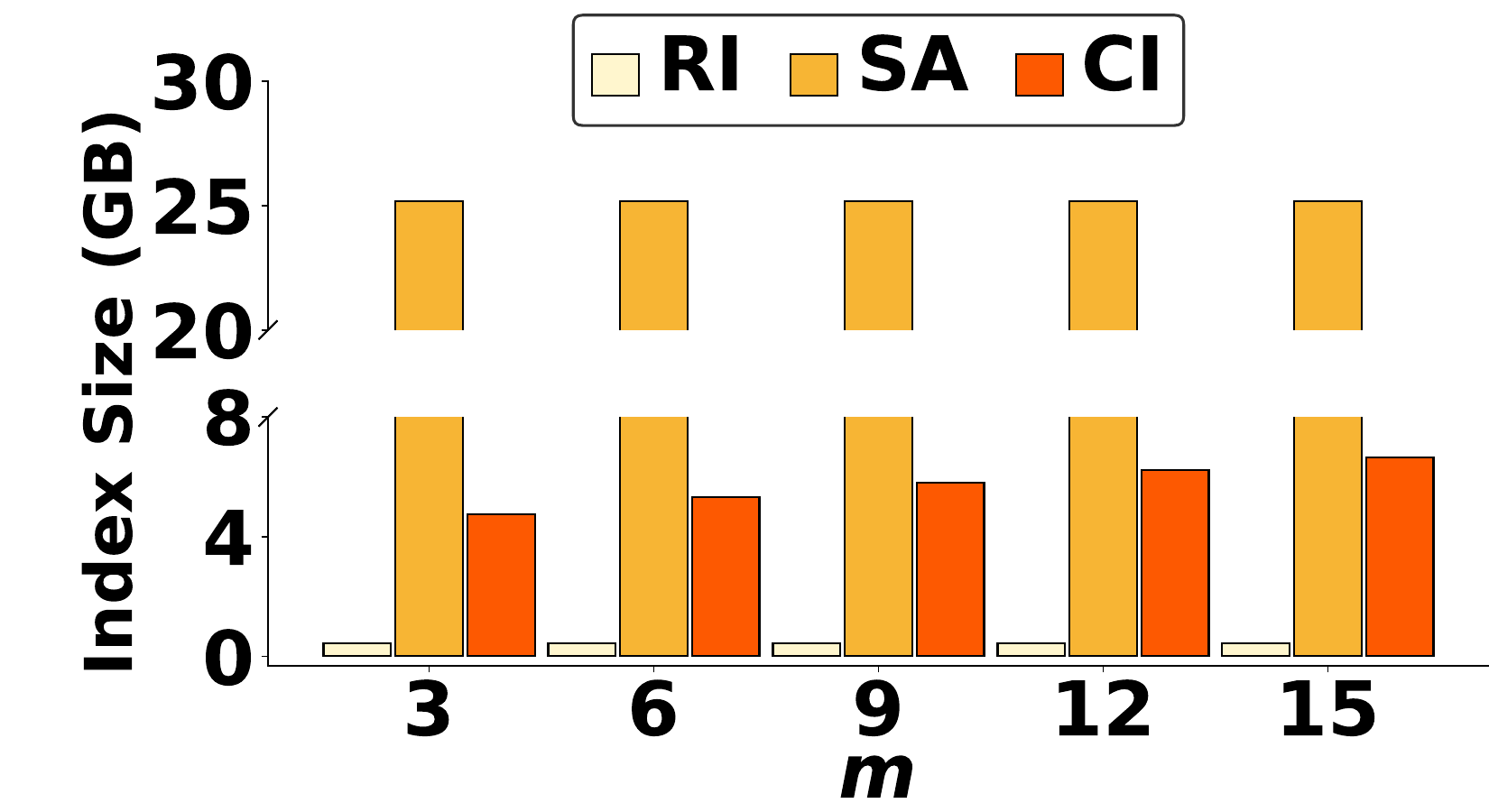}
        \caption{WIKI}
        \label{fig:index_space_WIKI_m}
    \end{subfigure}%
\hfill
    \begin{subfigure}{0.188\linewidth}
        \includegraphics[width=1.05\linewidth]{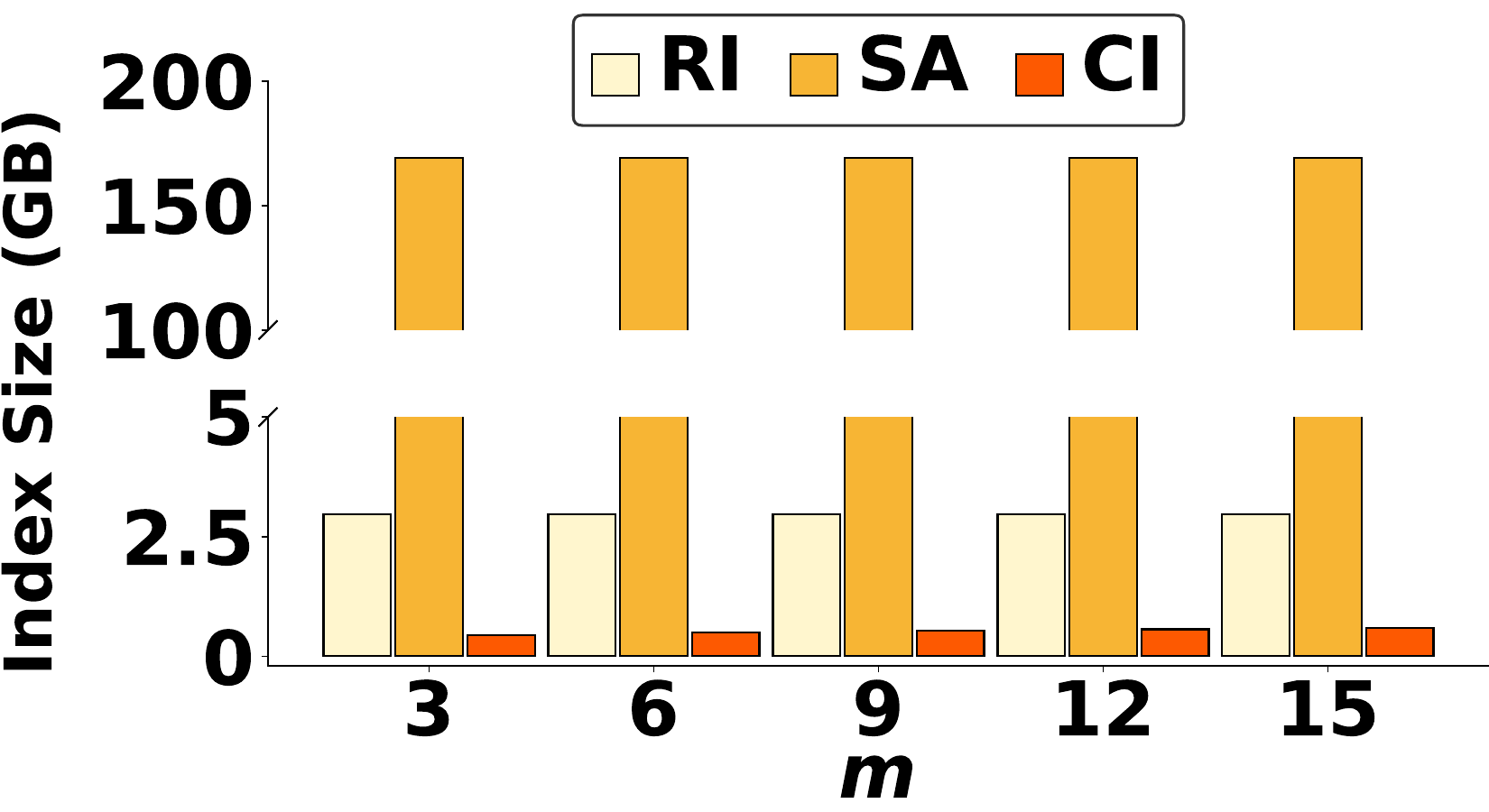}
        \caption{BST}
    \end{subfigure}%
    \hfill
    \begin{subfigure}{0.188\linewidth}
        \includegraphics[width=1.05\linewidth]{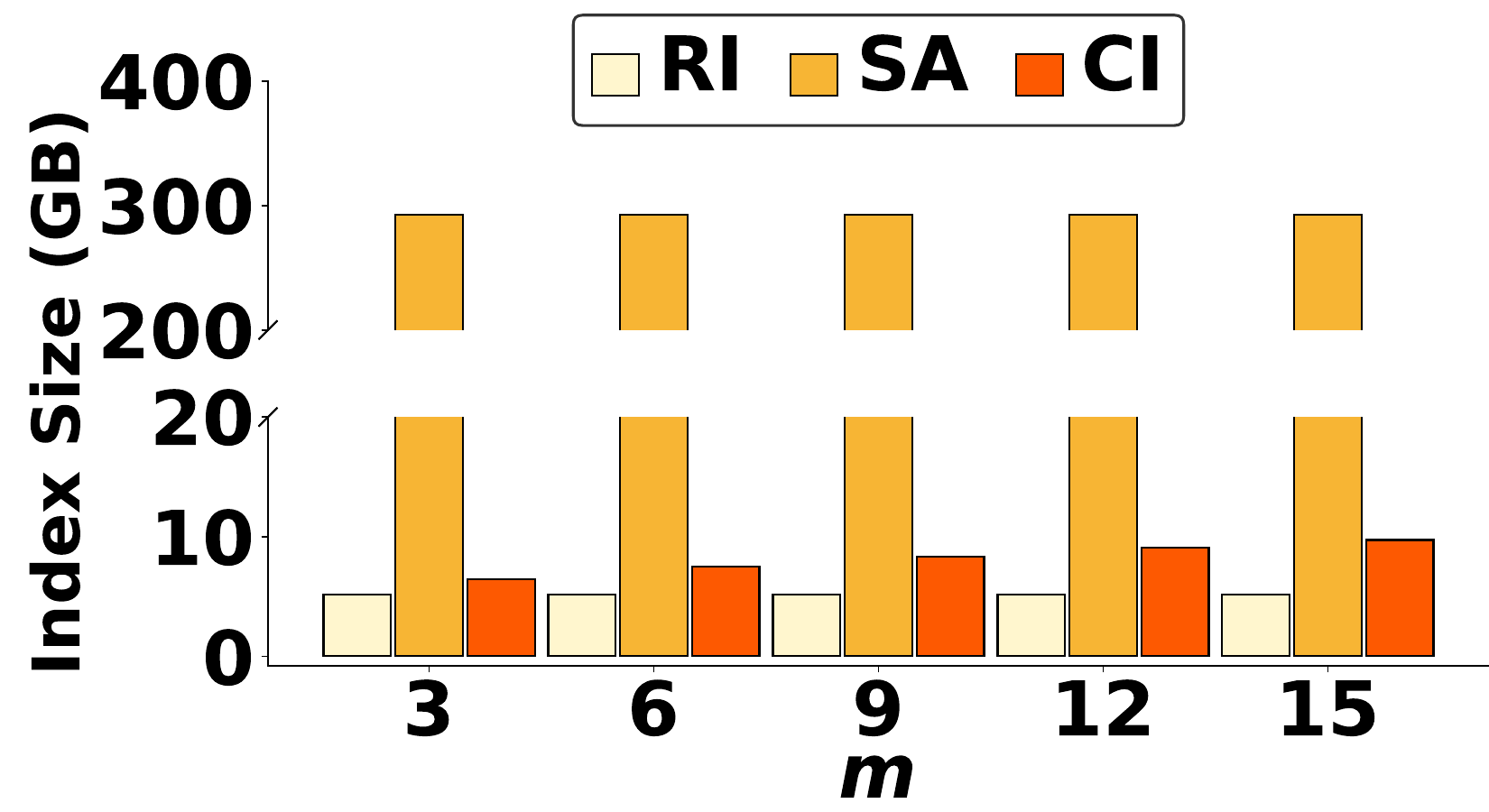}
        \caption{SDSL}
    \end{subfigure}%
\hfill
        \begin{subfigure}{0.188\linewidth}
        \includegraphics[width=1.05\linewidth]{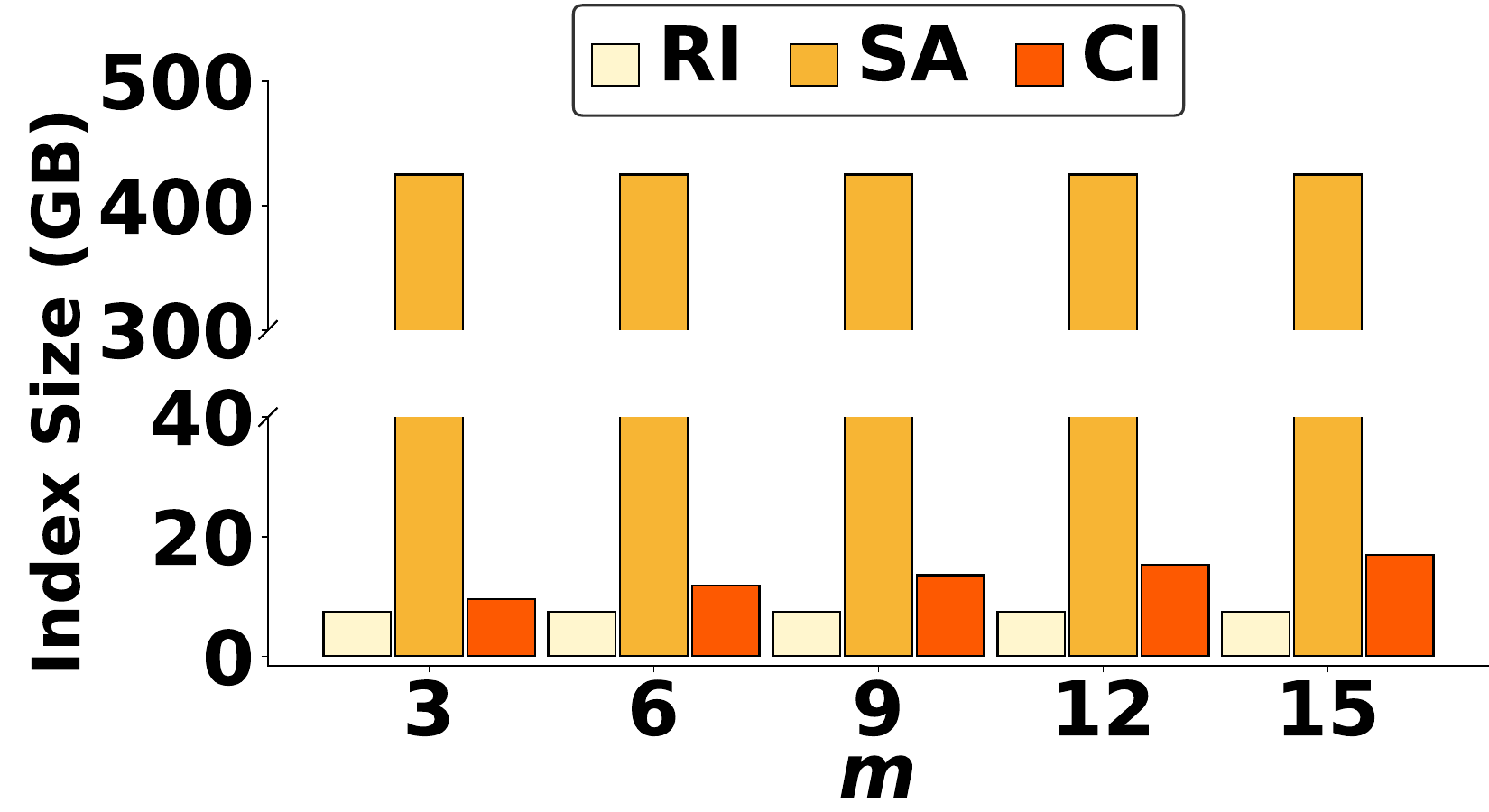}
        \caption{\sars}
    \end{subfigure}%
\hfill
    \begin{subfigure}{0.188\linewidth}
        \includegraphics[width=1.05\linewidth]{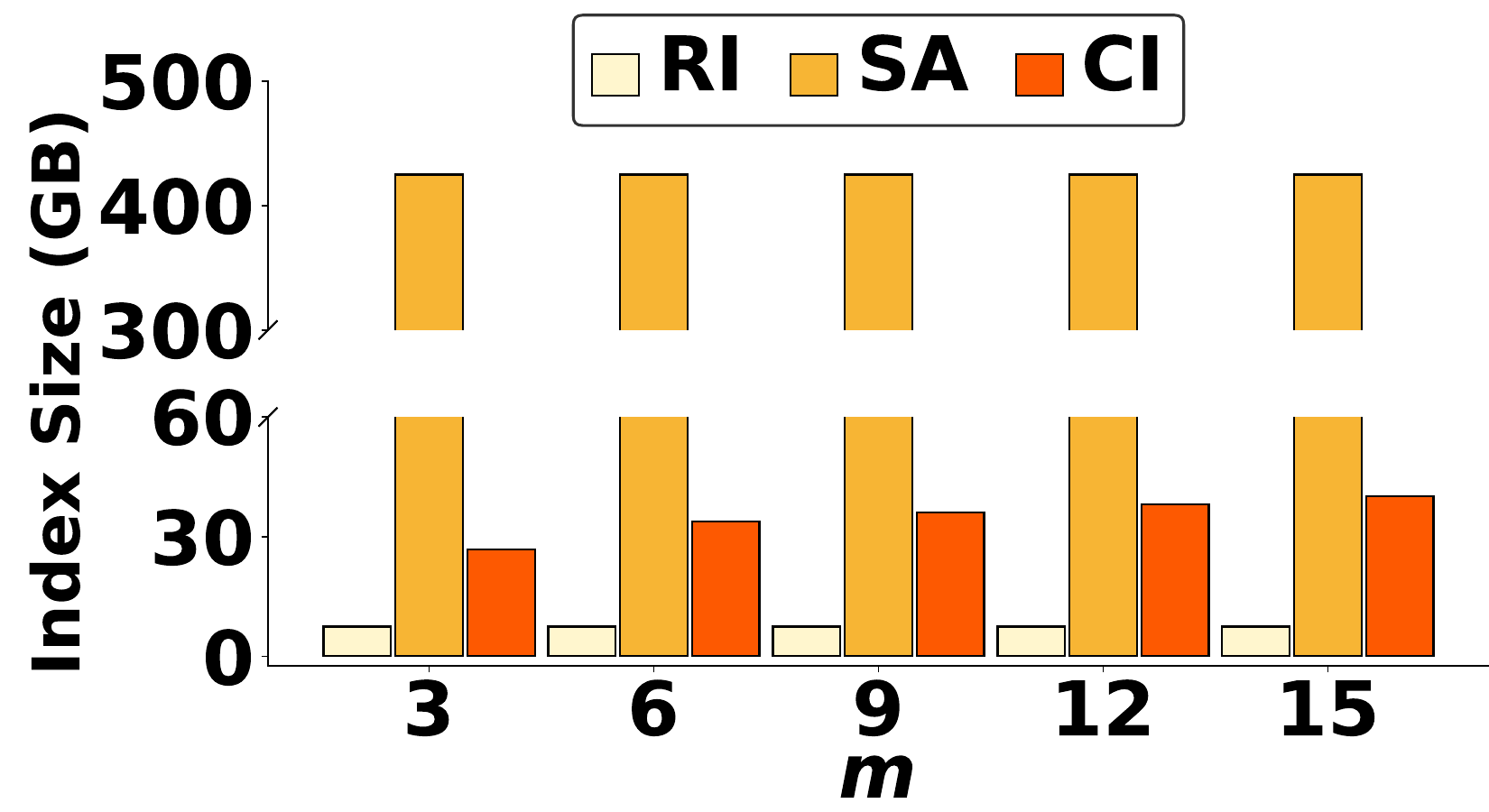}
        \caption{\chr}
        \label{fig:index_space_CHR_m}
    \end{subfigure}%

    \caption{Index size across all datasets: (a--e) vs.\ $n$, (f--j) vs.\ $l,r$, and (k--o) vs.\ $m$.}
    \label{fig:Index_space}
\end{figure*}

\subparagraph{Construction Space.}~Figs.~\ref{fig:construction_space_WIKI_n} to \ref{fig:construction_space_CHR_m} show the construction space for the experiments of Figs.~\ref{fig:querytime_wiki_n} to \ref{fig:querytime_chr_m}, respectively. The results for SA and \RI are analogous to those for index size, as  construction space is heavily determined by index size. In contrast, the construction space of \CI increases linearly with $n$ and is not affected substantially by $l$, $r$, or $m$. This is because  
the peak memory usage occurred during the LZ77 factorization phase, which takes $\cO{(n)}$ space~\cite{DBLP:journals/jacm/StorerS82}. 
The construction space of \CI is only $27\%$ of that of \CPRIS on average. It is on average $4.4$ times larger than that of \RI, which nevertheless has prohibitively expensive query time, as shown above. 

\begin{figure*}[t]
    \centering
    \begin{subfigure}{0.188\linewidth}
        \includegraphics[width=1.05\linewidth]{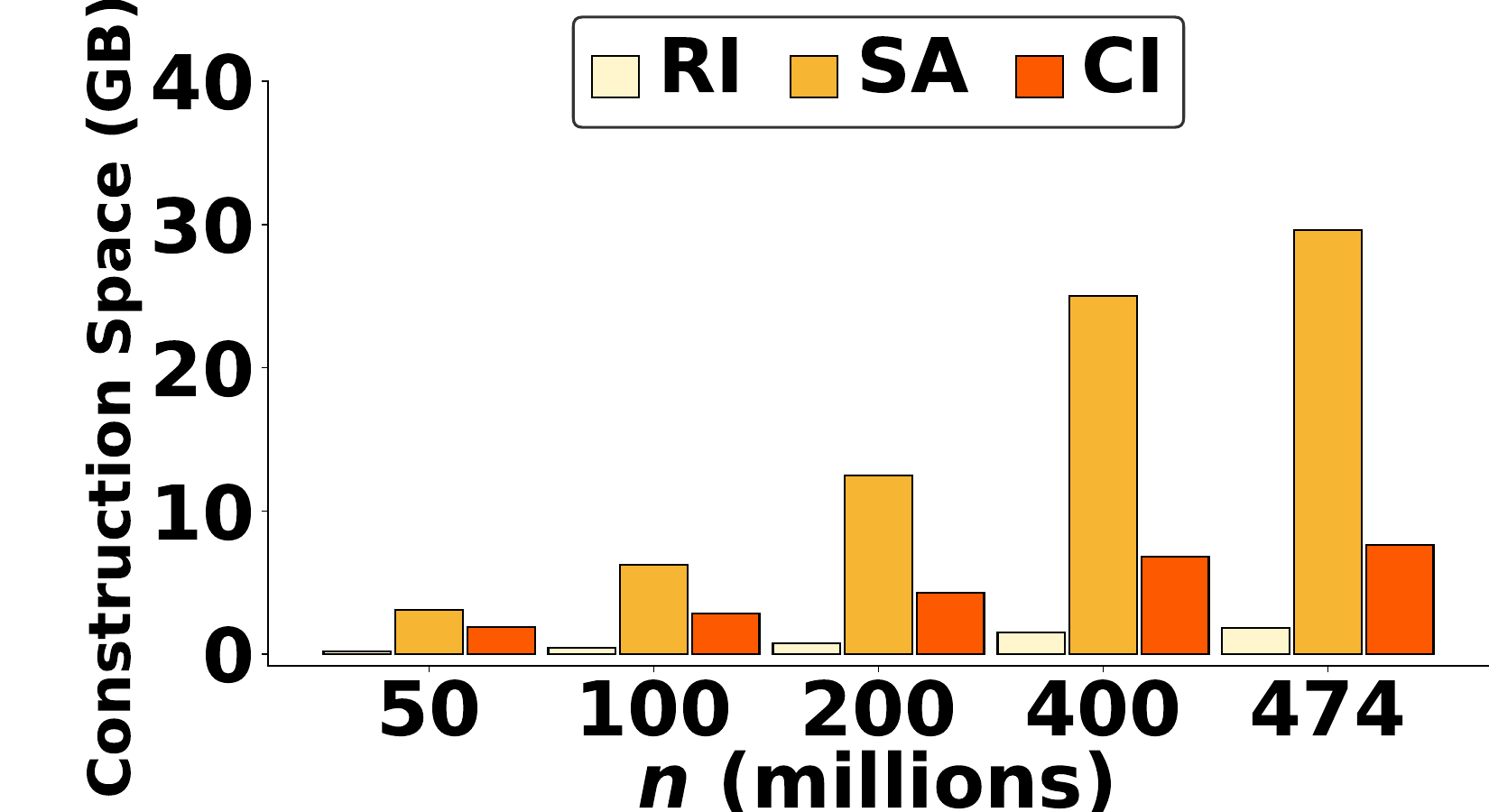}
        \caption{WIKI}
        \label{fig:construction_space_WIKI_n}
    \end{subfigure}%
 \hfill
    \begin{subfigure}{0.188\linewidth}
        \includegraphics[width=1.05\linewidth]{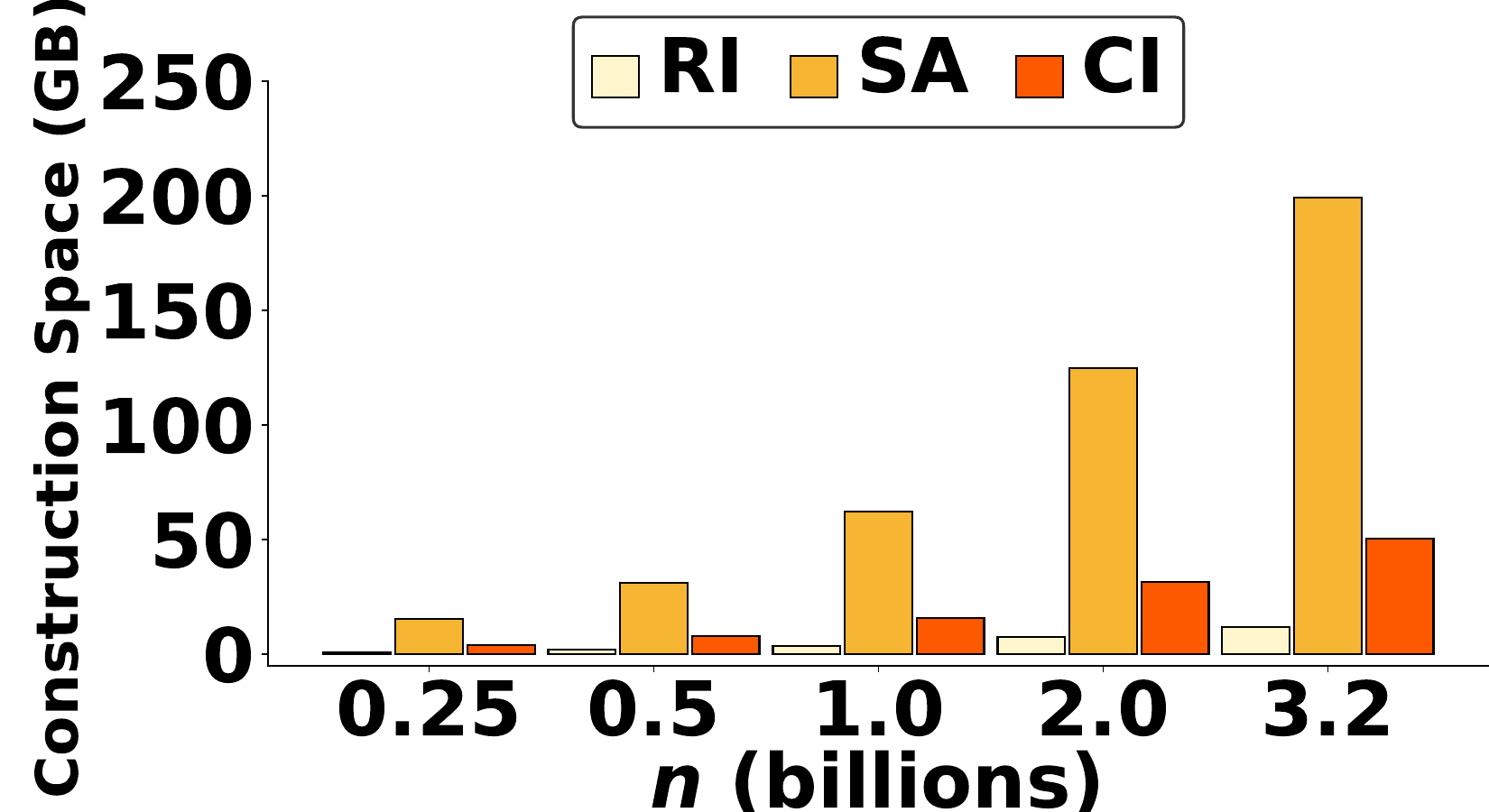}
        \caption{BST}
    \end{subfigure}%
\hfill
    \begin{subfigure}{0.188\linewidth}
        \includegraphics[width=1.05\linewidth]{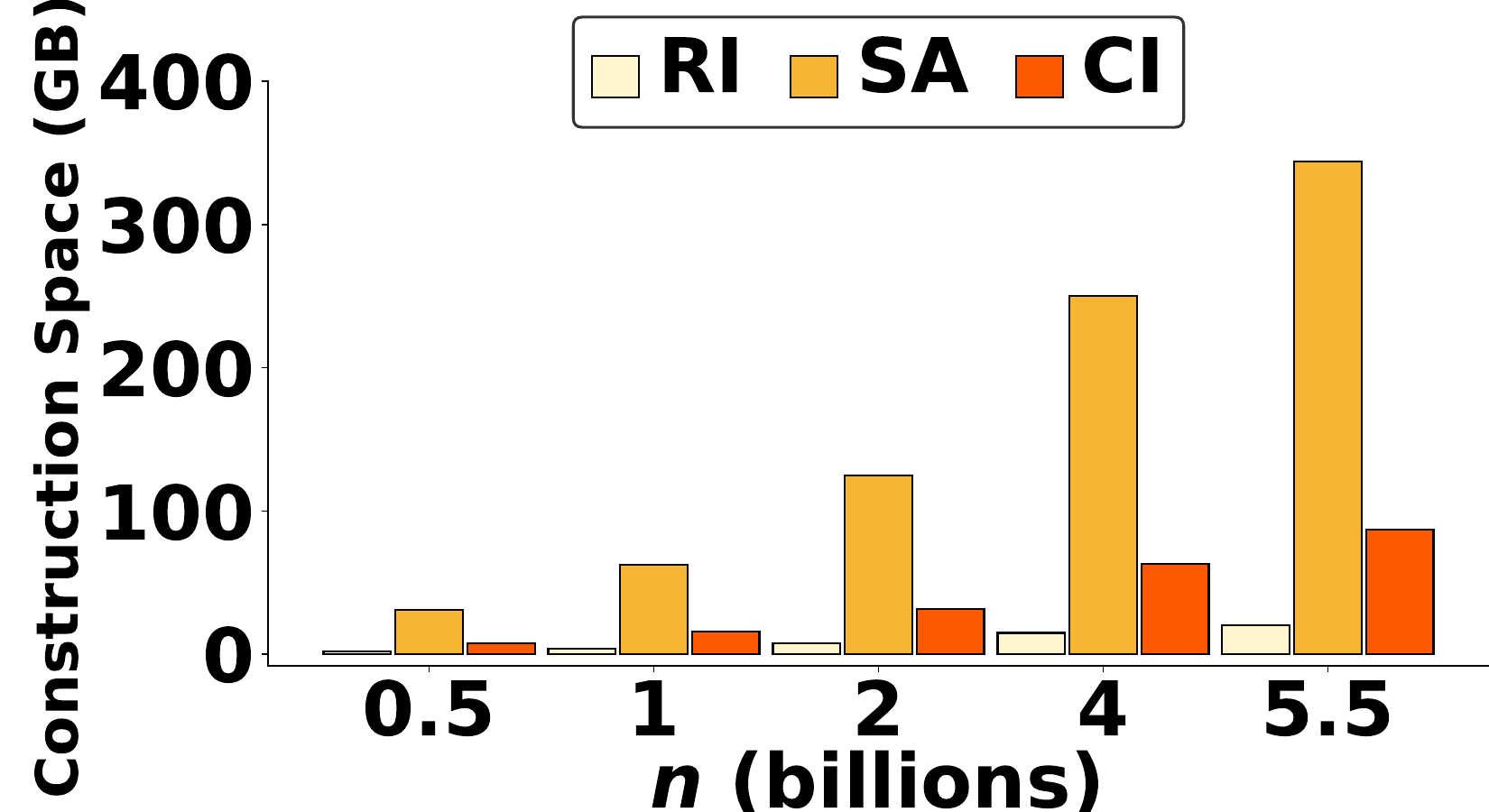}
        \caption{SDSL}
    \end{subfigure}%
\hfill
        \begin{subfigure}{0.188\linewidth}
        \includegraphics[width=1.05\linewidth]{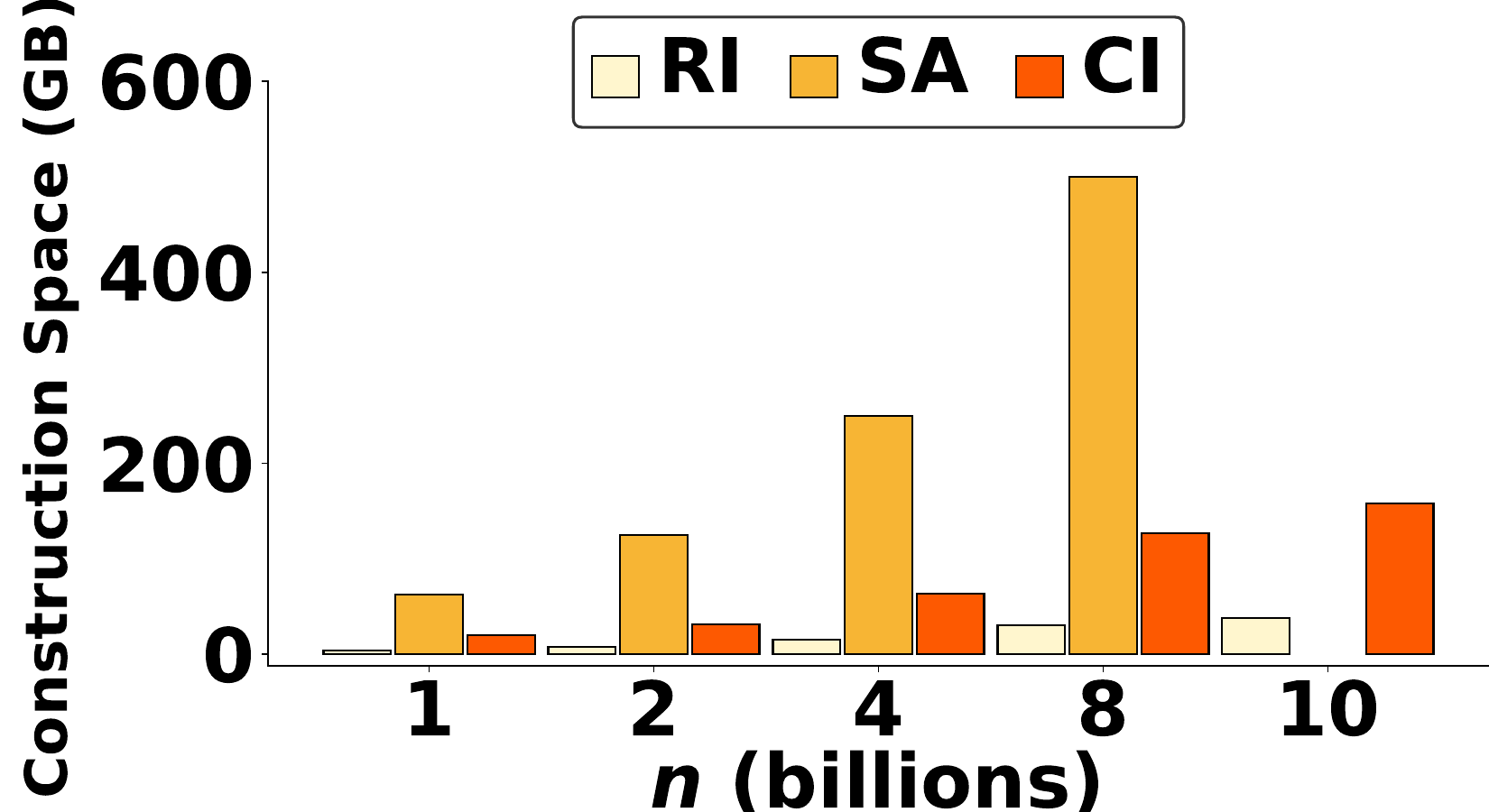}
        \caption{\sars}
    \end{subfigure}%
\hfill
    \begin{subfigure}{0.188\linewidth}
        \includegraphics[width=1.05\linewidth]{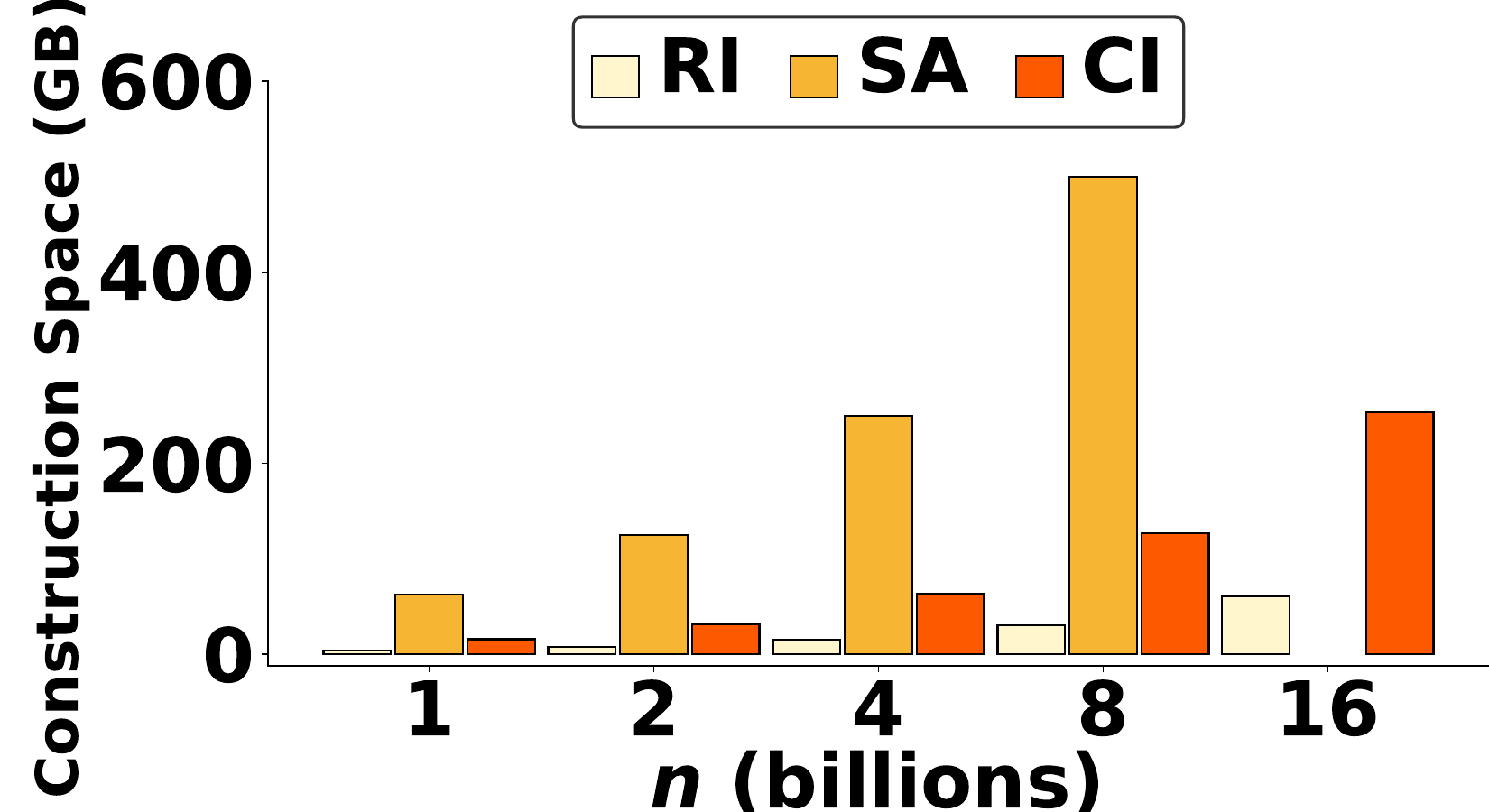}
        \caption{\chr}
        \label{fig:construction_space_CHR_n}
    \end{subfigure}%

    \begin{subfigure}{0.188\linewidth}
        \includegraphics[width=1.05\linewidth]{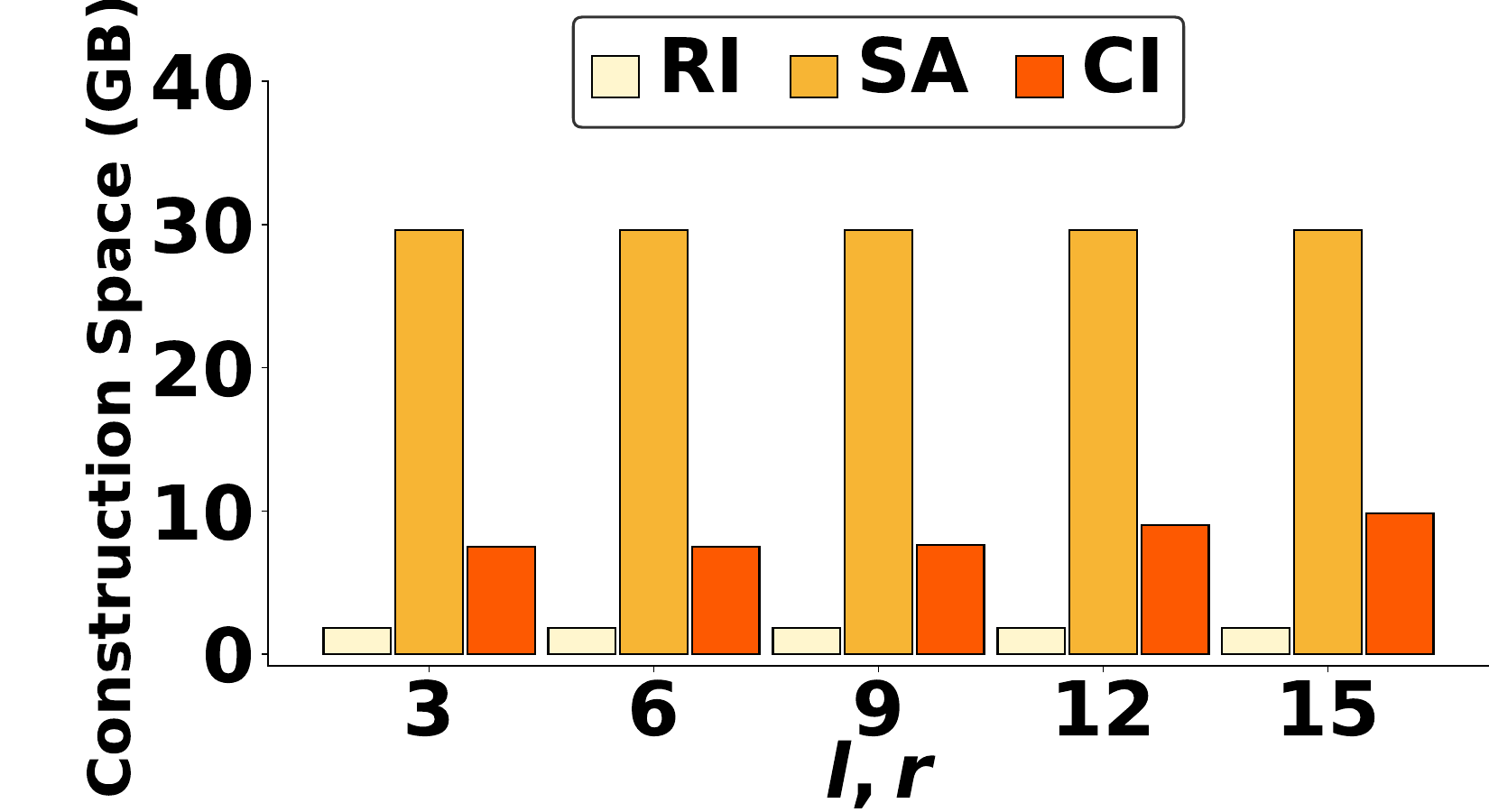}
        \caption{WIKI}
        \label{fig:construction_space_WIKI_xy}
    \end{subfigure}%
\hfill
    \begin{subfigure}{0.188\linewidth}
        \includegraphics[width=1.05\linewidth]{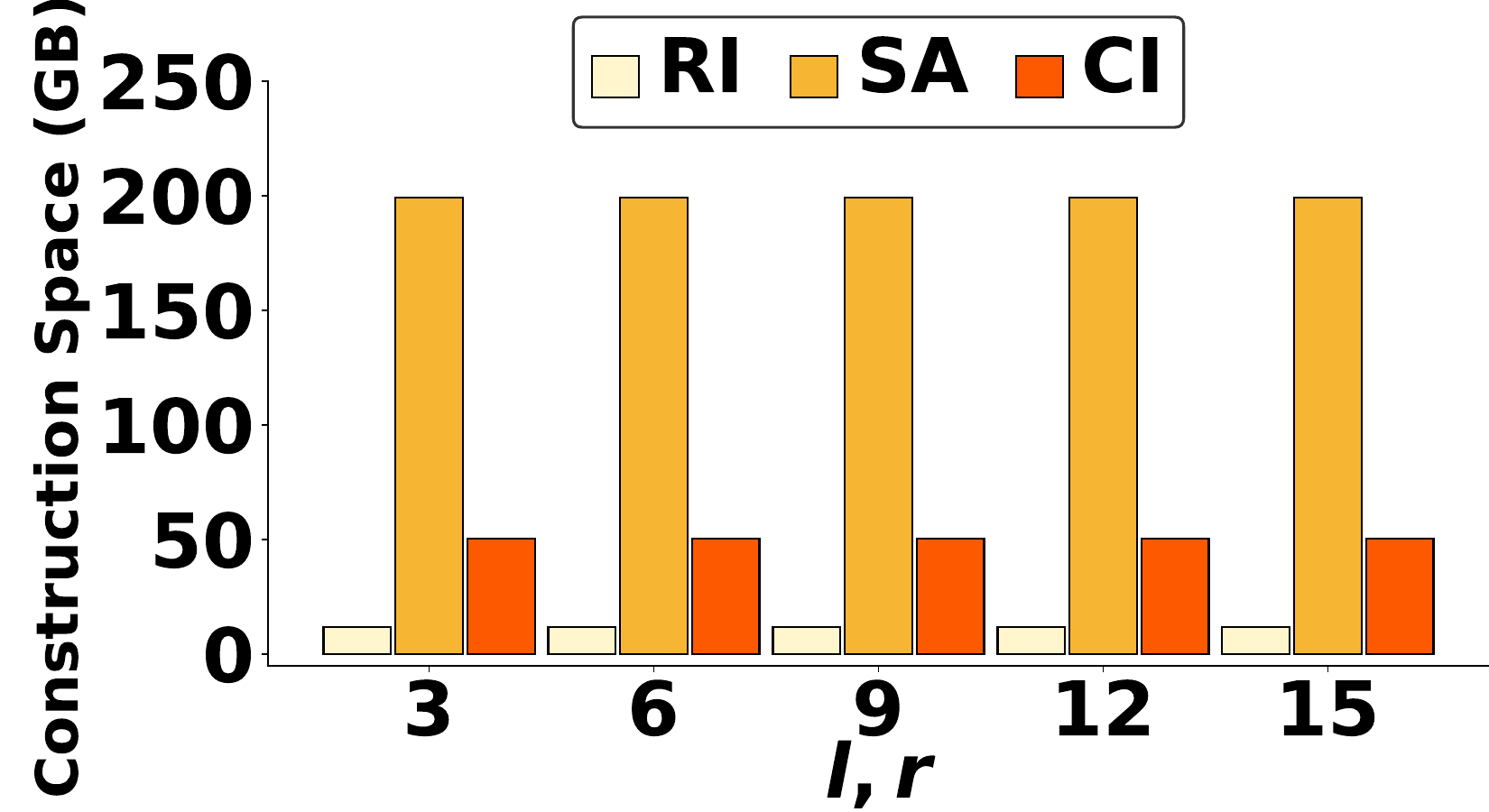}
        \caption{BST}
    \end{subfigure}%
\hfill
    \begin{subfigure}{0.188\linewidth}
        \includegraphics[width=1.05\linewidth]{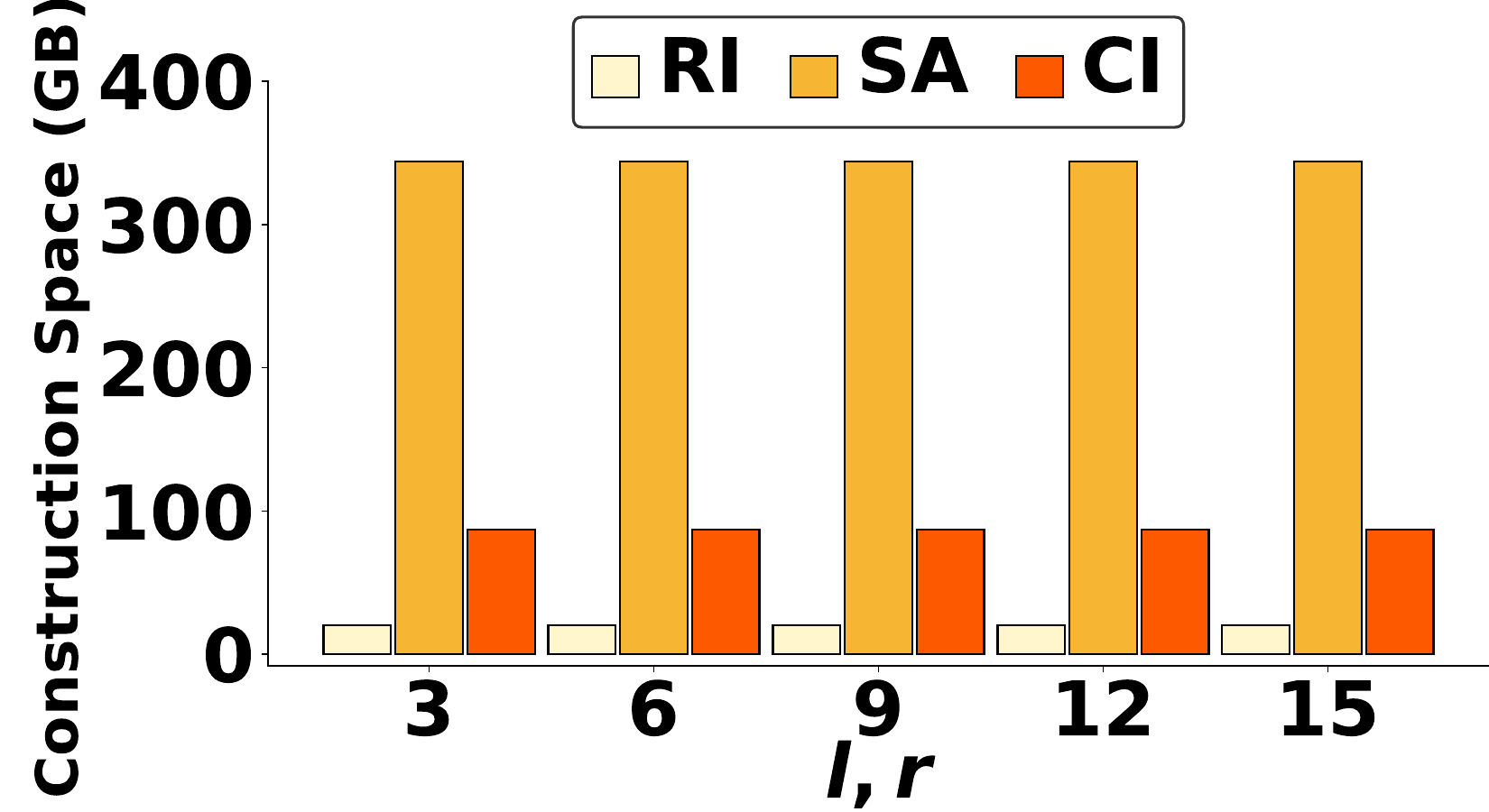}
        \caption{SDSL}
    \end{subfigure}%
\hfill
    \begin{subfigure}{0.188\linewidth}
        \includegraphics[width=1.05\linewidth]{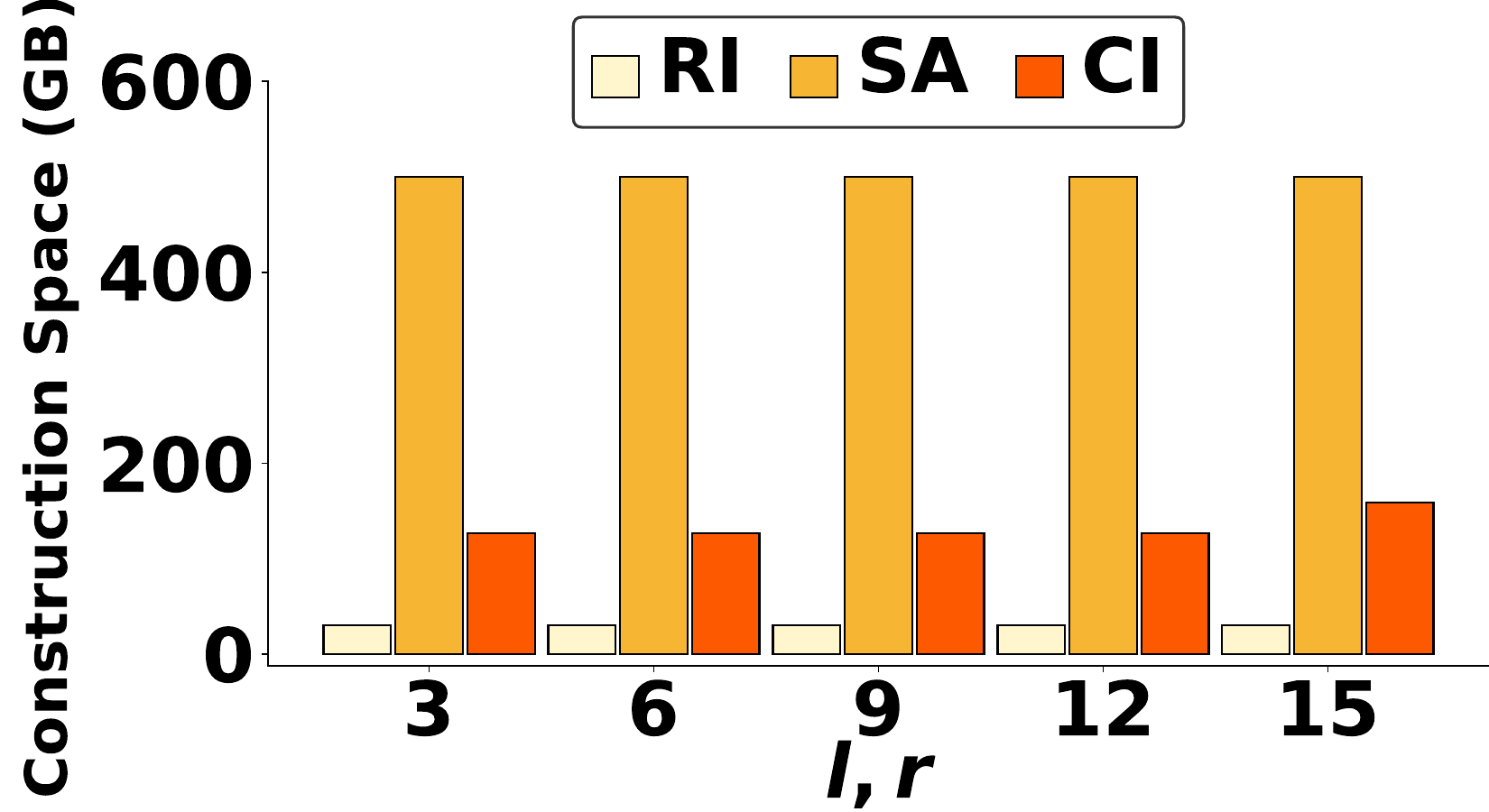}
        \caption{\sars}
    \end{subfigure}%
\hfill
    \begin{subfigure}{0.188\linewidth}
        \includegraphics[width=1.05\linewidth]{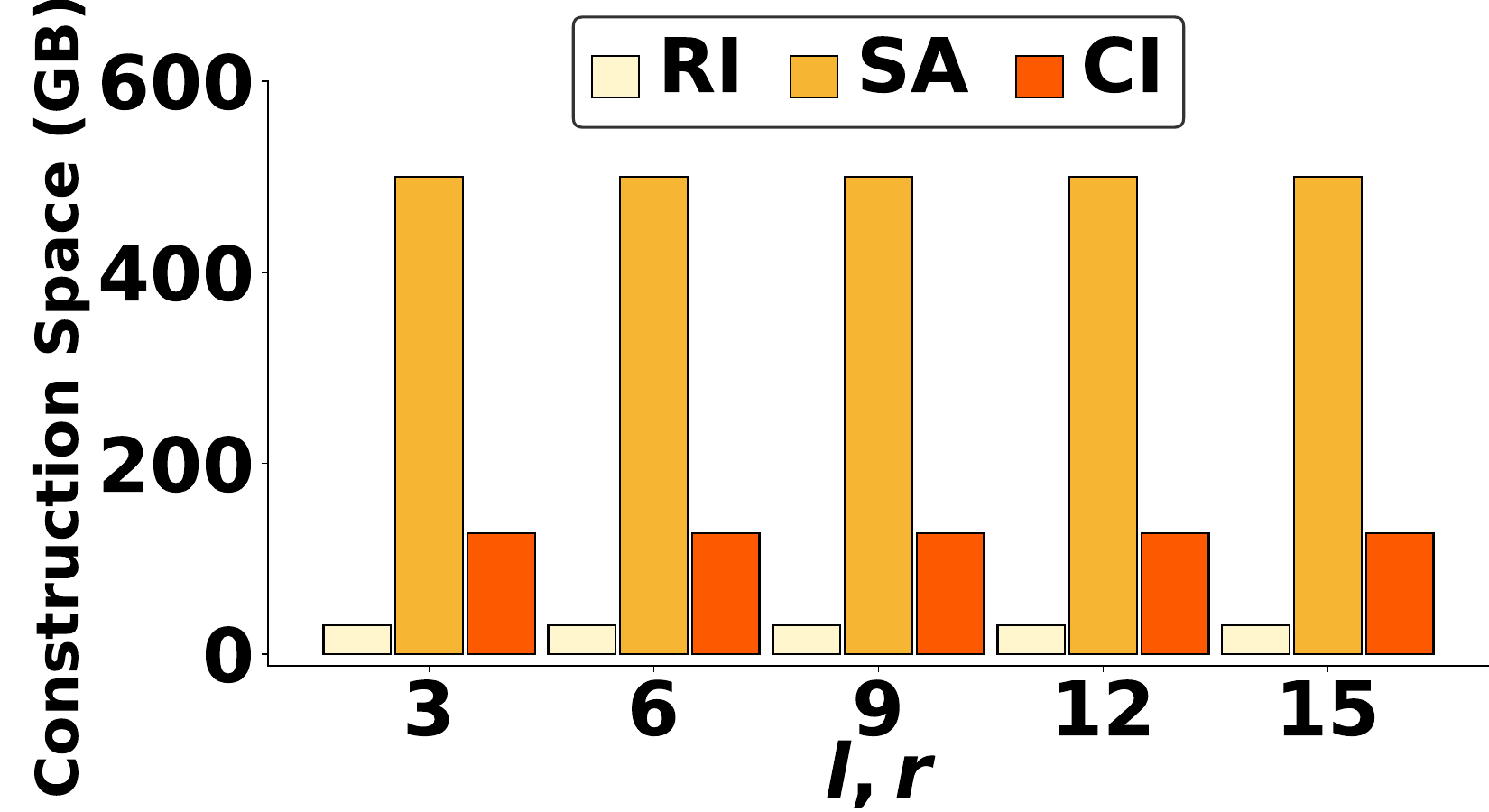}
        \caption{\chr}
        \label{fig:construction_space_CHR_xy}
    \end{subfigure}%

    \begin{subfigure}{0.188\linewidth}
        \includegraphics[width=1.05\linewidth]{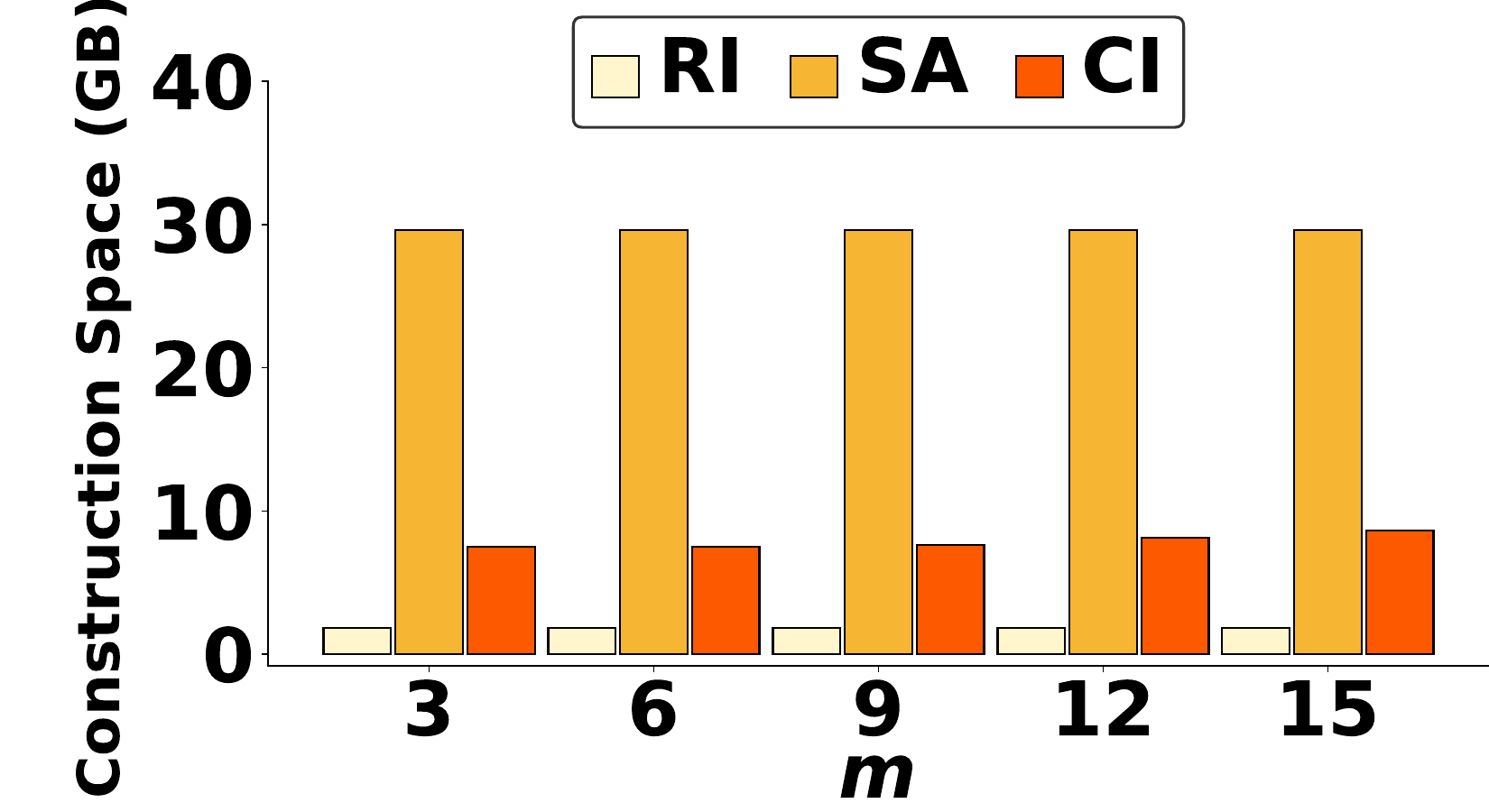}
        \caption{WIKI}
        \label{fig:construction_space_WIKI_m}
    \end{subfigure}%
\hfill
    \begin{subfigure}{0.188\linewidth}
        \includegraphics[width=1.05\linewidth]{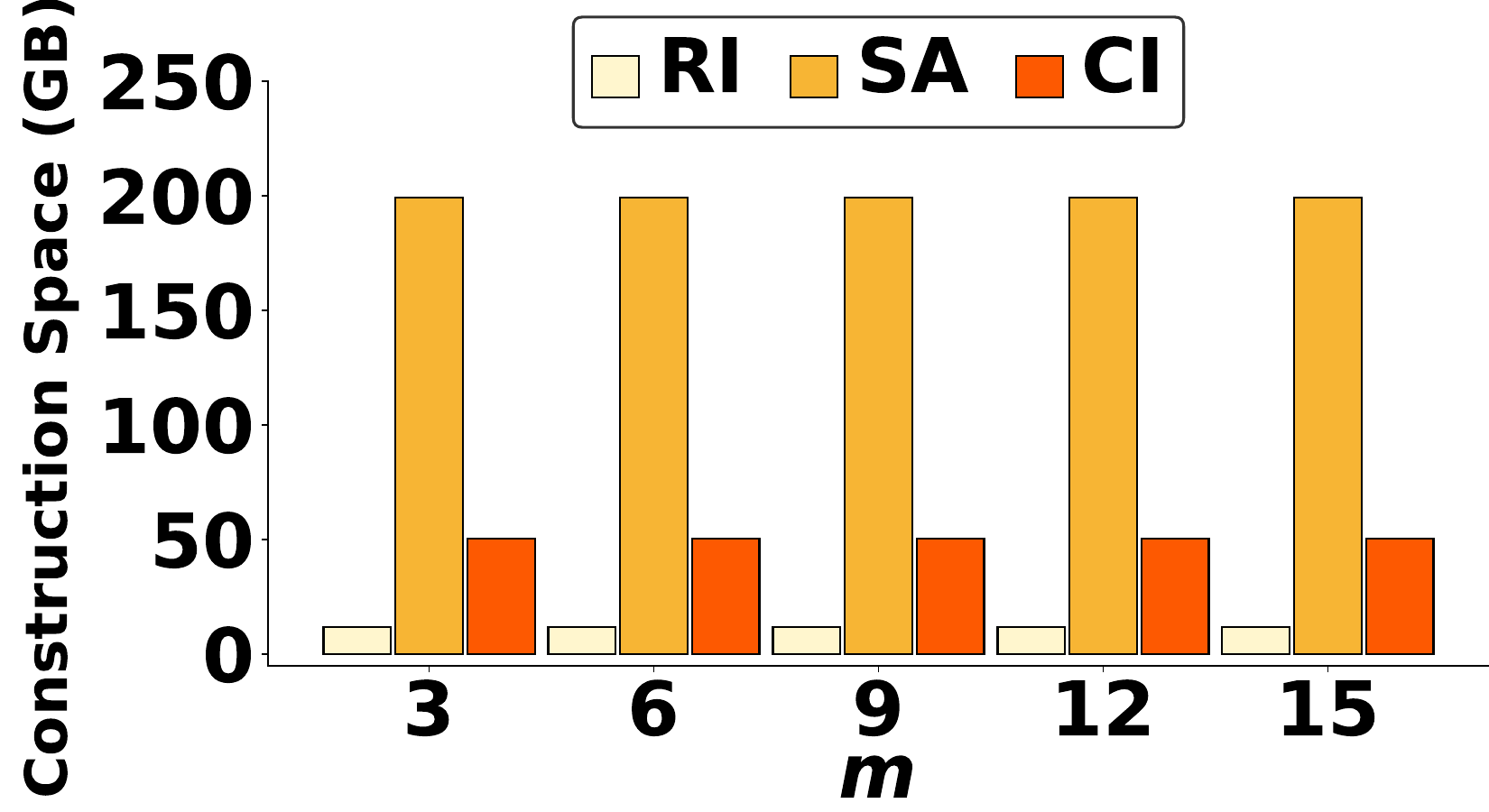}
        \caption{BST}
    \end{subfigure}%
\hfill
    \begin{subfigure}{0.188\linewidth}
        \includegraphics[width=1.05\linewidth]{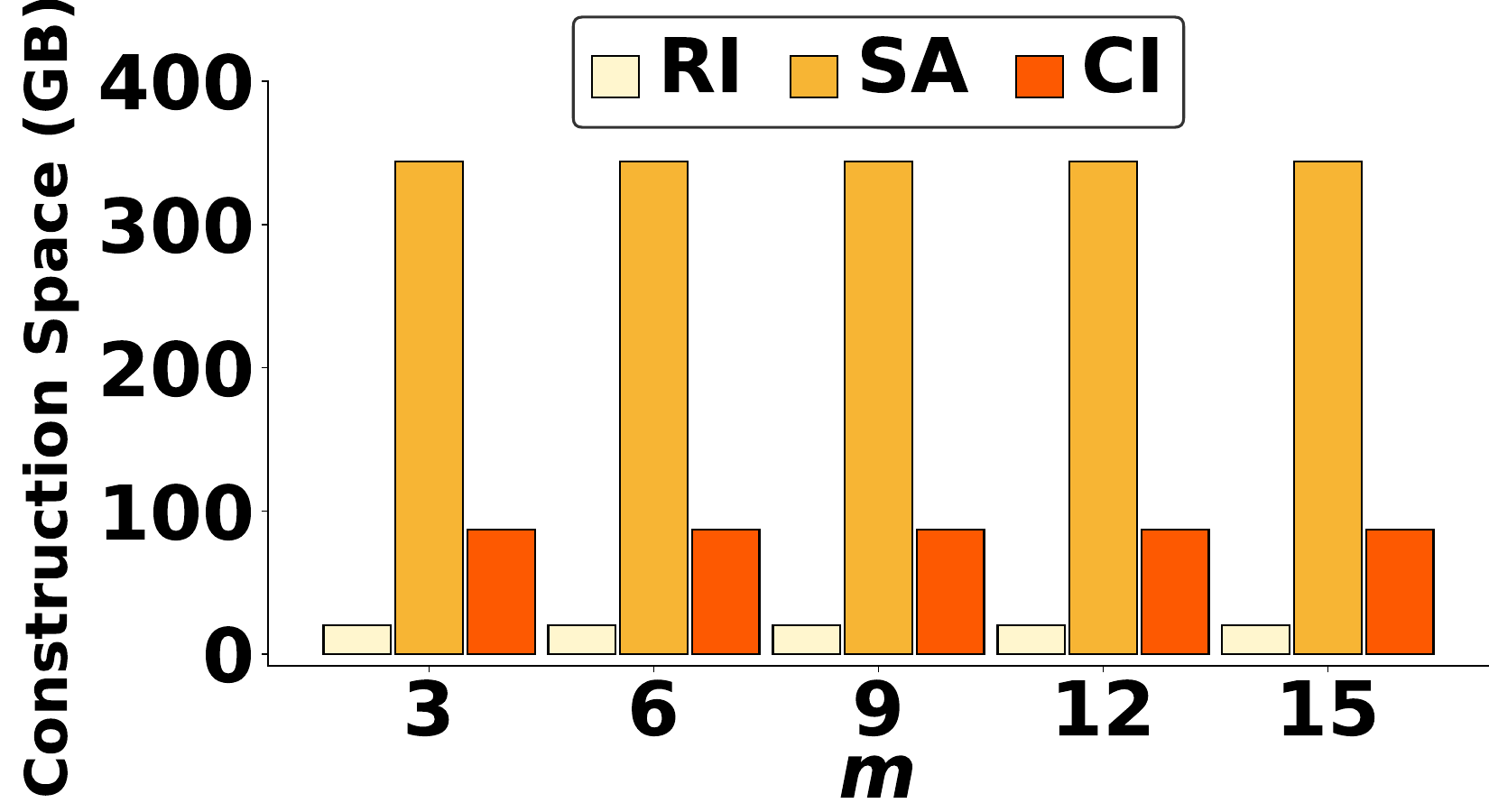}
        \caption{SDSL}
    \end{subfigure}%
\hfill
    \begin{subfigure}{0.188\linewidth}
        \includegraphics[width=1.05\linewidth]{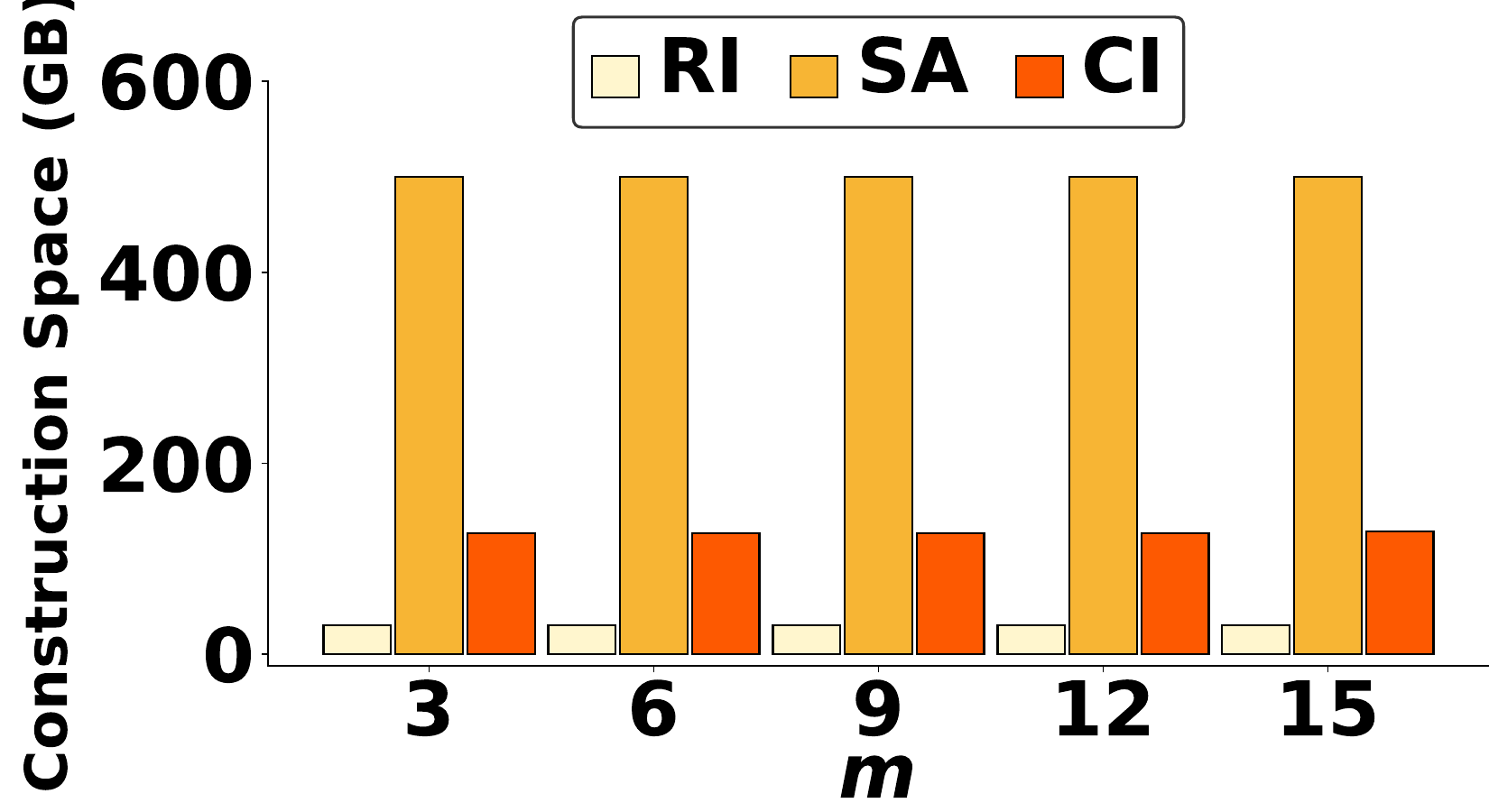}
        \caption{\sars}
    \end{subfigure}%
\hfill
        \begin{subfigure}{0.188\linewidth}
        \includegraphics[width=1.05\linewidth]{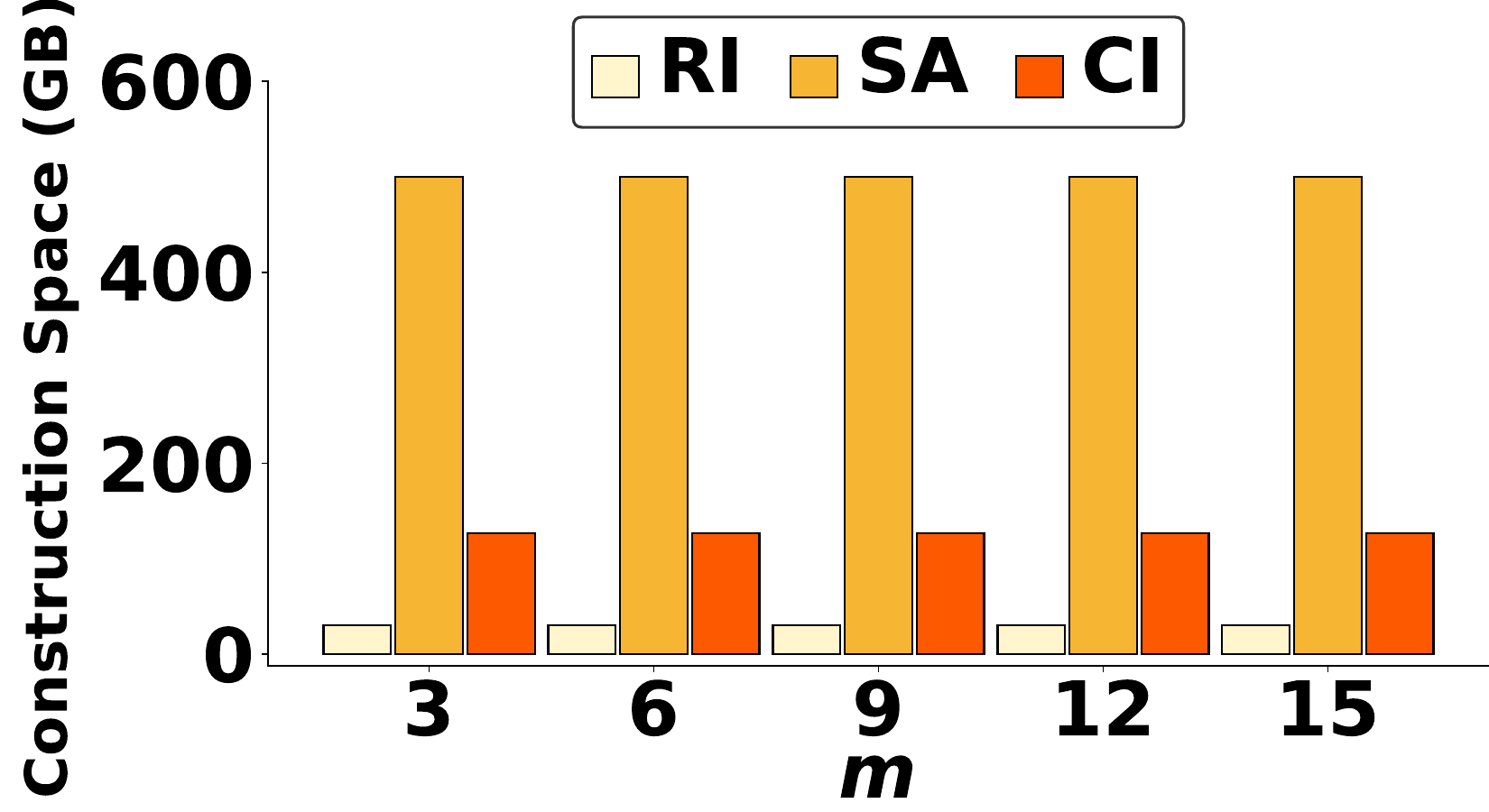}
        \caption{\chr}
    \label{fig:construction_space_CHR_m}
    \end{subfigure}%
    \caption{Construction space across all datasets: (a--e) vs.\ $n$, (f--j) vs.\ $l,r$, and (k--o) vs.\ $m$.}
    \label{fig:Constructionspace}
\end{figure*}

\subparagraph{Ablation Study.}~To show the benefit of the optimizations to our \CI index (see Section~\ref{sec:cpc:optimizations}), we compared it to \CI-, its  version with no optimizations. Note that all the optimizations have to be performed together. 
For completeness, we also include the results for \CPRIS and \RI. 

Fig.~\ref{fig:boost_ablation_n} shows the four measures of efficiency for all indexes for varying $n$. The performance of \RI, \CPRIS, and \CI in all measures was as in Figs.~\ref{fig:querytime_wiki_n} to \ref{fig:querytime_chr_n}, \ref{fig:construction_time_WIKI_n} to \ref{fig:construction_time_chr_n},  \ref{fig:index_space_WIKI_n} to \ref{fig:index_space_CHR_n},  and \ref{fig:construction_space_WIKI_n} to \ref{fig:construction_space_CHR_n}. \CI- was by far the least scalable index; it could only be applied to the prefix of BST with length $n=10^7$, as it needed more than 512GB of memory for prefixes with $n\geq 2\cdot 10^6=7$. Its query time was competitive with that of \RI and \CPRIS, as the number of pattern occurrences is relative small for the prefix used. However, the other measures were worse, which is in line with the complexity analysis (see Theorem~\ref{thm:CPC_index}). 

Fig.~\ref{fig:boost_ablation_z} shows the four measures of efficiency for all indexes for 
varying $z$. We varied $z$ by substituting letters selected uniformly at random from a prefix of BST with $n=10^7$. We also set $B=27$. 
The parameter $z$ appears only in the complexities of \CI (see Theorem~\ref{thm:BCPC_index}). Thus, only the performance of \CI was affected by $z$. Its query time  increased sublinearly with $z$ (see Fig.~\ref{all_boost_querytime_z}), as predicted by the query time complexity, while its construction time and index size increased superlinearly with $z$ (see Figs.~\ref{all_boost_construction_time_z} and ~\ref{all_boost_index_space_z}), as predicted by the corresponding  complexities. As shown in Fig.~\ref{all_boost_construction_space_z}, the construction space of \CI was not affected by $z$ for the same reason as in the experiment of  Fig.~\ref{fig:Constructionspace}. 

Overall, the optimizations in 
\CI are essential to achieving an excellent performance in terms of all measures.

\begin{figure}[t]
\centering
    \begin{subfigure}{0.24\linewidth}
        \includegraphics[width=1\linewidth]{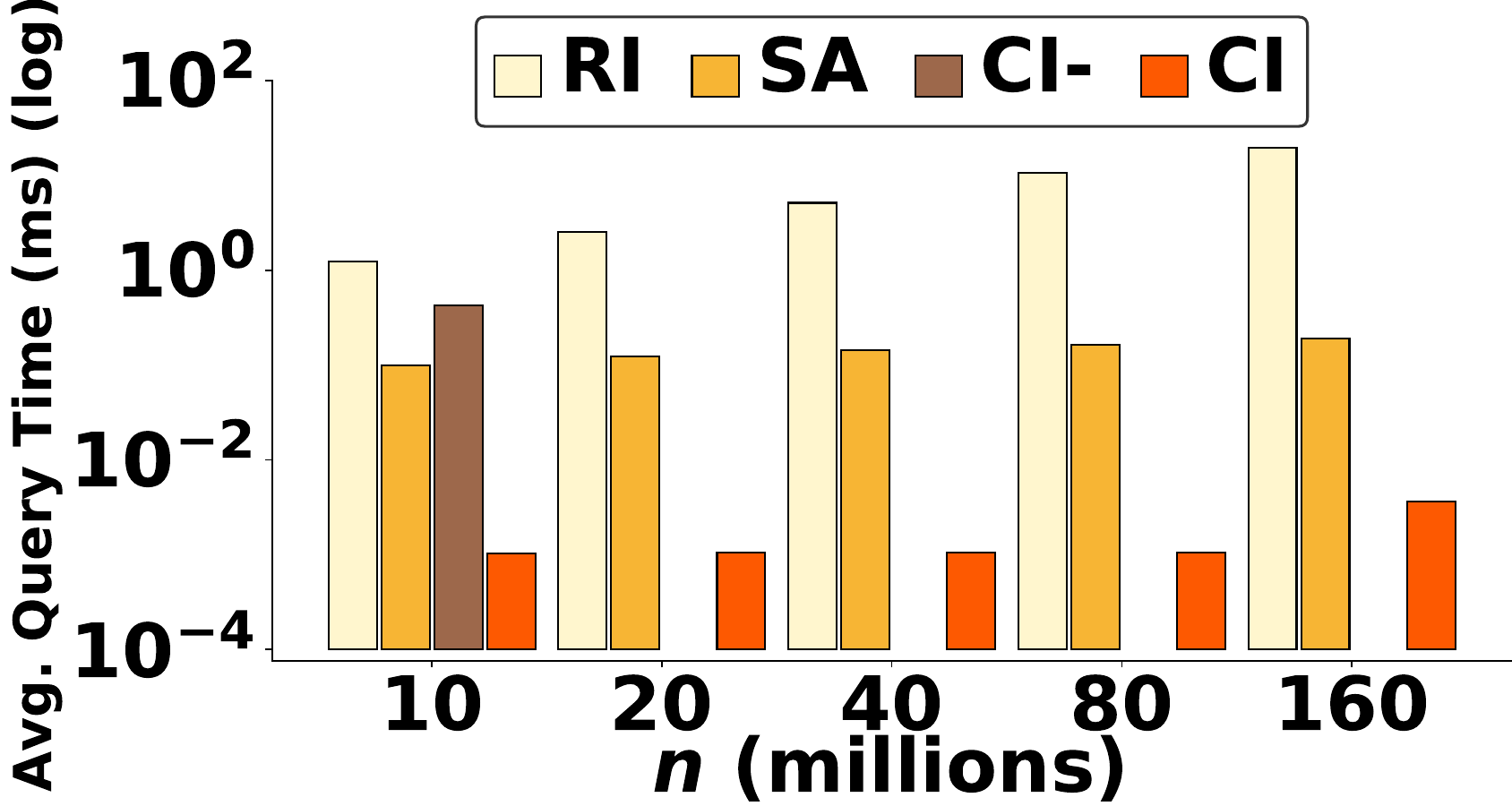}
        \caption{BST}
        \label{fig:all_boost_querytime_n} 
    \end{subfigure}%
    \begin{subfigure}{0.24\linewidth}
        \includegraphics[width=1\linewidth]{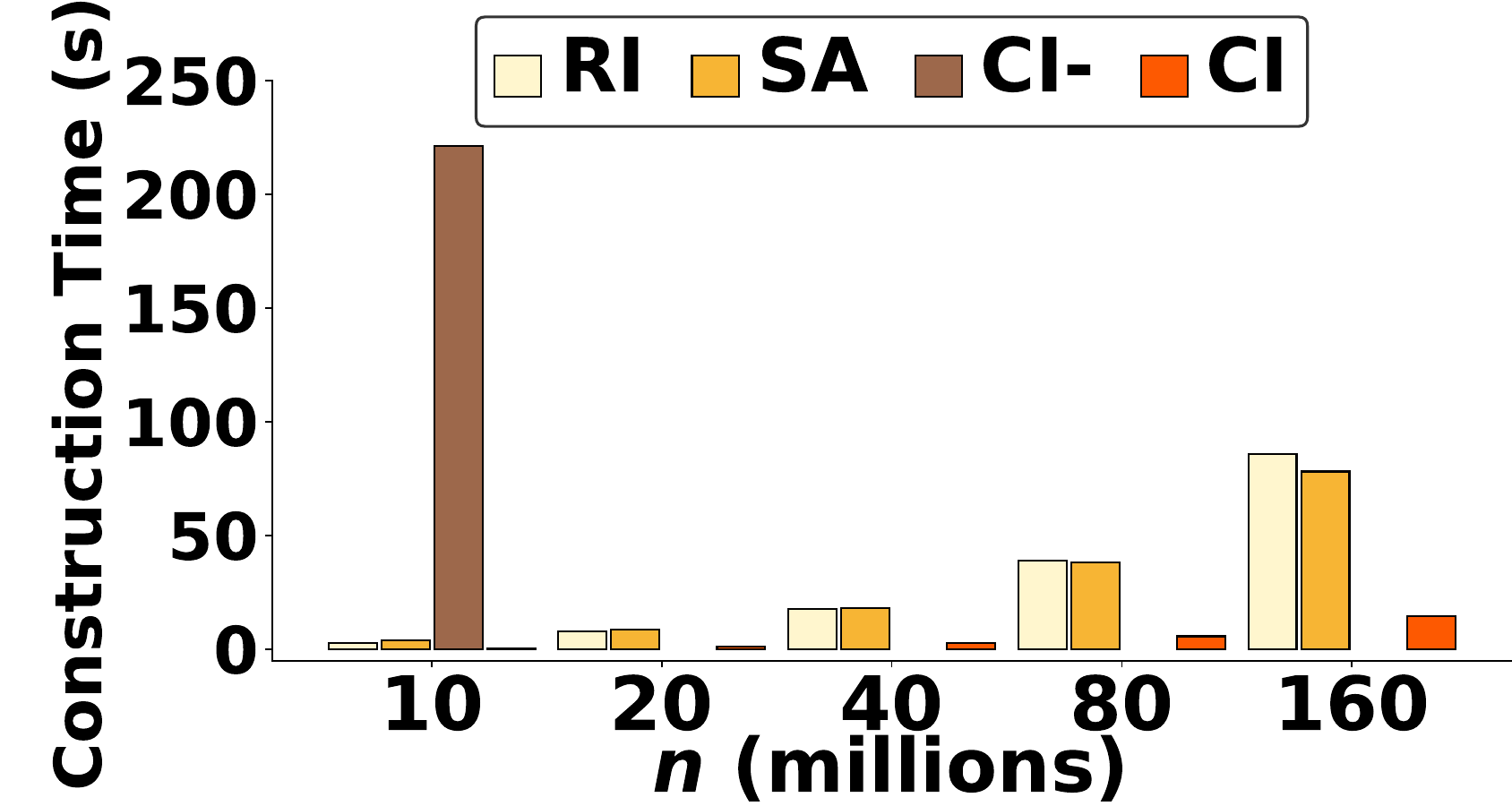}
        \caption{BST}
    \end{subfigure}
    \begin{subfigure}{0.24\linewidth}
        \includegraphics[width=1\linewidth]{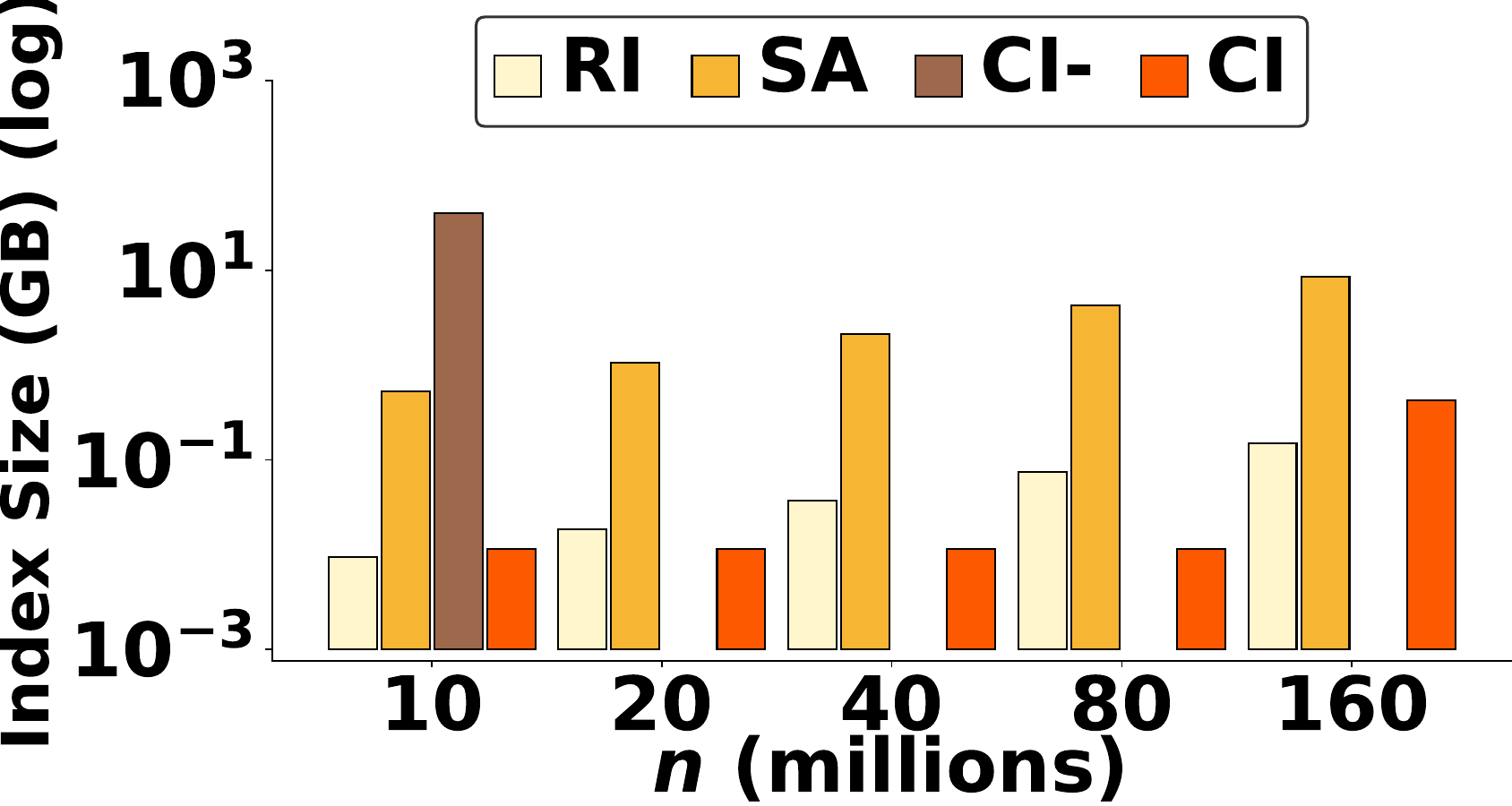}
        \caption{BST}
    \end{subfigure}%
    \begin{subfigure}{0.24\linewidth}
        \includegraphics[width=1\linewidth]{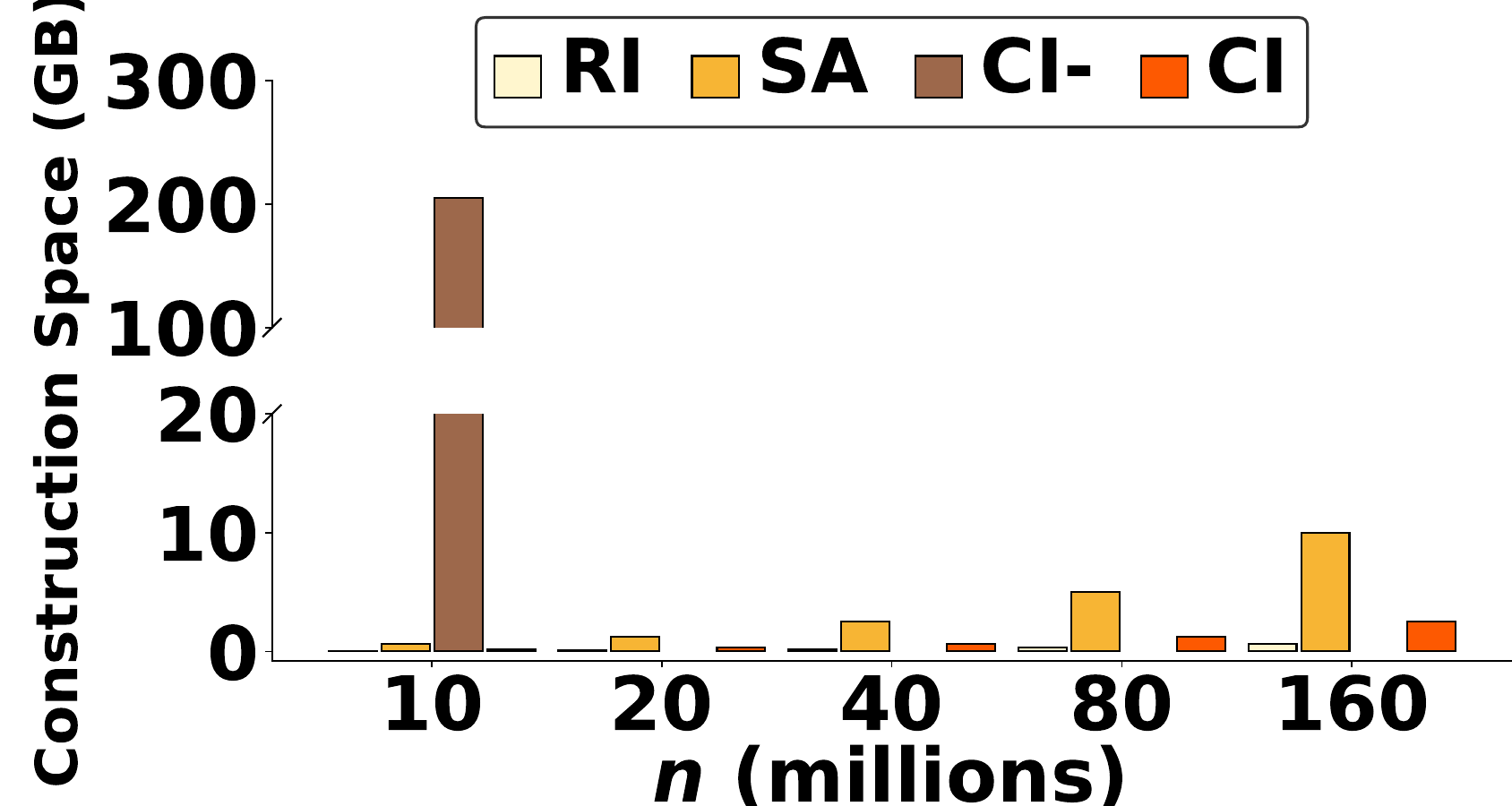}
        \caption{BST}
        \label{fig:all_boost_construction_space_n} 
    \end{subfigure}
    \caption{(a) average query time, (b) construction time, (c) index size, and (d) construction space for  %
    \RI, \CPRIS, CI-, and CI vs. $n$ with $B = l + m + r$ and $l = m = r = 9$.}
    \label{fig:boost_ablation_n}
\end{figure}

\begin{figure}[t]
\centering
    \begin{subfigure}{0.24\linewidth}
        \includegraphics[width=1\linewidth]{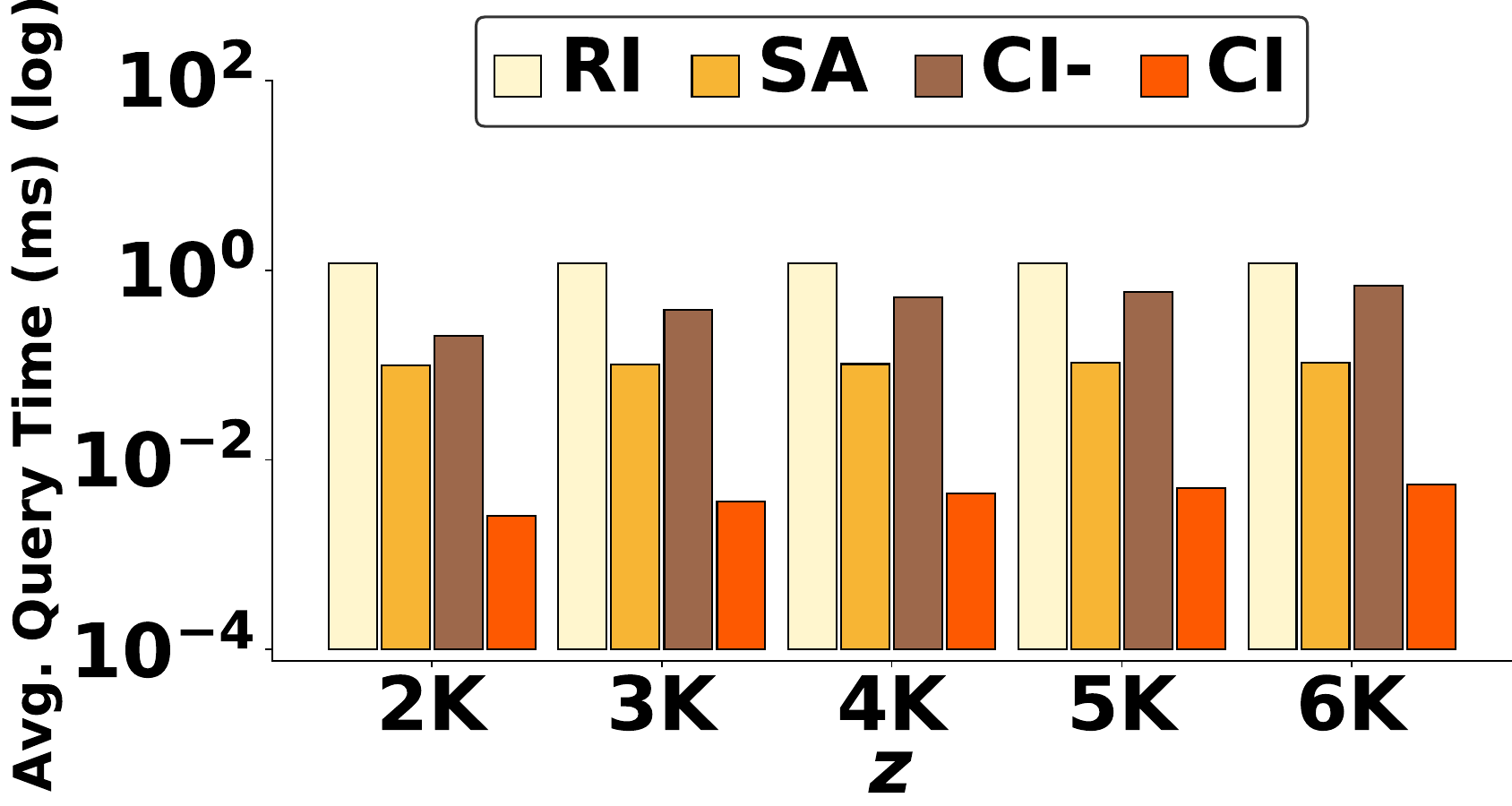}
        \caption{BST prefix}
        \label{all_boost_querytime_z}
    \end{subfigure}%
    \begin{subfigure}{0.24\linewidth}
        \includegraphics[width=1\linewidth]{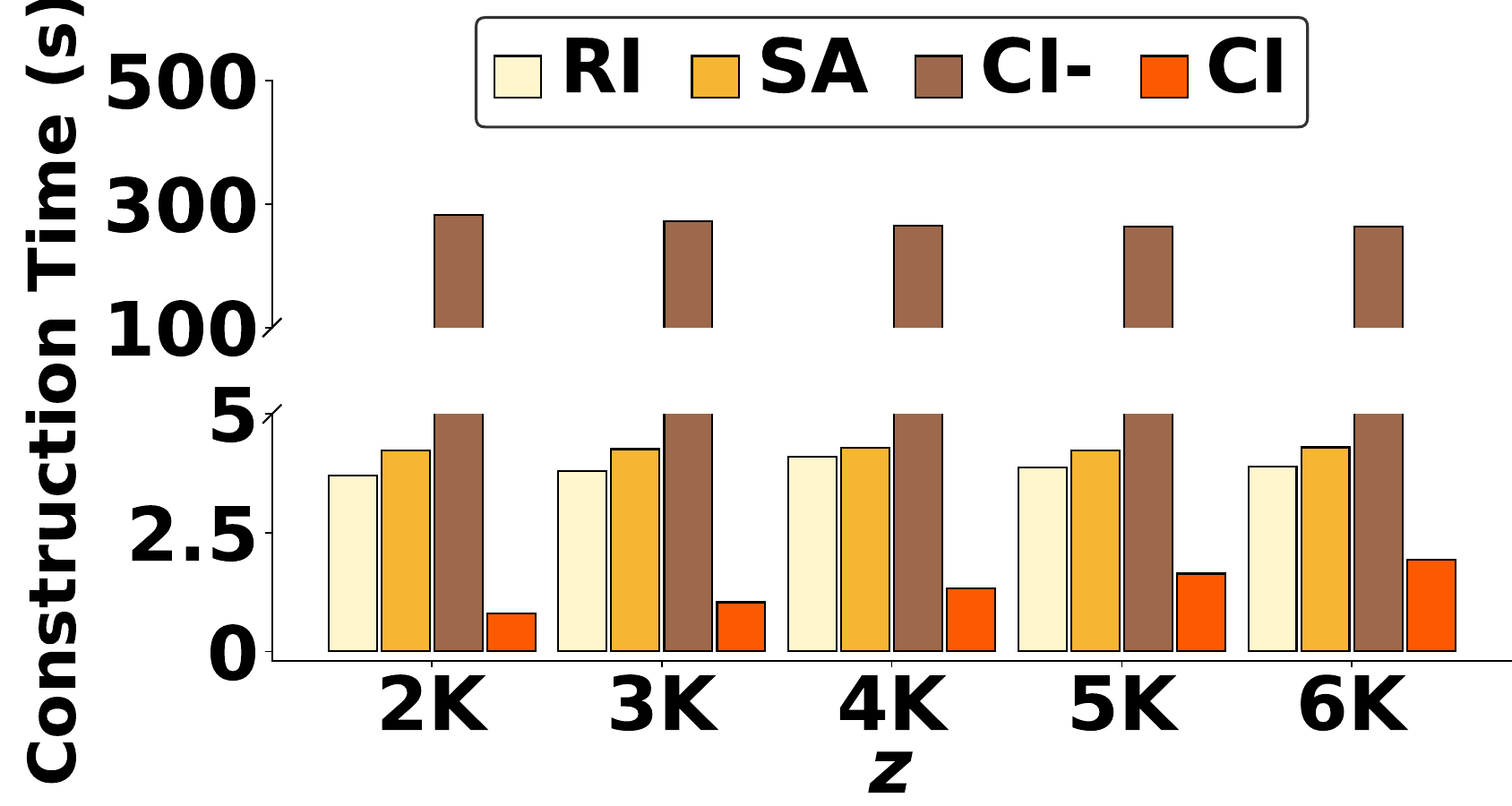}
        \caption{BST prefix}
        \label{all_boost_construction_time_z}
    \end{subfigure}
    \begin{subfigure}{0.24\linewidth}
        \includegraphics[width=1\linewidth]{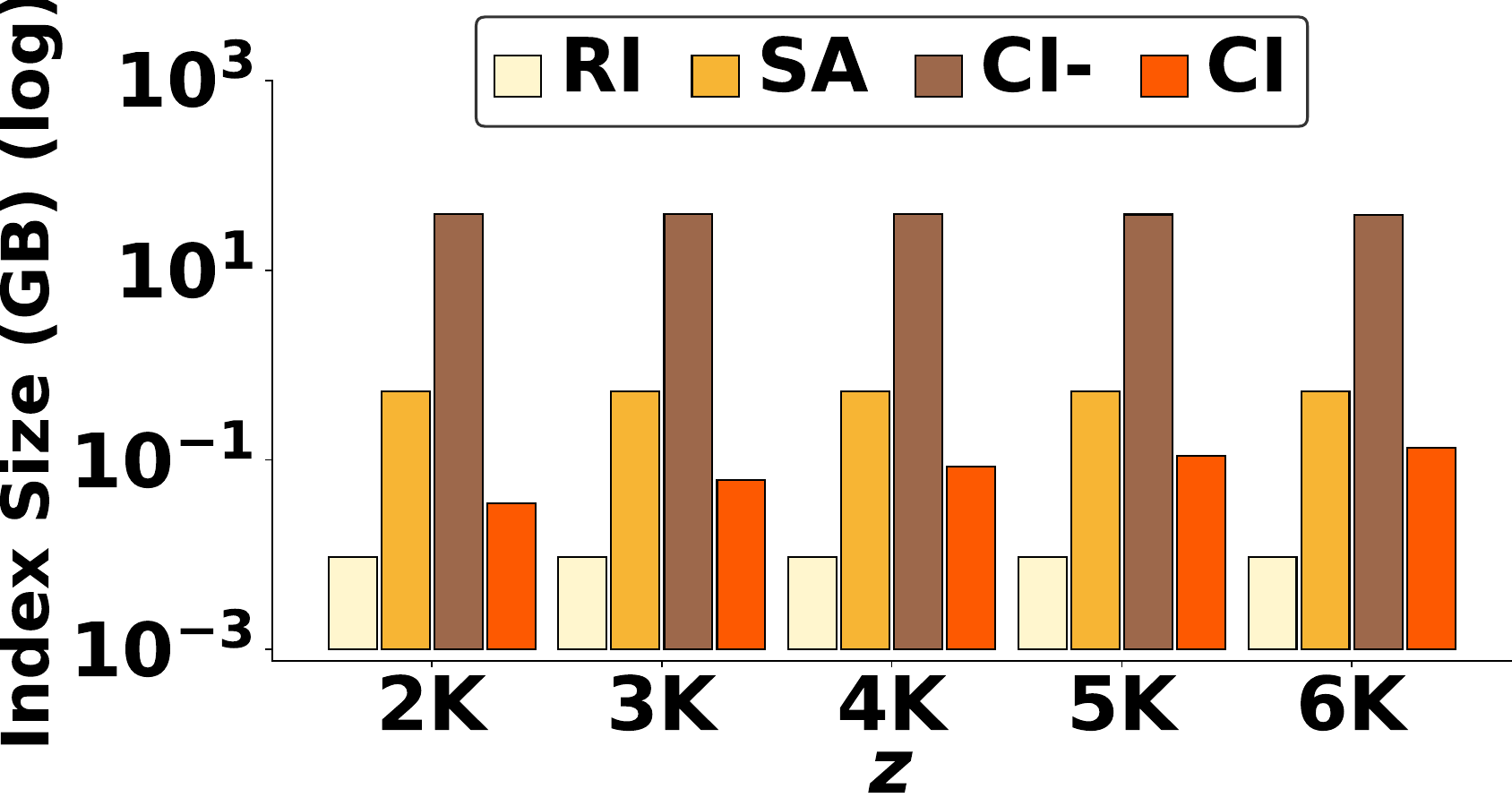}
        \caption{BST prefix}
        \label{all_boost_index_space_z}
    \end{subfigure}%
    \begin{subfigure}{0.24\linewidth}
        \includegraphics[width=1\linewidth]{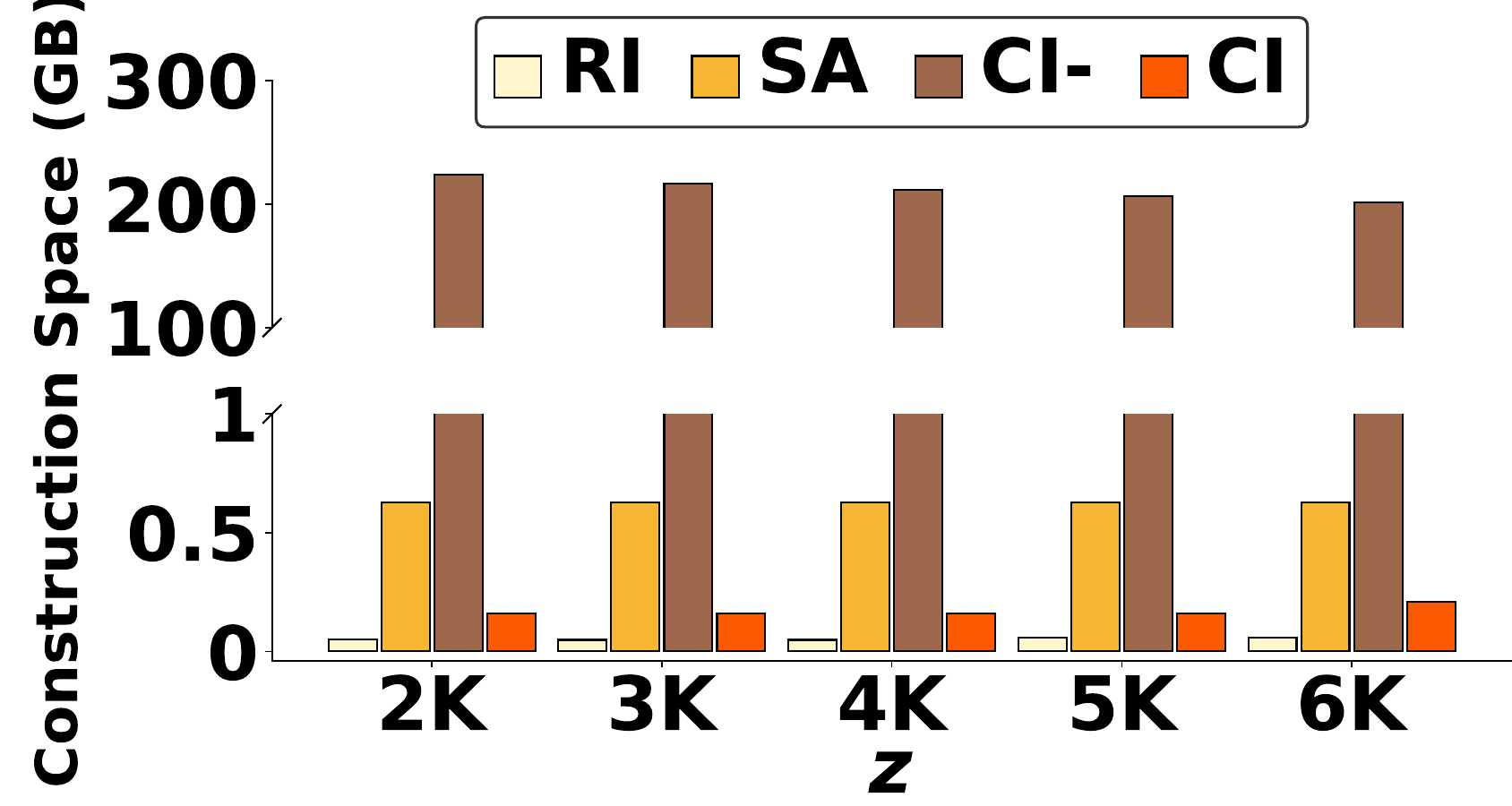}
        \caption{BST prefix}
        \label{all_boost_construction_space_z}
    \end{subfigure}
    \caption{(a) average query time, (b) construction time, (c) index size, and (d) construction space for  %
    \RI, \CPRIS, CI-, and CI vs. $z$ with $B = l + m + r$ and $l = m = r = 9$.}
    \label{fig:boost_ablation_z}
\end{figure}

\section{Conclusion}\label{sec:conclusion}

We introduced the \CPM problem and proposed a linear-work algorithm
with for it. Our algorithm has two implementations; one using exclusively internal memory and another that uses also external memory. 
We also introduced the \CPC problem, for which we proposed a near-linear space index that can answer queries in near-optimal time along with optimizations that significantly improve its efficiency in theory and in practice. 
We showed experimentally that the external memory version of our algorithm for \CPM can process very large datasets with a small amount of internal memory while its runtime is comparable to that of the internal memory version. Interestingly, we also showed that our optimized index for \CPC outperforms an approach based on the state of the art for the reporting version of \CPC in terms of all $4$ efficiency measures, often by more than an order of magnitude.

\section*{Acknowledgments}
 Grigorios Loukides would like to thank Eileen Kwan for discussions on the relevance of the proposed problems to bioinformatics. Solon P. Pissis  is supported by the PANGAIA and ALPACA projects that have received funding from the European Union’s Horizon 2020 research and innovation programme under the Marie Skłodowska-Curie grant agreements No 872539 and 956229, respectively.

\bibliographystyle{alphaurl}
\bibliography{main}

\end{document}